\renewcommand{\P}{\mathbb{P}}
\newcommand{\E}{\mathbb{E}}
\newcommand{\Int}{\mathbb{Z}}
\newcommand{\Real}{\mathbb{R}}
\newcommand{\Complex}{\mathbb{C}}
\newcommand{\1}[1]{\mathbb{I}\{#1\}}
\newcommand{\vertiii}[1]{{%%
    \left\vert\kern-0.25ex\left\vert\kern-0.25ex\left\vert #1 
    \right\vert\kern-0.25ex\right\vert\kern-0.25ex\right\vert
}}
\renewcommand*\env@matrix[1][*\c@MaxMatrixCols c]{%
  \hskip -\arraycolsep
  \let\@ifnextchar\new@ifnextchar
  \array{#1}}
\DeclareFontFamily{U}{tipa}{}
\DeclareFontShape{U}{tipa}{m}{n}{<->tipa10}{}
\newcommand{\arc@char}{{\usefont{U}{tipa}{m}{n}\symbol{62}}}%
\newcommand{\arc}[1]{\mathpalette\arc@arc{#1}}
\newcommand{\arc@arc}[2]{%
  \sbox0{$\m@th#1#2$}%
  \vbox{
    \hbox{\resizebox{\wd0}{\height}{\arc@char}}
    \nointerlineskip
    \box0
  }%
}
\definecolor{MATblue}{HTML}{0072BD}
\definecolor{MATgreen}{HTML}{77AC30}
\definecolor{MATorange}{HTML}{D95319}
\definecolor{MATred}{HTML}{A2142F}
\definecolor{MATyellow}{HTML}{EDB120}
\definecolor{MATcyan}{HTML}{4DBEEE}
\definecolor{MATpurple}{HTML}{7E2F8E}
\theoremstyle{plain}
\declaretheorem[name=Definition,numberwithin=section]{definition}
\theoremstyle{plain}
\declaretheorem[name=Theorem,numberwithin=section]{theorem}
\declaretheorem[name=Corollary,numberwithin=section]{corollary}
\declaretheorem[name=Proposition,numberwithin=section]{proposition}
\declaretheorem[name=Lemma,numberwithin=section]{lemma}
\renewenvironment{proof}[1][\proofname]{\par
  \pushQED{\qed}%
  \normalfont \topsep6\p@\@plus6\p@\relax
  \trivlist
  \item[\hskip\labelsep\bfseries
    #1\@addpunct{.}]\ignorespaces
}{%
  \popQED\endtrivlist\@endpefalse
}
\theoremstyle{definition}
\declaretheorem[name=Assumption]{assumption}
\declaretheorem[name=Remark,numberwithin=section]{remark}
\declaretheorem[name=Example,numberwithin=section]{example}
\newenvironment{assumptionp}[1]{
  
  \assumptionalt
}{\endassumptionalt}
\begin{document}

%%%%%%%%%%%%%%%%%%%%%%%%%%%%%%%%%%%%%%%%%%%%%%
%% TITLE PAGE
%%%%%%%%%%%%%%%%%%%%%%%%%%%%%%%%%%%%%%%%%%%%%%

\onehalfspacing
	
%% Opening
\title{%\normalfont%
    Impulse Response Analysis of Structural Nonlinear Time Series Models
}
\author{Giovanni Ballarin%
	\thanks{
        E-mail: \texttt{\href{mailto:giovanni.ballarin@unisg.ch}{giovanni.ballarin@unisg.ch}}.
        I thank Otilia Boldea, Timo Dimitriadis, Juan Carlos Escanciano, Lyudmila Grigoryeva, Klodiana Istrefi, Marina Khismatullina, So Jin Lee, Yuiching Li, Sarah Mouabbi, Andrey Ramirez, Christoph Rothe, Carsten Trenkler and Mengshan Xu, as well as the participants of the Econometrics Seminar at the University of Mannheim, the 2023 ENTER Jamboree, the 10th HKMEtrics Workshop, the GSS Weekly Seminar at Tilburg University, the Internal Econometrics Seminar at Vrije Universiteit Amsterdam, the Brown Bag Seminar at the University of St. Gallen and the 2024 ESAM for their comments, suggestions and feedback.
        A significant part of this work was developed at the University of Mannheim thanks to the support of the Center for Doctoral Studies in Economics and the Chair of Empirical Economics.
    }\\ %
	University of St. Gallen
}
\date{
    \today
}

\makeatletter
\let\thetitle\@title
\let\theauthor\@author
\let\thedate\@date
\makeatother
	
\maketitle
	
\renewcommand{\abstractname}{\vspace{-\baselineskip}} % remove abstract title
\vspace{-2em}
\begin{abstract}
	\textbf{Abstract:}
	This paper proposes a semiparametric sieve approach to estimate impulse response functions of
	nonlinear time series within a general class of structural autoregressive models. We prove that a two-step procedure can flexibly accommodate nonlinear specifications while avoiding the need to choose fixed parametric forms. Sieve impulse responses are proven to be consistent by deriving uniform estimation guarantees, and an iterative algorithm
	makes it straightforward to compute them in practice. With simulations, we show that the proposed
	semiparametric approach proves effective against misspecification while suffering only from minor efficiency
	losses. In a U.S. monetary policy application, the pointwise sieve GDP response associated
	with an interest rate increase is larger than that of a linear model. Finally, in an analysis of interest rate uncertainty shocks, sieve responses indicate more substantial
	contractionary effects on production and inflation.
\end{abstract}
	
\noindent\textit{Keywords:} macroeconometrics, semiparametric, sieve estimation, physical dependence

\noindent\textit{JEL:} C14, C22, C54, E52, F40
	
\pagebreak

%%%%%%%%%%%%%%%%%%%%%%%%%%%%%%%%%%%%%%%%%%%%%%
%% MANUSCRIPT BODY
%%%%%%%%%%%%%%%%%%%%%%%%%%%%%%%%%%%%%%%%%%%%%%
\doublespacing
% Fix math stretches (if needed)
\renewcommand*{\arraystretch}{0.7}

\section{Introduction}

Linearity is a foundational assumption in structural time series modeling.
For example, large classes of macroeconomic models in modern New Keynesian theory can be linearized, justifying the use of the linear time series toolbox from a theoretical point of view. The seminal work of \cite{simsMacroeconomicsReality1980} on vector autoregressive (VAR) models brought the study of dynamic economic relationships into focus within the macro-econometric literature, for which the estimation and analysis of impulse response functions (IRFs) is key \citep{hamilton1994TimeSeriesAnalysis,lutkepohlNewIntroductionMultiple2005,kilianStructuralVectorAutoregressive2017}.
The local projection (LP) approach of \cite{jordaEstimationInferenceImpulse2005} has also gained popularity as a flexible and easy-to-implement alternative.

Linear econometrics models are, however, limited in the kind of effects that they can describe. 
In nonlinear DGPs, linear VARs as well as standard LP methods can only reconstruct the best linear impulse responses approximation \citep{plagborg-mollerLocalProjectionsVARs2021}.
And even though asymmetries in monetary policy and non-proportional shock effects are now commonly studied, most works still rely on parametric specifications. For example, \cite{tenreyro2016pushing} study both sign and size effects of monetary policy (MP) shocks using censoring and cubic transformations, respectively. 
\cite{caggianoEconomicPolicyUncertainty2017,pellegrinoUncertaintyMonetaryPolicy2021} and \cite{caggianoUncertaintyShocksGreat2021} use multiplicative interacted VAR models to estimate the effects of uncertainty and MP shocks.
From a macro-finance perspective, \cite{forniNonlinearTransmissionFinancial2023,forniAsymmetricEffectsNews2023} study the economic effects of financial shocks following a quadratic VMA specification  \citep{debortoliAsymmetricEffectsMonetary2020}.
\cite{gambettiBadNewsGood2022} study news shocks asymmetries by imposing that news changes enter an autoregressive model through a threshold map. Parametric nonlinear specifications are also common prescriptions in time-varying models \citep{auerbachMeasuringOutputResponses2012,caggianoEstimatingFiscalMultipliers2015} and state-dependent models \citep{ramey2018government,goncalvesStatedependentLocalProjections2024}.

In this paper, we aim to design a semiparametric, structural nonlinear time series modeling and estimation framework with explicit theoretical properties.
Our primary contribution is the extension and combination of the block-recursive structural framework of \cite{goncalvesImpulseResponseAnalysis2021} with the uniform sieve estimation theory of \cite{chenOptimalUniformConvergence2015} within a general physical dependence setup \citep{wuNonlinearSystemTheory2005}.
Under classical nonparametric assumptions, we show that a two-step semiparametric series estimation procedure can consistently recover the structural model in a uniform sense. 
In order to be able to relax the assumptions on the linearity and fixed parametric model specifications, we restrict our study to the case of compactly supported, weakly dependent data.
We emphasize that, even in this constrained setting, to the best of our knowledge, our work is the first to offer a formal combination of these approaches.
We provide explicit guarantees for semiparametrically-estimated nonlinear IRFs: Nonlinear impulse response function estimates are asymptotically consistent and, thanks to an iterative algorithm, can also be straightforwardly computed.  

To illustrate the validity of our proposed methodology, we first evaluate its performance with several simulations. With realistic sample sizes, the efficiency costs of our semiparametric procedure are small compared to correctly specified parametric responses. A second set of simulations provides a simple setup where the nonlinear parametric model is mildly misspecified. Still, the large-sample bias is considerable, while for semiparametric estimates it is negligible. 
We then evaluate how the IRFs computed using the new method compare with those from two empirical exercises studied in the literature. 
In a small, quarterly model of the U.S. macroeconomy based on \cite{tenreyro2016pushing}, we find that point estimates of linear and parametric nonlinear IRFs may underestimate in intensity the GDP responses by up to $13\%$ and $16\%$, respectively, after a large exogenous monetary policy shock. Moreover, sieve responses achieve maximum impact a year before their linear counterparts. 
Then, we evaluate the effects of interest rate uncertainty on US output, prices, and unemployment following \cite{istrefiSubjectiveInterestRate2018}. In this exercise, the impact on industrial production of a one-deviation increase in uncertainty is $54\%$ stronger according to semiparametric IRFs than the comparable linear specification. These findings suggest that structural responses based on linear specifications can significantly underestimate the effects of shocks.

\paragraph{Literature Review.}
% Some efforts similar to the one we undertake in this paper have already been made.
Let us mention some key references directly related to our discussion.
On the one hand, \cite{jordaEstimationInferenceImpulse2005} already proposed a ``{flexible local projection}'' approach based on the Volterra expansion. 
The flexible LP proposal is effectively equivalent to adding polynomial terms to a linear regression, meaning it is a semiparametric method, and it should be analyzed as such. Yet, the Volterra expansion is not formally justified, nor is its truncation, which is key in studying its properties \citep{SirotkoSibirskaya2020volterraBootstrap,Movahedifar2023closedloopVolterra}.
On the other hand, smoothed LP methods, see e.g. \cite{plagborg-moller2016essays,barnichonImpulseResponseEstimation2019}, can address exclusively concerns of regularity in the shape of estimated impulse responses, but not any potential underlying nonlinearities in the DGP.
Recently, \cite{goncalvesStatedependentLocalProjections2024} outlined a general, nonparametric LP estimation procedure for nonlinear IRFs, which was later studied in \cite{goncalvesNonparametricLocalProjections2024} under high-level conditions on the functional form of the IRF itself.
\cite{gourieroux2023nonlinear} also devise a framework for nonparametric kernel estimation and inference of IRFs via local projections, although they mostly work in the one-dimensional, single lag case.
Finally, following the Generalized IRF (GIRF) approach \citep{Koop1996,potterNonlinearImpulseResponse2000,gourierouxNonlinearInnovationsImpulse2005,terasvirtaModellingNonlinearEconomic2010}, \cite{kanazawaRadialBasisFunctions2020} proposed to use radial basis function neural networks to estimate nonlinear reduced-form GIRFs for the U.S. economy. While GIRFs can be essentially characterized as impulse responses with more sophisticated conditioning sets, they are lacking in that they do not inherently address the problem of {structural} identification \citep{kilianStructuralVectorAutoregressive2017}.

\paragraph{}
The remainder of this paper is organized as follows. Section~\ref{section:framework} provides the general framework for the structural model. 
Section~\ref{section:estimation} describes the two-step semiparametric estimation strategy, and Section~\ref{section:impulse_response_analysis} discusses nonlinear impulse response function computation, validity, and consistency.
In Section~\ref{section:simulations} we give a brief overview of simulation results, while Section~\ref{section:applications} contains the empirical analyses. Finally, Section~\ref{section:conclusion} concludes. All proofs and additional content can be found in the Appendix.
Concerning notation: scalar and vector random variables are denoted in capital or Greek letters, e.g. $Y_t$ or $\epsilon_t$, while realizations are shown in lowercase Latin letters, e.g. $y_t$. For a process $\{Y_t\}_{t\in \Int}$, we write $Y_{t:s} = (Y_t, Y_{t+1}, \ldots, Y_{s-1}, Y_s)$, as well as $Y_{*:t} = (\ldots, Y_{t-2}, Y_{t-1}, Y_t)$ for the left-infinite history and $Y_{t:*} = (Y_t, Y_{t+1}, Y_{t+2}, \ldots)$ for its right-infinite history. The same notation is also used for random variable realizations.
For a matrix $A \in \Real^{d \times d}$ where $d \geq 1$, $\lVert A \rVert$ is the spectral norm, $\lVert A \lVert_\infty$ is the supremum norm and $\lVert A \rVert_r$ for $0 < r < \infty$ is the $r$-operator norm. For a random vector or matrix, we will use $\lVert \,\cdot\, \rVert_{L^r}$ to denote the associated $L^r$ norm.

%
% MAIN CONTENT ----------------------------
%
% \newpage

\section{Model Framework}\label{section:framework}

In this section, we introduce the general nonlinear time series model, which is a generalization of the one developed in \cite{goncalvesImpulseResponseAnalysis2021}. 
In terms of structural shocks identification, the idea is straightforward: A scalar series, $X_t$, is chosen to be the \textit{structural variable} identifying shocks, and it explicitly determines the dynamic effects on the remaining data, vector $Y_t$. This will enable the derivation of economically meaningful (structural) impulse responses due to an exogenous shock impacting $X_t$.

\subsection{General Model}\label{section:model_setup}

This paper focuses on the family of nonlinear autoregressive models of the form
\begin{equation}\label{eq_main:general_nonlin_model}
	\begin{split}
		X_t & = \mu_1 + A_{12}(L) Y_{t-1} + A_{11}(L) X_{t-1} + u_{1t},  \\ 
		Y_t & = \mu_2 + G_2(Y_{t-1}, \ldots, Y_{t-p}, X_t, X_{t-1}, \ldots, X_{t-p}) + u_{2t} .
	\end{split}
\end{equation}
where $X_t \in \mathcal{X} \subseteq \Real$ and $Y_t \in \mathcal{Y} \subseteq \Real^{d_Y}$ are scalar and $d_Y$-dimensional time series, respectively, $u_t = (u_{1t}, u_{2t}')' \in \mathcal{U} \subseteq \Real^d$ are innovations, $d = 1 + d_Y$, $G_2 : \Real^{1 + pd} \to \Real$ is a generic nonlinear map, and $A_{12}(L)$ and $A_{11}(L)$ are lag polynomials \citep{lutkepohlNewIntroductionMultiple2005}. We let $Z_t := (X_t, Y_t')' \in \Real^{d}$ be the full data vector.
Let us provide some examples for the model classes nested by  \eqref{eq_main:general_nonlin_model}.

\begin{example}[Linear VAR]
	In the simplest case, $G_2(Y_{t-1}, \ldots, Y_{t-p}, \allowbreak X_t, X_{t-1}, \ldots, \allowbreak X_{t-p}) = A_{22}(L) Y_{t-1} + A_{21}(L) X_{t}$, and we recover the class of linear vector autoregressive models. 
	%In vector form, $Z_t = \mu + A(L) Z_{t-1} + u_t$. 
	% VARs have been extensively studied in econometrics, and for brevity we recall here only the textbook treatments provided by \cite{lutkepohlNewIntroductionMultiple2005} and \cite{kilianStructuralVectorAutoregressive2017}.
\end{example}

% \begin{example}[Linear VAR]
	%     In the simplest case, if we let $G_2$ be a linear function, then we immediately recover the general family of vector autoregressive models:
	%     %
	%     \begin{equation}\label{eq_main:linear_family}
		%     \begin{split}
			%         X_t & = \mu_1 + A_{12}(L) Y_{t-1} + A_{11}(L) X_{t-1} + u_{1t},  \\ 
			%         Y_t & = \mu_2 + A_{22}(L) Y_{t-1} + A_{21}(L) X_{t} + u_{2t} .
			%     \end{split}
		%     \end{equation}
	%     %
	%     This can be compactly rewritten in vector form as $Z_t = \mu + A(L) Z_{t-1} + u_t$. VARs have been extensively studied in econometrics, see for example the textbook treatments provided by \cite{lutkepohlNewIntroductionMultiple2005} and \cite{kilianStructuralVectorAutoregressive2017}.
	% \end{example}

\begin{example}[Additively separable model]
	When $G_2(Y_{t-1}, \ldots, Y_{t-p}, \allowbreak X_t, X_{t-1}, \ldots, \allowbreak X_{t-p}) = \sum_{i=1}^p G_{i,22}(Y_{t-i}) + \sum_{j=0}^p G_{j,21}(X_{t-j})$, model \eqref{eq_main:general_nonlin_model} is additively separable \citep{fan2003nonlinear}.
\end{example}

\begin{example}[Nonlinear impact model]\label{example:nonlin_impact}
	A parsimonious semiparametric class, which may be informally termed the ``{nonlinear impact model class}'', involves specification
	\begin{equation*}
		\begin{split}
			% X_t & = \mu_1 + A_{12}(L) Y_{t-1} + A_{11}(L) X_{t-1} + u_{1t},  \\ 
			Y_t & = \mu_2 + A_{22}(L) Y_{t-1} + \sum_{j=0}^p G_{j,21}(X_{t-j}) + u_{2t} ,
		\end{split}
	\end{equation*}
	see e.g. \cite{goncalvesImpulseResponseAnalysis2021}. 
	% Assuming $\{G_{j,21}\}_{j=1}^p$ are allowed to belong at least to the linear function class, 
	An equivalent representation for $Y_t$ is 
	\begin{equation*}
		Y_t = \mu_2 + A_{22}(L) Y_{t-1} + A_{21}(L) X_{t-1} + \sum_{j=0}^p \arc{G}_{j,21}(X_{t-j}) +  u_{2t} ,
	\end{equation*}
	where now to identify nonlinear functions $\arc{G}_{j,21} : \Real \to \Real^{d_Y}$, $0 \leq j \leq p$, we require that constant and linear factors be not included at indices $j \geq 1$. To make this more compact, write $Z_t = \mu + A(L) Z_{t-1} + \arc{G}(L) X_t + u_t$, where
	\begin{equation*}
		A(L) := 
		\begin{bmatrix}
			A_{11}(L) & A_{12}(L) \\
			A_{21}(L) & A_{22}(L)
		\end{bmatrix}
		\quad\textnormal{and}\quad
		\arc{G}(L) := 
		\begin{bmatrix}
			0 \\
			\arc{G}_{0,21} + \arc{G}_{1,21} L + \ldots + \arc{G}_{p,21} L^p
		\end{bmatrix} ,
	\end{equation*}
	with the minor abuse of notation that $\arc{G}_2(L) := \arc{G}_{0,21} + \allowbreak \ldots + \arc{G}_{p,21} L^p$ is now intended as a \textit{functional} lag polynomial, meaning $\arc{G}_2(L) X_t \equiv \sum_{j=0}^p \arc{G}_{j,21}(X_{t-j})$.\footnote{The choice to use a functional matrix notation is due to the ease of writing multivariate additive nonlinear models such as \eqref{eq_main:structural_model} in a manner consistent with standard formalisms of linear VAR models, following again e.g. \cite{lutkepohlNewIntroductionMultiple2005}.}  
	
	% As a concrete case, letting $\arc{g}_{j,21}(x) = (\gamma_{1j} \max(0, x), \ldots, \gamma_{d_Y j} \max(0, x))' \in \Real^{d_Y}$ for all $j$, we then have
	% %
	% \begin{equation*}
		%     \arc{G}_2(L) X_t = 
		%     \begin{bmatrix}
			%         \arc{g}_{1, 0,21}(X_t) \\
			%         \vdots \\
			%         \arc{g}_{d_Y, 0,21}(X_t)
			%     \end{bmatrix}
		%     +
		%     \ldots
		%     +
		%     \begin{bmatrix}
			%         \arc{g}_{1, p,21}(X_{t-p}) \\
			%         \vdots \\
			%         \arc{g}_{d_Y, p,21}(X_{t-p})
			%     \end{bmatrix} 
		%     =
		%     \begin{bmatrix}
			%         \gamma_{1, 0} \max(0, X_t) \\
			%         \vdots \\
			%         \gamma_{d_Y, 0} \max(0, X_t)
			%     \end{bmatrix}
		%     +
		%     \ldots
		%     +
		%     \begin{bmatrix}
			%         \gamma_{1, p} \max(0, X_{t-p}) \\
			%         \vdots \\
			%         \gamma_{d_Y, p} \max(0, X_{t-p})
			%     \end{bmatrix}
		%     .
		% \end{equation*}
	
	% This specification is well-known, as it elicits different linear coefficients for the regimes $\{X_t < 0\}$ and $\{X_t \geq 0\}$ so that the sign of $X_t$ is relevant in determining the model's evolution. It can also be directly linked the the threshold models, as shown in the previous example. We will revisit this specific setting in both the simulations of Section~\ref{section:simulations} and empirical exercises.
\end{example}

% \begin{remark}{(Constrained Nonlinear Forms).}\label{remark:basic_constraints_F}
	%     There are settings where the functional form of the nonlinear component is either partially known, can be restricted, or one aims at capturing nonlinear features with a more parsimonious specification. 
	%     State-depend models are an exemplary case: if we let $X_t$ be a measure of recession, $\{X_t = 1\}$ and $\{X_t = 0\}$. 
	% \end{remark}

\subsection{Structural Framework}

Model \eqref{eq_main:general_nonlin_model} involves only reduced-form innovations $u_{1t}$ and $u_{2t}$, and additional assumptions are necessary to provide a structural interpretation. Many such assumptions have been devised in the macroeconomic literature, but few can be directly applied to nonlinear models \citep{kilianStructuralVectorAutoregressive2017}. We follow the block-recursive identification strategy outlined in \cite{goncalvesImpulseResponseAnalysis2021} and originally due to \cite{kilian2011responses}. 

From \eqref{eq_main:general_nonlin_model} we derive   
\begin{equation*}
	\begin{split}
		X_t & = \mu_1 + A_{12}(L) Y_{t-1} + A_{11}(L) X_{t-1} + u_{1t},  \\ 
		Y_t & = \mu_2 + A_{22}(L) Y_{t-1} + A_{21}(L) X_{t-1} + \arc{G}_2(Y_{t-1:t-p}, X_{t:t-p}) + u_{2t} ,
	\end{split}
\end{equation*}
where, without loss of generality, we have assumed (as in Example~\ref{example:nonlin_impact}) that we can separate the linear and nonlinear ($\arc{G}_2$) components from $G_2$. In general, it can be the case that $\mu_2 = 0$, $A_{22}(L) = 0$ or $A_{21}(L) = 0$ if e.g. $G_2$ is strictly nonlinear.
In vector form: 
\begin{equation}\label{eq_main:pseudo_reduced_form_single_eq}
	Z_t = \mu + A(L) Z_{t-1} + \arc{G}(Z_{t:t-p}) + u_t ,
	\quad\textnormal{where}\quad
	\arc{G}(Z_{t:t-p}) :=
	\begin{bmatrix}
		0 \\
		\arc{G}_2(Y_{t-1:t-p}, X_{t:t-p})
	\end{bmatrix}
	.
\end{equation}
We can now formalize the structural specification of our model.

\begin{assumption}\label{assumption:structural_model}
	There exist (i) a vector $B_0^{21} \in \Real^{d_Y}$ and a matrix $B_0^{22} \in \Real^{d_Y \times d_Y}$ such that 
	\begin{equation*}
		\begin{bmatrix}
			1 & 0 \\
			B_0^{21} & B_0^{22} 
		\end{bmatrix} 
		=:
		B_0^{-1} 
	\end{equation*}
	is invertible and has unit diagonal, and (ii) mutually independent innovations sequences $\{\epsilon_{1t}\}_{t \in \Int}$, $\epsilon_{1t} \in \mathcal{E}_1 \subseteq \Real$, and $\{\epsilon_{2t}\}_{t \in \Int}$, $\epsilon_{2t} \in \mathcal{E}_2 \subseteq \Real^{d_Y}$, such that
	\begin{equation*}
		\begin{bmatrix}
			\epsilon_{1t} \\
			\epsilon_{2t}
		\end{bmatrix}
		\:\overset{\text{i.i.d.}}{\sim}\:
		\left(
		\begin{bmatrix}
			0 \\
			0 
		\end{bmatrix} ,
		\begin{bmatrix}
			\Sigma_1 & 0 \\
			0 & \Sigma_2
		\end{bmatrix}
		\right) ,
	\end{equation*}
	where $\Sigma_1 > 0$ and $\Sigma_2$ is a diagonal positive definite matrix so that 
	\begin{equation}\label{eq_main:pseudo_reduced_form_structural_model}
		\begin{split}
			X_t & = \mu_1 + A_{12}(L) Y_{t-1} + A_{11}(L) X_{t-1} + \epsilon_{1t},  \\ 
			Y_t & = \mu_2 + A_{22}(L) Y_{t-1} + A_{21}(L) X_{t-1} + \arc{G}_2(Y_{t-1:t-p}, X_{t:t-p}) + B_0^{21} \epsilon_{1t} + B_0^{22} \epsilon_{2t} ,
		\end{split}
	\end{equation}
	where $u_{1t} \equiv \epsilon_{1t}$, $u_{2t} := B_0^{21} \epsilon_{1t} + B_0^{22} \epsilon_{2t}$ and thus $u_t = B_0^{-1}\epsilon_t$ for $\epsilon_t = (\epsilon_{1t}, \epsilon_{2t}')' \in \mathcal{E} \subseteq \Real^d$. 
\end{assumption}

\begin{remark}
	Assumption~\ref{assumption:structural_model} follows \cite{goncalvesImpulseResponseAnalysis2021} closely. 
	By design, one does not need to identify the model fully, meaning that fewer assumptions on $Z_t$ and $\epsilon_t$ are needed to estimate the individual structural effects of $\epsilon_{1t}$ on $Y_t$. This comes at the price of not being able to simultaneously study structural effects for shocks impacting $\epsilon_{2t}$.
\end{remark}

Note that inverting $B_0^{-1}$ gives
\begin{equation*}
	B_0 = \begin{bmatrix}
		1 & 0 \\
		-B_{0,12} & B_{0,22} 
	\end{bmatrix} ,
\end{equation*}
and, multiplying both sides of \eqref{eq_main:pseudo_reduced_form_single_eq} by $B_0$, we find 
\begin{equation}\label{eq_main:structural_model}
	B_0 Z_t = b + B(L) Z_{t-1} + \arc{F}(Z_{t:t-p}) + \epsilon_t ,
\end{equation}
where $b = (b_1, b_2')' \in \Real^d$ and $\arc{F}(Z_{t:t-p}) = (0, \arc{F}_2(Z_{t:t-p}))'$ for $\arc{F}_2 : \Real^{1+pd_Y} \to \Real^{d_Y}$, $F_2 = B_{0,22} \arc{G}_2$.
In practice, to estimate the model's coefficients, we will leverage \eqref{eq_main:pseudo_reduced_form_structural_model}. This latter form was termed the \textit{pseudo-reduced form} by \cite{goncalvesImpulseResponseAnalysis2021}. 
% While we can still disentangle structural innovations $\epsilon_t$ from reduced-form residuals $u_t$, we do not fully decompose the DGP as in \eqref{eq_main:structural_model}.

Observe that $\arc{G}_2(Y_{t-1:t-p}, X_{t:t-p})$ is correlated with $u_{2t}$ through $B_0^{21} \epsilon_{1t}$. As $X_t$ depends linearly on $\epsilon_{1t}$, if $B_0^{21} \not= 0$ and $\arc{G}_2(Y_{t-1:t-p}, X_{t:t-p})$ is not independent of $X_t$, there is an endogeneity problem. 
% This is the key challenge to recovering \eqref{eq_main:pseudo_reduced_form_structural_model} from data.
%
\cite{goncalvesImpulseResponseAnalysis2021} address the issue by proposing a two-step estimation procedure wherein one proxies for $\epsilon_{1t}$ with residual $\widehat{\epsilon}_{1t}$. As we prove in Section~\ref{section:estimation} below, this approach also generally allows for consistent semiparametric estimation.

\begin{remark}{(Moving Average Identification).}
	\cite{forniNonlinearTransmissionFinancial2023,forniAsymmetricEffectsNews2023} work with an alternative nonlinear structural identification framework to the block-recursive form. Their approach follows \cite{debortoliAsymmetricEffectsMonetary2020} and is based on a vector MA representation. Under appropriate assumptions, the structural model studied by \cite{forniNonlinearTransmissionFinancial2023} is
	\begin{equation}\label{eq_main:debortoli_forni_structural_form}
		Z_t = \mu + A(L) Z_t + Q_0 F(\epsilon_{1t}) + B_0 \epsilon_t ,
	\end{equation}
	where $\epsilon_t$ are independent innovations with zero mean and identity covariance, and $\epsilon_{1t}$ identifies the shocks of interest. $Q(L)$ and $B(L)$ are both linear lag polynomials, and $F(x) = x^2$ in their baseline specification.
	For \eqref{eq_main:debortoli_forni_structural_form} to overlap with \eqref{eq_main:pseudo_reduced_form_structural_model} one must impose that (i) $X_t$ is exogenous and independently distributed and (ii) only $\epsilon_{1t}$ has nonlinear effects.
	We emphasize that, if the innovation sequence $\epsilon_{1t}$ is assumed to be observable, applying our results to the framework of \cite{debortoliAsymmetricEffectsMonetary2020} is straightforward.
\end{remark}

\subsection{Structural Nonlinear Impulse Responses}\label{section:nonlin_irfs}

Starting from pseudo-reduced equations \eqref{eq_main:pseudo_reduced_form_structural_model}, we begin by assuming that the linear autoregressive component is stable.

\begin{assumption}\label{assumption:roots_linear_part_model}
	The roots of $\det(I_{d} - A(L) L) = 0$ are outside the complex unit circle.
\end{assumption}

This standard stability assumption enables us to write impulse responses in a manner that can yield useful simplifications for additively separable models.\footnote{Stability of the linear VAR component is neither necessary nor sufficient for ensuring stability and stationarity of the entire nonlinear process, cf. Assumption~\ref{assumption:physical_dep} in Section~\ref{section:estimation} below.}
%
%\begin{assumption}\label{assumption:roots_linear_part_model}
%    The roots of equation $\det(I_{d} - A(L) L) = 0$ are outside the complex unit circle.
%\end{assumption}
%
Then, letting $\Psi(L) = (I_{d} - A(L) L)^{-1}$, one can write
\begin{equation}
	Z_t = \eta + \Theta(L) \epsilon_t + \Gamma(Z_{t:*}) ,
\end{equation}
where
$\eta := \Psi(1) (\mu_1, \mu_2')'$,
$\Theta(L) := \Psi(L) B_0^{-1}$ and
$\Gamma(Z_{t:*}) := \Psi(L)(0, \arc{G}_2(Y_{t-1:t-p}, X_{t:t-p})')'$.
%
% \begin{equation*}
	%     \mu := \Psi(1)
	%         \begin{bmatrix}
		%             \mu_1 \\
		%             \mu_2
		%         \end{bmatrix} ,
	%     \quad
	%     \Theta(L) := \Psi(L) B_0^{-1} ,
	%     \quad
	%     \text{and}
	%     \quad
	%     \Gamma(Z_{t:*}) := \Psi(L) 
	%         \begin{bmatrix}
		%             0 \\
		%             \arc{G}_2(Y_{t-1:t-p}, X_{t:t-p})
		%         \end{bmatrix} .
	% \end{equation*}
%
We emphasize that the nonlinear term $\Gamma(Z_{t:*})$ generally depends on the entire past of the process $Z_t$, as $\Psi(L)$ is an infinite-order MA polynomial.
To define impulse responses, we partition the polynomial $\Theta(L)$ according to
$\Theta(L) := [\Theta_{\cdot 1}(L) \:\vert\: \Theta_{\cdot 2}(L)]$ ,
%
% \begin{equation*}
	%     \Theta(L) := 
	%     \begin{bmatrix}
		%         \Theta_{\cdot 1}(L) & \Theta_{\cdot 2}(L) 
		%     \end{bmatrix} ,
	% \end{equation*}
%
where $\Theta_{\cdot 1}(L)$ represents the first column of matrices in $\Theta(L)$, and $\Theta_{\cdot 2}(L)$ the remaining $d_Y$ columns.

Given impulse $\delta \in \Real$ at time $t$, define the shocked innovation process as $\epsilon_{1 s}(\delta) = \epsilon_s$ for $s \not= t$ and $\epsilon_{1 t}(\delta) = \epsilon_{1 t} + \delta$, as well as the shocked structural variable as $Z_s(\delta) = Z_s$ for $s < t$ and $Z_s(\delta) = X_s(\epsilon_{s:t+1}, \epsilon_t + \delta, \epsilon_{t-1:*})$ for $s \geq t$. Further, for a given horizon $h \geq 0$, let
\begin{align*}
	Z_{t+h} & := \eta + \Theta_{\cdot 1}(L) \epsilon_{1 t+h} + \Theta_{\cdot 2}(L) \epsilon_{2 t+h} + \Gamma(Z_{t:*}) , \\[2pt]
	Z_{t+h}(\delta) & := \eta + \Theta_{\cdot 1}(L) \epsilon_{1 t+h}(\delta) + \Theta_{\cdot 2}(L) \epsilon_{2 t+h} + \Gamma(Z_{t:*}(\delta)) , 
\end{align*}
be the time-$t$ baseline and shocked series, respectively. 
Then,
\begin{equation}\label{eq_main:def_irf}
	\textnormal{IRF}_{h}(\delta) = \E\left[ Z_{t+h}(\delta) - Z_{t+h}  \right] 
\end{equation}
is the unconditional impulse response at horizon $h$ due to shock $\delta$. The difference between series is
$ Z_{t+h}(\delta) - Z_{t+h} 
= \Theta_{h,\cdot 1} \delta + \Gamma(Z_{t:*}(\delta)) - \Gamma(Z_{t:*}) $, 
%
% \begin{equation*}
	%     Z_{t+h}(\delta) - Z_{t+h} 
	%     = \Theta_{h,\cdot 1} \delta + \Gamma(Z_{t:*}(\delta)) - \Gamma(Z_{t:*})  ,
	% \end{equation*}
%
hence
\begin{equation}\label{eq_main:irf_h}
	\textnormal{IRF}_{h}(\delta) 
	= \Theta_{h,\cdot 1} \delta + \E\left[ \Gamma(Z_{t:*}(\delta)) - \Gamma(Z_{t:*}) \right] .
\end{equation}

\begin{remark}
	In additively separable models, $\Gamma(Z_{t:*})$ is also additively separable over lags of $Z_t$. Accordingly, the baseline and shock series have an additive form,
	% \begin{equation*}
		%     Z_{t+h}(\delta) - Z_{t+h} 
		%     = \Theta_{h,\cdot 1} \delta + \big( \Gamma_0(Z_{t+h}(\delta)) - \Gamma_0(Z_{t+h}) \big) + \ldots + \big( \Gamma_h(Z_{t}(\delta)) - \Gamma_h(Z_{t}) \big) ,
		% \end{equation*}
	as terms with time indices $s < t$ remain unaffected by the shock.
	Therefore, \eqref{eq_main:irf_h} reduces to
	\begin{equation}\label{eq_main:irf_h_separable}
		\textnormal{IRF}_{h}(\delta) 
		= \Theta_{h,\cdot 1} \delta + \E\left[ \Gamma_0(Z_{t+h}(\delta)) - \Gamma_0(Z_{t+h}) \right] + \ldots + \E\left[ \Gamma_h(Z_{t}(\delta)) - \Gamma_h(Z_{t}) \right] .
	\end{equation}
	Coefficients $\Gamma_j$ are still functional, and cannot be collected across $X_{t+j}(\delta)$ and $X_{t+j}$.
\end{remark}

Closed-form computation of nonlinear IRFs is highly non-trivial. Even in the separable case \eqref{eq_main:irf_h_separable}, while one can linearly separate expectations in the impulse response formula, terms $\E\left[ \Gamma_j(Z_{t+j}(\delta)) - \Gamma_j(Z_{t+j}) \right]$ for $0 \leq j \leq h$ cannot be meaningfully simplified further. Moreover, these expectations involve nonlinear functions of lags of $Z_t$ and are impractical to derive explicitly. 
%To address this issue, Section~\ref{section:impulse_response_analysis} provides an iterative procedure that makes numerical computation of nonlinear impulse responses straightforward.
%
% \subsection{Computation}\label{subsec:irf_computation}
%
To avoid working with $\Theta(L)$ and $\Gamma(L)$, we now present an iterative algorithm which allows one to easily and efficiently compute nonlinear IRFs.\footnote{The algorithm we propose is a natural counterpart to the one in Proposition 3.1 of \cite{goncalvesImpulseResponseAnalysis2021}, wherein they suggest to estimate the MA form coefficients recursively. Our approach instead relies on directly iterating forward the model's equations, which is more computationally straightforward.}
% Our starting point will be pseudo-reduced form \eqref{eq_main:pseudo_reduced_form_structural_model}.

\begin{proposition}\label{prop:irf_iterate_algorithm}
	For any $h = 0, 1, \ldots, H$, with $H \geq 1$ fixed, if impulse response $\textnormal{IRF}_{h}(\delta)$ is finite and well-defined, it can be computed with the following steps:
	\begin{description}
		\item[($\text{i}$)] For $j = 0$, let $X_t(\delta) = X_t + \delta$ and
		$Y_{t}(\delta) = \mu_2 + G_2(Y_{t-1}, \ldots, Y_{t-p}, X_t(\delta), X_{t-1}, \ldots, X_{t-p}) + B_0^{21} (\epsilon_{1t} + \delta) + \xi_{2t}$, where $\xi_{2t} = B_0^{22} \epsilon_{2t}$.
		%
		% \begin{equation*}
			%     \textnormal{IRF}_0(\delta) 
			%     = 
			%     \begin{bmatrix}
				%         \delta \\
				%         B_0^{21} \delta 
				%     \end{bmatrix}
			%     + \E\begin{bmatrix}
				%         0 \\
				%         {G}_2(Y_{t-1:t-p}, X_t(\delta), X_{t-1:t-p}) - {G}_2(Y_{t-1:t-p}, X_t, X_{t-1:t-p})
				%     \end{bmatrix} .
			% \end{equation*}
		%
		\item[($\text{ii}$)] For $j = 1, \ldots, h$, let
		\begin{equation*}
			\begin{split}
				X_{t+j}(\delta) & = \mu_1 + A_{12}(L) Y_{t+j-1}(\delta) + A_{11}(L) X_{t+j-1}(\delta) + \epsilon_{1t+j} ,  \\ 
				Y_{t+j}(\delta) & = \mu_2 + G_2(Y_{t-1}(\delta), \ldots, Y_{t-p}(\delta), X_t(\delta), X_{t-1}(\delta), \ldots, X_{t-p}(\delta)) + B_0^{21} \epsilon_{1t+j} + \xi_{2t+j} .
			\end{split}
		\end{equation*}
		where $X_{t}(\delta)$ and $Y_{t}(\delta)$ are the shocked sequences determined by forward iteration after time $t$, equaling baseline sequences $X_{t}$ and $Y_{t}$ at lags before $t$, respectively. 
	\end{description}
	Setting $Z_{t+j}(\delta) = ( X_t(\delta), Y_t(\delta) )'$, it holds $\textnormal{IRF}_h(\delta) = \E[ Z_{t+j}(\delta) - Z_{t+j} ]$.
\end{proposition}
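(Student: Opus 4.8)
\noindent The plan is to show that the two objects compared in definition~\eqref{eq_main:def_irf}, the baseline path $Z_{t+h}$ and the shocked path $Z_{t+h}(\delta)$, are both realizations of the single finite-order recursion \eqref{eq_main:pseudo_reduced_form_structural_model}, and that the algorithm's forward iteration reconstructs exactly $Z_{t+h}(\delta)$. The starting observation is that, under Assumption~\ref{assumption:roots_linear_part_model}, the operator $\Psi(L) = (I_{d} - A(L)L)^{-1}$ is well defined and the infinite-order moving-average form $Z_t = \eta + \Theta(L)\epsilon_t + \Gamma(Z_{t:*})$ is nothing more than an algebraic rewriting of \eqref{eq_main:pseudo_reduced_form_structural_model}, obtained by applying $\Psi(L)$ to $\mu + \arc{G}(Z_{t:t-p}) + B_0^{-1}\epsilon_t$. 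Hence $Z_{t+h}(\delta)$, defined in Section~\ref{section:nonlin_irfs} through the shocked innovations $\epsilon_{1s}(\delta)$ and the moving-average form, is the same random variable as the one generated by iterating \eqref{eq_main:pseudo_reduced_form_structural_model} forward from time $t$ with the single innovation $\epsilon_{1t}$ replaced by $\epsilon_{1t} + \delta$ and the pre-$t$ history left at its baseline values. Since $G_2$ in \eqref{eq_main:general_nonlin_model} is the sum of the linear terms and $\arc{G}_2$, the recursion written with $G_2$ in the Proposition and the one in \eqref{eq_main:pseudo_reduced_form_structural_model} coincide.

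I would then prove the identification of the algorithm's iterates with $Z_{t+j}(\delta)$ by induction on $j$. For the base case $j=0$, the definition $Z_s(\delta) = Z_s$ for $s < t$ means that every lagged argument entering the time-$t$ equations takes its baseline value. Because the $X_t$ equation is linear, carries no contemporaneous $Y_t$ term and no nonlinear component, substituting $\epsilon_{1t} + \delta$ gives $X_t(\delta) = X_t + \delta$ directly, which is the first assignment in step~(i). Feeding this contemporaneous value, together with the baseline lags and the perturbed $\epsilon_{1t}$, into the $Y_t$ equation yields $Y_t(\delta)$ exactly as prescribed in step~(i).

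For the inductive step, suppose the iterates reproduce $Z_{t+i}(\delta)$ for all $0 \le i < j$, with $Z_s(\delta) = Z_s$ for $s < t$. The time-$(t+j)$ equations depend only on the finite block of lags $Z_{t+j-1:t+j-p}$, the contemporaneous $X_{t+j}$, and the innovations $\epsilon_{1,t+j}, \epsilon_{2,t+j}$, which are unshocked because $t+j \ne t$. The shocked lags required are available by the induction hypothesis, so substituting them into \eqref{eq_main:pseudo_reduced_form_structural_model} produces $X_{t+j}(\delta)$ and $Y_{t+j}(\delta)$ precisely as in step~(ii); this is the forward unwinding of $\Psi(L)$ applied to the perturbed input. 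Running the identical recursion with $\delta = 0$ returns the baseline $Z_{t+j}$. Taking the difference and the expectation then recovers \eqref{eq_main:def_irf}, the finiteness and well-definedness being exactly the maintained hypothesis of the Proposition, so no integrability argument needs to be redone.

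The step I expect to be the main obstacle is reconciling the infinite-history nonlinear term $\Gamma(Z_{t:*}) = \Psi(L)\arc{G}(Z_{t:t-p})$ of the moving-average form, which formally depends on the entire past of the process, with the finite-lag forward iteration of the algorithm. The resolution I would make explicit is that the pre-$t$ history is common to the baseline and shocked paths, so all contributions to $\Gamma$ indexed by $s < t$ are identical along the two paths and cancel in the difference $Z_{t+h}(\delta) - Z_{t+h}$; only the finitely many lags at indices $s \ge t$ are affected by $\delta$, and these are exactly the quantities the recursion tracks. Verifying this cancellation, equivalently that forward iteration of the finite-order system is a faithful realization of $\Psi(L)$ acting on the perturbed input rather than merely an approximation of it, is the crux of the argument.
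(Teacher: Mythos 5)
Your proposal is correct and takes essentially the same route as the paper, which simply asserts that the proposition ``follows directly from the definition of the unconditional impulse response combined with a direct forward iteration of the pseudo-reduced form, sidestepping the explicit MA($\infty$) formulation.'' Your induction on $j$, together with the observation that the pre-$t$ history is common to the baseline and shocked paths so the infinite-past contributions to $\Gamma(Z_{t:*})$ cancel, is just a more explicit write-up of that same argument.
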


Proposition~\ref{prop:irf_iterate_algorithm} follows directly from the definition of the unconditional impulse response \eqref{eq_main:def_irf} combined with a direct forward iteration of \eqref{eq_main:pseudo_reduced_form_structural_model}, sidestepping the explicit MA($\infty$) formulation in \eqref{eq_main:irf_h}.  This approach dispenses from the need to simulate innovations $\{\epsilon_{t+j}\}_{j=1}^{h-1}$, as the joint distribution of $\{X_{t+h-1}, X_{t+j-1}, \ldots, X_{t}\}$ contains all relevant path information. 

When the model is estimated from data, for residuals $\widehat{\epsilon}_{1t}$ and $\widehat{\xi}_{2t}$ it trivially holds
\begin{equation*}
	\begin{split}
		X_{t} & = \widehat{\mu}_1 +  \widehat{A}_{12}(L) Y_{t-1} +  \widehat{A}_{11}(L) X_{t-1} + \widehat{\epsilon}_{1t},  \\ 
		Y_{t} & = \widehat{\mu}_2 + \widehat{G}_2(Y_{t-1}, \ldots, Y_{t-p}, X_t, X_{t-1}, \ldots, X_{t-p}) + \widehat{B}_0^{21} \widehat{\epsilon}_{1t} + \widehat{\xi}_{2t} .
	\end{split}
\end{equation*}
In practice, this means that one can numerically construct the shocked sequence as
\begin{equation*}
	\begin{split}
		\widehat{X}_{t+j}(\delta) 
		& = \widehat{\mu}_1 +  \widehat{A}_{12}(L) \widehat{Y}_{t+j-1}(\delta) +  \widehat{A}_{11}(L) \widehat{X}_{t+j-1}(\delta) + \widehat{\epsilon}_{1t+j},  \\ 
		\widehat{Y}_{t+j}(\delta) 
		& = \widehat{\mu}_2 + \widehat{G}_2(\widehat{Y}_{t-1}(\delta), \ldots, \widehat{Y}_{t-p}(\delta), \widehat{X}_t(\delta), \widehat{X}_{t-1}(\delta), \ldots, \widehat{X}_{t-p}(\delta)) + \widehat{B}_0^{21} \widehat{\epsilon}_{1t+j} + \widehat{\xi}_{2t+j} ,
	\end{split}
\end{equation*}
for $j = 1, \ldots, h$ where $\widehat{X}_t(\delta) = X_t + \delta$, $\widehat{X}_{t-s} = X_{t-s}$ for all $s \geq 1$, and similarly for $\widehat{Y}_t(\delta)$.

\section{Estimation}\label{section:estimation}

To discuss estimation, we will rewrite the equations in \eqref{eq_main:pseudo_reduced_form_structural_model} with some minor reordering as
%
% \begin{equation*}
	% \begin{split}
		%     X_t & = \mu_1 + A_{12}(L) Y_{t-1} + A_{11}(L) X_{t-1} + \epsilon_{1t},  \\ 
		%     Y_t & = \mu_2 + G_2(Y_{t-1}, \ldots, Y_{t-p}, X_t, X_{t-1}, \ldots, X_{t-p}) + B_0^{21} \epsilon_{1t} + B_0^{22} \epsilon_{2t} .
		% \end{split}
	% \end{equation*}
%
% These in turn may be compactly expressed as
%
\begin{equation}\label{eq_main:regression_model}
	\begin{split}
		X_t & = \Pi_1' W_{1t} + \epsilon_{1t} , \\
		Y_t & = \Pi_2' W_{2t} + \xi_{2t} ,
	\end{split}
\end{equation}
where 
$\xi_{2t} = B_0^{22} \epsilon_{2t}$,
$\Pi_1 := ( \eta_1, A_{1,11}, \cdots, A_{p,11}, A_{1,12}', \cdots, A_{p,12}' )' \in \Real^{1 + p d}$,
\begin{equation*}
	\Pi_{2} 
	:= 
	\begin{bmatrix}[c|c|c]
		& G_{1,2}(\cdot) & \\
		\mu_2 & \cdots & B_0^{21}\\
		& G_{d_Y,2}(\cdot) &
	\end{bmatrix}' : \Real^{2 + p d} \to \Real^{d_Y} ,
\end{equation*}
$W_{1t} := ( 1, X_{t-1}, \ldots, X_{t-p}, Y_{t-1}', \ldots, Y_{t-p}' )' \in \Real^{1 + p d}$, and
$W_{2t} := (1, X_t, \allowbreak X_{t-1}, \ldots, X_{t-p}, \allowbreak Y_{t-1}', \ldots, \allowbreak Y_{t-p}', \allowbreak \epsilon_{1t} )' \in \Real^{2 + p d}$.
With a slight abuse of notation, similar that of Example~\ref{example:nonlin_impact}, we have written the functional terms in $\Pi_2$ as a ``vector product'', $G_2 \cdot (X_{t:t-p}', Y_{t-1:t-p}')' \equiv G_2(X_{t:t-p}, Y_{t-1:t-p})$, where $G_2$ is a vector of functions,
%
% \begin{equation*}
	%     G_2 \equiv \begin{bmatrix}
		%         G_{1,2} \\
		%         \cdots \\
		%         G_{d_Y, 2}
		%     \end{bmatrix} ,
	% \end{equation*}
%
one for each component of $Y_t$.

Whenever $\Pi_1 \not= 0$, $W_{2t}$ is an infeasible vector of regressors due to term ${\epsilon}_{1t}$. To estimate $\Pi_2$, one can use $\widehat{W}_{2t} = (1, X_t, \allowbreak X_{t-1}, \ldots, X_{t-p}, \allowbreak Y_{t-1}', \ldots, Y_{t-p}', \allowbreak \widehat{\epsilon}_{1t})'$ instead, which contains generated regressors in the form of residual $\widehat{\epsilon}_{1t}$.
A valid two-step estimation procedure \citep{goncalvesImpulseResponseAnalysis2021} is: 
(I) Regress $X_t$ on $W_{1t}$ to get estimate $\widehat{\Pi}_1$ and residuals $\widehat{\epsilon}_{1t} = X_t - \widehat{\Pi}_1' W_{1t}$;
(II) Semiparametrically regress $Y_t$ on $\widehat{W}_{2t}$ to get estimate $\widehat{\Pi}_2$. % since $G_{1,2}, \ldots, G_{d_Y,2}$ are functional parameters.

There are many ways to implement step (II), given that the literature on non- and semiparametric regression is mature. 
%Kernel \citep{tsybakovIntroductionNonparametricEstimation2009}, nearest-neighbor \citep{li2009nonparametric}, partitioning \citep{cattaneoLargeSampleProperties2020} and deep neural network \citep{farrellDeepNeuralNetworks2021a} estimators are all potentially valid alternatives. However, thanks to both availability of uniform inference results and ease of implementation, series methods stand out as a choice for semiparametric time series estimation and nonlinear impulse response computation. 
We rely on the sieve framework of \cite{chenOptimalUniformConvergence2015} as the workhorse to derive the main theoretical results. The sieve framework is quite rich, encompassing e.g. neural networks \citep{chenImprovedRatesAsymptotic1999,shenAsymptoticPropertiesNeural2023}.

\subsection{Semiparametric Series Estimation}

The semiparametric regression step we require is more readily analyzed by working on each component of $Y_t$. For $i \in \{1, \ldots, d_Y\}$, consider 
\begin{equation}\label{eq_main:regression_eq_i}
	Y_{t,i} = \mu_{2,i} + G_{2,i}(Y_{t-1}, \ldots, Y_{t-p}, X_t, X_{t-1}, \ldots, X_{t-p}) + B^{21}_{0,i} \epsilon_{1t} + \xi_{2t,i} .
\end{equation}
Let then $\pi_{2,i} := [ \mu_{2,i}, \: G_{2,i}, \: B^{21}_{0,i} ]'$. The regression equation for $\pi_{2,i}$ is thus $Y_{i} = \pi_{2,i}' W_{2} + \xi_{2i}$, where $Y_{i} = (Y_{1,i}, \ldots, Y_{n,i})'$ and $\xi_{2i} = (\xi_{2t,1}, \ldots, \xi_{2t,n})'$. The estimation target is the conditional expectation $\pi_{2,i}(w) = \E[ Y_{t,i} \:\vert\: W_{2t} = w ]$ under the assumption $\E[ \xi_{2t,i} \:\vert\: W_{2t} ] = 0$.

Assume that $G_{2,i} \in \Lambda$, where $\Lambda$ is a sufficiently regular function class to be specified in the following.
Given a collection $b_{1\kappa}, \ldots, b_{\kappa\kappa}$ of $\kappa \geq 1$ basis functions belonging to sieve $\mathcal{B}_\kappa$, define $b^\kappa(\cdot) := \left( b_{1\kappa}(\cdot), \ldots, b_{\kappa\kappa}(\cdot) \right)'$ and
$%\begin{equation*}
B_\kappa := \big( 
b^\kappa(Y_{0:1-p}, X_{1:1-p}), 
\ldots, 
b^\kappa(Y_{n-1:n-p}, X_{n:n-p}) 
\big)' 
$. %\end{equation*}
For univariate functions, one can directly apply spline, wavelet and Fourier sieves; in the multivariate case, tensor-product sieves are straightforward generalizations \citep{chenOptimalUniformConvergence2015}.
To construct the final semiparametric sieve for $\pi_{2,i}$, indicated by $\mathcal{B}_\pi$, let $b_{\pi,1K}, \ldots, b_{\pi,KK}$ be the sieve basis in $\Real \times \mathcal{B}_\kappa \times \Real$ for $\kappa \geq 1$ and $K = 2 + \kappa$ given by
$b_{\pi,1 K}(W_{2t}) = 1$, 
$b_{\pi,\ell K}(W_{2t}) = b_{\ell \kappa}(Y_{t-1:t-p}, X_{t:t-p})$, for $2 \leq \ell \leq \kappa+1$, and 
$b_{\pi,K K}(W_{2t}) = \epsilon_{1t}$.
%
% \begin{equation*}
	% 	b_{\pi,1 K}(W_{2t}) = 1 ,
	% 	\quad
	% 	b_{\pi,\ell K}(W_{2t}) = b_{\ell \kappa}(Y_{t-1:t-p}, X_{t:t-p}) ,
	% 	\quad\textnormal{and}\quad
	% 	b_{\pi,K K}(W_{2t}) = \epsilon_{1t}
	% \end{equation*}
% 
Note that $K$, the overall size of the sieve, grows linearly in $\kappa$, which itself controls the effective dimension of the nonparametric component of the sieve, $b_{\pi,2 K}, \ldots, b_{\pi,(\kappa+1) K}$. 
% In all theoretical results, we will focus on the growth rate of $K$ rather than $\kappa$, as asymptotically they differ at most by a constant multiplicative factor.
%
Introducing $b^K_\pi(w) := ( b_{\pi,1K}(w), \ldots, b_{\pi,KK}(w) )'$ and $B_\pi := ( b^K_\pi(W_{21}), \ldots, b^K_\pi(W_{2n}) )'$, the generally \textit{infeasible} least squares series estimator $\widehat{\pi}_{2,i}^*(w)$ is given by $\widehat{\pi}^*_{2,i}(w) = b^K_\pi(w)' ({B}_\pi' {B}_\pi)^{-1} {B}_K' Y_i$.
Similarly, the feasible series regression matrix $\widehat{B}_\pi := ( b^K_\pi(\widehat{W}_{21}), \ldots, b^K_\pi(\widehat{W}_{2n}) )'$ yields the \textit{feasible} least squares series estimator, $\widehat{\pi}_{2,i}(w) = b^K_\pi(w)' (\widehat{B}_\pi' \widehat{B}_\pi)^{-1} \widehat{B}_K' Y_i$.

To further streamline notation, wherever it does not lead to confusion, we will let $\pi_2$ be a generic coefficient vector belonging to $\{\pi_{2,i}\}_{i=1}^p$, as well as define $\widehat{\pi}_{2}$, $Y$ and $u_2$ associated to the same regression equation.

\subsection{Distributional and Sieve Assumptions}

To derive asymptotic consistency results, we begin by stating conditions on the basic probability structure of the model.

%\begin{assumption}\label{assumption:sieve_innovations}
%    (i) $\{\epsilon_t\}_{t \in \Int}$ are such that $\epsilon_t \overset{\text{i.i.d.}}{\sim} (0, \Sigma)$,
%    (ii) $\{\epsilon_{1t}\}_{t \in \Int}$ and $\{\epsilon_{2t}\}_{t \in \Int}$ are mutually independent,
%    (iii) 
%\end{assumption}
\begin{assumption}\label{assumption:stationarity}
	$\{Z_{t}\}_{t \in \Int}$ is a strictly stationary and ergodic time series.
\end{assumption}
\begin{assumption}\label{assumption:compactness}
	$X_{t} \in \mathcal{X} \subset \Real$, $Y_{t} \in \mathcal{Y} \subset \Real^{d_Y}$ and $\epsilon_t \in \mathcal{E} \subset \Real^{d}$ for all $t \in \Int$, where $\mathcal{X}$, $\mathcal{Y}$ and $\mathcal{E}$ are compact, convex sets with nonempty interior.
\end{assumption}

Assumption~\ref{assumption:stationarity} follows both \cite{goncalvesImpulseResponseAnalysis2021} and \cite{chenOptimalUniformConvergence2015}. 
Note that, as $W_{2t}$ depends only on $X_{t:t-p}$, $Y_{t-1:t-p}$ and $\epsilon_{1t}$, the entries of $\xi_{2t}$ in \eqref{eq_main:regression_model} are independent of $W_{2t}$, so that $\E[ u_{2it} \:\vert\: W_{2t} ] = 0$.

Assumption~\ref{assumption:compactness} implies that $X_t$, $Y_t$, as well as $\epsilon_t$ are bounded random variables. In (semi-)nonparametric estimation, imposing that $X_t$ is bounded almost surely is a standard assumption. Since lags of $Y_t$ and innovations $\epsilon_t$ contribute linearly to all components of $Z_t$, it follows that they too must be bounded.
In practice, Assumption~\ref{assumption:compactness} is not particularly restrictive, as many credibly stationary economic series often have reasonable implicit (e.g., inflation) or explicit bounds (e.g., employment rate).
The analysis of impulse responses on compact domains is, however, non-trivial. In Section~\ref{section:relaxed_shocks} below, we provide an IRF shock relaxation framework that can accommodate this setting.  

\begin{remark}\label{remark:compactness_assumption}
	Bounded support assumptions are uncommon in time series econometrics, as boundedness is not necessary in the analysis of linear models \citep{hamilton1994state,lutkepohlNewIntroductionMultiple2005,kilianStructuralVectorAutoregressive2017,stock2016dynamic}. 
	Unbounded regressors are significantly more complex to handle when working in the nonparametric setting.
	\cite{chenOptimalUniformConvergence2015} do work in weighted sup-norms, but their uniform results are stated only under a compact domain assumption.
	Avoiding Assumption~\ref{assumption:compactness} can be achieved by changing the model's equations -- e.g., the lags of $Y_t$ only affect $X_t$ via bounded functions -- but this further restricts the model. 
	Establishing a general (uniform) theory of nonparametric regressions with unbounded data domains, on the other hand, is a complex question. For kernel, partitioning and nearest-neighbor methods and i.i.d. data, a handful of papers develop results in $L^1$ and $L^2$ norms, see \cite{kohlerRatesConvergencePartitioning2006,kohlerOptimalGlobalRates2009} and \cite{kohlerOptimalGlobalRates2013}. For wavelet estimators in the i.i.d. regression setting, \cite{zhouUniformConvergenceRates2022} provided the first sup-norm result in Besov spaces with suboptimal rates.
    \cite{hansenUNIFORMCONVERGENCERATES2008} is, to the best of our knowledge, the only work providing convergence rates for local constant and local linear regression estimators in a dependent data setting without bounded support restrictions. Yet, even in the nonparametric LP setup of \cite{goncalvesNonparametricLocalProjections2024}, the authors argue that it is not clear if these results allow for IRF estimation guarantees over $\Real$.
	Construction of a comprehensive nonparametric framework to handle non-independent, unbounded data should thus be considered an important objective of future research.
\end{remark}

%
% \begin{assumption}\label{assumption:sieve_moments}
	% %\color{MATorange}
	%     (i) $\E[ \xi_{2it}^2 | \mathcal{F}_{t-1} ]$ is uniformly bounded for all $t \in \Int$ almost surely, (ii) $\E[ |\xi_{2it}|^{2+\delta} ] < \infty$ for some $\delta > 0$,
	%     %
	%     (iii) $\E[ | Z_{it} |^{2+\delta} ]$ is uniformly bounded for all $t \in \Int$ almost surely, and (iv) $\E[ Z_{it}^2 | \mathcal{F}_{t-1} ] < \infty$ for any $\delta > 0$.
	% \end{assumption}
%

% Let $\mathcal{F}_{t} = \sigma(Z_t, \epsilon_{t}, Z_{t-1}, \epsilon_{t-1}, \ldots)$ be the natural filtration defined up to time $t$. Thanks to Assumptions \ref{assumption:stationarity} and \ref{assumption:compactness} the following moment conditions also hold: (i) $\E[ \xi_{2it}^2 | \mathcal{F}_{t-1} ]$ is uniformly bounded for all $t \in \Int$ almost surely; (ii) $\E[ |\xi_{2it}|^{2+\delta} ] < \infty$ for some $\delta > 0$, and (iii) $\E[ | Z_{it} |^{2+\delta} ]$ is uniformly bounded for all $t \in \Int$ almost surely, and (iv) $\E[ Z_{it}^2 | \mathcal{F}_{t-1} ] < \infty$ for any $\delta > 0$ (c.f Assumption~2 in \citealt{chenOptimalUniformConvergence2015}).

Without loss of generality, let $\mathcal{Y} = [0,1]^{d_Y}$ and $\mathcal{X} = [0,1]$. 

\begin{assumption}\label{assumption:regressor_density}
	The unconditional densities of $Y_t$ and $X_{t}$ are uniformly bounded away from zero and infinity over $\mathcal{Y}$ and $\mathcal{X}$, respectively.
\end{assumption}

\begin{assumption}\label{assumption:function_class}
	For all $1 \leq i \leq d_Y$ the restriction of $G_{2,i}$ to $\mathcal{Y}^{\, p} \times \mathcal{X}^{1+p} \equiv [0,1]^{1+pd}$ belongs to the Hölder class $\Lambda^s([0,1]^{1+pd})$ of smoothness $s \geq 1$.
\end{assumption}

Assumptions \ref{assumption:regressor_density} and \ref{assumption:function_class} are classical in the nonparametric regression literature. 
%One only needs to restrict the complexity of functions $G_{2,i}$ since, for any $i$, the remainder of $\pi_{2,i}$ consists of linear functions. 
%
Let then $\mathcal{W}_2 \subset \Real^d$ be the domain of $W_{2t}$. By assumption, $\mathcal{W}_2$ is compact and convex and is given by the direct product
$%\begin{equation*}
\mathcal{W}_2 = \{1\} \times \mathcal{Y}^{\, p} \times \mathcal{X}^{1+p} \times \mathcal{E}_1 
$, %\end{equation*}
where $\mathcal{E}_1$ is the domain of structural innovations $\epsilon_{1t}$ i.e. $\mathcal{E} \equiv \mathcal{E}_1 \times \mathcal{E}_2$. 
% Let also $\lVert \pi_{2} \rVert_\infty := \sup_{w \in \mathcal{W}_2} \lvert \pi_{2}(w) \rvert$ be the sup-norm of the conditional mean function $\pi_{2}(w)$. 

\begin{assumption}\label{assumption:sieve_regularity}
	Define $\zeta_{K,n} := \sup_{w \in \mathcal{W}_2} \lVert b^K_\pi(w) \rVert$ and $\lambda_{K,n} := [ \lambda_{\min}(\E[\, b^K_\pi(W_{2t}) b^K_\pi(W_{2t})' \,]) ]^{-1/2}$.
	It holds: (i) there exist $\omega_1, \omega_2 \geq 0$ s.t. $ \sup_{w \in \mathcal{W}_2} \lVert \nabla b^K_\pi(w) \rVert \lesssim n^{\omega_1} K^{\omega_2}$; 
	(ii) there exist $\overline{\omega}_1  \geq 0$, $ \overline{\omega}_2 > 0$ s.t. $ \zeta_{K,n} \lesssim n^{\overline{\omega}_1} K^{\overline{\omega}_2} $; 
	(iii) $\lambda_{\min}(\E[\, b^K(W_{2t}) b^K(W_{2t})' \,]) > 0$ for all $K$ and $n$.
\end{assumption}

% \begin{assumption}\label{assumption:sieve_regularity}
	%     Define $\zeta_{K,n} := \sup_{w \in \mathcal{W}_2} \lVert b^K_\pi(w) \rVert$ and
	%     %
	%     \begin{equation*}
		%          \lambda_{K,n} := [ \lambda_{\min}(\E[\, b^K_\pi(W_{2t}) b^K_\pi(W_{2t})' \,]) ]^{-1/2} .
		%     \end{equation*}
	%     %
	%     It holds: 
	%     \vspace{-0.5em}
	%     \begin{itemize}
		%         \item[(i)] There exist $\omega_1, \omega_2 \geq 0$ s.t. 
		%             $ \sup_{w \in \mathcal{W}_2} \lVert \nabla b^K_\pi(w) \rVert \lesssim n^{\omega_1} K^{\omega_2}$.
		%         %
		%         \item[(ii)] There exist $\overline{\omega}_1  \geq 0$, $ \overline{\omega}_2 > 0$ s.t. 
		%             $ \zeta_{K,n} \lesssim n^{\overline{\omega}_1} K^{\overline{\omega}_2} $.
		%         %
		%         \item[(iii)] $\lambda_{\min}(\E[\, b^K(W_{2t}) b^K(W_{2t})' \,]) > 0$ for all $K$ and $n$.
		%     \end{itemize}
	% \end{assumption}

Assumption \ref{assumption:sieve_regularity} provides mild regularity conditions on the families of sieves that can be used for the series estimator. More generally, letting $\mathcal{W}_2$ be compact and rectangular makes Assumptions~\ref{assumption:sieve_regularity}(i)-(ii) hold for commonly used basis functions \citep{chenOptimalUniformConvergence2015}. 
The approximation properties of these sieves are well understood \citep{chenChapter76Large2007}.\footnote{See also \cite{chenPenalizedSieve2013,belloniNewAsymptoticTheory2015} for additional discussion and examples of sieve families.}
In particular, Assumption \ref{assumption:sieve_regularity}(i) holds with $\omega_1 = 0$ since the domain is fixed over the sample size. 
What is also needed is that the nonparametric components of the sieve given by $b_{\pi,1K}, \ldots, b_{\pi,KK}$ are able to approximate $G_{2,i}$ with an error that decays sufficiently fast with $K$. 
Lastly, Assumption~\ref{assumption:sieve_regularity}(iii) is a mild assumption on the conditioning of the semiparametric sieve.

\begin{assumption}\label{assumption:sieve_type}
	Sieve $\mathcal{B}_\kappa$ belongs to $\text{BSpl}(\kappa, \mathcal{W}_2, r)$ or $\text{Wav}(\kappa, \mathcal{W}_2, r)$, the tensor B-spline and tensor wavelet sieve, respectively, of degree $r$ over $\mathcal{W}_2$, 
	with $r \geq \max\{ s, 1 \}$.
\end{assumption}

% Note that Assumption~\ref{assumption:sieve_type} easily encompasses the linear components in the semiparametric sieve. 

We define
$\widetilde{b}^K_\pi(w) := \E[\, {b}^K_\pi(W_{2t}) {b}^K_\pi(W_{2t})' \,]^{-1/2}\, {b}^K_\pi(w)$ and
$\widetilde{B}_\pi := \big( \widetilde{b}^K_\pi(W_{21}), \allowbreak \ldots, \allowbreak \widetilde{b}^K_\pi(W_{2n}) \big)'$
%
% \begin{align*}
	%     \widetilde{b}^K_\pi(w) 
	%     & := \E\left[\, {b}^K_\pi(W_{2t}) {b}^K_\pi(W_{2t})' \, \right]^{-1/2}\, {b}^K_\pi(w) ,\\
	%     \widetilde{B}_\pi
	%     & := \left( \widetilde{b}^K_\pi(W_{21}), \ldots, \widetilde{b}^K_\pi(W_{2n}) \right)'
	% \end{align*}
%
to be the orthonormalized vector of basis functions and the orthonormalized regression matrix, respectively.
To derive uniform converges rates under dependence, we require that the Gram matrix of orthonormalized sieve converges to the identity matrix. 

\begin{assumption}\label{assumption:series_gram_matrix_convergence}
	It holds that $\lVert (\widetilde{B}_\pi' \widetilde{B}_\pi / n) - I_K \rVert = o_P(1)$.
\end{assumption}

\cite{chenOptimalUniformConvergence2015} introduced Assumption~\ref{assumption:series_gram_matrix_convergence} as a key ingredient for their proofs, while also showing that it holds whenever $\{W_{2t}\}_{t\in\Int}$ is either an exponential or algebraic $\beta$-mixing process. 
%There are a number of modeling setups (both linear and nonlinear) which are known to be $\beta$-mixing \citep{chenPenalizedSieve2013}. 
Unfortunately, mixing conditions are difficult to verify or test with respect to model specification, as they rely on bounding the worst-case ``independence gap'' between probability events (see Appendix~\ref{appendix:dependence}). 
We extend their approach to the case of geometrically decaying physical dependence, a metric proposed by \cite{wuNonlinearSystemTheory2005}. This is a setting where many estimation and inference results have been derived, see for example \cite{wuKernelEstimationTime2010,wuAsymptoticTheoryStationary2011a,chenSelfnormalizedCramertypeModerate2016} and references within.

\begin{assumptionp}{\ref*{assumption:series_gram_matrix_convergence}$\,'$}\label{assumption:physical_dep}
	Let $\{Z_t\}_{t \in \Int}$ be such that we can write $Z_{t+h} = \Phi^{(h)}(Z_t,\allowbreak \epsilon_{t+1:t+h})$ for some nonlinear maps $\Phi^{(h)}$ and innovations $\{\epsilon_t\}_{t \in \Int}$ over all $h \geq 1$. Then, for $r \geq 2$, there exists constants $a_1 > 0$, $a_2 > 0$ and $\tau \in (0,1]$ such that it holds
	\begin{equation*}
		\sup_t \big\lVert\, Z_{t+h} - \Phi^{(h)}(Z_t, \epsilon_{t+1:t+h}) \,\big\rVert_{L^r}
		\leq
		a_1 \exp(- a_2 \, h^\tau) .
	\end{equation*}
\end{assumptionp}

Assumption~\ref{assumption:series_gram_matrix_convergence} is subsumed by Assumption~\ref{assumption:physical_dep}. Using a physical dependence measure, we argue that it is also possible to swap mixing conditions with more explicit, primitive conditions derived exclusively in terms of model specification \eqref{eq_main:general_nonlin_model}. In particular, for specific semiparametric model specifications, it is possible to verify Assumption~\ref{assumption:physical_dep} directly by leveraging stability/contractivity theory of dynamic systems.
We refer the reader to Appendix~\ref{appendix:dependence} for an additional, in-depth discussion of dependence and physical conditions.

\subsection{Uniform Convergence and Consistency}

We can now state our main result, which shows that the two-step estimation procedure for \eqref{eq_main:regression_model} provides consistent estimates. 

\begin{theorem}\label{theorem:twostep_estimator_consistency}
	Let $\{Z_t\}_{t \in \Int}$ be determined by structural model \eqref{eq_main:structural_model}. Under Assumptions \ref{assumption:structural_model}, \ref{assumption:stationarity}, \ref{assumption:compactness},  \ref{assumption:regressor_density}, \ref{assumption:function_class}, \ref{assumption:sieve_regularity}, \ref{assumption:sieve_type} and \ref{assumption:physical_dep}, let $\widehat{\Pi}_1$ and $\widehat{\Pi}_2$ be the least squares and two-step semiparametric series estimators for $\Pi_1$ and $\Pi_2$, respectively. Then,
	$ \lVert \widehat{\Pi}_1 - \Pi_1 \rVert_\infty = O_P(n^{-1/2\,}) $
	and
	\begin{equation*}
		\lVert \widehat{\Pi}_2 - \Pi_2 \rVert_\infty 
		\leq
		O_P\left( \zeta_{K,n} \lambda_{K,n} \, \frac{K}{\sqrt{n}} \right) + \lVert \widehat{\Pi}^*_2 - \Pi_2 \rVert_\infty ,
	\end{equation*}
	where $\widehat{\Pi}^*_2$ is the infeasible series estimator involving $\epsilon_{1t}$.
    %
    % {\color{MATred}
 %    \begin{equation*}
	% 	\lVert \widehat{\Pi}_2 - \Pi_2 \rVert_\infty 
	% 	\leq
	% 	O_P\left( \max\left\{ 
 %            \zeta_{K,n} \lambda_{K,n} \, \frac{K}{\sqrt{n}} 
 %            ,\:
 %            \zeta_{K,n} \lambda_{K,n} \sqrt{ \frac{\log(n)}{n} } + C_\mathcal{B} K^{-s/d}
 %        \right\} \right) ,
	% \end{equation*}
    % where $C_\mathcal{B}$ is a constant depending only on the sup-norm approximation error of $\{G_{2,i}\}_{i=1}^{d_Y}$ by the chosen sieve.
    % }%
\end{theorem}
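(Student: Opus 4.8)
The plan is to handle the two estimation stages separately and then combine them, with the essential work concentrated in a generated-regressor perturbation argument for the second stage.

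\textbf{First stage.} Because the equation for $X_t$ is linear with a \emph{fixed} finite number of regressors $W_{1t}$, the estimator $\widehat{\Pi}_1$ is ordinary least squares, and I would write
\[
\widehat{\Pi}_1 - \Pi_1 = \Bigl( \tfrac{1}{n}\textstyle\sum_t W_{1t} W_{1t}' \Bigr)^{-1} \Bigl( \tfrac{1}{n}\textstyle\sum_t W_{1t} \epsilon_{1t} \Bigr).
\]
Under Assumptions~\ref{assumption:stationarity} and \ref{assumption:compactness}, the sample second-moment matrix converges in probability to its population limit, which is invertible by Assumption~\ref{assumption:regressor_density} (no degeneracy among the lagged regressors). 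Since $\epsilon_{1t}$ is independent of the past while $W_{1t}$ is past-measurable, $\E[W_{1t}\epsilon_{1t}]=0$, and the geometrically decaying physical dependence of Assumption~\ref{assumption:physical_dep} gives summable autocovariances, so $\tfrac{1}{n}\sum_t W_{1t}\epsilon_{1t} = O_P(n^{-1/2})$. As $\Pi_1$ has fixed dimension, the sup-norm is a maximum over finitely many coordinates and the parametric rate $O_P(n^{-1/2})$ is preserved.

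\textbf{Second-stage decomposition.} By the triangle inequality,
\[
\lVert \widehat{\Pi}_2 - \Pi_2 \rVert_\infty \leq \lVert \widehat{\Pi}_2 - \widehat{\Pi}^*_2 \rVert_\infty + \lVert \widehat{\Pi}^*_2 - \Pi_2 \rVert_\infty ,
\]
where $\widehat{\Pi}^*_2$ is the infeasible estimator using the true $\epsilon_{1t}$. The second term is exactly the infeasible series error retained in the statement (controlled by the uniform sieve rate of \cite{chenOptimalUniformConvergence2015} under Assumptions~\ref{assumption:regressor_density}--\ref{assumption:physical_dep}), so only the generated-regressor gap needs to be bounded. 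The feasible and infeasible designs differ solely through the replacement of $\epsilon_{1t}$ by $\widehat{\epsilon}_{1t}$, and from the first stage $\widehat{\epsilon}_{1t}-\epsilon_{1t} = -(\widehat{\Pi}_1-\Pi_1)'W_{1t}$; boundedness of $W_{1t}$ then yields $\max_t |\widehat{\epsilon}_{1t}-\epsilon_{1t}| = O_P(n^{-1/2})$. Writing $\widehat{B}_\pi = B_\pi + \Delta$, the mean-value inequality with Assumption~\ref{assumption:sieve_regularity}(i) (and $\omega_1=0$) gives $\lVert \Delta_{t,\cdot}\rVert \lesssim \sup_w \lVert \nabla b^K_\pi(w)\rVert \, |\widehat{\epsilon}_{1t}-\epsilon_{1t}|$, hence $\tfrac{1}{n}\lVert\Delta\rVert_F^2 = O_P(K^{2\omega_2}/n)$.

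\textbf{Propagating the perturbation.} I would then (a) use Weyl's inequality together with Assumption~\ref{assumption:physical_dep} to show the feasible Gram matrix $\widehat{B}_\pi'\widehat{B}_\pi/n$ inherits a uniformly positive smallest eigenvalue from the orthonormalized limit of Assumption~\ref{assumption:series_gram_matrix_convergence}, so its inverse stays controlled by $\lambda_{K,n}$; (b) expand $\widehat{\Pi}_2 - \widehat{\Pi}^*_2$ to first order in $\Delta$ through the least-squares map, with the higher-order terms being of smaller order; and (c) convert the resulting $\ell^2$ coefficient bound into the sup-norm via $\sup_w\lVert b^K_\pi(w)\rVert = \zeta_{K,n}$ and the orthonormalization factor $\lambda_{K,n}$. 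The dimension factor $K$ enters through cross-moment sums of the form $\tfrac{1}{n}\sum_t \widetilde{b}^K_\pi(W_{2t})(\widehat{\epsilon}_{1t}-\epsilon_{1t})$, since $\E\lVert \widetilde{b}^K_\pi(W_{2t})\rVert^2 = \operatorname{tr}(I_K) = K$; careful bookkeeping of these terms yields $\lVert \widehat{\Pi}_2 - \widehat{\Pi}^*_2\rVert_\infty = O_P(\zeta_{K,n}\lambda_{K,n}\,K/\sqrt{n})$.

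\textbf{Main obstacle.} The delicate step is the uniform (sup-norm) control of the generated-regressor gap: propagating the first-stage error through the \emph{random} inverse Gram matrix uniformly over $w \in \mathcal{W}_2$ while guaranteeing the feasible Gram matrix remains invertible, and correctly tracking the joint scaling of $\zeta_{K,n}$, $\lambda_{K,n}$ and $K$ under mere physical dependence rather than $\beta$-mixing. Establishing that both the eigenvalue perturbation and the cross-moment concentration hold under Assumption~\ref{assumption:physical_dep} is where the physical-dependence machinery of the Appendix carries the argument.
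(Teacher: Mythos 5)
Your proposal is correct and follows essentially the same route as the paper: the same immediate parametric argument for $\widehat{\Pi}_1$, the same triangle-inequality split isolating $\lVert \widehat{\Pi}_2 - \widehat{\Pi}^*_2 \rVert_\infty$, reliance on the Chen--Christensen lemmas for the infeasible term, and the same sup-norm conversion via $\zeta_{K,n}\lambda_{K,n}$ together with a Gram-matrix perturbation bound. The one place you diverge is in bounding the design perturbation: you apply a mean-value bound with $\sup_w \lVert \nabla b^K_\pi(w)\rVert$ to the whole basis vector, giving $\tfrac{1}{n}\lVert \Delta\rVert_F^2 = O_P(K^{2\omega_2}/n)$, whereas the paper observes that the generated regressor enters the sieve only as the last, \emph{linear} coordinate ($b_{\pi,KK}(W_{2t})=\epsilon_{1t}$) while the nonparametric basis functions do not involve $\epsilon_{1t}$ at all; hence $\widehat{B}_\pi - B_\pi = R_n$ has a single nonzero column with entries exactly $\widehat{\epsilon}_{1t}-\epsilon_{1t}$ and $\lVert R_n'R_n/n\rVert = O_P(n^{-1})$ with no $K$-dependence. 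That structural observation is what lets Lemma A.1 deliver $\lVert(\widehat{\widetilde{B}}_\pi'\widehat{\widetilde{B}}_\pi/n)-(\widetilde{B}_\pi'\widetilde{B}_\pi/n)\rVert = O_P(\sqrt{K/n})$ with no side condition, while your looser $K^{2\omega_2}/n$ quadratic term would require an additional restriction of the form $K^{2\omega_2-1/2}=o(\sqrt{n})$ to be dominated (harmless for the sieves considered, but not free). Also, minor bookkeeping: the full factor $K$ in the final rate comes from term $II$ as the product of $\lVert\widetilde{B}_\pi'Y/n\rVert = O_P(\sqrt{K})$ with the $O_P(\sqrt{K/n})$ Gram difference, not solely from the cross-moment sum you cite, which by itself only contributes $\sqrt{K/n}$.
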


% The proof is a moderate extension of Theorem~2.1 in \cite{chenOptimalUniformConvergence2015}.

Sup-norm bounds for $\lVert \widehat{\Pi}^*_2 - \Pi_2 \rVert_\infty$ may be obtained from Lemma~2.3 and Lemma~2.4 in \cite{chenOptimalUniformConvergence2015}. 
Assuming $s \geq 1$ and $d = 1$, such as in the setting of the additively separable model in Example~\ref{example:nonlin_impact} and in our empirical applications, it is possible to show that, if the optimal nonparametric rate for $K$ is used and the additive sieve inherits the conditioning of the underlying sieve bases, then $\widehat{\Pi}_2$ is sup-norm consistent. 

\begin{corollary}\label{corollary:twostep_estimator_op1}
    Under the same assumptions as Theorem~\ref{theorem:twostep_estimator_consistency}, further assume that $s \geq 1$, $G_{2,i}$, $1 \leq i \leq d_Y$, in \eqref{eq_main:regression_eq_i} is additively separable in all its components and $\lambda_{K,n} \lesssim 1$. Then for the choice $K \asymp (n / \log(n))^{1/(2 s + 1)}$ it holds that
    \begin{equation*}
        \lVert \widehat{\Pi}_2 - \Pi_2 \rVert_\infty
        =
        O_P \left(
            n^{- \frac{s-1}{2s+1}}
            \log(n)^{-\frac{3}{2(2s+1)}}
        \right)
    \end{equation*}
    and, in particular, $\lVert \widehat{\Pi}_2 - \Pi_2 \rVert_\infty = o_P(1)$.
\end{corollary}

The requirement $\lambda_{K,n} \lesssim 1$ for additively separable sieves is mild given the known properties of B-spline and wavelet sieves, although nontrivial. 
Since one cannot exploit sparsity in the case of non-locally supported bases, as is the case with linearly separable sieves, we assume $\lambda_{K,n}$ is upper bounded by a constant to streamline the analysis of the empirical sieve projection operator (see also the discussion in \citealp{huangLocalAsymptoticsPolynomial2003a}, Section 7).

%
% In particular, choosing the optimal nonparametric rate $K \asymp (n / \log(n))^{d/(2 s + d)}$ for the infeasible estimator would yield $\lVert \widehat{\Pi}^*_2 - \Pi_2 \rVert_\infty = O_P \big( (n / \log(n))^{-s/(2 s + d)} \big)$. 

\begin{remark}
    Several methods can be used to select $K$ in practice: Cross-validation, generalized cross-validation, Mallow's criterion, and others \citep{li2009nonparametric}. In the case of piecewise splines, once size is selected, knots can be chosen to be the $K$ uniform quantiles of the data. In our simulations and applications, for simplicity, we select sieve sizes manually, while knots are located following empirical quantiles.
\end{remark}

\section{Impulse Response Analysis}\label{section:impulse_response_analysis}

After discussing the estimation of the structural model's coefficients, we can now address the derivation of nonlinear impulse responses.
To ensure compatibility with bounded support assumptions, we introduce an extension of the classical IRF definition, termed \textit{relaxed impulse response function}, which differs only in the form of the shock applied to the model. 
We then show that nonlinear relaxed IRFs can be consistently estimated, and uniformly so for shocks picked within a compact range.

\subsection{Relaxed Shocks}\label{section:relaxed_shocks}

Under Assumptions~\ref{assumption:compactness} and \ref{assumption:regressor_density}, the standard construction of impulse responses following Section~\ref{section:nonlin_irfs} is, unfortunately, improper. This is immediately seen by noticing that, at impact, $X_t(\delta) = X_{t} + \delta$, meaning that $\P( X_t(\delta) \not\in \mathcal{X} ) > 0$ since there is a translation of size $\delta$ in the support of $X_t$. 
To address this problem, we introduce an extension to the standard additive shock that is used to define impulse responses.

We begin by defining mean-shift shocks, that is, shocks such that the distribution of time $t$ innovations is shifted to have mean $\delta$, while retaining compact support almost surely.

\begin{definition}
	Let $\mathcal{E}_1 \subseteq \Real$ and $\P(\epsilon_{1t} \in \mathcal{E}_1) = 1$. A {mean-shift structural shock} $\epsilon_{1t}(\delta)$ is an appropriately chosen transformation of $\epsilon_{1t}$ such that $\P(\epsilon_{1t}(\delta) \in \mathcal{E}_1) = 1$ and $\E[\epsilon_{1t}(\delta)] = \delta$.
\end{definition}

With a mean-shift shock, at impact it holds $X_{t}(\delta) = X_{t} + (\epsilon_{1t}(\delta) - \epsilon_{1t})$. In the standard setting, where $\E[\epsilon_t] = 0$ and $\mathcal{E}_1 \equiv \Real$, $\epsilon_{1t}(\delta) = \epsilon_{1t} + \delta$ is clearly valid. More generally, however, imposing $\E[\epsilon_{1t}(\delta)] = \delta$ requires the distribution of $\epsilon_{1t}$ to be known.
If instead one is willing to assume only that $\E[\epsilon_{1t}(\delta)] \approx \delta$, it is possible to sidestep this need by introducing a \textit{shock relaxation function}.

\begin{definition}
	Assume $\mathcal{E}_1 = [a, b]$. A shock relaxation function is a map $\rho : \mathcal{E}_1 \to [0, 1]$ such that $\rho(e) = 0$ for all $e \in \Real \,\setminus\, \mathcal{E}_1$, $\rho(e) \geq 0$ for all $e \in \mathcal{E}_1$ and there exists $e_0 \in \mathcal{E}_1$ for which $\rho(e_0) = 1$. Moreover, for a given shock $\delta \in \Real$,  
	\begin{itemize}
		\item[(i)] If $\delta > 0$, $\rho$ is said to be right-compatible with $\delta$ if $e + \rho(e)\delta \leq b$ for all $e \in \mathcal{E}_1$.
		\item[(ii)] If $\delta < 0$, $\rho$ is said to be left-compatible with $\delta$ if $e + \rho(e)\delta \geq a$ for all $e \in \mathcal{E}_1$.
		\item [(iii)] $\rho$ is compatible with shock magnitude $|\delta| > 0$ if it is both right- and left-compatible. 
	\end{itemize}
	%    For a given relaxation function $\rho$, let
	% $
	% 	\mathcal{D} := \sup\, \{\delta \in (0,\infty) \:\big\vert\: \rho \textnormal{ is compatible with } \delta \} ,
	% $
	% and define interval $[-\mathcal{D} , \mathcal{D}]$ to be range of compatibility of $\rho$.
\end{definition}

By setting $\epsilon_{1t}(\delta) = \epsilon_{1t} + \delta \rho(\epsilon_{1t})$ for a $\rho$ compatible with $\delta$, it follows that $ X_{t}(\delta) = X_{t} + \delta \rho(\epsilon_{1t})$ and $\lvert \E[\epsilon_{1t}(\delta)] \rvert = \lvert \delta \E[\rho(\epsilon_{1t})] \rvert \leq \lvert \delta \rvert$ since $\E[\rho(\epsilon_{1t})] \in [0, 1]$ by definition of $\rho$. If $\rho$ is a bump function, a relaxed shock is a structural shock that has been mitigated proportionally to the density of innovations at the edges of $\mathcal{E}_1$ and the squareness of $\rho$. 
%Figure \ref{fig:shock_relaxation} provides a graphical rendition of shock relaxation with a symmetric error distribution and a smooth relaxation function.

\begin{remark}
    When studying impulse responses, a researcher is primarily interested in shock $\delta$ itself, not in $\rho$, and the latter plays the role of a tuning parameter.
    From a practical perspective, given a choice of $\delta$ (or a range $\mathcal{D}$) of interest, the researcher should explicitly select $\rho$ to \textit{minimize distortions} implied by using $\epsilon_{1t} + \delta \rho(\epsilon_{1t})$ instead of a pure shift $\epsilon_{1t} + \delta$.
    If $\delta$ is sufficiently small and $\epsilon_{1t}$ is sufficiently concentrated, negligible distortions can be achieved. 
    Importantly, one can empirically check the impact of $\rho$ on the nonparametric IRFs by comparing relaxed and non-relaxed response estimates. 
    For example, in Appendix~\ref{appendix:robustness}, we provide robustness checks showing that, for both applied examples in Section~\ref{section:applications}, our chosen relaxation functions introduce negligible distortions.
\end{remark}

It is important to emphasize that shock relaxation is a generalization of standard shock designs. Indeed, when $\mathcal{X} = \Real$ and $\mathcal{E}_1 = \Real$, $\rho = 1$ is a relaxation function compatible with all $\delta \in \Real$.
Nonetheless, we may also wonder how much information on the nonlinear term $G_2$ we can recover at the ``boundary'' of a finite sample.
If $X_t$ is unbounded but well-concentrated, even under strong smoothness conditions and strictly positive density, little can be learned about the \textit{local} structure of regression functions in regions of low density.\footnote{In our regression setting, for example, Theorem 1 in \cite{kohlerOptimalGlobalRates2009} on $L_2$ kernel regression error, assuming $\E[\lvert X_t \rvert^{\beta}] \leq M < \infty$ for some constant $\beta > 2s$, would require the bandwidth to grow over $\mathcal{X}$ faster than $\lvert X_t \rvert$. This question is also linked to issues in kernel density estimation over sets with boundary, see e.g. \cite{karunamuni2005on,malec2014nonparametric,berry2017density} and references therein.}  
% Studying large shocks -- in terms of quantiles of $X_{t}$ -- without additional parametric restrictions may then lead to very imprecise empirical conclusions.

\begin{remark}\label{remark:shock_relax_choice}
	% Relaxation function $\rho$ naturally induces a range of compatible shocks, $\mathcal{D}_\rho := \sup\, \{\delta \in (0,\infty) \:\vert\: \rho \textnormal{ is compatible with } \delta \}$. 
	% {\color{MATred} To study IRFs with shocks $\delta \in (0, \mathcal{D}]$, it is desirable to use $\rho^*_\mathcal{D} := \arg\sup_{\rho \in \mathcal{R}_\mathcal{D}} \int_{\mathcal{E}_1} \rho(e) \textnormal{d} e$, as maximizing the integral implies minimizing distortions from a linear shift.}
	%
	In this paper, and more specifically in Sections~\ref{section:simulations} and \ref{section:applications}, we choose $\rho$ to be a symmetric exponential bump function, $\rho \in \{ x \mapsto \1{x \leq c} \exp(1 + (|x/c|^\alpha - 1)^{-1} ) \:|\: \alpha > 0 \}$ for some constant $c > 0$. 
	This $\mathcal{C}^\infty$ bump class is widely studied in both functional \citep{mitrovic1997fundamentals} and Fourier analysis \citep{stein2011fourier}.\footnote{For generic shock distributions, one can for also consider the class $\{ x \mapsto \1{a \leq x \leq b} \exp(1 + (|2(x-b)/(b-a) + 1|^\alpha - 1)^{-1} ) \:|\: \alpha > 0 \}$ of exponential bump functions with domain $[a,b] \subset \Real$.}
	We aim to set $\alpha$ to be as large as possible to minimize distortions from a linear shift, while retaining compatibility with $\delta \in \mathcal{D}$, where $\mathcal{D}$ is a set of shocks of empirical interest. 
\end{remark}

% \begin{remark}
	% 	Transformation $X_{t} + \delta \rho(\epsilon_{1t})$ is not directly applicable since $\epsilon_{1t}$ is not observed.  
	% 	In practice, therefore, one must consider $\widehat{X}_t(\delta) := X_t + \delta \rho(\widehat{\epsilon}_{1t})$. In what follows, we show that by replacing $\widetilde{\delta}_t$ with $\widehat{\widetilde{\delta}}_t = \delta \rho(\widehat{\epsilon}_{1t})$ it is still possible to consistently estimate unconditional expectations involving $X_{t+j}(\widetilde{\delta}_t)$ as well as $X_{t+j}$ by averaging over sample realizations.
	% \end{remark}

% \begin{figure}
	% 	\vspace{-1em}
	% 	\centering
	% 	%\includegraphics[width=0.95\linewidth]{figures/diagram-20230501.pdf}
	% 	\hfill
	% 	\begin{subfigure}[b]{0.42\textwidth}
		% 		\centering
		% 		\includegraphics[width=\textwidth]{figures/diagram__baseline_shocks.pdf}
		% 	\end{subfigure}
	% 	\hfill
	% 	\begin{subfigure}[b]{0.42\textwidth}
		% 		\centering
		% 		\includegraphics[width=\textwidth]{figures/diagram__relaxed_shocks.pdf}
		% 	\end{subfigure}
	% 	\hfill
	% 	\vspace{-0.2em}
	% 	\caption{Example of symmetric shock relaxation. Unperturbed (left, blue) versus shocked (right, orange) densities of innovations $\epsilon_{1t}$. The shock relaxation function (right, gray) and $\delta$ together determine the form of the relaxed shock used to compute the IRF.}
	% 	\label{fig:shock_relaxation}
	% \end{figure}

\subsection{Relaxed Impulse Response Consistency}

We will now study relaxed impulse responses in the setting of additively separable models. Additive separability is a common assumption in applied work, as we shall impose it in the empirical analyses of Section~\ref{section:simulations} and \ref{section:applications}. Further, collecting nonlinear terms over lags significantly streamlines notation and analysis, and aligns with the setup of Corollary~\ref{corollary:twostep_estimator_op1}. It would be straightforward, if tedious, to extend our derivations below to the more general setting of Theorem~\ref{theorem:twostep_estimator_consistency}.

Given $\delta \in \Real$ and compatible shock relaxation function $\rho$, let $\widetilde{\delta}_t := \delta \rho({\epsilon}_{1t})$.
Starting from a path $X_{t+j:t}$ and \eqref{eq_main:irf_h_separable}, the relaxed shock path is
\begin{equation*}
	X_{t+j}(\widetilde{\delta}_t) 
	= X_{t+j} + \Theta_{j,11} \widetilde{\delta}_t + \sum_{k=1}^j  \left[ \Gamma_{k,11} X_{t+j-k}(\widetilde{\delta}_t) - \Gamma_{k,11} X_{t+j-k} \right]
	=: \gamma_{j}(X_{t+j:t}; {\widetilde{\delta}}_t) .
\end{equation*}
The relaxed-shock impulse response is thus given by
\begin{align*}
	{\widetilde{\textnormal{IRF}}}_{h}(\delta) 
	& 
	:= \E[Z_{t+j}(\widetilde{\delta}_t) - Z_{t+j}] 
	= \Theta_{h,\cdot 1} \delta \, \E\left[ \rho(\epsilon_{1t}) \right] +  \sum_{k=1}^j \E\left[  \Gamma_{k} X_{t+j-k}(\widetilde{\delta}_t) - \Gamma_{k} X_{t+j-k} \right] .
\end{align*}
For $1 \leq \ell \leq d$, we define ${V}_{j,\ell}(\delta)$ to be the sample analog of the horizon $j$ nonlinear effect on the $\ell$th variable,
\begin{equation*}
	{V}_{j,\ell}(\delta) 
	:= 
	\frac{1}{n-j} \sum_{t=1}^{n-j} \left[ {\Gamma}_{j,\ell} {\gamma}_{j}(X_{t+j:t}; {\widetilde{\delta}}_t) - {\Gamma}_{j,\ell} X_{t+j} \right] 
	= 
	\frac{1}{n-j} \sum_{t=1}^{n-j}  v_{j,\ell}(X_{t+j:t}; {\widetilde{\delta}}_t) ,
\end{equation*}
where ${\Gamma}_{j,\ell}$ is the $\ell$th component of functional vector ${\Gamma}_{j}$. 
As $\epsilon_{1t}$ is not universally observable, we introduce its residual counterpart, $\widehat{\widetilde{\delta}}_t = \delta \rho(\widehat{\epsilon}_{1t})$.
The associated plug-in sample estimates are
$\widehat{V}_{j,\ell}(\delta) 
= 
({n-j})^{-1} \sum_{t=1}^{n-j} \widehat{v}_{j,\ell}\big( X_{t+j:t}; \widehat{\widetilde{\delta}}_t \big)$,
$\widehat{v}_{j,\ell}(X_{t+j:t}; \widehat{\widetilde{\delta}}_t)
=
\widehat{\Gamma}_{j,\ell} \widehat{\gamma}_{j}(X_{t+j:t}; \widehat{\widetilde{\delta}}_t) - \widehat{\Gamma}_{j,\ell} X_{t+j}$,
%
%
% \begin{equation*}
	% 	\widehat{V}_{j,\ell}(\delta) 
	% 	= 
	% 	\frac{1}{n-j} \sum_{t=1}^{n-j} \widehat{v}_{j,\ell}\big( X_{t+j:t}; \widehat{\widetilde{\delta}}_t \big)
	% 	%
	% 	,\quad
	% 	%
	% 	\widehat{v}_{j,\ell}(X_{t+j:t}; \widehat{\widetilde{\delta}}_t)
	% 	=
	% 	\widehat{\Gamma}_{j,\ell} \widehat{\gamma}_{j}(X_{t+j:t}; \widehat{\widetilde{\delta}}_t) - \widehat{\Gamma}_{j,\ell} X_{t+j},
	% \end{equation*}
%
and 
\begin{equation*}
	\widehat{\widetilde{\textnormal{IRF}}}_{h,\ell}(\delta)
	= 
	\widehat{\Theta}_{h,\cdot 1} \delta \left( \frac{1}{n} \sum_{t=1}^{n} \rho(\widehat{\epsilon}_{1t}) \right) 
	+
	\sum_{j=0}^h \widehat{V}_{j,\ell}(\delta) .
\end{equation*}

Our next theorem proves the consistency of the relaxed impulse responses estimator based on semiparametric series estimates. We leverage the sup-norm bounds of Theorem~\ref{theorem:twostep_estimator_consistency} to derive a result that is uniform in $\delta$ over a compact interval $[-\mathcal{D}, \mathcal{D}]$, $\mathcal{D} > 0$. This allows us to make valid comparisons between IRFs due to shocks of different sizes.

\begin{theorem}\label{theorem:consistent_irf}
	Let $\widehat{\widetilde{\textnormal{IRF}}}_{h,\ell}(\delta)$ be the semiparametric estimate for the horizon $h$ relaxed shock IRF of variable $\ell$ based on relaxation function $\rho$ with compatibility range $[-\mathcal{D}, \mathcal{D}]$. Under the assumptions in Theorem \ref{theorem:twostep_estimator_consistency} and Assumption~\ref{assumption:roots_linear_part_model},
	\begin{equation*}
		\sup_{\delta \in [-\mathcal{D}, \mathcal{D}]} \left\lvert \widehat{\widetilde{\textnormal{IRF}}}_{h,\ell}(\delta) 
		-
		\widetilde{\textnormal{IRF}}_{h,\ell}(\delta)
		\right\rvert
		=
		o_P(1)
	\end{equation*}
	for any fixed integers $0 \leq h < \infty$ and $1 \leq \ell \leq d$.
\end{theorem}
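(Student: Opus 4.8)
The plan is to split the estimation error of the relaxed IRF into a single linear piece and $h+1$ nonlinear pieces indexed by horizon, and to control each uniformly in $\delta$ by combining the sup-norm rates of Theorem~\ref{theorem:twostep_estimator_consistency} with a uniform law of large numbers over the compact shock range. Writing $v_{j,\ell}$ for the summand defining $V_{j,\ell}(\delta)$, the target decomposition is
\begin{equation*}
	\widehat{\widetilde{\textnormal{IRF}}}_{h,\ell}(\delta) - \widetilde{\textnormal{IRF}}_{h,\ell}(\delta)
	=
	\bigg[ \widehat{\Theta}_{h,\cdot 1} \delta \, \frac{1}{n}\sum_{t=1}^{n} \rho(\widehat{\epsilon}_{1t}) - \Theta_{h,\cdot 1}\,\delta\,\E[\rho(\epsilon_{1t})] \bigg]
	+ \sum_{j=0}^{h} \big[ \widehat{V}_{j,\ell}(\delta) - \E\, v_{j,\ell}(X_{t+j:t};\widetilde{\delta}_t) \big] .
\end{equation*}
Because $h$ is fixed and finite, it suffices to bound each of these $h+2$ terms in $o_P(1)$ uniformly over $\delta \in [-\mathcal{D},\mathcal{D}]$.

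First I would establish consistency of the building blocks. Under Assumption~\ref{assumption:roots_linear_part_model} the MA weights $\Psi_k$ are well-defined, and since $h$ is fixed the coefficient $\Theta_{h,\cdot 1}$ is a finite polynomial in the autoregressive parameters and the entries of $B_0^{-1}$, while each functional $\Gamma_{j,\ell}$ is a finite linear combination---with weights given by finitely many $\Psi_k$---of the nonparametric components $\arc{G}_{k,21}$. Theorem~\ref{theorem:twostep_estimator_consistency} therefore gives $\widehat{\Theta}_{h,\cdot 1}\to\Theta_{h,\cdot 1}$ and $\lVert \widehat{\Gamma}_{j,\ell} - \Gamma_{j,\ell}\rVert_\infty = o_P(1)$ on the compact domain. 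Moreover $\sup_t |\widehat{\epsilon}_{1t} - \epsilon_{1t}| = \sup_t |(\widehat{\Pi}_1-\Pi_1)'W_{1t}| = O_P(n^{-1/2})$ by the $\sqrt{n}$-rate of $\widehat{\Pi}_1$ and boundedness of the regressors (Assumption~\ref{assumption:compactness}). As $\rho$ is a $\mathcal{C}^\infty$ bump function it is globally Lipschitz, so $\sup_t |\rho(\widehat{\epsilon}_{1t})-\rho(\epsilon_{1t})| = O_P(n^{-1/2})$, and since $|\delta|\le\mathcal{D}$ the relaxed shocks obey $\sup_{\delta,t} |\widehat{\widetilde{\delta}}_t - \widetilde{\delta}_t| = O_P(n^{-1/2})$.

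The linear bracket is then immediate: $\delta$ enters as a bounded multiplier, the ergodic theorem (Assumption~\ref{assumption:stationarity}) gives $\frac{1}{n}\sum_t \rho(\epsilon_{1t}) \to \E[\rho(\epsilon_{1t})]$, and the residual substitution adds only an $O_P(n^{-1/2})$ correction, so this term is $o_P(1)$ uniformly in $\delta$. For each nonlinear bracket I would split $\widehat{V}_{j,\ell}(\delta)-\E\,v_{j,\ell} = [\widehat{V}_{j,\ell}(\delta)-V_{j,\ell}(\delta)] + [V_{j,\ell}(\delta)-\E\,v_{j,\ell}]$. The sampling error $V_{j,\ell}(\delta)-\E\,v_{j,\ell}$ is handled by a uniform LLN: the map $\delta\mapsto v_{j,\ell}(X_{t+j:t};\widetilde{\delta}_t)$ is continuous with a bounded envelope on the compact domain, so pointwise ergodic convergence upgrades to uniformity over $[-\mathcal{D},\mathcal{D}]$ by a standard stochastic-equicontinuity argument. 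The estimation error $\widehat{V}_{j,\ell}(\delta)-V_{j,\ell}(\delta)$ is bounded by successively replacing $\widehat{\Gamma}_{j,\ell}$, $\widehat{\gamma}_j$ and $\widehat{\widetilde{\delta}}_t$ by their infeasible counterparts, each replacement costing at most one of the sup-norm rates above.

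The main obstacle is controlling the recursive path $\widehat{\gamma}_j$: since $X_{t+j}(\widetilde{\delta}_t)$ is built by iterating the estimated equations $j$ times, errors in the linear coefficients, in $\widehat{\Gamma}$ and in $\widehat{\widetilde{\delta}}_t$ compound through the recursion. The observation that tames this is that $j\le h$ is fixed and each iteration map is Lipschitz on the compact domain, with a Lipschitz constant uniform in $\delta\in[-\mathcal{D},\mathcal{D}]$ because the derivatives of $\arc{G}_2$ and of $\rho$ are bounded there by Assumptions~\ref{assumption:compactness} and \ref{assumption:function_class}. Hence the accumulated error after $j$ steps is a finite product of Lipschitz constants times the one-step errors and stays $o_P(1)$ uniformly in $\delta$. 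Summing the $h+1$ nonlinear brackets and the linear bracket and applying the triangle inequality then yields the claimed uniform consistency.
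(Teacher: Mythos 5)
Your proposal is correct and follows essentially the same route as the paper's proof: the same decomposition into a linear term plus $h+1$ horizon-indexed nonlinear terms, consistency of $\widehat{\Theta}_{h,\cdot 1}$ and $\widehat{\Gamma}_{j,\ell}$ inherited from Theorem~\ref{theorem:twostep_estimator_consistency} through the finite MA convolution, a mean-value/Lipschitz argument on $\rho$ and on the iterated path to absorb the $O_P(n^{-1/2})$ generated-regressor error, a WLLN for the sampling error, and uniformity in $\delta$ from the explicit $\delta$-dependence of the bounds. The only cosmetic difference is that the paper first splits off the residual-substitution error via an intermediate quantity evaluated at the true $\epsilon_{1t}$ and then handles estimation and sampling error, whereas you perform the successive replacements within each bracket; the substance is identical.
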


\begin{remark}
	By construction of $\widehat{\widetilde{\textnormal{IRF}}}_{h,\ell}(\delta)$, Proposition~\ref{prop:irf_iterate_algorithm} remains valid when computing ${\widetilde{\textnormal{IRF}}}_{h}(\delta)$ instead of $\textnormal{IRF}_{h}(\delta)$. The only adjustment to be made is that in step (i) one must set $X_{t}(\delta) = X_{t} + \delta \rho(\epsilon_{1t})$ and iterate forward accordingly. Assumptions \ref{assumption:structural_model}, \ref{assumption:stationarity}, and \ref{assumption:physical_dep} ensure that the IRFs of interest are well-defined.
\end{remark}

\begin{remark}
	Our definition of a compatible relaxation function is \textit{static}, as it considers only the impact effect of a shock. Nonetheless, $X_t(\delta) \in \mathcal{X}$ for all $t$ must hold to properly define ${\widetilde{\textnormal{IRF}}}_{h}(\delta)$. In theory, given $\delta$, one can always either expand $\mathcal{X}$ or strengthen $\rho$ so that compatibility is enforced at all horizons $1 \leq h \leq H$. In simulations, the choice of domains and relaxation functions can be done transparently. When working with empirical data, unless $X_t$ is exogenous or strictly autoregressive, more care has to be taken to check that there is no dynamic domain violation. In Section~\ref{subsection:app_istrefi}, where $X_t$ is an endogenous series, we discuss such a robustness check.
\end{remark}

\section{Simulations}\label{section:simulations}

To analyze the performance of the two-step semiparametric estimation strategy discussed above, we begin by considering the two simulation setups employed by \cite{goncalvesImpulseResponseAnalysis2021}. We compare the bias and MSE of the estimated relaxed shocked impulse response functions for different methods. The population responses we consider in this section are also constructed using the same shock relaxation scheme; therefore, both relaxed IRF estimators are correctly specified. Appendix~\ref{appendix:robustness} includes a robustness analysis wherein non-relaxed nonlinear population IRFs are targeted.
We also provide simulations under a misspecified design, which highlight how, in larger samples, the nonparametric sieve estimator consistently recovers impulse responses, whereas a least-squares estimator constructed with a pre-specified nonlinear transform may not.\footnote{Population impulse responses are estimated with $10^5$ replications, while MSE and bias of both semiparametric and parametric IRFs are computed with $10^4$ Monte Carlo replications. In all setups, a cubic B-spline sieve is used.}

\paragraph*{Benchmarks.}

Like in \cite{goncalvesImpulseResponseAnalysis2021}, we consider two simulation setups: A bivariate design with identified shocks (DGPs 1-3) and a three-variable design with partial block-recursive identification (DGPs 4-6).
In both, we set a sample size of $n = 240$, which is realistic for most macroeconomic data settings: this is approximately equivalent to 20 years of monthly data or 60 years of quarterly data \citep{goncalvesImpulseResponseAnalysis2021}. We discuss here only the bivariate simulation design with shock $\delta = +1$, and refer the reader to Appendix~\ref{appendix:sim_details} for the block-recursive setup.

\begin{figure}[t!]
	\centering
	\includegraphics[width=\textwidth]{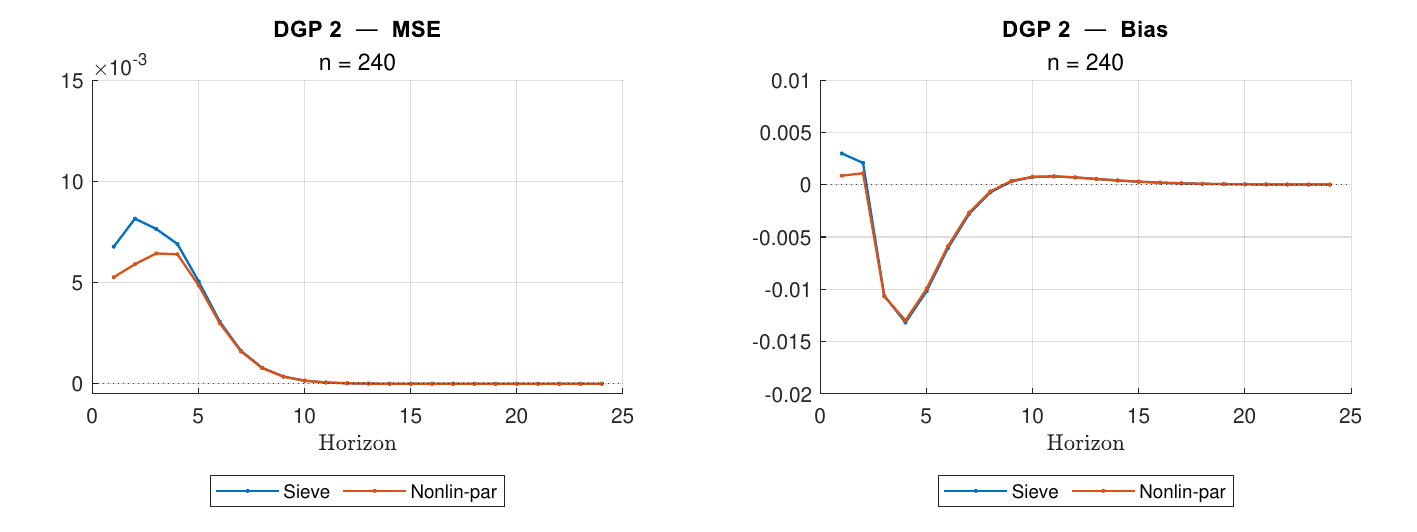}
	\caption{Simulation results for DGP 2 with $\delta = +1$.}
	\label{fig:mse_bias_DGP_2}
\end{figure}

We set either $X_t = \epsilon_{1t}$ (DGP 1), $X_t = 0.5 X_{t-1} + \epsilon_{1t}$ (DGP 2) or $X_t = 0.5 X_{t-1} + 0.2 Y_{t-1} + \epsilon_{1t}$ (DGP 3), and 
\begin{equation*}
	Y_t = 0.5 Y_{t-1} + 0.5 X_{t} + 0.3 X_{t-1} - 0.4 \max(0, X_t) + 0.3 \max(0, X_{t-1}) + \epsilon_{2t} .
\end{equation*}
Innovations $\epsilon_{1t}$ and $\epsilon_{2t}$ are drawn as independent, truncated standard Gaussian variables over $[-3, 3]$. The shock relaxation function is $\rho(z) = \mathbb{I}\{|z| \leq 3 \}\exp\left( 1 + ( \lvert {z}/{3} \rvert^4 - 1 )^{-1} \right)$, cf. Remark~\ref{remark:shock_relax_choice}.
In Figure~\ref{fig:mse_bias_DGP_2} we show MSE and bias curves for the IRF on $Y_t$ in DGP 2, where $X_t$ is an exogenous AR(1) process. One can see that the sieve IRF leads only to a minor increase in mean squared error at short horizons compared to directly estimating the parameter of the true specification. This marginal increase in MSE is consistent across DGPs 1 through 3.

These simulations show that there is negligible loss of efficiency in terms of either MSE or bias when implementing the fully flexible semiparametric estimates at realistic sample sizes. We confirm these results when studying DGPs 4-6, where estimation of the structural matrix $B_0$ is included in the regression problem. Detailed results can be found in Appendix~\ref{appendix:sim_details}.

\paragraph*{Misspecified Model.}

To assess the robustness of the proposed semiparametric approach versus the parametric nonlinear model, we consider a modified process (DGP 7):
\begin{equation}\label{sim_eg:DGP_7}
	\begin{split}
		X_t & = 0.8 X_{t-1} + \epsilon_{1t} , \\
		Y_t & = 0.5 Y_{t-1} + 0.9 \varphi(X_t) + 0.5 \varphi(X_{t-1}) + \epsilon_{2t} ,
	\end{split}
\end{equation}
where $\varphi(x) := (x - 1)(0.5 + \tanh(x - 1)/2)$. In this design, we assume that the researcher's prior is $\varphi(x) = \max(0, x)$, as in the benchmark simulations. To emphasize the difference in estimated IRFs, in this setup we focus on $|\delta| = 2$ and $n = 2400$; innovations $\epsilon_{1t}$ and $\epsilon_{2t}$ are drawn from a standard Gaussian distribution truncated over $[-5, 5]$, and $\rho(z) = \mathbb{I}\{|z| \leq 5 \}\exp( 1 + ( \lvert {z}/{5} \rvert^{3.9} - 1 )^{-1} )$.
As Figure~\ref{fig:mse_bias_DGP_2_plus} shows, positive-shock parametric nonlinear IRF estimates are severely biased, while semiparametric sieve IRFs have comparatively negligible error: This yields an up to 4 times reduction of overall MSE at short horizons. Appendix~\ref{appendix:sim_details} provides additional simulation results showing that the same improvements hold when $\delta = - 2$. There, we also discuss the setting where $\varphi(x)$ is replaced with map $\widetilde{\varphi}(x) = \varphi(x+1)$, which agrees closely with $\max(0, x)$. In this last setting, we find that parametric nonlinear regression dominates in MSE and bias terms. As one might expect, therefore, parametric modeling is reliable only in cases where a sufficiently good model prior is available.

\begin{figure}[t!]
	\centering
	\includegraphics[width=\textwidth]{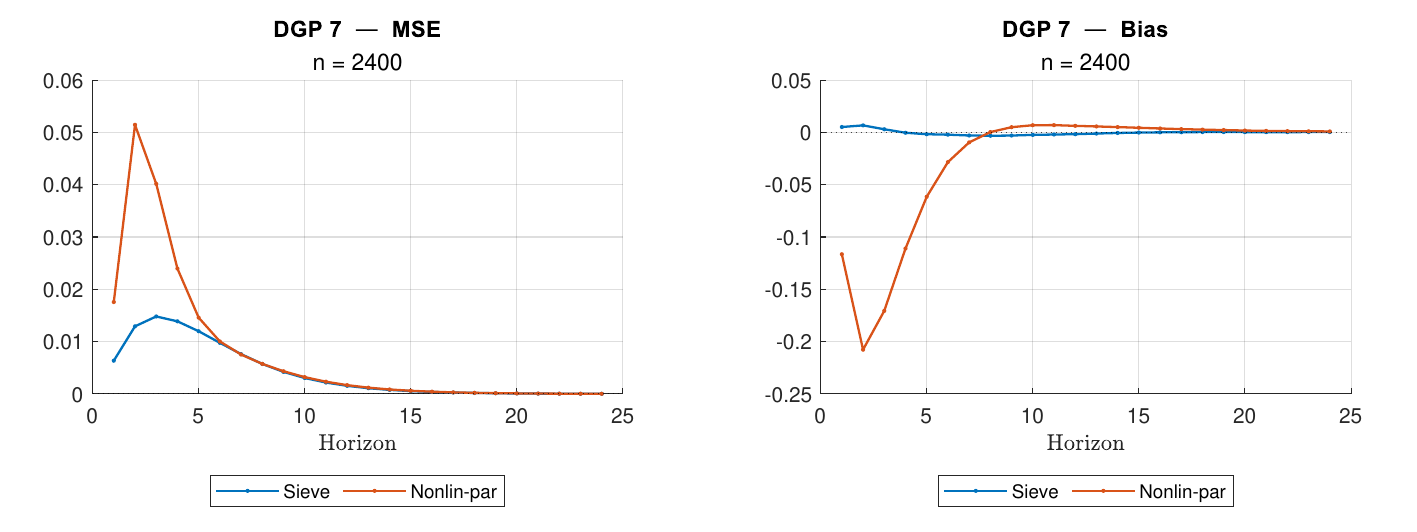}
	\caption{Simulation results for DGP 7 with shock $\delta = +2$.}
	\label{fig:mse_bias_DGP_2_plus}
\end{figure}

\section{Empirical Applications}\label{section:applications}

In this section, we showcase the practical utility of the proposed semiparametric sieve estimator by considering two applied exercises. 
In line with previous work applying nonlinear IRF methods to macroeconomic data, such as e.g. \cite{kilian2011responses,goncalvesImpulseResponseAnalysis2021,goncalvesStatedependentLocalProjections2024} and \cite{goncalvesNonparametricLocalProjections2024}, our discussion is focused on point impulse response estimates.
In both cases, we will consider sieve IRFs constructed with shock relaxation: Appendix~\ref{appendix:robustness} shows that our analysis remains valid also when evaluating non-relaxed semiparametric responses.

\subsection{Monetary Policy Shocks}\label{subsection:app_goncalves}

We first consider a four-variable model identical to the one analyzed by \cite{goncalvesImpulseResponseAnalysis2021}, and based on \cite{tenreyro2016pushing}. Let $Z_t = (X_t, \textnormal{FFR}_t, \textnormal{GDP}_t, \textnormal{PCE}_t)'$, where $X_t$ is the series of narrative U.S. monetary policy shocks, $\textnormal{FFR}_t$ is the federal funds rate, $\textnormal{GDP}_t$ is log-real GDP and $\textnormal{PCE}_t$ is PCE inflation.\footnote{In \cite{goncalvesImpulseResponseAnalysis2021} p.~122, it is mentioned that CPI inflation is included in the model, but both in the replication package made available by one the authors (\url{https://sites.google.com/site/lkilian2019/research/code}) from which we source the data, and in \cite{tenreyro2016pushing}, PCE inflation is used instead. Moreover, the authors say that both the FFR and PCE enter the model in first differences, yet, in their code, these variables are kept in levels. We thus consider a model in levels to allow for a proper comparison between estimation methods, although the series are highly persistent.} 
As a pre-processing step, GDP is transformed to log GDP and then linearly detrended. The data is available quarterly and spans from 1969:Q1 to 2007:Q4.
As in \cite{tenreyro2016pushing}, we use a model with one lag, $p=1$. The narrative shock $X_t$ is considered to be an i.i.d. sequence, i.e. $X_t = \epsilon_{1t}$, therefore we assume no dependence on lagged variables when implementing the pseudo-reduced form \eqref{eq_main:pseudo_reduced_form_structural_model}. Like in \cite{goncalvesImpulseResponseAnalysis2021}, we consider positive and negative shocks of size $|\delta| = 1$ and choose
$\rho(z) = \mathbb{I}\{ |z| \leq 4 \} \exp( 1 + ( \lvert {z}/{4} \rvert^{6} - 1 )^{-1} )$
%
% \begin{equation*}
	%     \rho(z) = \mathbb{I}\{ |z| \leq 4 \} \exp\left( 1 + \left[ \left\lvert \frac{z}{4} \right\rvert^{6} - 1 \right]^{-1} \right)
	% \end{equation*}
%
to be the shock relaxation function. Figure \ref{fig:plot_app_gonc2021_meanshift} in the Online Appendix provides a check for the compatibility of $\rho$ given the sample distribution of $X_t$. Knots for sieve estimation are located at $\{-1, 0, 1\}$. The model is block-recursive, and U.S. monetary policy shocks are identified without the need to impose additional assumptions on the remaining shocks. 
\cite{goncalvesImpulseResponseAnalysis2021}, like \cite{tenreyro2016pushing}, use two nonlinear transformations, $F(x) = \max(0, x)$ and $F(x) = x^3$, to try to gauge how negative versus positive and large versus small shocks, respectively, affect the U.S. macroeconomy. They find that the two maps yield very similar responses, so we focus on comparing the IRFs estimated via sieve regression with the ones obtained by setting $F(x) = \max(0, x)$, as well as linear IRFs.
\begin{figure}[t!]
	%\vspace{-1em}
	\centering
	\begin{subfigure}[b]{\textwidth}
		\centering
		\includegraphics[width=\textwidth]{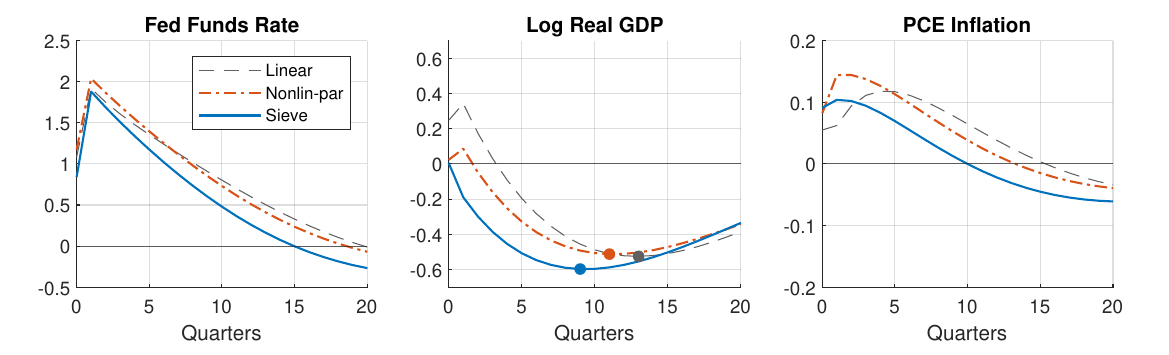}
		\caption{$\delta = +1$}
		% \label{}
	\end{subfigure}
	\\[15pt]
	\begin{subfigure}[b]{\textwidth}
		\centering
		\includegraphics[width=\textwidth]{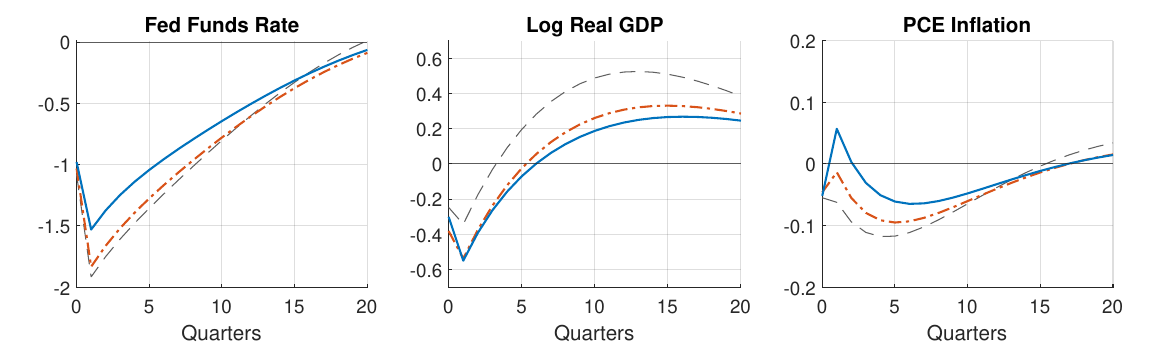}
		\caption{$\delta = -1$}
		% \label{}
	\end{subfigure}
	\\[5pt]
	% \vspace{-1em}
	\caption{Effect of an unexpected U.S. monetary policy shock on federal funds rate, GDP and inflation. Linear (gray, dashed), parametric nonlinear with $F(x) = \max(0, x)$ (red, point-dashed) and sieve (blue, solid) structural impulse responses. For $\delta = +1$, the lowest point of the GDP response is marked with a dot. Note that $\delta = 1 \approx 1.7 \times \sigma_{\epsilon,1}$.}
	\label{fig:app_gonc2021_irfs}
\end{figure}
Figure \ref{fig:app_gonc2021_irfs} plots estimated impulse responses to both positive and negative monetary policy shocks. 
The impact on the federal funds rate is consistent across all three procedures. The semiparametric nonlinear response for GDP, unlike in the case of linear and parametric nonlinear IRFs, is nearly zero at impact and has a monotonic decrease until around 10 quarters ahead. The change in shape is meaningful, as the procedure of \cite{goncalvesImpulseResponseAnalysis2021} still yields a small short-term upward jump in GDP when a monetary tightening shock hits. Moreover, after the positive shock, the sieve GDP responses reaches its lowest value 4 and 2 quarters before the linear and parametric nonlinear responses, while its size is 13\% and 16\% larger, respectively.\footnote{The strength of this effect changes across different shock sizes, as Figure \ref{fig:plot_app_gonc2021_scale} in Appendix \ref{appendix:additional_plots} proves. As shock sizes get smaller, nonlinear IRFs, both parametric and sieve, show decreasing negative effects.} Finally, the sieve PCE response is positive for a shorter interval, but looks to be more persistent once it turns negative, also 10 months after impact.

When the shock is expansionary, one sees that the semiparametric FFR response is marginally mitigated compared to the alternative estimates. An important puzzle is due to the negative impact on GDP: Both types of nonlinear responses show a drop in output in the first 5 quarters. Such a quick change seems unrealistic, as one does not expect inflation to suddenly reverse sign, but, as \cite{goncalvesImpulseResponseAnalysis2021} also remark, the overall impact on inflation of both shocks is small when compared to the change in federal funds rate.

\subsection{Uncertainty Shocks}\label{subsection:app_istrefi}

Traditional central bank policymaking is heavily guided by the principle that a central bank can and should influence expectations. Therefore, controlling the (perceived) level of ambiguity in current and future commitments is key. \cite{istrefiSubjectiveInterestRate2018} provide an analysis of the impact of unforeseen changes in the level of subjective interest rate uncertainty on the macroeconomy.  
For the sake of simplicity, our evaluation will focus only on their 3-month-ahead uncertainty measure for short-term interest rate maturities (3M3M) and the U.S. economy.
Like in \cite{istrefiSubjectiveInterestRate2018}, let $Z_t = (X_t, \textnormal{IP}_t, \textnormal{CPI}_t, \textnormal{PPI}_t, \textnormal{RT}_t, \textnormal{UR}_t)'$ be a vector where $X_t$ is the chosen uncertainty measure, $\textnormal{IP}_t$ is the (log) industrial production index, $\textnormal{CPI}_t$ is the CPI inflation rate, $\textnormal{PPI}_t$ is the producer price inflation rate, $\textnormal{RT}_t$ is (log) retail sales and $\textnormal{UR}_t$ is the unemployment rate. The nonlinear model specification is given by
%
%\begin{equation*}
%	Z_t = \mu + A_1 Z_{t-1} + A_2 Z_{t-1} + F_1(X_{t-1}) + F_2(X_{t-2}) + D W_t + u_t ,
%\end{equation*}
%
$Z_t = \mu + A_1 Z_{t-1} + A_2 Z_{t-1} + F_1(X_{t-1}) + F_2(X_{t-2}) + D W_t + u_t ,$
where $W_t$ includes a linear time trend and oil price $\textnormal{OIL}_t$.\footnote{Inclusion of linear exogenous variables in the semiparametric theoretical framework in Section~\ref{section:estimation} is straightforward as long as one can assume that they are stationary and weakly dependent. The choice of using $p=2$ is identical to that of the original authors, based on BIC.} 
The data has a monthly frequency and spans the period between May 1993 and July 2015.\footnote{We utilize the original data employed by the authors, who kindly shared it upon request. However, we rescale retail sales ($\textnormal{RT}_t$) so that the level in January 2000 equals 100.} 
Note here that nonlinear functions $F_1$ and $F_2$ are assumed not to affect $X_t$, which is the structural variable. The linear VAR specification of \cite{istrefiSubjectiveInterestRate2018} is recovered by simply assuming $F_1 = F_2 = 0$ prior to estimation. Since they use recursive identification and order the uncertainty measure first, this model too is block-recursive.
We consider a positive shock with intensity $\delta = \sigma_{\epsilon,1}$, where $\sigma_{\epsilon,1}$ is the standard deviation of structural innovations. In this empirical exercise, the relaxation function is  
$\rho(z) = \mathbb{I}\left\{ |z| \leq 1/4 \right\} \exp( 1 + ( \lvert 4 x \rvert^{8} - 1 )^{-1} ) $
%
% \begin{equation*}
	%     \rho(z) = \mathbb{I}\left\{ |z| \leq 1/4 \right\} \exp\left( 1 + \left[ \left\lvert 4 x \right\rvert^{8} - 1 \right]^{-1} \right) 
	% \end{equation*}
%
and we set $\{0.1, 0.3\}$ to be the cubic spline knots. 
As 3M3M is a non-negative measure of uncertainty, some care must be taken to make sure that the shocked paths for $X_t$ do not reach negative values. Figure \ref{fig:plot_app_istrefi2018_meanshift} in Appendix \ref{appendix:additional_plots} shows that the relaxation function is compatible, and also that the shocked nonlinear paths of $X_t$ with impulse $\delta$ and $\delta'$ all do not cross below zero.

\begin{figure}[t!]
	%\vspace{-1em}
	\centering
	\includegraphics[width=\textwidth]{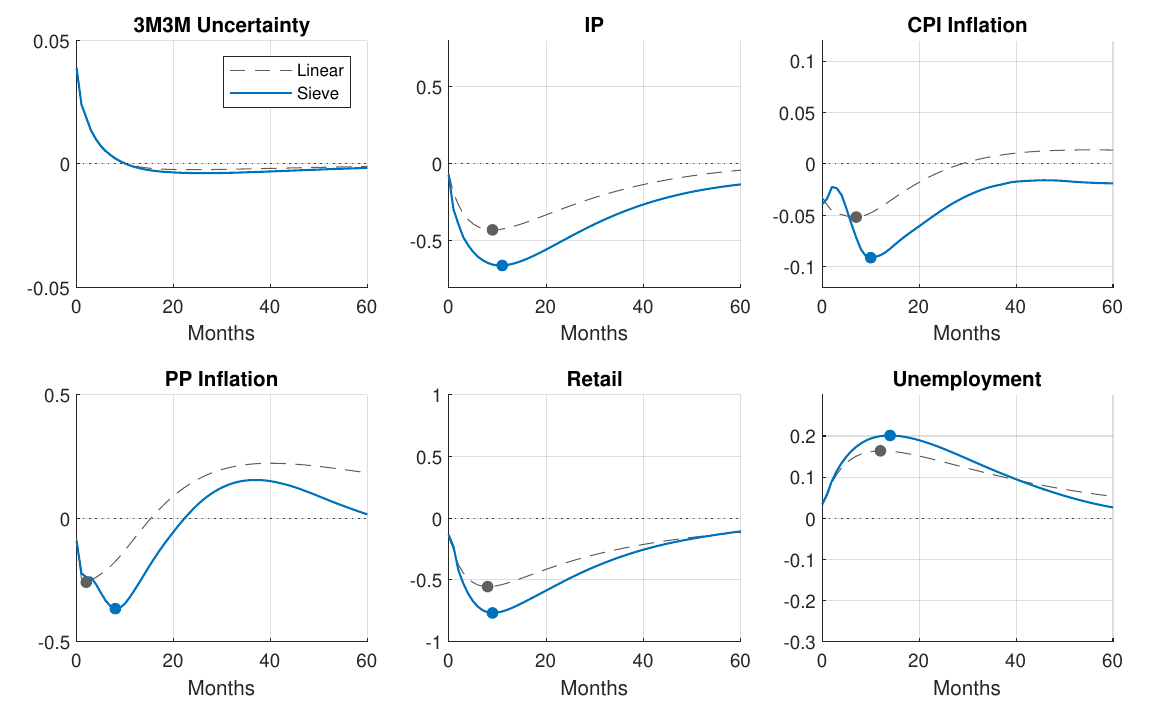}
	\\[5pt]
	% \vspace{-1em}
	\caption{Effect of an unexpected, one-standard-deviation uncertainty shock to U.S. macroeconomic variables. Linear (gray, dashed) and sieve (blue, solid) structural impulse responses. The extreme points of the responses are marked with a dot.}
	\label{fig:app_istrefi2018_irfs}
\end{figure}

Figure \ref{fig:app_istrefi2018_irfs} presents both the linear and nonlinear structural impulse responses obtained. Importantly, even though \cite{istrefiSubjectiveInterestRate2018} estimate a Bayesian VAR model and here we consider a frequentist vector autoregressive benchmark, the shape of the IRFs is retained, cf. the median response in the top row of their Figure 4. When uncertainty increases, industrial production drops, and the size and extent of this decrease are intensified in the nonlinear responses. The sieve IP response reaches a value that is $54\%$ lower than that of the respective linear IRF.\footnote{Figure \ref{fig:plot_app_istrefi2018_scale_IP} in Appendix \ref{appendix:additional_plots} confirms that this difference is consistent over a range of shock sizes, too.} A similar behavior holds true for retail sales ($38\%$ lower) and unemployment ($23\%$ higher), proving that this shock is more profoundly contractionary than suggested by the linear VAR model. 
Further, CPI and PP inflation both display short-term fluctuations, which strengthen the short- and medium-term impact of the shock. CPI and PP nonlinear inflation responses are $76\%$ and $41\%$ stronger than their linear counterpart, respectively. These differences show that linear IRFs might be both under-estimating the short-term intensity and misrepresenting the long-term persistence of inflation reactions.  
%\cite{nowzohourMOREFEELINGCONFIDENCE2020} presented evidence that consumer consumption growth, credit growth and unemployment do not co-move with the policy uncertainty index (EPU) of \cite{bakerMeasuringEconomicPolicy2016}, but are negatively correlated with financial volatility. 
Given the strength of nonlinear IRFs, this discrepancy may also suggest that the 3M3M uncertainty measure partially captures the financial channel, too.
Hence, we believe our analysis provides some evidence that the linear VAR used by \cite{istrefiSubjectiveInterestRate2018} may miss some key impulse response features.\footnote{See also Figure \ref{fig:plot_app_istrefi2018_regfuns}, which plots the estimated functions of the endogenous variables.}
%Indeed, Figure \ref{fig:plot_app_istrefi2018_regfuns} in Appendix \ref{appendix:additional_plots} - which plots regression functions of endogenous variables with respect to $X_t$ - shows that high and low uncertainty levels may have significantly different effects on endogenous economic variables. In particular, at the second lag, tail effects appear to be milder, while at low levels changes in uncertainty have more pronounced impact.

\section{Conclusion}\label{section:conclusion}

This paper studies the application of semiparametric series estimation to the problem of structural impulse response analysis for time series. After first discussing partial block-recursive identification, we have shown that, for models with moderate physical dependence, series estimation can be employed and structural IRFs are consistently estimated. Simulations showcase that this approach is both valid in moderate samples and has the added benefit of being robust to misspecification of the nonlinear model components. Finally, two empirical applications showcase the potential insights gained by departing from either linear or parametric nonlinear specifications when estimating structural responses.

A key aspect that we have not touched upon is inference in the form of confidence intervals. The development of inferential theory appears feasible in light of the uniform inference results obtained by e.g. \cite{belloniNewAsymptoticTheory2015} in the i.i.d. setting and \cite{liUniformNonparametricInference2020} for time series data, and it is an important direction for future research.
Studying other sieve spaces, such as neural networks \citep{chenImprovedRatesAsymptotic1999,farrellDeepNeuralNetworks2021a} or shape-preserving sieves \citep{chenChapter76Large2007}, would also be highly desirable. Finally, in the spirit of \cite{kangINFERENCENONPARAMETRICSERIES2021}, deriving new inference results that are uniform in the selection of series terms is important, as, in practice, the sieve should be tuned in a data-driven way.

%% --- END OF MAIN TEXT ---------------------------------

%% --- BIBLIOGRAPHY -------------------------------------
\pagebreak
\bibliography{nonlin_irf_bib}

%%%%%%%%%%%%%%%%%%%%%%%%%%%%%%%%%%%%%%%%%%%%%%
%% Single Appendix:            %%
%%%%%%%%%%%%%%%%%%%%%%%%%%%%%%%%%%%%%%%%%%%%%%
\newpage
\appendix
\setcounter{page}{1}   
\renewcommand{\thefootnote}{$\ast$} 
\onehalfspacing

% \begin{center}%
%     {\Large%
% 	Supplementary Material to 
%     ``\textit{\thetitle}''
%     }\\[0.5em]
%     %
%     {\large%
%     Giovanni Ballarin\footnote{%
%         E-mail: \texttt{\href{mailto:giovanni.ballarin@unisg.ch}{giovanni.ballarin@unisg.ch}}
%         }\\
%     University of St. Gallen
%     }\\[0.5em]
%     {%
%     \thedate
%     }
% \end{center}
% \vspace{-1em}

\begin{center}%
    {\Large%
	APPENDIX
    }\\[0.5em]
\end{center}
\vspace{-1em}

% Switch back to count footnote superscript
\setcounter{footnote}{0} 
\renewcommand*{\thefootnote}{\arabic{footnote}}

% Reset figure counter and change figure numbering
\setcounter{figure}{0}
\renewcommand\thefigure{\thesection.\arabic{figure}} 

%%%%%%%%%%%%%%%%%%%%%%%%%%%%%%%%%%%%%%%%%%%%%%
%% Online Appendix:
%%%%%%%%%%%%%%%%%%%%%%%%%%%%%%%%%%%%%%%%%%%%%%
%\newpage
%\setcounter{page}{1}

% \begin{center}\Large%
%     ONLINE APPENDIX
% \end{center}
% \vspace{-1em}

\section{Proofs}\label{appendix:proofs}

\paragraph{Functional Notation.} In this appendix, to streamline and simplify notation, we will often write $G X$, where $G(\cdot)$ is a function and $X$ is a (random) variable, to mean $G(X)$, as mentioned in Section~\ref{section:estimation}.

\subsection{Theorem~\ref{theorem:twostep_estimator_consistency}}

Before delving into the proof of Theorem \ref{theorem:twostep_estimator_consistency}, note that we can decompose $\widehat{\Pi}_2 - {\Pi}_2$ as
\begin{equation*}
	\widehat{\Pi}_2 - {\Pi}_2 = (\widehat{\Pi}_2 - \widehat{\Pi}^*_2) + (\widehat{\Pi}^*_2 - \widetilde{\Pi}_2) + (\widetilde{\Pi}_2 - {\Pi}_2) ,
\end{equation*}
where $\widetilde{\Pi}_2$ is the projection of $\Pi_2$ onto the linear space spanned by the sieve.
The last two terms can be handled directly with the theory developed by \cite{chenOptimalUniformConvergence2015}. Specifically, their Lemma 2.3 controls the second term (variance term), while Lemma 2.4 handles the third term (bias term). This means here we can focus on the first term, which is due to using generated regressors $\widehat{\epsilon}_{1t}$ in the second step.

Since $\widehat{\Pi}_2$ can be decomposed in $d_Y$ rows of semiparametric coefficients, we further reduce to the scalar case. Let $\pi_2$ be any row of $\Pi_2$ and, with a slight abuse of notation, $Y$ the vector of observations of the component of $Y_t$ of the same row, so that one may write
\begin{align*}
	\widehat{\pi}_2(x) - \widehat{\pi}^*_2(x)
	& = 
	\widetilde{b}^K_\pi(x) \big( \widehat{\widetilde{B}}_\pi' \widehat{\widetilde{B}}_\pi \big)^- \big( \widehat{\widetilde{B}}_\pi - {\widetilde{B}}_\pi \big)'Y 
	+
	\widetilde{b}^K_\pi(x) \left[ \big( \widehat{\widetilde{B}}_\pi' \widehat{\widetilde{B}}_\pi \big)^- - \big( {\widetilde{B}}_\pi' {\widetilde{B}}_\pi \big)^- \right] {\widetilde{B}}_\pi'Y \\
	& = I + II
\end{align*}
where $\widetilde{b}^K_\pi(x) = \Gamma_{B,2}^{-1/2}\allowbreak {b}^K_\pi(x)$ is the orthonormalized sieve according to $\Gamma_{B,2} := \E[ {b}^K_\pi(W_{2t})\allowbreak {b}^K_\pi(W_{2t})' ]$, ${\widetilde{B}}_\pi$ is the \textit{infeasible} orthonormalized design matrix (involving $\epsilon_{1t}$) and $\widehat{\widetilde{B}}_\pi$ is \textit{feasible} orthonormalized design matrix (involving $\widehat{\epsilon}_{1t}$). In particular, note that
\begin{equation*}
	\widehat{B}_\pi = B_\pi + R_n, 
	\quad \text{where} \quad 
	R_n := \begin{bmatrix}
		0 & & 0 & \widehat{\epsilon}_{11} - \epsilon_{11} \\
		\vdots & \cdots & \vdots & \vdots \\
		0 & & 0 & \widehat{\epsilon}_{1n} - \epsilon_{1n}
	\end{bmatrix}
	\in \Real^{n \times K} ,
\end{equation*}
which implies $\widehat{\widetilde{B}}_\pi - {\widetilde{B}}_\pi = R_n \,\Gamma_{B,2}^{-1/2} =: \widetilde{R}_n$. 

The next Lemma provides a bound for the difference $( \widehat{\widetilde{B}}_\pi' \widehat{\widetilde{B}}_\pi / n ) - ( \widetilde{B}_\pi' \widetilde{B}_\pi / n )$ that will be useful in the proof of Theorem~\ref{theorem:twostep_estimator_consistency} below.

\begin{lemma}\label{lemma:bound_difference_BBpi}
	Under the setup of Theorem~2.1 in \cite{chenOptimalUniformConvergence2015}, it holds
	\begin{equation*}
		\big\lVert ( \widehat{\widetilde{B}}_\pi' \widehat{\widetilde{B}}_\pi / n ) - ( \widetilde{B}_\pi' \widetilde{B}_\pi / n ) \big\rVert = O_P(\sqrt{K / n}) .
	\end{equation*}
\end{lemma}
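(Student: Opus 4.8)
The plan is to expand the Gram-matrix difference through the additive decomposition $\widehat{\widetilde{B}}_\pi = \widetilde{B}_\pi + \widetilde{R}_n$, with $\widetilde{R}_n = R_n\,\Gamma_{B,2}^{-1/2}$, and to exploit that $R_n$ has a single nonzero column. Writing $r_n := (\widehat{\epsilon}_{11} - \epsilon_{11}, \ldots, \widehat{\epsilon}_{1n} - \epsilon_{1n})'$ and $g_K := \Gamma_{B,2}^{-1/2} e_K$ for the $K$-th standard basis vector $e_K$, the structure of $R_n$ gives the rank-one representation $\widetilde{R}_n = r_n g_K'$. Expanding the product yields
\begin{equation*}
	\frac{1}{n}\big(\widehat{\widetilde{B}}_\pi' \widehat{\widetilde{B}}_\pi - \widetilde{B}_\pi' \widetilde{B}_\pi\big)
	= \frac{1}{n}\big(\widetilde{B}_\pi' \widetilde{R}_n + \widetilde{R}_n' \widetilde{B}_\pi + \widetilde{R}_n' \widetilde{R}_n\big),
\end{equation*}
so it suffices to bound the spectral norm of each of the three terms on the right.

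First I would control $\lVert r_n \rVert$. Since $\widehat{\epsilon}_{1t} - \epsilon_{1t} = -(\widehat{\Pi}_1 - \Pi_1)' W_{1t}$, we have
\begin{equation*}
	\lVert r_n \rVert^2 = (\widehat{\Pi}_1 - \Pi_1)' \Big( \sum_{t=1}^n W_{1t} W_{1t}' \Big) (\widehat{\Pi}_1 - \Pi_1)
	\leq \lVert \widehat{\Pi}_1 - \Pi_1 \rVert^2 \, \Big\lVert \sum_{t=1}^n W_{1t} W_{1t}' \Big\rVert .
\end{equation*}
The first-step estimator satisfies $\lVert \widehat{\Pi}_1 - \Pi_1 \rVert = O_P(n^{-1/2})$ (the linear part of Theorem~\ref{theorem:twostep_estimator_consistency}, by standard OLS asymptotics under Assumptions~\ref{assumption:stationarity}, \ref{assumption:compactness} and \ref{assumption:physical_dep}), while boundedness of $W_{1t}$ (Assumption~\ref{assumption:compactness}) together with stationarity and ergodicity gives $n^{-1}\sum_t W_{1t} W_{1t}' \to_P \E[W_{1t} W_{1t}']$, a finite matrix, so that $\lVert \sum_t W_{1t} W_{1t}' \rVert = O_P(n)$. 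Hence $\lVert r_n \rVert^2 = O_P(n^{-1}) \cdot O_P(n) = O_P(1)$.

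Finally I would assemble the pieces. We have $\lVert g_K \rVert \leq \lVert \Gamma_{B,2}^{-1/2} \rVert = \lambda_{K,n} \lesssim 1$ for B-spline and wavelet sieves (Assumption~\ref{assumption:sieve_type}), and Assumption~\ref{assumption:physical_dep} yields $\lVert \widetilde{B}_\pi' \widetilde{B}_\pi / n - I_K \rVert = o_P(1)$, so $\lVert \widetilde{B}_\pi \rVert^2 = \lambda_{\max}(\widetilde{B}_\pi' \widetilde{B}_\pi) = n(1 + o_P(1))$ and $\lVert \widetilde{B}_\pi \rVert = O_P(\sqrt{n})$. Using the rank-one form, the quadratic term obeys $\lVert \widetilde{R}_n' \widetilde{R}_n \rVert = \lVert r_n \rVert^2 \lVert g_K \rVert^2 = O_P(1)$, hence $n^{-1}\lVert \widetilde{R}_n' \widetilde{R}_n \rVert = O_P(n^{-1})$; and the cross term satisfies $\lVert \widetilde{B}_\pi' \widetilde{R}_n \rVert = \lVert \widetilde{B}_\pi' r_n \rVert \, \lVert g_K \rVert \leq \lVert \widetilde{B}_\pi \rVert \, \lVert r_n \rVert \, \lVert g_K \rVert = O_P(\sqrt{n})$, and symmetrically for its transpose, so $n^{-1}\lVert \widetilde{B}_\pi' \widetilde{R}_n \rVert = O_P(n^{-1/2})$. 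Collecting the three pieces gives a bound of order $O_P(n^{-1/2})$, which is subsumed by the claimed $O_P(\sqrt{K/n})$ since $K \geq 1$. The main obstacle is the control of $\lVert \widetilde{B}_\pi \rVert = O_P(\sqrt{n})$ entering the cross term, which is exactly where the Gram-matrix convergence guaranteed by geometric physical dependence (Assumption~\ref{assumption:physical_dep}) is needed; once the rank-one structure of $\widetilde{R}_n$ is isolated, the remaining bookkeeping is routine.
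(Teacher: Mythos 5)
Your proof is correct and starts from the same decomposition as the paper's, namely $\widehat{\widetilde{B}}_\pi'\widehat{\widetilde{B}}_\pi = \widetilde{B}_\pi'\widetilde{B}_\pi + (\widetilde{B}_\pi'\widetilde{R}_n + \widetilde{R}_n'\widetilde{B}_\pi) + \widetilde{R}_n'\widetilde{R}_n$, with an essentially identical treatment of the quadratic term and of $\lVert r_n\rVert^2 = O_P(1)$. Where you genuinely depart is the cross term: the paper bounds $\lVert \widetilde{B}_\pi' R_n / n\rVert \leq \lVert \Pi_1 - \widehat{\Pi}_1\rVert\,\lVert \widetilde{B}_\pi' W_1 / n\rVert$ and uses $\lVert \widetilde{B}_\pi' W_1 / n\rVert = O_P(\sqrt{K})$, which is exactly where the $\sqrt{K}$ in the stated rate originates; you instead exploit the rank-one form $\widetilde{R}_n = r_n g_K'$ together with $\lVert \widetilde{B}_\pi\rVert = O_P(\sqrt{n})$, obtaining the sharper $O_P(n^{-1/2})$, which of course implies the claimed $O_P(\sqrt{K/n})$. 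The trade-off is that your route invokes the Gram-matrix convergence $\lVert \widetilde{B}_\pi'\widetilde{B}_\pi/n - I_K\rVert = o_P(1)$ (Assumption~\ref{assumption:series_gram_matrix_convergence}, or \ref{assumption:physical_dep} via Lemma~\ref{lemma:gmc_sample_gram_matrix_op1}) inside the lemma, whereas the paper's proof of this particular lemma does not need it; this creates no circularity, since that convergence is a hypothesis on the \emph{infeasible} design and is not itself derived from the lemma. Your reliance on $\lVert g_K\rVert \leq \lambda_{K,n} \lesssim 1$ for spline and wavelet sieves matches an assertion the paper itself makes in its proof, so it is not an additional gap. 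The sharper rate is not exploited downstream (Theorem~\ref{theorem:twostep_estimator_consistency} only consumes the $O_P(\sqrt{K/n})$ bound), but the argument is valid as written.
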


\begin{proof}
	Using the expansion
	$\widehat{\widetilde{B}}_\pi' \widehat{\widetilde{B}}_\pi = \widetilde{B}_\pi' \widetilde{B}_\pi + (\widetilde{B}_\pi'\widetilde{R}_n + \widetilde{R}_n'\widetilde{B}_\pi) + \widetilde{R}_n'\widetilde{R}_n$,
	one immediately finds that
	$
	\big\lVert ( \widehat{\widetilde{B}}_\pi' \widehat{\widetilde{B}}_\pi / n ) - ( \widetilde{B}_\pi' \widetilde{B}_\pi / n ) \big\rVert \leq 2 \big\lVert \widetilde{B}_\pi'\widetilde{R}_n / n \big\rVert + \big\lVert \widetilde{R}_n'\widetilde{R}_n / n \big\rVert .
	$
	The second right-hand side factor satisfies $\big\lVert \widetilde{R}_n'\widetilde{R}_n / n \big\rVert \leq \lambda_{K,n}^2 \big\lVert {R}_n'{R}_n / n \big\rVert$. Moreover, 
	\begin{align*}
		\big\lVert R_n'R_n / n \big\rVert
		& = \left\lVert \frac{1}{n} \sum_{t=1}^n (\widehat{\epsilon}_{1t} - \epsilon_{1t})^2 \right\rVert
		= \left\lVert \frac{1}{n} \sum_{t=1}^n (\Pi_1 - \widehat{\Pi}_1)' W_{1t} W_{1t}' (\Pi_1 - \widehat{\Pi}_1) \right\rVert \\
		& \leq \big\lVert \Pi_1 - \widehat{\Pi}_1 \big\rVert^2 \, \big\lVert W_1' W_1 / n \big\rVert 
		= O_P(n^{-1}) ,
	\end{align*}
	since $\lVert W_1' W_1 / n \rVert = O_P(1)$. Under Assumption \ref{assumption:sieve_type}, $\lambda_{K,n}^2 / n = o_P(\sqrt{K / n})$ since B-splines and wavelets satisfy $\lambda_{K,n} \lesssim 1$. Consequently, $\big\lVert \widetilde{R}_n'\widetilde{R}_n / n \big\rVert = o_P(\sqrt{K / n})$. 
	Factor $\lVert \widetilde{B}_\pi'R_n / n \rVert$ is also straightforward, but depends on sieve dimension $K$,
	\begin{align*}
		\big\lVert \widetilde{B}_\pi'R_n / n \big\rVert 
		& \leq \left\lVert \frac{1}{n} \sum_{t=1}^n \widetilde{b}_\pi^K(W_{2t}) (\widehat{\epsilon}_{1t} - \epsilon_{1t}) \right\rVert 
		= \left\lVert \frac{1}{n} \sum_{t=1}^n \widetilde{b}_\pi^K(W_{2t}) W_{1t}' (\Pi_1 - \widehat{\Pi}_1) \right\rVert \\
		& \leq \big\lVert \Pi_1 - \widehat{\Pi}_1 \big\rVert \, \big\lVert \widetilde{B}_\pi' W_1 / n \big\rVert 
		= O_P(\sqrt{K / n}) ,
	\end{align*}
	since $\lVert \widetilde{B}_\pi' W_1 / n \rVert = O_P(\sqrt{K})$ as the column dimension of $W_1$ is fixed. The claim then follows by noting $O_P(\sqrt{K / n})$ is the dominating order of convergence.
\end{proof}

\begin{proof}[Proof of Theorem \ref{theorem:twostep_estimator_consistency}]
	Since $\widehat{\Pi}_1$ is a least squares estimator of a linear equation, the rate of convergence is the parametric rate $n^{-1/2}$. The first result is therefore immediate.
	For the second step, we use
	$
	\big\lVert \widehat{\Pi}_2 - {\Pi}_2 \big\rVert_\infty 
	\leq \big\lVert \widehat{\Pi}_2 - \widehat{\Pi}^*_2 \big\rVert_\infty + \big\lVert \widehat{\Pi}^*_2 - {\Pi}_2 \big\rVert_\infty ,
	$
	and bound explicitly the first right-hand side term. For a given component of the regression function,
	\begin{equation*}
		| \widehat{\pi}_2(x) - \widehat{\pi}^*_2(x) |
		\leq | I | + | II | .
	\end{equation*}
	We now control each term on the right side.
	\begin{itemize}
		% --- STEP I
		\item[(1)] It holds 
		\begin{align*}
			| I | 
			& \leq \lVert \widetilde{b}^K_\pi(x) \rVert \, \big\lVert \big( \widehat{\widetilde{B}}_\pi' \widehat{\widetilde{B}}_\pi / n \big)^- \big\rVert \, \big\lVert \big( \widehat{\widetilde{B}}_\pi - {\widetilde{B}}_\pi \big)'Y / n \big\rVert \\
			& \leq \sup_{x \in \mathcal{W}_2}\lVert \widetilde{b}^K_\pi(x) \rVert \, \big\lVert \big( \widehat{\widetilde{B}}_\pi' \widehat{\widetilde{B}}_\pi / n \big)^- \big\rVert \, \big\lVert \big( \widehat{\widetilde{B}}_\pi - {\widetilde{B}}_\pi \big)'Y / n \big\rVert \\
			& \leq \zeta_{K,n} \lambda_{K,n} \, \big\lVert \big( \widehat{\widetilde{B}}_\pi' \widehat{\widetilde{B}}_\pi / n \big)^- \big\rVert \, \big\lVert \big( \widehat{\widetilde{B}}_\pi - {\widetilde{B}}_\pi \big)'Y / n \big\rVert .
		\end{align*}
		
		Let $\mathcal{A}_n$ denote the event on which $\big\lVert \widehat{\widetilde{B}}_\pi' \widehat{\widetilde{B}}_\pi / n - I_K \big\rVert \leq 1/2$, so that $\big\lVert \big( \widehat{\widetilde{B}}_\pi' \widehat{\widetilde{B}}_\pi / n \big)^- \big\rVert \leq 2$ on $\mathcal{A}_n$. 
		Notice that since $\lVert ( \widehat{\widetilde{B}}_\pi' \widehat{\widetilde{B}}_\pi / n ) - ( \widetilde{B}_\pi' \widetilde{B}_\pi / n ) \rVert = o_P(1)$ (Lemma \ref{lemma:bound_difference_BBpi}) and, by assumption, $\lVert \widetilde{B}_\pi' \widetilde{B}_\pi / n - I_K \rVert = o_P(1)$, then $\P(\mathcal{A}_n^\text{c}) = o(1)$.
		On $\mathcal{A}_n$, then
		\begin{equation*}
			| I | \lesssim \zeta_{K,n} \lambda_{K,n}^2 \, \big\lVert \big( \widehat{B}_\pi - {B}_\pi \big)'Y / n \big\rVert = \zeta_{K,n} \lambda_{K,n}^2 \, \big\lVert R_n' Y / n \big\rVert .
		\end{equation*}
		From $R_n' Y = \sum_{t=1}^n b_\pi^K(W_{2t}) (\widehat{\epsilon}_{1t} - \epsilon_{1t}) Y_t = (\Pi_1 - \widehat{\Pi}_1)' W_1' Y$ it follows that 
		$
		\big\lVert R_n' Y / n \big\rVert \leq \big\lVert \Pi_1 - \widehat{\Pi}_1 \big\rVert \, \big\lVert W_1' Y / n \big\rVert
		$
		on $\mathcal{A}_n$, meaning
		$
		| I | = O_P\left( \zeta_{K,n} \lambda_{K,n}^2 / \sqrt{n} \right)
		$
		as $\lVert W_1' Y / n \rVert = O_P(1)$ and $\P(\mathcal{A}_n^\text{c}) = o(1)$.
		% --- STEP II
		\item[(2)] Again we proceed by uniformly bounding $II$ according to
		\begin{equation*}
			| II | \leq \zeta_{K,n} \lambda_{K,n} \, \big\lVert \big( \widehat{\widetilde{B}}_\pi' \widehat{\widetilde{B}}_\pi / n \big)^- - \big( {\widetilde{B}}_\pi' {\widetilde{B}}_\pi / n \big)^- \big\rVert \, \big\lVert {\widetilde{B}}_\pi' Y / n \big\rVert .
		\end{equation*}
		The last factor has order $\lVert {\widetilde{B}}_\pi' Y / n \rVert = O_P(\sqrt{K})$ since $\widetilde{B}_\pi$ is growing in row dimension with $K$. 
		For the middle term, introduce
		$
		\Delta_B := \widehat{\widetilde{B}}_\pi' \widehat{\widetilde{B}}_\pi / n - {\widetilde{B}}_\pi' {\widetilde{B}}_\pi / n
		$
		and event
		$
		\mathcal{B}_n := \left\{ \big\lVert \big( {\widetilde{B}}_\pi' {\widetilde{B}}_\pi / n \big)^- \, \Delta_B \big\rVert \leq 1/2 \right\} \cap \left\{ \big\lVert {\widetilde{B}}_\pi' {\widetilde{B}}_\pi / n - I_K \big\rVert \leq 1/2 \right\} .
		$
		On $\mathcal{B}_n$, we can apply the bound \citep{horn2012matrix}
		\begin{equation*}
			\big\lVert \big( \widehat{\widetilde{B}}_\pi' \widehat{\widetilde{B}}_\pi / n \big)^- - \big( {\widetilde{B}}_\pi' {\widetilde{B}}_\pi / n \big)^- \big\rVert
			\leq
			\frac
			{\lVert ( {\widetilde{B}}_\pi' {\widetilde{B}}_\pi / n )^- \rVert^2 \, \lVert \Delta_B \rVert}
			{1 - \lVert ( {\widetilde{B}}_\pi' {\widetilde{B}}_\pi / n )^- \, \Delta_B \rVert}
			\lesssim
			\big\lVert \widehat{\widetilde{B}}_\pi' \widehat{\widetilde{B}}_\pi / n - {\widetilde{B}}_\pi' {\widetilde{B}}_\pi / n \big\rVert .
		\end{equation*}
		Since $\big\lVert \widehat{\widetilde{B}}_\pi' \widehat{\widetilde{B}}_\pi / n - {\widetilde{B}}_\pi' {\widetilde{B}}_\pi / n \big\rVert = O_P(\sqrt{K / n})$ by Lemma \ref{lemma:bound_difference_BBpi}, we get
		\begin{equation*}
			| II | =  O_P\left( \zeta_{K,n} \lambda_{K,n} \frac{K}{\sqrt{n}} \right)
		\end{equation*}
		on $\mathcal{B}_n$. Finally, using $\P((A \cap B)^\text{c}) \leq \P(A^\text{c}) + \P(B^\text{c})$ we note that $\P(\mathcal{B}_n^\text{c}) =  o(1)$ so that the bound asymptotically holds irrespective of event $\mathcal{B}_n$.
	\end{itemize}
	Thus, we have shown that
	\begin{align*}
		| \widehat{\pi}_2(x) - \widehat{\pi}^*_2(x) |
		& \leq
		O_P\left( \zeta_{K,n} \lambda_{K,n}^2 \frac{1}{\sqrt{n}} \right)
		+
		O_P\left( \zeta_{K,n} \lambda_{K,n} \frac{K}{\sqrt{n}} \right) 
		= 
		O_P\left( \zeta_{K,n} \lambda_{K,n} \frac{K}{\sqrt{n}} \right)
	\end{align*}
	as clearly $\sqrt{n}^{-1} = o({K}/{\sqrt{n}})$ and, as discussed in the proof of Lemma \ref{lemma:bound_difference_BBpi}, $\lambda_{K,n}^2 / n = o_P(\sqrt{K / n})$. This bound is uniform in $x$ and holds for each of the (finite number of) components of $\widehat{\Pi}_2$; therefore, the proof is complete.
\end{proof}

\subsection{Corollary~\ref{corollary:twostep_estimator_op1}}

First, let us note that $\zeta_{K,n} \lesssim \sqrt{K}$. This holds since (i) for any constant and linear component, compact $\mathcal{Y}$, $\mathcal{X}$, and $\mathcal{E}$ imply a trivial upper bound; (ii) additive (tensor product) B-spline and wavelets sieve basis elements individually satisfy $\zeta_{K,n} \lesssim \sqrt{K}$ \citep{belloniNewAsymptoticTheory2015}.

To prove Corollary~\ref{corollary:twostep_estimator_op1}, we will apply Lemmas~2.3 and 2.4 of \cite{chenOptimalUniformConvergence2015} together with their generic bound on the sup-norm of the empirical sieve projection operator in Remark~2.5. While this bound is generic, it is sufficient for our purposes, even as we consider sieves that are \textit{not} constructed as a tensor product of known sieve bases. 

\begin{proof}[Proof of Corollary~\ref{corollary:twostep_estimator_op1}]
    In our setting the empirical projection operator associated to $\mathcal{B}_\pi$ is 
    $ %\begin{equation*}
        P_{K,n} := \widetilde{b}^K_n(w) (\widetilde{B}_\pi' \widetilde{B}_\pi)^{-} \widetilde{B}_\pi' \Pi_2
    $, %\end{equation*}
    for which we introduce the uniform operator norm 
    \begin{equation*}
        \lVert P_{K,n} \rVert_\infty
        :=
        \sup_{f \in L^{\infty}_\pi(\mathcal{W}_2) \,:\, \lVert f \rVert_\infty = 1}
        \lVert P_{K,n} f \rVert_\infty ,
    \end{equation*}
    where $L^{\infty}_\pi(\mathcal{W}_2)$ is the class of uniformly bounded functions with the same semiparametric additive structure as $\Pi_2$.
    Under the same assumptions of Theorem~\ref{theorem:twostep_estimator_consistency}, the conditions required by \cite{chenOptimalUniformConvergence2015}, Lemmas~2.3 and 2.4, and Remark~2.5 are satisfied. 
    In particular, since we assume $\epsilon_t$ to have compact range, it holds that $\E[\vert \xi_{2t,i} \vert^{2+\delta}] < \infty$ for all $\delta > 0$ and $1 \leq i \leq d_Y$, uniformly in $t$.
    By Lemma~\ref{lemma:gmc_sample_gram_matrix_op1}, and the choice of $K$, $\zeta_{K,n} \lambda_{K,n} \sqrt{K \log(n) / n} = O(1)$ is fulfilled under Assumption~\ref{assumption:physical_dep} (geometric moment contractivity, see Appendix~\ref{appendix_subsec:physical_dep} below).
    Therefore we obtain
    \begin{equation*}
		\lVert \widehat{\Pi}^*_2 - \Pi_2 \rVert_\infty 
        \leq
        O_P\left( \zeta_{K,n} \lambda_{K,n} \, \sqrt{\frac{\log(n)}{n}} \right)
        + \sqrt{2} \zeta_{K,n} \lambda_{K,n} \inf_{\widetilde{\Pi}_2 \in \textnormal{clsp}(\mathcal{B}_\pi)} \lVert \Pi_2 - \widetilde{\Pi}_2 \rVert_\infty ,
	\end{equation*}
    where $\textnormal{clsp}(\mathcal{B}_\pi)$ is the closed linear span of sieve $\mathcal{B}_\pi$.
    
    Given that the dimension of the nonlinear autoregressive system $d_Y$ is fixed, we can consider directly $\lVert \pi_{2,i} - \widetilde{\pi}_{2,i} \rVert_\infty$. The additive structure of the model, fixed lag $p$, and compactness of $\mathcal{W}_2$ allow to decompose the approximation error into
    \begin{equation*}
        \lVert \pi_{2,i} - \widetilde{\pi}_{2,i} \rVert_\infty
        \lesssim
        \lVert \mu_{2,i} - \widetilde{\mu}_{2,i} \rVert
        + \sum_{j=1}^p \sum_{\ell=1}^{d_Y} \lVert G^Y_{2,ij\ell} - \widetilde{G}^Y_{2,ij\ell} \rVert_\infty
        + \sum_{m=0}^p \lVert G^X_{2,im} - \widetilde{G}^X_{2,im} \rVert_\infty,
    \end{equation*}
    where $G^Y_{2,\cdot}$ and $G^X_{2,\cdot}$ are used to indicate the one-dimensional nonlinear functions of $Y_t$ (and its components) and $X_t$, and their lags, respectively. The error of estimating the intercept and any strictly linear function is zero under correct specification of the semiparametric sieve, while for nonlinear components the approximation error of splines and wavelets series decays as $K^{-s}$ \citep{chenOptimalUniformConvergence2015,belloniNewAsymptoticTheory2015}. Hence,
    \begin{equation*}
        \lVert \pi_{2,i} - \widetilde{\pi}_{2,i} \rVert_\infty
        \lesssim
        K^{-s} 
    \end{equation*}
    and 
    \begin{equation*}
        \lVert \widehat{\Pi}^*_2 - \Pi_2 \rVert_\infty
        \lesssim
        O_P\left( \sqrt{\frac{K \log(n)}{n}} \right)
        + O\left( K^{1/2-s} \right) 
    \end{equation*}
    by using that $\zeta_{K,n} \lambda_{K,n} \lesssim \sqrt{K}$.

    Let $s = 1/2 + \nu$, where by assumption $\nu \geq 1/2$. Then, $K \asymp (n / \log(n))^{1/(2 s + 1)}$ implies
    \begin{align*}
        \sqrt{\frac{K \log(n)}{n}} 
            & \:\asymp\: \left( \frac{\log(n)}{n} \right)^{\frac{1}{3}} , \\
        K^{1/2-s} 
            & \:\asymp\: \left( \frac{\log(n)}{n} \right)^{\frac{\nu}{2s+1}}
    \end{align*}
    and, lastly, for the first term in the bound of Theorem~\ref{theorem:twostep_estimator_consistency},
    \begin{equation*}
        \zeta_{K,n} \lambda_{K,n} \, \frac{K}{\sqrt{n}}
        \:\lesssim\:
        \frac{K^{\frac{3}{2}}}{\sqrt{n}}
        \:\asymp\:
        n^{- \frac{\nu - 1/2}{2s+1}}
            \log(n)^{-\frac{3}{2(2s+1)}} .
    \end{equation*}
    To conclude, we simply note that, while $o(1)$, the order obtained in the last display is the slowest, hence dominating, rate.
\end{proof}

\begin{remark}
    The proof of Corollary~\ref{corollary:twostep_estimator_op1} can be easily extended to allow for more complex nonlinear relationships in the autoregressive model, such as, for example, a general nonlinear dependence on $X_t, \ldots, X_{t-p}$. Then, $\lVert \pi_{2,i} - \widetilde{\pi}_{2,i} \rVert_\infty$ will coverge at rate $K^{-s/d^{\dagger}}$, where $d^{\dagger} \geq 1$ is the maximum dimensionality among all nonlinear components in the equation of $Y_{t,i}$.
    We avoid this extension here to keep notational complexity low in the additive decomposition of error $\lVert \pi_{2,i} - \widetilde{\pi}_{2,i} \rVert_\infty$.
    Finally, note that a sharper upper bound on $\lVert P_{K,n} \rVert_\infty$ (e.g. $\lVert P_{K,n} \rVert_\infty \lesssim 1$ as proven for B-splines by \cite{huangLocalAsymptoticsPolynomial2003a} and wavelets by \cite{chenOptimalUniformConvergence2015}) would not result in an improved rate for $\widehat{\Pi}_2$, since $\zeta_{K,n} \lambda_{K,n} {K}/{\sqrt{n}}$ remains the dominating factor.
\end{remark}

\subsection{Theorem \ref{theorem:consistent_irf}}

Before proving impulse response consistency, we show that the functional moving average coefficient matrices $\Gamma_j$ can be consistently estimated with $\widehat{\Pi}_1$ and $\widehat{\Pi}_2$.

\begin{lemma}\label{lemma:Gamma_j_consistent}
	Under the assumptions of Theorem \ref{theorem:twostep_estimator_consistency} and for any fixed integer $j \geq 0$, it holds 
	\begin{equation*}
		\lVert \widehat{\Gamma}_j - \Gamma_j \rVert_\infty = o_P(1) .
	\end{equation*}
\end{lemma}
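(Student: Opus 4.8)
The plan is to exploit the explicit functional moving-average structure of $\Gamma_j$ in the additively separable setting, thereby reducing the claim to the consistency of the two ingredients that build it: the linear moving-average coefficients and the nonparametric impact functions. Expanding $\Psi(L) = \sum_{k \geq 0} \Psi_k L^k$ in $\Gamma(Z_{t:*}) = \Psi(L)(0, \arc{G}_2)'$ with $\arc{G}_2 = \sum_{i=0}^p \arc{G}_{i,21}(X_{\cdot - i})$ and collecting terms at total lag $j$ gives
\[
	\Gamma_j(\cdot) = \sum_{i=0}^{\min(j,p)} \Psi_{j-i} \begin{bmatrix} 0 \\ \arc{G}_{i,21}(\cdot) \end{bmatrix} ,
\]
so that for each fixed $j$ the coefficient $\Gamma_j$ is a finite sum of at most $p+1$ products of a constant matrix $\Psi_{j-i}$ and a nonlinear function $\arc{G}_{i,21}$. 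Since $j$ is fixed, only finitely many $\Psi_k$ enter and the infinite order of $\Psi(L)$ plays no role.

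First I would establish consistency of the estimated linear moving-average coefficients. The matrices $\Psi_k$ are generated from the autoregressive blocks of $A(L)$ through the standard recursion $\Psi_0 = I_d$, $\Psi_k = \sum_{i=1}^{\min(k,p)} A_i \Psi_{k-i}$, which is a polynomial — hence continuous — map of the entries of $A(L)$. The blocks $A_{11}, A_{12}$ are recovered from $\widehat{\Pi}_1$ at the parametric rate $O_P(n^{-1/2})$, while $A_{21}, A_{22}$ are identified components of $\Pi_2$ and are $o_P(1)$-consistent as part of $\lVert \widehat{\Pi}_2 - \Pi_2 \rVert_\infty = o_P(1)$ from Theorem~\ref{theorem:twostep_estimator_consistency}. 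Applying the continuous mapping theorem to the finite recursion then yields $\lVert \widehat{\Psi}_k - \Psi_k \rVert = o_P(1)$ for each fixed $k \leq j$; note that stability (Assumption~\ref{assumption:roots_linear_part_model}) is not even needed here, as the recursion is purely algebraic for finitely many coefficients.

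Next I would invoke Theorem~\ref{theorem:twostep_estimator_consistency} directly for the nonparametric part: since the impact functions $\arc{G}_{i,21}$ are among the identified components of $\Pi_2$, the same uniform bound gives $\lVert \widehat{\arc{G}}_{i,21} - \arc{G}_{i,21} \rVert_\infty = o_P(1)$ for each $0 \leq i \leq p$. Finally I would combine the two pieces termwise using the elementary decomposition $\widehat{\Psi}\widehat{G} - \Psi G = (\widehat{\Psi} - \Psi)\widehat{G} + \Psi(\widehat{G} - G)$: under Assumptions~\ref{assumption:compactness} and~\ref{assumption:function_class} each $\arc{G}_{i,21}$ is bounded on its compact domain, so $\widehat{\arc{G}}_{i,21}$ is uniformly bounded with probability tending to one and the first summand is $o_P(1)$, while the second is $o_P(1)$ because $\Psi_{j-i}$ is a fixed finite matrix. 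Summing the finitely many terms and applying the triangle inequality delivers $\lVert \widehat{\Gamma}_j - \Gamma_j \rVert_\infty = o_P(1)$.

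The main obstacle is essentially organizational rather than analytic: one must set up the convolution producing $\Gamma_j$ correctly and confirm that the linear autoregressive blocks $A_{21}, A_{22}$ extracted from the semiparametric fit are genuinely controlled by the sup-norm rate of Theorem~\ref{theorem:twostep_estimator_consistency} under the separability identification. No new probabilistic machinery is required — the only real work is verifying that, for fixed horizon $j$, $\widehat{\Gamma}_j$ reduces to a finite, continuous function of objects already shown to be consistent.
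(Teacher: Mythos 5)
Your proposal is correct and follows essentially the same route as the paper: expand $\Gamma(L)=\Psi(L)G(L)$ into the finite convolution $\Gamma_j=\sum_k \Psi_{j-k}G_k$, obtain consistency of the plug-in $\widehat{\Psi}_k$ from the least-squares and sieve estimates of the autoregressive blocks, and conclude via the product-difference decomposition together with the sup-norm bound $\lVert \widehat{G}_k - G_k\rVert_\infty = o_P(1)$ from Theorem~\ref{theorem:twostep_estimator_consistency} and boundedness of $G_k$ on the compact domain. The only differences are cosmetic (which factor carries the hat in the add-and-subtract step, and your slightly more explicit treatment of the $\Psi_k$ recursion, which the paper delegates to a citation).
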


\begin{proof}
	By definition, recall that $\Gamma(L) = \Psi(L) G(L)$ where $\Psi = (I_d - A(L)L)^{-1}$. Since $\Psi(L)$ is an MA($\infty$) lag polynomial, we have that 
	$
	\Gamma(L) = ( \sum_{k=0}^\infty \Psi_k L^k ) ( G_{0} + G_{1} L \allowbreak + \ldots \allowbreak + G_p L^p ) ,
	$
	where $\Psi_0 = I_d$, $\{\Psi_k\}_{k=1}^\infty$ are purely real matrices and $G_0$ is a functional vector that may also contain linear components (i.e. allow linear functions of $X_t$). This means that $\Gamma_j$ is a convolution of real and functional matrices,
	$
	\Gamma_j = \sum_{k=1}^{\min\{j,\, p\}} \Psi_{j-k} G_{k} .
	$
	The linear coefficients of $A(L)$ can be consistently estimated by $\widehat{\Pi}_1$ and $\widehat{\Pi}_2$, and, thus, the plug-in estimate $\widehat{\Psi}_j$ is consistent for $\Psi_j$ \citep{lutkepohlNewIntroductionMultiple2005}. 
	% Note that since these are real $d \times d$ matrices and $d$ is fixed, $\lVert \Psi_{j-k} -  \widehat{\Psi}_{j-k} \rVert_*$ can be bounded in any induced matrix norm $\lVert \cdot \rVert_*$. 
	Therefore,
	\begin{align*}
		\lVert \widehat{\Gamma}_j - \Gamma_j \rVert_\infty 
		% & \leq \sum_{k=1}^{\min\{j,\, p\}} \left\lVert \Psi_{j-k} G_{k} -  \widehat{\Psi}_{j-k} \widehat{G}_{k} \right\rVert_\infty \\
		& \leq \sum_{k=1}^{\min\{j,\, p\}} \left\lVert \Psi_{j-k} -  \widehat{\Psi}_{j-k} \right\rVert_\infty \left\lVert {G}_{k} \right\rVert_\infty + \left\lVert \widehat{\Psi}_{j-k} \right\rVert_\infty \left\lVert G_{k} - \widehat{G}_{k} \right\rVert_\infty \\
		& \leq \sum_{k=1}^{\min\{j,\, p\}} o_p(1) C_{G,k} + O_P(1) o_p(1)
		= o_p(1) ,
	\end{align*}
	where $C_{G,k}$ is a constant and $\lVert G_{k} - \widehat{G}_{k} \rVert_\infty = o_p(1)$ as a direct consequence of Theorem~\ref{theorem:twostep_estimator_consistency}. 
\end{proof}

%\begin{remark} 
%    Note that if impulse responses decay exponentially fast to zero, by ``stitching'' bounds appropriately, one should also be able to achieve convergence \textit{uniformly} over $h = 0, 1, \ldots, \infty$.
%\end{remark}

Recall now that the sample estimate for the relaxed-shock impulse response is 
\begin{equation*}
	\widehat{\widetilde{\textnormal{IRF}}}_{h,\ell}(\delta)
	= 
	\widehat{\Theta}_{h,\cdot 1} \delta \, n^{-1} \sum_{t=1}^{n} \rho(\widehat{\epsilon}_{1t}) 
	+
	\sum_{j=0}^h \widehat{V}_{j,\ell}(\delta)
\end{equation*}
where
\begin{equation*}
	\widehat{V}_{j,\ell}(\delta) 
	= 
	\frac{1}{n-j} \sum_{t=1}^{n-j} \widehat{v}_{j,\ell}\big( X_{t+j:t}; \widehat{\widetilde{\delta}}_t \big)
	=
	\frac{1}{n-j} \sum_{t=1}^{n-j} \left[ \widehat{\Gamma}_j \widehat{\gamma}_{j}(X_{t+j:t}; \widehat{\widetilde{\delta}}_t) - \widehat{\Gamma}_j X_{t+j} \right] .
\end{equation*}
Therefore, the estimated horizon $h$ impulse response of the $\ell$th variable is
\begin{equation*}
	\widehat{\widetilde{\textnormal{IRF}}}_{h,\ell}(\delta) 
	:= 
	\widehat{\Theta}_{h,\ell 1} \delta \, n^{-1} \sum_{t=1}^{n} \rho(\widehat{\epsilon}_{1t})
	+ 
	\sum_{j=0}^h \left[ \frac{1}{n-j} \sum_{t=1}^{n-j} \widehat{v}_{j,\ell}\big( X_{t+j:t}; \widehat{\widetilde{\delta}}_t \big) \right] .
\end{equation*}

\begin{lemma}\label{lemma:gamma_v_j_consistent}
	Under the assumptions of Theorem \ref{theorem:consistent_irf} , let $x_{j:0} = (x_j, \ldots, x_0) \in \mathcal{X}^j$ and $\varepsilon \in \mathcal{E}_1$ be nonrandom quantities. Let $\widetilde{\delta}$ be the relaxed shock determined by $\delta$, $\rho$ and $\varepsilon$. Then,
	\begin{itemize}
		\item[(i)] $\sup_{x_{j:0}, \varepsilon} \lvert \widehat{\gamma}_{j}(x_{j:0}; \widetilde{\delta}) - {\gamma}_{j}(x_{j:0}; \widetilde{\delta}) \rvert = o_P(1)$ ,
		\item[(ii)] $\sup_{x_{j:0}, \varepsilon} \lvert \widehat{v}_{j,\ell}\big( x_{j:0}; \widetilde{\delta} \big) - v_{j,\ell}\big( x_{j:0}; \widetilde{\delta} \big) \lvert = o_P(1)$ ,
	\end{itemize}
	for any fixed integers $j \geq 0$ and $\ell \in \{1, \ldots, d\}$.
\end{lemma}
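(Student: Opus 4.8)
The plan is to prove (i) by induction on the horizon $j$, exploiting the recursive definition of $\gamma_j$, and then obtain (ii) as a one-step consequence of (i). The three ingredients I would assemble first are: (a) uniform sup-norm consistency of the estimated functional coefficients, $\lVert\widehat\Gamma_{k}-\Gamma_{k}\rVert_\infty = o_P(1)$, supplied by Lemma \ref{lemma:Gamma_j_consistent}; (b) consistency of the linear MA coefficients $\widehat\Theta_{j,11}\to\Theta_{j,11}$, which follows from the parametric-rate consistency of $\widehat\Pi_1,\widehat\Pi_2$ in Theorem \ref{theorem:twostep_estimator_consistency} and the continuity of the MA map, exactly as in the proof of Lemma \ref{lemma:Gamma_j_consistent}; and (c) Lipschitz continuity of each true functional coefficient $\Gamma_{k,11}$ on the compact domain, which I would deduce from Assumption \ref{assumption:function_class} (smoothness $s\geq 1$, hence at least Lipschitz) together with the fact that $\Gamma$ is a finite convolution of the real matrices $\Psi_m$ with the smooth functional vector $G$, so convolution preserves the Lipschitz property with a deterministic constant $L_\Gamma$.

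For the base case $j=0$, $\gamma_0(x_0;\widetilde\delta)=x_0+\Theta_{0,11}\widetilde\delta$ and its estimate differ only through $\widehat\Theta_{0,11}-\Theta_{0,11}$; since $\lvert\widetilde\delta\rvert\le\mathcal D$ on the compatibility range, $\sup\lvert\widehat\gamma_0-\gamma_0\rvert\le\lvert\widehat\Theta_{0,11}-\Theta_{0,11}\rvert\,\mathcal D=o_P(1)$. For the inductive step, I would write the difference as
\[
\widehat\gamma_j-\gamma_j = (\widehat\Theta_{j,11}-\Theta_{j,11})\widetilde\delta + \sum_{k=1}^{j}\big[\widehat\Gamma_{k,11}(\widehat\gamma_{j-k}) - \Gamma_{k,11}(\gamma_{j-k})\big] - \sum_{k=1}^{j}\big[\widehat\Gamma_{k,11}(x_{j-k}) - \Gamma_{k,11}(x_{j-k})\big],
\]
and treat the coupled sum term with the standard add-and-subtract decomposition
\[
\widehat\Gamma_{k,11}(\widehat\gamma_{j-k}) - \Gamma_{k,11}(\gamma_{j-k}) = \big[\widehat\Gamma_{k,11}-\Gamma_{k,11}\big](\widehat\gamma_{j-k}) + \big[\Gamma_{k,11}(\widehat\gamma_{j-k}) - \Gamma_{k,11}(\gamma_{j-k})\big].
\]
The first bracket is bounded uniformly by $\lVert\widehat\Gamma_{k,11}-\Gamma_{k,11}\rVert_\infty=o_P(1)$ via ingredient (a); the second is bounded by $L_\Gamma\,\lvert\widehat\gamma_{j-k}-\gamma_{j-k}\rvert=o_P(1)$ using Lipschitz continuity (c) and the inductive hypothesis. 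The remaining pieces are $o_P(1)$ by (a) and (b), and since $j$ is fixed, a finite sum of $o_P(1)$ terms is $o_P(1)$, uniformly over the compact set of $(x_{j:0},\varepsilon)$, hence over $\widetilde\delta$. This closes the induction.

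Part (ii) then follows from (i) with one further application of the same decomposition: writing $\widehat v_{j,\ell}-v_{j,\ell}=[\widehat\Gamma_{j,\ell}(\widehat\gamma_j)-\Gamma_{j,\ell}(\gamma_j)]-[\widehat\Gamma_{j,\ell}(x_j)-\Gamma_{j,\ell}(x_j)]$, the second term is $o_P(1)$ by sup-norm consistency, and the first splits into a function-estimation part bounded by $\lVert\widehat\Gamma_{j,\ell}-\Gamma_{j,\ell}\rVert_\infty$ and an argument-propagation part bounded by $L_\Gamma\,\sup\lvert\widehat\gamma_j-\gamma_j\rvert$, both $o_P(1)$. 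The main obstacle I anticipate is not any single estimate but the control of nonlinear error propagation through the recursion: because $\widehat\gamma_j$ composes estimated functions with estimated arguments, I must simultaneously control the sup-norm function error and the argument error, and the compositional step requires evaluating $\Gamma_{k,11}$ and $\widehat\Gamma_{k,11}$ at the perturbed points $\widehat\gamma_{j-k}$. The technical care needed is to guarantee these evaluation points remain within the compact set on which the Lipschitz bound and the sieve approximation are valid; this is precisely where Assumption \ref{assumption:compactness} and the compatibility of $\rho$ (which keeps the shocked paths inside $\mathcal X$) enter, and should a slight excursion outside $\mathcal X$ arise from estimation error, one works on a fixed enlarged compact neighborhood on which the Hölder extension of $\Gamma$ retains a uniform Lipschitz constant.
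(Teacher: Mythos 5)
Your proof is correct, and for part (ii) it coincides with the paper's argument (add-and-subtract $\Gamma_{j,\ell}$ evaluated at $\widehat{\gamma}_j$, bound one piece by $\lVert\widehat{\Gamma}_{j,\ell}-\Gamma_{j,\ell}\rVert_\infty$ and the other by a Lipschitz constant times $\lvert\widehat{\gamma}_j-\gamma_j\rvert$). For part (i), however, you take a genuinely different and more careful route. The paper writes the difference $\widehat{\gamma}_j-\gamma_j$ as if the shocked arguments $x_{j-k}(\widetilde{\delta})$ were the \emph{same} fixed quantities inside both the estimated and the true expressions, so the whole error collapses to the function-estimation error alone and is bounded by $2j\sup_{x\in\mathcal{X}}\lvert\widehat{\Gamma}_{k,11}x-\Gamma_{k,11}x\rvert$; it also silently drops the $(\widehat{\Theta}_{j,11}-\Theta_{j,11})\widetilde{\delta}$ contribution. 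Since $\gamma_j$ is defined recursively through $x_{j-k}(\widetilde{\delta})=\gamma_{j-k}(x_{j-k:0};\widetilde{\delta})$, and the feasible $\widehat{\gamma}_j$ composes \emph{estimated} functions with \emph{estimated} arguments $\widehat{\gamma}_{j-k}$, your induction — which splits each term into a sup-norm function error plus an argument-propagation error controlled by the Lipschitz constant $L_\Gamma$ and the inductive hypothesis — is the argument that actually closes this loop. Your additional worry about keeping the evaluation points $\widehat{\gamma}_{j-k}$ inside the compact domain (or a fixed enlargement on which the Hölder extension retains its Lipschitz constant) is likewise a real issue the paper does not discuss; it only invokes $\gamma_j(x_{j:0};\widetilde{\delta})\in\mathcal{X}$ for the population object. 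In short, both proofs rest on the same ingredients (Lemma \ref{lemma:Gamma_j_consistent}, consistency of $\widehat{\Theta}$, smoothness from Assumption \ref{assumption:function_class}, and finiteness of $j$), but your recursive bookkeeping buys a complete treatment of error propagation that the paper's terser computation elides.
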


\begin{proof}
	~
	\begin{itemize}
		\item[(i)] We have that 
		%    %
		%    \begin{equation*}
			%        \widehat{\gamma}_{j}(x_{j:0}; \delta) 
			%        = 
			%        x_{j} + \Theta_{j,11} \delta \rho(\varepsilon) + \sum_{k=1}^j ( \Gamma_{k,11} x_{j-k}(\widetilde{\delta}) - \Gamma_{k,11} x_{j-k} ) ,
			%    \end{equation*}
		%    %
		%    thus
		%
		\begin{align*}
			\lvert  \widehat{\gamma}_{j}(x_{j:0}; \delta) - {\gamma}_{j}(x_{j:0}; \delta) \rvert 
			& = \left\lvert \sum_{k=1}^j \left[ (\widehat{\Gamma}_{k,11} x_{j-k}(\widetilde{\delta}) - \widehat{\Gamma}_{k,11} x_{j-k} ) - (\Gamma_{k,11} x_{j-k}(\widetilde{\delta}) - \Gamma_{k,11} x_{j-k} ) \right] \right\rvert \\
			& \leq \sum_{k=1}^j \left\lvert \widehat{\Gamma}_{k,11} x_{j-k}(\widetilde{\delta}) - \Gamma_{k,11} x_{j-k}(\widetilde{\delta}) \right\rvert + \sum_{k=1}^j \left\lvert \widehat{\Gamma}_{k,11} x_{j-k} - \Gamma_{k,11} x_{j-k} \right\rvert .
		\end{align*}
		This yields
		$
		\sup_{x_{j:0}, \varepsilon} \: \lvert \widehat{\gamma}_{j}(x_{j:0}; \widetilde{\delta}) - {\gamma}_{j}(x_{j:0}; \widetilde{\delta}) \rvert \leq 2 j \: \sup_{x \in \mathcal{X}} \left\lvert \widehat{\Gamma}_{k,11} x - \Gamma_{k,11} x \right\rvert .
		$
		Since $j$ is finite and fixed and the uniform consistency bound of Lemma \ref{lemma:Gamma_j_consistent} holds, a fortiori $ \sup_{x \in \mathcal{X}} \left\lvert \widehat{\Gamma}_{k,11} x - \Gamma_{k,11} x \right\rvert = o_P(1) $.
		\item[(ii)] Similarly to above, 
		\begin{align*}
			\lvert \widehat{v}_{j,\ell}\big( x_{j:0}; \widetilde{\delta} \big) - {v}_{j,\ell}\big( x_{j:0}; \widetilde{\delta} \big) \lvert 
			& = \left\lvert \left( \widehat{\Gamma}_{j,\ell} \widehat{\gamma}_{j}(x_{j:0}; \widetilde{\delta}) - {\Gamma}_{j,\ell} {\gamma}_{j}(x_{j:0}; \widetilde{\delta}) \right) - \left( \widehat{\Gamma}_{j,\ell} x_{j} - {\Gamma}_{j,\ell} x_{j} \right) \right\rvert \\
			& \leq \lVert \widehat{\Gamma}_{j,\ell} - {\Gamma}_{j,\ell} \rVert_\infty + \lVert {\Gamma}_{j,\ell} \rVert_\infty \lvert \widehat{\gamma}_{j}(x_{j:0}; \delta) - {\gamma}_{j}(x_{j:0}; \delta) \rvert \\
			& \qquad + \lvert \widehat{\Gamma}_{j,\ell} x_{j} - {\Gamma}_{j,\ell} x_{j} \rvert \\
			& \leq 2 \lVert \widehat{\Gamma}_{j,\ell} - {\Gamma}_{j,\ell} \rVert_\infty + C_{\Gamma,j,l} \, \lvert \widehat{\gamma}_{j}(x_{j:0}; \delta) - {\gamma}_{j}(x_{j:0}; \delta) \rvert , \\
		\end{align*}
		where we have used that ${\gamma}_{j}(x_{j:0}; \widetilde{\delta}) \in \mathcal{X}$ to derive the first term in the second line.
		In the last line, $C_{\Gamma,j,l}$ is a constant such that
		$
		\lVert {\Gamma}_{j,\ell} \rVert_\infty \leq \sum_{k=1}^{\min\{j,\, p\}} \lVert \Psi_{j-k} \rVert_\infty \lVert G_{k} \rVert_\infty \leq C_{\Gamma,j,l} .
		$
		The claim then follows thanks to Lemma \ref{lemma:Gamma_j_consistent} and (i).
	\end{itemize}
\end{proof}

In what follows, define $\widehat{v}_{j,\ell}\big( X_{t+j:t}; {\widetilde{\delta}}_t \big)$ to be a version of ${v}_{j,\ell}$ that is constructed using coefficient estimates from $\{ \widehat{\Pi}_1, \widehat{\Pi}_2 \}$ but evaluated on the true innovations $\epsilon_t$.

\begin{proof}[Proof of Theorem \ref{theorem:consistent_irf}]
	If we introduce 
	\begin{equation*}
		\widetilde{\textnormal{IRF}}_{h,\ell}(\delta)^*
		:=
		\widehat{\Theta}_{h,\ell 1} \delta \, n^{-1} \sum_{t=1}^{n} \rho({\epsilon}_{1t})
		+ 
		\sum_{j=0}^h \left[ \frac{1}{n-j} \sum_{t=1}^{n-j} \widehat{v}_{j,\ell}\big( X_{t+j:t}; \widetilde{\delta}_t \big) \right]
		,
	\end{equation*}
	then clearly
	\begin{align*}
		\left\lvert \widehat{\widetilde{\textnormal{IRF}}}_{h,\ell}(\delta) - \widetilde{\textnormal{IRF}}_{h,\ell}(\delta) \right\rvert
		& \leq 
		\left\lvert \widehat{\widetilde{\textnormal{IRF}}}_{h,\ell}(\delta) - \widetilde{\textnormal{IRF}}^*_{h,\ell}(\delta) \right\rvert
		+ 
		\left\lvert \widetilde{\textnormal{IRF}}^*_{h,\ell}(\delta) - \widetilde{\textnormal{IRF}}_{h,\ell}(\delta) \right\rvert \\
		& = I + II .
	\end{align*}

	To control $II$, we can observe 
	\begin{align*}
		II 
		& \leq \left\lvert \widehat{\Theta}_{h,\ell 1} \delta \, n^{-1} \sum_{t=1}^{n} \rho({\epsilon}_{1t}) - {\Theta}_{h,\ell 1} \delta\E[\rho({\epsilon}_{1t})] \right\rvert \\
		& \qquad + \sum_{j=0}^h \left\lvert \frac{1}{n-j} \sum_{t=1}^{n-j} \widehat{v}_{j,\ell}\big( X_{t+j:t}; \widetilde{\delta}_t \big) - \E[{v}_{j,\ell}\big( X_{t+j:t}; \widetilde{\delta} \big)] \right\rvert \\
		& \leq \delta \left\lvert \widehat{\Theta}_{h,\ell 1} - {\Theta}_{h,\ell 1} \right\rvert \left\lvert n^{-1} \sum_{t=1}^{n} \rho({\epsilon}_{1t}) \right\rvert 
		+ \delta \left\lvert \widehat{\Theta}_{h,\ell 1} \right\rvert \left\lvert n^{-1} \sum_{t=1}^{n} \rho({\epsilon}_{1t}) - \E[\rho({\epsilon}_{1t})] \right\rvert \\
		& \qquad + \sum_{j=0}^h \left\lvert \frac{1}{n-j} \sum_{t=1}^{n-j} \widehat{v}_{j,\ell}\big( X_{t+j:t}; \widetilde{\delta}_t \big) - \E[{v}_{j,\ell}\big( X_{t+j:t}; \widetilde{\delta} \big)] \right\rvert \\
		& \leq \delta \left\lvert \widehat{\Theta}_{h,\ell 1} - {\Theta}_{h,\ell 1} \right\rvert \left\lvert n^{-1} \sum_{t=1}^{n} \rho({\epsilon}_{1t}) \right\rvert
		+ \delta \left\lvert \widehat{\Theta}_{h,\ell 1} \right\rvert \left\lvert n^{-1} \sum_{t=1}^{n} \rho({\epsilon}_{1t}) - \E[\rho({\epsilon}_{1t})] \right\rvert \\
		& \qquad + 
		\sum_{j=0}^h \left\lvert \frac{1}{n-j} \sum_{t=1}^{n-j} \widehat{v}_{j,\ell}\big( X_{t+j:t}; \widetilde{\delta}_t \big) - {v}_{j,\ell}\big( X_{t+j:t}; \widetilde{\delta}_t \big) \right\rvert \\
		& \qquad + 
		\sum_{j=0}^h \left\lvert \frac{1}{n-j} \sum_{t=1}^{n-j} {v}_{j,\ell}\big( X_{t+j:t}; \widetilde{\delta}_t \big) - \E[{v}_{j,\ell}\big( X_{t+j:t}; \widetilde{\delta} \big)] \right\rvert
		.
	\end{align*}
	The first two terms in the last bound are $o_P(1)$ since $\left\lvert \widehat{\Theta}_{h,\ell 1} - {\Theta}_{h,\ell 1} \right\rvert = o_P(1)$, as discussed in Lemma \ref{lemma:Gamma_j_consistent}, and $n^{-1} \sum_{t=1}^{n} \rho({\epsilon}_{1t}) \overset{p}{\to} \E[\rho({\epsilon}_{1t})]$ by a WLLN. 
	For the other terms in the last sum above, we similarly note that  
	\begin{equation*}
		\left\lvert \frac{1}{n-j} \sum_{t=1}^{n-j} \widehat{v}_{j,\ell}\big( X_{t+j:t}; \widetilde{\delta}_t \big) - {v}_{j,\ell}\big( X_{t+j:t}; \widetilde{\delta}_t \big) \right\rvert 
		= o_P(1)
	\end{equation*}
	from Lemma \ref{lemma:gamma_v_j_consistent}, while, thanks again to a WLLN, it holds
	\begin{equation*}
		\left\lvert \frac{1}{n-j} \sum_{t=1}^{n-j} {v}_{j,\ell}\big( X_{t+j:t}; \widetilde{\delta}_t \big) - \E[{v}_{j,\ell}\big( X_{t+j:t}; \widetilde{\delta} \big)] \right\rvert 
		= o_P(1) .
	\end{equation*}
	Since $h$ is fixed finite, this implies that $II = o_P(1)$.

	Considering now $I$, we can write 
	\begin{align*}
		I 
		& \leq \delta \left\lvert \widehat{\Theta}_{h,\ell 1} \right\rvert \left\lvert n^{-1} \sum_{t=1}^{n} \rho(\widehat{\epsilon}_{1t}) - \rho({\epsilon}_{1t}) \right\rvert 
		+ \sum_{j=0}^h \left\lvert \frac{1}{n-j} \sum_{t=1}^{n-j} \widehat{v}_{j,\ell}\big( X_{t+j:t}; \widehat{\widetilde{\delta}}_t \big) - \widehat{v}_{j,\ell}\big( X_{t+j:t}; \widetilde{\delta}_t \big) \right\rvert \\
		& = I' + I''
		.
	\end{align*}
	Since by assumption $\rho$ is a bump function, thus continuously differentiable over the range of $\epsilon_t$, by the mean value theorem
	\begin{equation*}
		\left\lvert n^{-1} \sum_{t=1}^{n} \rho(\widehat{\epsilon}_{1t}) - \rho({\epsilon}_{1t}) \right\rvert
		\leq
		n^{-1} \sum_{t=1}^{n} \lvert \rho'_t \rvert \, \big\lvert  \widehat{\epsilon}_{1t} - {\epsilon}_{1t} \big\rvert 
	\end{equation*}
	for a sequence $\{\rho'_t\}_{t=1}^n$ of evaluations of first-order derivative $\rho'$ at values $\overline{\epsilon_t}$ in the interval with endpoint $\epsilon_t$ and $\widehat{\epsilon}_t$. One can use $|\rho'_t| \leq C_{\rho'}$ with a finite positive constant $C_{\rho'}$, and by recalling that $\widehat{\epsilon}_{1t} - {\epsilon}_{1t} = (\Pi_1 - \widehat{\Pi}_1)' W_{1t}$ one, thus, gets
	\begin{equation*}
		\left\lvert n^{-1} \sum_{t=1}^{n} \rho(\widehat{\epsilon}_{1t}) - \rho({\epsilon}_{1t}) \right\rvert
		\leq
		C_{\rho'} \frac{1}{n} \sum_{t=1}^{n} \left\lvert (\Pi_1 - \widehat{\Pi}_1)' W_{1t} \right\rvert
		\leq
		C_{\rho'} \lVert \Pi_1 - \widehat{\Pi}_1 \rVert_2 \: \frac{1}{n} \sum_{t=1}^{n} \lVert W_{1t} \rVert_2 
		= o_P(1).
	\end{equation*}
	This proves that term $I'$ is itself $o_P(1)$. Finally, to control $I''$, we use that by construction estimator $\widehat{\Pi}_2$ is composed of sufficiently regular functional elements i.e. B-spline estimates of order 1 or greater. 
	Thanks again to the mean value theorem
	\begin{align*}
		\left\lvert \frac{1}{n-j} \sum_{t=1}^{n-j} \widehat{v}_{j,\ell}\big( X_{t+j:t}; \widehat{\widetilde{\delta}}_t \big) - \widehat{v}_{j,\ell}\big( X_{t+j:t}; \widetilde{\delta}_t \big) \right\rvert
		& \leq
		\frac{1}{n-j} \sum_{t=1}^{n-j} \left\lvert \widehat{v}_{j,\ell}\big( X_{t+j:t}; \widehat{\widetilde{\delta}}_t \big) - \widehat{v}_{j,\ell}\big( X_{t+j:t}; \widetilde{\delta}_t \big) \right\rvert \\
		& \leq C_{\widehat{v}',j,\ell} \: \frac{1}{n-j} \sum_{t=1}^{n-j} 
		\big\lvert  \widehat{\epsilon}_{1t} - {\epsilon}_{1t} \big\rvert 
	\end{align*}
	for any fixed $j$ and some $C_{\widehat{v}',j,\ell} > 0$.
	This holds since $\widehat{v}_{j,\ell}$ is uniformly continuous by construction. Note that we have assumed that the nonlinear part of $\Pi_2$ belongs to a Hölder class with smoothness $s > 1$ (for simplicity, assume here that $s$ is integer, otherwise a similar argument can be made). Then, even though $C_{\widehat{v}',j,\ell}$ depends on the sample, it is bounded above in probability for $n$ sufficiently large.
	Following the discussion of term $I'$, we deduce that the last line in the display above is $o_p(1)$. As $h$ is finite and independent of $n$, it follows that also $I''$ is of order $o_P(1)$.
	
	Finally, to obtain uniformity with respect to $\delta \in [-\mathcal{D},\mathcal{D}]$, simply note that bounds on $I$ and $II$ are explicit in $\delta$, therefore 
	\begin{equation*}
		\sup_{\delta \in [-\mathcal{D}, \mathcal{D}]} \left\lvert \widehat{\widetilde{\textnormal{IRF}}}_{h,\ell}(\delta) 
		-
		\widetilde{\textnormal{IRF}}_{h,\ell}(\delta)
		\right\rvert
		\leq
		\mathcal{D} \times o_P(1) ,
	\end{equation*}
	concluding the proof.
\end{proof}

\section{Dependence Conditions}\label{appendix:dependence}

\subsection{$\beta$-Mixing}

For the sake of completeness, we first recall the definition of $\beta$-mixing process, which is the dependence framework employed by \cite{chenOptimalUniformConvergence2015}. Let $(\Omega, \mathcal{Q}, \P)$ be the underlying probability space and define
\begin{equation*}
    \beta(\mathcal{A}, \mathcal{B}) := \frac{1}{2} \: \sup \sum_{(i,j) \in I \times J} \lvert  \P(A_i \cap B_i) - \P(A_i) \P(B_i) \rvert
\end{equation*}
where $\mathcal{A}, \mathcal{B}$ are two $\sigma$-algebras, $\{A_i\}_{i \in I} \subset \mathcal{A}$, $\{B_j\}_{j \in J} \subset \mathcal{B}$ and the supremum is taken over all finite partitions of $\Omega$. 
The $h$-th $\beta$-mixing coefficient of process $\{W_{2t}\}_{t \in \Int}$ is defined as
\begin{equation*}
    \beta(h) = \sup_t \beta \big( \sigma(\ldots, W_{2t-1}, W_{2t}), \ldots, \sigma(W_{2t+h}, W_{2t+h+1}, \ldots) \big) ,
\end{equation*}
and $W_{2t}$ is said to be \textit{geometric} or \textit{exponential $\beta$-mixing} if $\beta(h) \leq \gamma_1 \exp(-\gamma_2 h)$ for some $\gamma_1 > 0$ and $\gamma_2 > 0$. An important consideration to be made regarding mixing assumptions is that they are, in general, hard to study. Especially in nonlinear systems, assuming that $\beta(h)$ decays exponentially over $h$ imposes very high-level assumptions on the model. There are, however, many setups, both linear and nonlinear, in which it is known that $\beta$-mixing holds under primitive assumptions, see \cite{chenPenalizedSieve2013} for examples and relevant references.

\subsection{Model-Based Physical Dependence}

Consider a \textit{non-structural model} of the form
\begin{equation}\label{eq_sec3:general_nonlin_model}
    Z_t = G(Z_{t-1}, \epsilon_t) .
\end{equation}
This is a generalization of the semi-reduced model \eqref{eq_main:pseudo_reduced_form_single_eq} where linear and nonlinear components are absorbed into one functional term and $B_0$ is the identity matrix.\footnote{In this subsection, shock identification does not play a role and, as such, one can safely ignore $B_0$.} Indeed, note that models of the form $Z_t = G(Z_{t-1}, \ldots, Z_{t-p}, \epsilon_t)$ can be rewritten as \eqref{eq_sec3:general_nonlin_model} using a companion formulation.
If $\epsilon_t$ is stochastic, \eqref{eq_sec3:general_nonlin_model} defines a causal nonlinear stochastic process. More generally, it defines a nonlinear difference equation and an associated dynamical system driven by $\epsilon_t$. 
Throughout this subsection, we shall assume that $Z_t \in \mathcal{Z} \subseteq \Real^{d_Z}$ as well as $\epsilon_t \in \mathcal{E} \subseteq \Real^{d_Z}$.

Relying on the framework of \cite{potscher1997dynamic}, we now introduce explicit conditions that allow us to control dependence in nonlinear models by using the toolbox of physical dependence measures developed by \cite{wuNonlinearSystemTheory2005,wuAsymptoticTheoryStationary2011a}.
The aim is to use a dynamical system perspective to impose meaningful assumptions on nonlinear dynamic models. This makes it possible to give more primitive conditions under which one can estimate \eqref{eq_main:regression_model} in a semiparametric way.

\subsubsection{Stability}\label{appendix_subsec:stability}

An important concept for dynamical system theory is that of stability. Stability plays a key role in constructing a valid asymptotic theory, as it is well understood in linear models. It is also fundamental for developing the approximation theory of nonlinear stochastic systems. 

\begin{example}
	As a motivating example, first consider the linear system
	$Z_t = B Z_{t-1} + \epsilon_t$,
	where we may assume that $\{\epsilon_t\}_{t \in \Int}$, $\epsilon_t \in \Real^{d_Z}$, is a sequence of i.i.d. innovations.\footnote{One could alternatively think of the case of a deterministic input, setting $\epsilon_t \sim P_t(a_t)$, where $P_t(a_t)$ is a Dirac density on the deterministic sequence $\{a_t\}_{t \in \Int}$.} It is well-known that this system is stable if and only if the largest eigenvalue of $B$ is strictly less than one in absolute value \citep{lutkepohlNewIntroductionMultiple2005}. For a higher order linear system, $Z_t = B(L) Z_{t-1} + \epsilon_t$ where $B(L) = B_1 + B_2 L + \ldots + B_p L^{p-1}$, stability holds if and only if $| \lambda_{\max}(\mathbf{B}) | < 1$ with %$\mathbf{B}$
	\begin{equation*}
	   \textbf{B} := 
	   \begin{bmatrix}
	   B_1 & B_2 & \cdots & B_p \\
	   I_{d_Z} & 0 & \cdots & 0 \\
	   0 & I_{d_Z} & \cdots & 0 \\
	   \vdots & \vdots & \cdots & \vdots \\
	   0 & \cdots & I_{d_Z} & 0
	   \end{bmatrix} 
	\end{equation*}
	being the companion matrix associated with $B(L)$. 
\end{example}

Extending the notion of stability from linear to nonlinear systems requires some care. \cite{potscher1997dynamic} derived generic conditions allowing to formally extend stability to nonlinear models by first analyzing \textit{contractive} systems.

\begin{definition}[Contractive System]\label{definition:constractive_system}
    Let $Z_t \in \mathcal{Z} \subseteq \Real^{d_Z}$, $\epsilon_t \in \mathcal{E} \subseteq \Real^{d_Z}$, where $\{Z_t\}_{t \in \Int}$ is generated according to $Z_t = G(Z_{t-1}, \epsilon_t)$.
    The system is {contractive} if for all $(z, z') \in \mathcal{Z} \times \mathcal{Z}$ and $(e, e') \in \mathcal{E} \times \mathcal{E}$ the condition $\lVert G(z, \epsilon) - G(z', \epsilon') \rVert \leq C_Z \lVert z - z' \rVert + C_\epsilon \lVert e - e' \rVert$
    holds with Lipschitz constants $0 \leq C_Z < 1$ and $0 \leq C_\epsilon < \infty$.
\end{definition}

Sufficient conditions to establish contractivity are
\begin{equation}\label{eq_sec3:contract_y}
    \text{sup} \left\{
        \left\lVert\, \text{stack}_{i=1}^{d_Z} \, \left[ \frac{\partial G}{\partial Z} (z^i, e^i)\right]_i \,\right\rVert \,\bigg\vert\, z^i \in \mathcal{Z}, e^i \in \mathcal{E}
    \right\}
    < 1
\end{equation}
and 
\begin{equation}\label{eq_sec3:contract_eps}
    \left\lVert \frac{\partial G}{\partial \epsilon} \right\lVert < \infty ,
\end{equation}
where the stacking operator $\text{stack}_{i=1}^{d_Z} [\,\cdot\,]_i$ progressively stacks the rows, indexed by $i$, of its argument (which can be changing with $i$) into a matrix. Values $(z^i, e^i) \in \mathcal{Z} \times \mathcal{E}$ change with index $i$ as the above condition is derived using the mean value theorem. Therefore, it is necessary to consider a different set of values for each component of $Z_t$.

It is easy to see, as \cite{potscher1997dynamic} point out, that contractivity is often too strong a condition to be imposed. Indeed, even in the simple case of a scalar AR(2) model $Z_t = b_1 Z_{t-1} + b_2 Z_{t-2} + \epsilon_t$, regardless of the values of $b_1, b_2 \in \Real$ contractivity is violated. This is due to the fact that in a linear AR(2) model studying contractivity reduces to checking $\lVert \mathbf{B} \rVert  < 1$ instead of $| \lambda_{\max}(\mathbf{B}) | < 1$, and the former is a stronger condition than the latter.\footnote{See \cite{potscher1997dynamic}, pp.68-69.} One can weaken contractivity -- which must hold for $G$ as a map from $Z_{t-1}$ to $Z_t$ -- to the idea of \textit{eventual contractivity}. That is, intuitively, one can impose conditions on the dependence of $Z_{t+h}$ on $Z_t$ for $h > 1$ sufficiently large. To do this formally, we must first introduce the definition of system map iterates. 

\begin{definition}[System Map Iterates]
    Let $Z_t \in \mathcal{Z} \subseteq \Real^{d_Z}$, $\epsilon_t \in \mathcal{E} \subseteq \Real^{d_Z}$, where $\{Z_t\}_{t \in \Int}$ is generated from a sequence $\{\epsilon_t\}_{t \in \Int}$ according to $Z_t = G(Z_{t-1}, \epsilon_t)$.
    The $h$-order system map iterate is defined to be
    \begin{align*}
        G^{(h)}(Z_t, \epsilon_{t+1}, \epsilon_{t+2}, \ldots, \epsilon_{t+h})
        & := G(G(\cdots G(Z_t, \epsilon_{t+1}) \cdots, \epsilon_{t+h-1}), \epsilon_{t+h}) \\
        & ~= G(\cdot, \epsilon_{t+h}) \circ G(\cdot, \epsilon_{t+h-1}) \circ \cdots \circ G(Z_t, \epsilon_{t+1}) ,
    \end{align*}
    where $\circ$ signifies function composition and $G^{(0)}(Z_t) = Z_t$. 
\end{definition}

To shorten notation, in place of $G^{(h)}(Z_t, \epsilon_{t+1}, \epsilon_{t+2}, \ldots, \epsilon_{t+h})$ we shall use $G^{(h)}(Z_t, \epsilon_{t+1:t+h})$. Additionally, for $1 \leq j \leq h$, the partial derivative $\partial G^{(h^*)} / {\partial \epsilon_j}$ for some fixed $h^*$ is to be intended with respect to $\epsilon_{t+j}$, the $j$-th entry of the input sequence. This derivative does not depend on the time index since by assumption $G$ is time-invariant and so is $G^{(h)}$.

Taking again the linear autoregressive model as an example, 
\begin{equation*}
    Z_{t+h} = G^{(h)}(Z_t, \epsilon_{t+1:t+h}) = B_1^h Z_t + \sum_{i=0}^{h-1} B_1^{i} \epsilon_{t+h-i}
\end{equation*}
since $G(z, \epsilon) = B_1 z + \epsilon$. If $B_1$ determines a stable system, then $\lVert B_1^h \rVert \to 0$ as $h \to \infty$ since $G^h$ converges to zero, and therefore $\lVert B_1^h \rVert \leq C_Z < 1$ for $h$ sufficiently large. It is thus possible to use system map iterates to define stability for higher-order nonlinear systems.

\begin{definition}[Stable System]\label{definition:stable_system}
    Let $Z_t \in \mathcal{Z} \subseteq \Real^{d_Z}$, $\epsilon_t \in \mathcal{E} \subseteq \Real^{d_Z}$, where $\{Z_t\}_{t \in \Int}$ is generated according to the system $Z_t = G(Z_{t-1}, \epsilon_t)$.
    The system is {stable} if there exists $h^* \geq 1$ such that
    for all $(z, z') \in \mathcal{Z} \times \mathcal{Z}$ and $(e_1, e_2, \ldots e_{h^*},$ $ e'_1, e'_2, \ldots, e'_{h^*}) \in \bigtimes_{i=1}^{2 h^*} \mathcal{E}$ 
    \begin{equation*}
        \lVert G^{(h^*)}(z, e_{1:h^*}) - G^{(h^*)}(z', e'_{1:h^*}) \rVert \leq C_Z \lVert z - z' \rVert + C_\epsilon \lVert e_{1:h^*} - e'_{1:h^*} \rVert
    \end{equation*}
    holds with Lipschitz constants $0 \leq C_Z < 1$ and $0 \leq C_\epsilon < \infty$.
\end{definition}

It is important to remember that this definition encompasses systems with an arbitrary finite autoregressive structure, i.e., $Z_t = G(Z_{t-p+1}, \ldots, Z_{t-1}, \epsilon_t)$ for $p \geq 1$, thanks to the companion formulation of the process. An explicit stability condition, similar to that discussed above for contractivity, can be derived using the mean value theorem. Indeed, for a system to be stable, it is sufficient that, at iterate $h^*$,
\begin{equation}\label{eq_sec3:stable_y}
    \text{sup} \left\{
        \left\lVert\, \text{stack}_{i=1}^{d_Z} \, \left[ \frac{\partial G^{(h^*)}}{\partial Z} (z^i, e^i_{1:h^*})\right]_i \,\right\rVert \,\bigg\vert\, z^i \in \mathcal{Z}, e^i_{1:h^*} \in \bigtimes_{i=1}^{h^*} \mathcal{E}
    \right\}
    < 1
\end{equation}
and
\begin{equation}\label{eq_sec3:stable_eps}
    \text{sup} \left\{
        \left\lVert\, \frac{\partial G^{(h^*)}}{\partial \epsilon_j} (z, e_{1:h^*}) \,\right\rVert \,\bigg\vert\, z \in \mathcal{Z}, e_{1:h^*} \in \bigtimes_{i=1}^{h^*} \mathcal{E}
    \right\}
    < \infty ,
    \qquad j = 1, \ldots, h^* .
\end{equation}

\begin{remark}
	\cite{potscher1997dynamic} have used conditions \eqref{eq_sec3:contract_y}-\eqref{eq_sec3:contract_eps} and \eqref{eq_sec3:stable_y}-\eqref{eq_sec3:stable_eps} as the basis for uniform laws of large numbers and central limit theorems for $L^r$-approximable and near epoch dependent processes.
\end{remark}

\subsubsection{Physical Dependence}\label{appendix_subsec:physical_dep}

\cite{wuNonlinearSystemTheory2005} first proposed alternatives to mixing concepts by proposing dependence measures rooted in a dynamical system view of a stochastic process. Much work has been done to use such measures to derive approximation results and estimator properties; see, for example, \cite{wuKernelEstimationTime2010,wuAsymptoticTheoryStationary2011a,chenSelfnormalizedCramertypeModerate2016}, and references within.

\begin{definition}\label{def:physical_dependence_measure}
	Let $\{Z_t\}_{t \in \Int}$ be a process that can be written as $Z_{t+h} = G^{(h)}(Z_t, \epsilon_{t+1:t+h})$ for all $h \geq 1$, (nonlinear) maps $G^{(h)}$ and innovations $\{\epsilon_t\}_{t \in \Int}$. If for all $t \in \Int$ and chosen $r \geq 1$, $Z_t$ has finite $r$th moment, the {functional physical dependence measure} $\Delta_r$ is 
	\begin{equation*}
		\Delta_r(h) := \sup_t \left\lVert\, Z_{t+h} - G^{(h)}(Z'_t, \epsilon_{t+1:t+h}) \,\right\rVert_{L^r}
	\end{equation*}
	where $\{Z'_{t}\}_{t \in \Int}$ is an independent copy of $\{Z_{t}\}_{t \in \Int}$ based on innovation process $\{\epsilon'_{t}\}_{t \in \Int}$, itself an independent copy of $\{\epsilon_{t}\}_{t \in \Int}$.
\end{definition}

\cite{chenSelfnormalizedCramertypeModerate2016}, among others, show how one may replace the geometric $\beta$-mixing assumption with a physical dependence assumption.\footnote{We adapt here the definitions of \cite{chenSelfnormalizedCramertypeModerate2016} to work with a system of the form $Z_t = G(Z_{t-1}, \epsilon_t)$.}
We will consider the setting where models have dependence -- as measured by $\Delta_r(h)$ -- which decays exponentially with $h$.

\begin{definition}[Geometric Moment Contracting Process]\label{def:gmc_process}
	$\{Z_t\}_{t \in \Int}$ is {geometric moment contracting} (GMC) in $L^r$ norm if there exists $a_1 > 0$, $a_2 > 0$ and $\tau \in (0,1]$ such that
	\begin{equation*}
		\Delta_r(h) \leq a_1 \exp(- a_2 \, h^\tau) .
	\end{equation*}
\end{definition}

GMC conditions can be considered more general than $\beta$-mixing, as they encompass well-known counterexamples, e.g., the known counterexample provided by $Z_t = (Z_{t-1} + \epsilon_t)/2$ for $\epsilon_t$ i.i.d. Bernoulli r.v.s \citep{chenSelfnormalizedCramertypeModerate2016}. 

In the following proposition, we prove that, if contractivity or stability conditions as defined by \cite{potscher1997dynamic} hold for $G$ and $\{\epsilon_t\}_{t \in \Int}$ is an i.i.d. sequence, then the process $\{Z_t\}_{t \in \Int}$ is GMC -- according to Definition~\ref{def:gmc_process} -- under weak moment assumptions.
 
\begin{proposition}\label{prop:gmc_conditions_on_map}
    Assume that $\{\epsilon_t\}_{t \in \Int}$, $\epsilon_t \in \mathcal{E} \subseteq \Real^{d_Z}$ are i.i.d. and $\{Z_t\}_{t \in \Int}$ is generated according to $Z_t = G(Z_{t-1}, \epsilon_t)$, where $Z_t \in \mathcal{Z} \subseteq \Real^{d_Z}$ and $G$ is a measurable function. 
    \begin{itemize}
        \item[(a)] If contractivity conditions \eqref{eq_sec3:contract_y}-\eqref{eq_sec3:contract_eps} hold, $\sup_{t \in \Int} \lVert \epsilon_t \rVert_{L^r} < \infty$ for $r \geq 2$ and $\lVert G(\overline{z}, \overline{\epsilon}) \rVert < \infty$ for some $(\overline{z}, \overline{\epsilon}) \in \mathcal{Z} \times \mathcal{E}$, then $\{Z_t\}_{t \in \Int}$ is GMC with
        $
            \Delta_r(k) \leq a \exp(- \gamma h)
        $
        where $\gamma = - \log(C_Z)$ and $a = 2 \lVert Z_t \rVert_{L^r} < \infty$.
        \item[(b)] If stability conditions \eqref{eq_sec3:stable_y}-\eqref{eq_sec3:stable_eps} hold, $\sup_{t \in \Int} \lVert \epsilon_t \rVert_{L^r} < \infty$ for $r \geq 2$ and $\lVert \partial G / \partial Z \rVert \leq M_Z < \infty$, then $\{Z_t\}_{t \in \Int}$ is GMC with
        $
            \Delta_r(k) \leq \bar{a} \exp(- \gamma_{h^*}\, h)
        $
        where $\gamma_{h^*} = - \log(C_Z) / h^* $ and $\bar{a} = 2 \lVert Z_t \rVert_{L^r} \max\{M_Z^{h-1}, 1\} / C_Z < \infty$.
    \end{itemize}
\end{proposition}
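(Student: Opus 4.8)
The plan is to exploit the defining feature of the physical dependence measure in Definition~\ref{def:physical_dependence_measure}: since $Z_{t+h} = G^{(h)}(Z_t, \epsilon_{t+1:t+h})$ while the comparison term $G^{(h)}(Z'_t, \epsilon_{t+1:t+h})$ is evaluated at the \emph{same} innovation block $\epsilon_{t+1:t+h}$, the quantity $Z_{t+h} - G^{(h)}(Z'_t, \epsilon_{t+1:t+h})$ depends only on the initial states $Z_t$ and $Z'_t$. Consequently the innovation-Lipschitz term $C_\epsilon \lVert e - e' \rVert$ appearing in \eqref{eq_sec3:contract_y}--\eqref{eq_sec3:contract_eps} and in Definition~\ref{definition:stable_system} never contributes, and the entire problem reduces to controlling how the iterated map $G^{(h)}(\cdot, \epsilon_{1:h})$ contracts in its \emph{state} argument alone. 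This is the clean conceptual step that drives both parts.

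For part (a), I would first establish that a strictly stationary solution with $\lVert Z_t \rVert_{L^r} < \infty$ exists. Contractivity gives $\lVert G(z,\epsilon) \rVert \leq C_Z \lVert z - \overline{z} \rVert + C_\epsilon \lVert \epsilon - \overline{\epsilon} \rVert + \lVert G(\overline{z},\overline{\epsilon}) \rVert$; substituting $Z_t = G(Z_{t-1},\epsilon_t)$, taking $L^r$ norms, and using stationarity together with $\sup_t \lVert \epsilon_t \rVert_{L^r} < \infty$ and $\lVert G(\overline{z},\overline{\epsilon})\rVert<\infty$ yields $(1-C_Z)\lVert Z_t \rVert_{L^r} \leq \mathrm{const} < \infty$, so finiteness follows from $1 - C_Z > 0$. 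The core bound is then immediate: applying the contractivity inequality at \emph{identical} innovations gives $\lVert G(z,\epsilon) - G(z',\epsilon) \rVert \leq C_Z \lVert z - z' \rVert$, and iterating $h$ times along $\epsilon_{1:h}$ gives $\lVert G^{(h)}(z,\epsilon_{1:h}) - G^{(h)}(z',\epsilon_{1:h}) \rVert \leq C_Z^h \lVert z - z' \rVert$. Setting $z = Z_t$, $z' = Z'_t$, taking $L^r$ norms, and using $\lVert Z_t - Z'_t \rVert_{L^r} \leq 2\lVert Z_t \rVert_{L^r}$ produces $\Delta_r(h) \leq 2 \lVert Z_t \rVert_{L^r}\, C_Z^h$; rewriting $C_Z^h = \exp(-\gamma h)$ with $\gamma = -\log C_Z > 0$ gives the claim with $\tau = 1$.

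For part (b), stability only provides a contraction for the $h^*$-fold iterate, namely $\lVert G^{(h^*)}(z,\epsilon_{1:h^*}) - G^{(h^*)}(z',\epsilon_{1:h^*}) \rVert \leq C_Z \lVert z - z' \rVert$ at identical innovations, so I would use Euclidean division $h = m h^* + q$ with $0 \leq q < h^*$ and decompose $G^{(h)}$ into $m$ consecutive $h^*$-blocks followed by a $q$-step remainder. Each block contracts by $C_Z$, giving the factor $C_Z^m$, while the remainder is controlled through $\lVert \partial G / \partial Z \rVert \leq M_Z$: the mean value theorem makes a single step $M_Z$-Lipschitz in the state, hence $q$ composed steps are $M_Z^q$-Lipschitz, with $M_Z^q \leq \max\{M_Z^{h^*-1}, 1\}$. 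Combining yields $\lVert G^{(h)}(z,\cdot) - G^{(h)}(z',\cdot) \rVert \leq C_Z^m M_Z^q \lVert z - z' \rVert$, and finiteness of $\lVert Z_t \rVert_{L^r}$ follows exactly as in part (a) but using the $h^*$-step map.

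The conceptually clean step is the shared-innovation observation that kills the $C_\epsilon$ term; the main obstacle is the bookkeeping in part (b) that turns the integer block count $m = \lfloor h/h^* \rfloor$ into a genuine exponential rate in $h$. Since $m \geq h/h^* - 1$ and $C_Z < 1$, one has $C_Z^m \leq C_Z^{-1}\exp(-\gamma_{h^*} h)$ with $\gamma_{h^*} = -\log(C_Z)/h^*$, and this, together with the sub-block factor $\max\{M_Z^{h^*-1},1\}$ and $2\lVert Z_t \rVert_{L^r}$, assembles into the stated constant $\bar{a}$ and rate $\gamma_{h^*}$. The secondary point requiring care throughout is the moment-finiteness/stationarity argument that legitimizes the constant $\lVert Z_t \rVert_{L^r}$ in the first place, which is where the moment hypothesis $\sup_t\lVert\epsilon_t\rVert_{L^r}<\infty$ is used.
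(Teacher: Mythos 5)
Your proof is correct and takes essentially the same route as the paper's: the shared innovation block kills the $C_\epsilon$ term, the state contraction is iterated to give $C_Z^h$ in (a) and $C_Z^{\lfloor h/h^*\rfloor}\max\{M_Z^{h^*-1},1\}$ in (b) via the mean value theorem, and the initial-state difference is bounded by $2\lVert Z_t\rVert_{L^r}$ with $j(h)\geq h/h^*-1$ yielding the $1/C_Z$ factor in $\bar a$. The only minor divergence is the finiteness of $\lVert Z_t\rVert_{L^r}$: the paper (Lemma~\ref{lemma:finite_Lr_norm_process}) obtains it by a backward-iteration and geometric-series construction, whereas your stationarity-and-rearrange argument tacitly presupposes the norm is finite before cancelling it on both sides, a small gap that the paper's construction closes.
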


Proposition \ref{prop:gmc_conditions_on_map} is important in that it links the GMC property to transparent conditions on the structure of the nonlinear model. It also allows to handle multivariate systems, while previous work has focused on scalar systems (cf. \cite{wuAsymptoticTheoryStationary2011a} and \cite{chenSelfnormalizedCramertypeModerate2016}).

Lastly, the following lemma shows that if $\{W_{2t}\}_{t\in\Int}$ is geometric moment contracting, Assumption \ref{assumption:series_gram_matrix_convergence} is fulfilled.\footnote{Compare also with Lemma 2.2 in \cite{chenOptimalUniformConvergence2015}.} 

\begin{lemma}\label{lemma:gmc_sample_gram_matrix_op1}
	If Assumption \ref{assumption:sieve_regularity}(iii) holds and $\{W_{2t}\}_{t \in \Int}$ is strictly stationary and GMC then one may choose an integer sequence $q = q(n) \leq n/2$ with $(n/q)^{r+1} q K^\rho \Delta_r(q) = o(1)$ for $\rho = 5/2 - (r/2 + 2/r) + \omega_2$ and $r > 2$ such that
	\begin{equation*}
		\lVert (\widetilde{B}_\pi' \widetilde{B}_\pi / n) - I_K \rVert = O_P\left( \zeta_{K,n} \lambda_{K,n} \sqrt{\frac{q \log K}{n}} \right) = o_P(1)
	\end{equation*}
	provided $\zeta_{K,n} \lambda_{K,n} \sqrt{(q \log K)/{n}} = o(1)$.
\end{lemma}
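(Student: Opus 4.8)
The plan is to express the object of interest as a normalized sum of centered, matrix-valued summands and then to control its operator norm by combining a coupling argument, which exploits the geometric moment contraction, with a blocking scheme and a matrix concentration inequality. Writing $\Gamma_{B,2} := \E[\, b^K_\pi(W_{2t}) b^K_\pi(W_{2t})' \,]$ for the Gram matrix and recalling $\widetilde{b}^K_\pi = \Gamma_{B,2}^{-1/2} b^K_\pi$, I would set
\begin{equation*}
	\frac{\widetilde{B}_\pi' \widetilde{B}_\pi}{n} - I_K = \frac{1}{n} \sum_{t=1}^n \Xi_t, \qquad \Xi_t := \widetilde{b}^K_\pi(W_{2t}) \widetilde{b}^K_\pi(W_{2t})' - I_K ,
\end{equation*}
so that each $\Xi_t$ is a symmetric, mean-zero $K \times K$ matrix. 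The uniform bound $\sup_{w \in \mathcal{W}_2} \lVert \widetilde{b}^K_\pi(w) \rVert \leq \zeta_{K,n} \lambda_{K,n}$ follows because $\lVert \Gamma_{B,2}^{-1/2} \rVert = \lambda_{K,n}$ and $\sup_w \lVert b^K_\pi(w) \rVert = \zeta_{K,n}$, whence $\lVert \Xi_t \rVert \lesssim (\zeta_{K,n} \lambda_{K,n})^2 + 1$ almost surely.

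First I would introduce a $q$-dependent coupling. Since $\{W_{2t}\}_{t\in\Int}$ is GMC, for each block length $q$ one can construct $W_{2t}^{(q)}$ by replacing innovations of lag greater than $q$ with an independent copy, so that $\sup_t \lVert W_{2t} - W_{2t}^{(q)} \rVert_{L^r} \leq \Delta_r(q)$. Letting $\Xi_t^{(q)}$ denote the analogue of $\Xi_t$ built from $W_{2t}^{(q)}$, I split $n^{-1} \sum_t \Xi_t$ into the truncated part $n^{-1} \sum_t \Xi_t^{(q)}$ and the remainder $n^{-1} \sum_t (\Xi_t - \Xi_t^{(q)})$. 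The remainder is governed by the Lipschitz behaviour of $\widetilde{b}^K_\pi$: by Assumption~\ref{assumption:sieve_regularity}(i), $\sup_w \lVert \nabla b^K_\pi(w) \rVert \lesssim n^{\omega_1} K^{\omega_2}$ with $\omega_1 = 0$, which, together with the whitening factor $\lambda_{K,n}$, yields a Lipschitz constant for $\widetilde{b}^K_\pi$ of order $K^{\omega_2} \lambda_{K,n}$. A higher-moment (Rosenthal / Fuk--Nagaev) inequality applied blockwise, followed by a union bound over the $\asymp n/q$ blocks, then converts the $L^r$ bound $\Delta_r(q)$ into an operator-norm bound on the remainder; the precise powers of $n/q$, $q$, and $K$ entering the hypothesis $(n/q)^{r+1} q K^\rho \Delta_r(q) = o(1)$, with $\rho = 5/2 - (r/2 + 2/r) + \omega_2$, emerge from this moment-to-probability conversion and force the remainder to be $o_P(1)$.

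Next I would treat the truncated part $n^{-1} \sum_t \Xi_t^{(q)}$ by blocking. Partition $\{1, \ldots, n\}$ into consecutive blocks of length $q$; because $\Xi_t^{(q)}$ depends only on innovations within a window of width $q$, non-adjacent block sums are mutually independent. Separating the blocks into even- and odd-indexed classes gives, within each class, $\asymp n/(2q)$ independent mean-zero matrix summands, each of operator norm $\lesssim q (\zeta_{K,n} \lambda_{K,n})^2$ and with total variance proxy $\lesssim n q (\zeta_{K,n} \lambda_{K,n})^2$. The matrix Bernstein inequality applied to each class lets the dimension enter only through a $\log K$ factor, and after dividing by $n$ one obtains
\begin{equation*}
	\left\lVert \frac{1}{n} \sum_{t=1}^n \Xi_t^{(q)} \right\rVert = O_P\left( \zeta_{K,n} \lambda_{K,n} \sqrt{\frac{q \log K}{n}} \right) ,
\end{equation*}
the subleading Bernstein term being negligible precisely when the stated square-root quantity is $o(1)$. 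Combining this with the remainder bound gives the claimed rate, and the $o_P(1)$ conclusion follows from the hypothesis $\zeta_{K,n} \lambda_{K,n} \sqrt{(q \log K)/n} = o(1)$. The existence of an admissible $q$ is not an obstacle: since GMC forces $\Delta_r(q) \leq a_1 \exp(-a_2 q^\tau)$ to decay faster than any polynomial, a block length $q \asymp (\log n)^{1/\tau}$ simultaneously satisfies both conditions for the polynomially growing sieve dimensions $K$ permitted by Assumption~\ref{assumption:sieve_type}.

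The main obstacle I anticipate is the remainder control in the coupling step: transferring the $L^r$ physical-dependence bound $\Delta_r(q)$ into a high-probability operator-norm bound on a sum of $K \times K$ matrix differences, uniformly over the $\asymp n/q$ blocks. This requires carefully tracking how the gradient factor $K^{\omega_2}$, the whitening factor $\lambda_{K,n}$, the moment order $r$, and the block count $n/q$ compound, so that the single scalar condition defining $\rho$ suffices to absorb all of them. By contrast, once the $q$-dependent coupling is in place, the blocking and matrix Bernstein step is essentially standard and mirrors Lemma~2.2 of \cite{chenOptimalUniformConvergence2015}.
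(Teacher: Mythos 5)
Your proposal is correct and follows essentially the same route as the paper: the same decomposition into centered summands $\widetilde{b}^K_\pi(W_{2t})\widetilde{b}^K_\pi(W_{2t})' - I_K$, the same interlaced blocking with matrix Bernstein on the (conditionally) independent even/odd block sums, the same Lipschitz transfer of $\Delta_r(q)$ through the gradient bound $K^{\omega_2}$, and the same $r$th-moment-plus-union-bound control of the truncation error, which is exactly where the exponent $\rho = 5/2 - (r/2 + 2/r) + \omega_2$ and the condition $(n/q)^{r+1} q K^\rho \Delta_r(q) = o(1)$ arise. The only cosmetic difference is that you realize the block truncation via an independent-copy coupling $W_{2t}^{(q)}$, whereas the paper uses the conditional expectations $W_j^\dagger$ given the most recent $2q$ innovations; both yield the required block independence and the same $\Delta_r(q)$ remainder bound.
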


It can be seen that Lemma \ref{lemma:gmc_sample_gram_matrix_op1} holds by setting $\sqrt{K (\log(n))^2 / n} = o(1)$ and choosing $ q(n) = \gamma^{-1} \log(K^\rho n^{r+1}) $, where $\gamma$ is a GMC factor, see Proposition~\ref{prop:gmc_conditions_on_map} in the Online Appendix for details.  
Therefore, the rate is the same as the one derived by \cite{chenOptimalUniformConvergence2015} for exponentially $\beta$-mixing regressors. 
It is straightforward to prove that, if one assume $\{Z_t\}_{t \in \Int}$ fulfills GMC conditions, then $\{W_{2t}\}_{t \in \Int}$ is also a geometric moment contracting process, see Remark~\ref{remark:W2_regs_gmc} below. 
Accordingly, Lemma \ref{lemma:gmc_sample_gram_matrix_op1} applies and Assumption \ref{assumption:series_gram_matrix_convergence} is automatically verified.

\subsection{Supporting Proofs}

\paragraph{Matrix Norms.}
Let
\begin{equation*}
	\lVert A \rVert_r := \max \big\{ \lVert A x \rVert_r \,\big\vert\, \lVert x \rVert_r \leq 1 \big\}
\end{equation*}
be the $r$-operator norm of matrix $A \in \Complex^{d_1 \times d_2}$. The following Theorem establishes the equivalence between different operator norms as well as the compatibility constants.

\begin{theorem}[\cite{fengEquivalenceConstantsCertain2003}]
	Let $1 \leq p, q \leq \infty$. Then for all $A \in \Complex^{d_1 \times d_2}$,
	\begin{equation*}
		\lVert A \rVert_p \leq \lambda_{p,q}(d_1) \lambda_{q,p}(d_2) \lVert A \rVert_q 
		\quad\textnormal{where}\quad
		\lambda_{a,b}(d) := \begin{cases}
			1 & \text{if } a \geq b , \\
			d^{1/a-1/b} & \text{if } a < b .  \\
		\end{cases}
	\end{equation*}
	This norm inequality is sharp.
\end{theorem}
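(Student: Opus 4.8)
The plan is to reduce the operator-norm inequality to the elementary comparison of vector $\ell^r$ norms, for which $\lambda_{a,b}(d)$ is exactly the sharp constant. First I would record the vector-norm lemma: for every $v \in \Complex^d$ and $1 \leq a, b \leq \infty$ one has $\lVert v\rVert_a \leq \lambda_{a,b}(d)\lVert v\rVert_b$, and this is sharp. When $a \geq b$ this is just the monotonicity of $\ell^r$-norms in $r$, so the constant is $1$, attained at any standard basis vector $e_i$. When $a < b$ it follows from H\"older's inequality applied to $\sum_i |v_i|^a \cdot 1$ with conjugate exponents $b/a$ and $b/(b-a)$, which produces the factor $d^{1/a-1/b}$, and this is attained at the all-ones vector $\mathbf{1}_d$ since $\lVert \mathbf{1}_d\rVert_r = d^{1/r}$ for all $r$. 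The boundary cases $a$ or $b=\infty$ are covered by the convention $1/\infty=0$.

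Second, I would chain these comparisons through the definition of the operator norm. For a fixed $x \in \Complex^{d_2}$, applying the lemma first to $Ax \in \Complex^{d_1}$ and then to $x \in \Complex^{d_2}$ gives
\begin{equation*}
	\lVert A x\rVert_p \leq \lambda_{p,q}(d_1)\,\lVert Ax\rVert_q \leq \lambda_{p,q}(d_1)\,\lVert A\rVert_q\,\lVert x\rVert_q \leq \lambda_{p,q}(d_1)\,\lambda_{q,p}(d_2)\,\lVert A\rVert_q\,\lVert x\rVert_p .
\end{equation*}
Taking the maximum over $\lVert x\rVert_p \leq 1$ yields $\lVert A\rVert_p \leq \lambda_{p,q}(d_1)\lambda_{q,p}(d_2)\lVert A\rVert_q$, the stated bound. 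Note that $\lambda_{p,q}(d_1)>1$ forces $p<q$ while $\lambda_{q,p}(d_2)>1$ forces $q<p$, so at most one of the two factors can ever exceed $1$; this is what makes the sharpness argument clean.

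For sharpness I would exhibit rank-one matrices saturating the bound. If $p \leq q$, take $A = \mathbf{1}_{d_1} e_1^\top$, so that $Ax = x_1 \mathbf{1}_{d_1}$ and $\lVert A\rVert_r = d_1^{1/r}$ for every $r$; then $\lVert A\rVert_p / \lVert A\rVert_q = d_1^{1/p-1/q}$, which equals $\lambda_{p,q}(d_1)\lambda_{q,p}(d_2)$ because $\lambda_{q,p}(d_2)=1$ in this regime. If $p \geq q$, take instead $A = e_1 \mathbf{1}_{d_2}^\top$, so that $Ax = (\mathbf{1}_{d_2}^\top x)\,e_1$ and $\lVert A\rVert_r = \lVert \mathbf{1}_{d_2}\rVert_{r'} = d_2^{1-1/r}$ by the sharp H\"older/duality bound (with $r'$ the conjugate exponent); then $\lVert A\rVert_p / \lVert A\rVert_q = d_2^{1/q-1/p}$, which equals $\lambda_{p,q}(d_1)\lambda_{q,p}(d_2)$ since now $\lambda_{p,q}(d_1)=1$. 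The case $p=q$ is trivial, both factors being $1$.

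The hard part will be the sharpness claim rather than the inequality itself: one must ensure that the two vector-norm comparisons are simultaneously tight at a single witness vector. The rank-one constructions above are chosen precisely so that the inactive factor is carried by the dimension ($d_1$ or $d_2$) along which no loss occurs, so the two comparisons degenerate to one and tightness transfers cleanly. Everything else is a routine application of $\ell^r$-norm monotonicity and of H\"older's inequality together with its equality case, and the exponents $p,q \in \{1,\infty\}$ require only the usual conventions and present no genuine obstacle.
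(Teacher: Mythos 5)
Your proof is correct. Note that the paper itself does not prove this statement — it is imported verbatim from Feng and Tonge (2003) as a cited black box, and only the special case $p>q$ (giving $(d_2)^{-(1/q-1/p)}\lVert A\rVert_p \leq \lVert A\rVert_q \leq (d_1)^{1/q-1/p}\lVert A\rVert_p$) is actually used downstream. So there is no in-paper argument to compare against; what you have written is a valid standalone derivation. The two-step chaining through the vector-norm comparison $\lVert v\rVert_a \leq \lambda_{a,b}(d)\lVert v\rVert_b$ (monotonicity when $a\geq b$, H\"older with exponents $b/a$ and $b/(b-a)$ when $a<b$) is the standard route and gives the inequality immediately. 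Your sharpness witnesses are also correct: for $p\leq q$ the matrix $\mathbf{1}_{d_1}e_1^\top$ has $\lVert A\rVert_r = d_1^{1/r}$ for all $r$, and for $p\geq q$ the matrix $e_1\mathbf{1}_{d_2}^\top$ has $\lVert A\rVert_r = d_2^{1-1/r}$ by duality, and in each regime exactly one of the two factors $\lambda_{p,q}(d_1)$, $\lambda_{q,p}(d_2)$ is nontrivial, so a single rank-one witness suffices to saturate the product. The only observation worth adding is that your remark that the two factors can never simultaneously exceed $1$ is precisely why the naive "compose two sharp inequalities" strategy, which in general loses sharpness, happens to work here.
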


In particular, if $p > q$ then it holds 
$
(d_2)^{-(1/q - 1/p)} \lVert A \rVert_p 
\leq \lVert A \rVert_q
\leq (d_1)^{1/q - 1/p} \lVert A \rVert_p .
$

\subsubsection{GMC Conditions and Proposition \ref{prop:gmc_conditions_on_map}}

\begin{lemma}\label{lemma:finite_Lr_norm_process}
    Assume that $\{\epsilon_t\}_{t \in \Int}$, $\epsilon_t \in \mathcal{E} \subseteq \Real^{d_Z}$ are i.i.d., and $\{Z_t\}_{t \in \Int}$ is generated according to $Z_t = G(Z_{t-1}, \epsilon_t)$, where $Z_t \in \mathcal{Z} \subseteq \Real^{d_Z}$ and $G$ is a measurable function. If either
    \begin{itemize}
        \item[(a)] Contractivity conditions \eqref{eq_sec3:contract_y}-\eqref{eq_sec3:contract_eps} hold, $\sup_{t \in \Int} \lVert \epsilon_t \rVert_{L^r} < \infty$ and $\lVert G(\overline{z}, \overline{\epsilon}) \rVert < \infty$ for some $(\overline{z}, \overline{\epsilon}) \in \mathcal{Z} \times \mathcal{E}$;
        \item[(b)] Stability conditions \eqref{eq_sec3:stable_y}-\eqref{eq_sec3:stable_eps} hold, $\sup_{t \in \Int} \lVert \epsilon_t \rVert_{L^r} < \infty$ and $\lVert \partial G / \partial Z \rVert \leq M_Z < \infty$;
    \end{itemize}
    then $\sup_t \lVert Z_t \rVert_{L^r} < \infty \quad \text{w.p.} 1$.
\end{lemma}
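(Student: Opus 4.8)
The plan is to establish the uniform $L^r$ bound constructively, by representing $Z_t$ as an $L^r$-convergent series built from the system-map iterates, rather than by a recursion that presupposes finiteness. Consider case (a) first. Fix any reference point $z_\star \in \mathcal{Z}$ and set $Z_t^{(h)} := G^{(h)}(z_\star, \epsilon_{t-h+1:t})$, so that $Z_t^{(0)} = z_\star$ and, informally, $Z_t = \lim_{h\to\infty} Z_t^{(h)}$. First I would bound the increments. Since $Z_t^{(h+1)}$ and $Z_t^{(h)}$ are obtained by applying the same outer $h$ maps to the two inner values $G(z_\star,\epsilon_{t-h})$ and $z_\star$, repeated use of the contractivity inequality of Definition~\ref{definition:constractive_system} (guaranteed by \eqref{eq_sec3:contract_y}--\eqref{eq_sec3:contract_eps}) collapses the innovation terms and yields $\lVert Z_t^{(h+1)} - Z_t^{(h)} \rVert \leq C_Z^{\,h}\,\lVert G(z_\star,\epsilon_{t-h}) - z_\star \rVert$ pointwise.

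The next step is to control the per-increment constant in $L^r$. Applying the contractivity bound once more, $\lVert G(z_\star,\epsilon) - z_\star \rVert \leq C_Z \lVert z_\star - \overline{z}\rVert + C_\epsilon \lVert \epsilon - \overline{\epsilon}\rVert + \lVert G(\overline{z},\overline{\epsilon})\rVert + \lVert z_\star\rVert$, whose $L^r$ norm is finite by $\sup_t \lVert \epsilon_t\rVert_{L^r} < \infty$ and $\lVert G(\overline{z},\overline{\epsilon})\rVert < \infty$; call this finite constant $D_\star$, which does not depend on $t$ by stationarity of the innovations. Hence $\lVert Z_t^{(h+1)} - Z_t^{(h)}\rVert_{L^r} \leq C_Z^{\,h} D_\star$, and since $C_Z < 1$ the telescoping series converges, making $\{Z_t^{(h)}\}_h$ Cauchy in $L^r$ with limit $Z_t$. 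Summing the geometric series then gives the uniform bound $\sup_t \lVert Z_t\rVert_{L^r} \leq \lVert z_\star\rVert + D_\star/(1-C_Z) < \infty$, which is exactly the claim; this also recovers the constant appearing in Proposition~\ref{prop:gmc_conditions_on_map}(a).

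For case (b), the strategy is to subsample along a grid of spacing $h^*$. By the stability inequality of Definition~\ref{definition:stable_system} (guaranteed by \eqref{eq_sec3:stable_y}--\eqref{eq_sec3:stable_eps}), the $h^*$-fold map $G^{(h^*)}$ is itself contractive with constant $C_Z < 1$, so the argument of part (a) applied to $G^{(h^*)}$ delivers a uniform $L^r$ bound for $Z_t$ along each arithmetic progression $\{t_0 + m h^*\}_{m}$. To pass from the grid to an arbitrary index $t$, I would express $Z_t$ as at most $h^* - 1$ single-step applications of $G$ starting from a grid point; by the mean value theorem and $\lVert \partial G/\partial Z\rVert \leq M_Z$, each such step is Lipschitz in its state argument with constant $M_Z$, while the innovation contributions remain finite via $\sup_t\lVert\epsilon_t\rVert_{L^r}<\infty$. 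Since $h^*$ is fixed, these finitely many excursions inflate the grid bound by only the finite factor $\max\{M_Z^{\,h^*-1},1\}$, matching the constant $\bar{a}$ in Proposition~\ref{prop:gmc_conditions_on_map}(b), and the supremum over all $t$ stays finite.

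The main obstacle is case (b): when $M_Z \geq 1$ the intermediate single-step maps between grid points are \emph{not} contractive, so one must verify that the finite-horizon excursions genuinely contribute only a bounded multiplicative factor and do not accumulate with $t$. The care here lies in organizing the iteration so that contraction is harvested on the coarse $h^*$-grid while the non-contractive fine steps are confined to a window of fixed width $h^*-1$; the rest is routine once the telescoping representation from part (a) is in place, which is precisely what keeps the finiteness argument non-circular.
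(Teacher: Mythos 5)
Your proof is correct and follows essentially the same route as the paper: both arguments construct $Z_t$ as the limit of finite iterates $G^{(\ell)}(z_\circ,\epsilon_{t-\ell+1:t})$ from a fixed initial condition, harvest the geometric factor $C_Z^{\,\ell}$ from contractivity to obtain a uniformly summable $L^r$ bound, and in case (b) apply the contraction only on the coarse $h^*$-grid while absorbing the at most $h^*-1$ non-contractive intermediate steps into the fixed factor $\max\{M_Z^{\,h^*-1},1\}$. Your telescoping/Cauchy-in-$L^r$ presentation of the increments is just the unrolled form of the paper's direct recursive bound $\lVert Z^{(-\ell)}_{t}\rVert \leq C_Z^{\ell}\lVert z_\circ\rVert + \sum_{k=0}^{\ell-1} C_Z^{k} C_\epsilon\lVert\epsilon_{t-k}\rVert$, so the two proofs differ only cosmetically.
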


\begin{proof}
~\\[-2em]
\begin{itemize}
    \item[(a)] In a first step, we show that, given event $\omega \in \Omega$, realization $Z_t(\omega)$ is unique with probability one. To do this, introduce initial condition $z_\circ$ for $\ell > 1$ such that $z_\circ \in \mathcal{Z}$ and $\lVert z_\circ \rVert < \infty$. Define
    $
        Z^{(-\ell)}_{t}(\omega) = G^{(\ell)}(y_\circ, \epsilon_{t-\ell+1:t}(\omega))
    $.
    Further, let $Z'^{(-\ell)}_{t}$ be the realization with initial condition $z'_\circ \not= z_\circ$ and innovation realizations $\epsilon_{t-\ell+1:t}(\omega)$. Note that
    $
        \lVert Z^{(-\ell)}_{t}(\omega) - Z'^{(-\ell)}_{t}(\omega) \rVert 
        \leq C_Z^\ell \left\lVert z_\circ - z'_\circ \right\rVert
    $,
    which goes to zero as $\ell \to \infty$. Therefore, if we set $Z_t(\omega) := \lim_{\ell \to \infty} Z^{(-\ell)}_{t}(\omega)$, $Z_t(\omega)$ is unique with respect to the choice of $z_\circ$ w.p.1. 
    A similar recursion shows that 
    \begin{align*}
        \left\lVert Z^{(-\ell)}_{t}(\omega) \right\rVert 
        & \leq C_Z^\ell \left\lVert z_\circ \right\rVert + \sum_{k=0}^{\ell - 1} C_Z^k C_\epsilon \left\lVert \epsilon_{t-k}(\omega) \right\rVert .
    \end{align*}
    By norm equivalence, this implies 
    \begin{align*}
        \left\lVert Z^{(-\ell)}_{t} \right\rVert_{L^r} 
        & \leq C_Z^\ell \left\lVert z_\circ \right\rVert_r + \sum_{k=0}^{\ell - 1} C_Z^k C_\epsilon \left\lVert \epsilon_{t-k} \right\rVert_{L^r} 
        \leq C_Z^\ell \left\lVert z_\circ \right\rVert_r + \frac{C_\epsilon}{1 - C_Z}\, \sup_{t \in \Int} \lVert \epsilon_t \rVert_{L^r} < \infty ,
    \end{align*}
    and taking the limit $\ell \to \infty$ proves the claim.
    \item[(b)] Consider again distinct initial conditions $z'_\circ \not= z_\circ$ and innovation realizations $\epsilon_{t-\ell+1:t}(\omega)$, yielding $Z'^{(-\ell)}_{t}(\omega)$ and $Z^{(-\ell)}_{t}(\omega)$, respectively. 
    We may use the contraction bound derived in the proof of Proposition \ref{prop:gmc_conditions_on_map} (b) below, that is, 
    $
        \lVert\, Z^{(-\ell)}_{t}(\omega) - Z'^{(-\ell)}_{t}(\omega) \,\rVert_r \leq C_Z^{\ell} C_2 \lVert z_\circ - z'_\circ \rVert_r 
    $,
    where $C_2 > 0$ is a constant. With trivial adjustments, the uniqueness and limit arguments used for (a) above apply here too.

\end{itemize}
\end{proof}

\begin{proof}[Proof of Proposition \ref{prop:gmc_conditions_on_map}]
~\\[-2em]
\begin{itemize}
    \item[(a)] By assumption for all $(z, z') \in \mathcal{Z} \times \mathcal{Z}$ and $(e, e') \in \mathcal{E} \times \mathcal{E}$ it holds that
    $
        \lVert G(z, \epsilon) - G(z', \epsilon') \rVert \leq C_Z \lVert z - z' \rVert + C_\epsilon \lVert e - e' \rVert
    $,
    where $0 \leq C_Z < 1$ and $0 \leq C_\epsilon < \infty$. The equivalence of norms directly generalizes this inequality to any $r$-norm for $r > 2$. We study $\lVert Z_{t+h} - Z'_{t+h} \rVert_r$ where $Z'_{t+h}$ is constructed with a time-$t$ perturbation of the history of $Z_{t+h}$.
    Therefore, for any given $t$ and $h \leq 1$ it holds that
    \begin{align*}
        \left\lVert\, Z_{t+h} - G^{(h)}(Z'_t, \epsilon_{t+1:t+h}) \,\right\rVert_r & \leq C_Z \lVert G^{(h-1)}(Z_t, \epsilon_{t+1:t+h-1}) - G^{(h-1)}(Z'_t, \epsilon_{t+1:t+h-1}) \rVert_r \\
        & \leq C_Z^h \lVert Z_t - Z'_t \rVert_r ,
    \end{align*}
    since sequence $\epsilon_{t+1:t+h}$ is common between $Z_{t+h}$ and $Z'_{t+h}$. Clearly then
    \begin{equation*}
        \left\lVert\, Z_{t+h} - G^{(h)}(Z'_t, \epsilon_{t+1:t+h}) \,\right\rVert_r \leq 2 \lVert Z_t \rVert_r \exp( - \gamma h)
    \end{equation*}
    for $\gamma = - \log(C_Z)$. 
    Letting $a = 2 \lVert Z_t \rVert_r$ and shifting time index $t$ backward by $h$, since $\sup_t \lVert Z_t \rVert_{L^r} < \infty$ w.p.1 from Lemma \ref{lemma:finite_Lr_norm_process} the result for $L^r$ follows with $\tau = 1$.
    \item[(b)] Proceed similar to (a), but notice that now we must handle cases of steps $1 \leq h < h^*$. Consider iterate $h^* + 1$, for which
    \begin{align*}
        & \left\lVert\, Z_{t+h+1} - G^{(h+1)}(Z'_t, \epsilon_{t+1:t+h+1}) \,\right\rVert_r \\
        &\qquad\qquad \leq C_Z \lVert G^{(h)}(G(Z_t, \epsilon_{t+1}), \epsilon_{t+2:t+h}) - G^{(h)}(G(Z'_t, \epsilon_{t+1}), \epsilon_{t+2:t+h}) \rVert_r \\
        &\qquad\qquad \leq C_Z^h \lVert G(Z_t, \epsilon_{t+1}) - G(Z'_t, \epsilon_{t+1}) \rVert_r \\
        &\qquad\qquad \leq C_Z^h M_Z \lVert Z_t - Z'_t \rVert_r
    \end{align*}
    by the mean value theorem. Here we may assume that $M_Z \geq 1$ otherwise we would fall under case (a), so that $M_Z \leq M_Z^2 \leq \ldots \leq M_Z^{h^*-1}$. More generally,
    \begin{equation*}
        \left\lVert\, Z_{t+h+1} - G^{(h+1)}(Z'_t, \epsilon_{t+1:t+h+1}) \,\right\rVert_r \leq C_Z^{j(h)} \max\{M_Z^{h^*-1}, 1\} \lVert Z_t - Z'_t \rVert_r
    \end{equation*}
    for $j(h) := \lfloor h / h^* \rfloor$. Result (b) then follows by noting that $j(h) \geq h / h^* - 1$ and then proceeding as in (a) to derive GMC coefficients.
\end{itemize}
\end{proof}

\begin{remark}\label{remark:W2_regs_gmc}
    The assumption of GMC for a process translates naturally to vectors that are composed of stacked lags of realizations. This, for example, is important in the discussion of Section \ref{section:estimation}, since one needs that regressors $\{W_{2t}\}_{t \in\Int}$ be geometric moment contracting.

    Recall that $W_{2t} = (X_t, X_{t-1}, \ldots, X_{t-p}, Y_{t-1}, \ldots, Y_{t-p}, \epsilon_{1t})$. Here we shall reorder this vector slightly to actually be
    $
        W_{2t} = (X_t, X_{t-1}, Y_{t-1}, \ldots, X_{t-p}, Y_{t-p}, \epsilon_{1t}) .
    $
    For $h > 0$ and $1 \leq l \leq h$, let $Z_{t+j}' := \Phi^{(l)}(Z'_t, \ldots, Z'_{t-p}; \epsilon_{t+1:t+j})$ be the a perturbed version of $Z_t$, where $Z'_t, \ldots, Z'_{t-p}$ are taken from an independent copy of $\{Z_t\}_{t \in \Int}$. Define 
    $
        W'_{2t} = (X'_t, X'_{t-1}, Y'_{t-1}, \ldots, X'_{t-p}, Y'_{t-p}, \epsilon_{1t}) .
    $
    Using Minkowski's inequality 
    \begin{align*}
        \lVert W_{2t+h} - W'_{2t+h} \rVert_{L^r} 
        & \leq \lVert X_{t+h} - X'_{t+h} \rVert_{L^r} + \sum_{j=1}^p \lVert Z_{t+h-j} - Z'_{t+h-j} \rVert_{L^r} \\
        & \leq \sum_{j=0}^p \lVert Z_{t+h-j} - Z'_{t+h-j} \rVert_{L^r} ,
    \end{align*}
    thus, since $p > 0$ is fixed finite,
    \begin{equation*}
        \sup_t \lVert W_{2t+h} - W'_{2t+h} \rVert_{L^r} 
        \leq \sum_{j=0}^p \Delta_r(h-j)
        \leq (p + 1) \, a_{1 Z} \exp(-a_{2 Z} h) .
    \end{equation*}
    Above, $a_{1 Z}$ and $a_{2 Z}$ are the GMC coefficients of $\{Z_t\}_{t \in \Int}$.
\end{remark}

\subsubsection{Lemma \ref{lemma:gmc_sample_gram_matrix_op1} and Matrix Inequalities under Dependence}

In order to prove Lemma \ref{lemma:gmc_sample_gram_matrix_op1}, we modify the approach of \cite{chenOptimalUniformConvergence2015}, which relies on Berbee's Lemma and an interlaced coupling, to handle variables with physical dependence. This is somewhat similar to the proof strategies used in \cite{chenSelfnormalizedCramertypeModerate2016}.

\paragraph*{}
First of all, we recall below a Bernstein-type inequality for {independent} random matrices of \cite{troppUserFriendlyTailBounds2012}.

\begin{theorem}
	Let $\{\Xi_{i}\}_{i = 1}^n$ be a finite sequence of independent random matrices with dimensions $d_1 \times d_2$. Assume $\E[\Xi_i] = 0$ for each $i$ and $\max_{1 \leq i \leq n} \lVert \Xi_i \rVert \leq R_n$ and define
	$$ \varsigma^2_n := \max\left\{ \left\lVert \sum_{i=1}^n \E\left[ \Xi_{i,n} \Xi_{j,n}' \right] \right\rVert, \left\lVert \sum_{i=1}^n \E\left[ \Xi_{i,n}' \Xi_{j,n} \right] \right\rVert \right\} . $$
	Then for all $z \geq 0$,
	\begin{equation*}
		\P\left( \left\lVert \sum_{i=1}^n \Xi_{i} \right\rVert \geq z \right) 
		\leq 
		(d_1 + d_2) \exp\left( \frac{-z^2/2}{n q \varsigma^2_n + q R_n z/3} \right) .
	\end{equation*}
\end{theorem}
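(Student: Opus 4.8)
The plan is to derive this as the standard instance of the matrix Laplace-transform (matrix Chernoff) argument, so that the proof is essentially a recollection of \cite{troppUserFriendlyTailBounds2012}. First I would remove the rectangularity by passing to the Hermitian dilation
\begin{equation*}
	\mathcal{H}(\Xi_i) := \begin{bmatrix} 0 & \Xi_i \\ \Xi_i' & 0 \end{bmatrix} \in \Complex^{(d_1 + d_2) \times (d_1 + d_2)} ,
\end{equation*}
which is mean-zero and Hermitian, obeys $\lVert \mathcal{H}(\Xi_i) \rVert = \lVert \Xi_i \rVert \leq R_n$, and satisfies $\mathcal{H}(\Xi_i)^2 = \mathrm{diag}(\Xi_i \Xi_i', \Xi_i' \Xi_i)$. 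Since $\sum_i \mathcal{H}(\Xi_i) = \mathcal{H}(\sum_i \Xi_i)$ and $\lVert A \rVert = \max\{\lambda_{\max}(\mathcal{H}(A)), \lambda_{\max}(-\mathcal{H}(A))\}$, the event $\{ \lVert \sum_i \Xi_i \rVert \geq z \}$ reduces to a largest-eigenvalue deviation for $S := \sum_i \mathcal{H}(\Xi_i)$, a sum of independent, bounded, mean-zero Hermitian matrices, so it suffices to control $\P(\lambda_{\max}(S) \geq z)$.

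Second, I would apply the matrix Chernoff bound: for any $\theta > 0$, Markov's inequality in functional-calculus form gives $\P(\lambda_{\max}(S) \geq z) \leq e^{-\theta z}\, \E[\mathrm{tr}\, e^{\theta S}]$. The crux is controlling $\E[\mathrm{tr}\, e^{\theta S}]$, and here I would invoke Lieb's concavity theorem together with Jensen's inequality to obtain the subadditivity of the matrix cumulant generating functions,
\begin{equation*}
	\E\bigl[ \mathrm{tr}\, \exp(\theta S) \bigr] \leq \mathrm{tr}\, \exp\Bigl( \sum_i \log \E\bigl[ e^{\theta \mathcal{H}(\Xi_i)} \bigr] \Bigr) .
\end{equation*}
Each summand is then bounded by the Bernstein operator inequality: for a mean-zero Hermitian $X$ with $\lVert X \rVert \leq R$ and $0 < \theta < 3/R$, one has $\log \E[ e^{\theta X} ] \preceq g(\theta)\, \E[X^2]$ in the Loewner order with $g(\theta) = (\theta^2/2)/(1 - \theta R/3)$, which follows from the scalar bound $e^{x} \leq 1 + x + (x^2/2)/(1 - |x|/3)$ lifted through functional calculus and $\log(1+u) \leq u$.

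Third, I would assemble the pieces. Monotonicity of the trace exponential under the Loewner order, combined with $\mathrm{tr}\, \exp \leq (d_1 + d_2)\,\lambda_{\max}\exp$, yields
\begin{equation*}
	\P\Bigl( \bigl\lVert \sum_i \Xi_i \bigr\rVert \geq z \Bigr) \leq (d_1 + d_2) \exp\bigl( -\theta z + g(\theta)\, \varsigma_n^2 \bigr) ,
\end{equation*}
where the dilation identity makes $\lVert \sum_i \E[\mathcal{H}(\Xi_i)^2] \rVert$ equal to the stated variance statistic $\varsigma_n^2 = \max\{ \lVert \sum_i \E[\Xi_i \Xi_i'] \rVert, \lVert \sum_i \E[\Xi_i' \Xi_i] \rVert \}$, the block-diagonal structure turning the two rectangular variance terms into a single norm equal to their maximum. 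Choosing the Bernstein-optimal $\theta = z / (\varsigma_n^2 + R_n z / 3)$ produces the exponent. The extra $n$ and $q$ prefactors in the quoted denominator are not part of Tropp's bare bound; they are an artifact of the triangular-array normalization adopted for the subsequent blocking/coupling argument, where each $\Xi_{i,n}$ aggregates a block of coupled observations, so that $R_n$ and $\varsigma_n^2$ inherit the corresponding multiplicative scaling which then propagates into the exponent.

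The main obstacle is the trace-exponential subadditivity: Lieb's theorem that $A \mapsto \mathrm{tr}\, \exp(H + \log A)$ is concave on the positive-definite cone is genuinely deep and not elementary. I would invoke it as a black box exactly as \cite{troppUserFriendlyTailBounds2012} does, rather than re-derive it from operator convexity of $t \mapsto t \log t$. Everything downstream — the per-summand Loewner MGF bound and the scalar optimization over $\theta$ — is then routine, so in practice this ``proof'' amounts to verifying that the hypotheses match those of Tropp's theorem and identifying $\varsigma_n^2$ with the dilated variance statistic.
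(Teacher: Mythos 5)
Your proposal is correct and coincides with what the paper actually does: the theorem is stated without proof, recalled verbatim from \cite{troppUserFriendlyTailBounds2012}, and your dilation--Lieb--matrix-Chernoff argument is precisely Tropp's own proof of it, so reproducing it (with Lieb's concavity theorem as a black box) is exactly the right reconstruction. You are also correct to flag the $n$ and $q$ factors in the displayed denominator as alien to Tropp's bare bound---$q$ is never even defined in the statement; those factors are vestiges of the subsequent blocking argument (Theorem A.3 in the appendix), where blocks of length $q$ satisfy $\lVert W_j^\dagger \rVert \leq q R_n$ and the aggregate variance is of order $n q S_n^2$, and the clean denominator for independent summands should simply read $\varsigma_n^2 + R_n z/3$.
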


The main exponential matrix inequality due to \cite{chenOptimalUniformConvergence2015}, Theorem 4.2 is as follows.

\begin{theorem}\label{theorem:cc15_exp_matrix_ineq}
	Let $\{X_i\}_{i \in \Int}$ where $X_i \in \mathcal{X}$ be a $\beta$-mixing sequence and let $\Xi_{i,n} = \Xi_n(X_i)$ for each $i$ where $\Xi_n : \mathcal{X} \to \Real^{d_1 \times d_2}$ be a sequence of measurable $d_1 \times d_2$ matrix-valued functions. Assume that $\E[\Xi_{i,n}] = 0$ and $\lVert \Xi_{i,n} \rVert \leq R_n$ for each $i$ and define 
	$$ S^2_n := \max\left\{ \E\left[ \lVert \Xi_{i,n} \Xi_{j,n}' \rVert \right], \E\left[ \lVert \Xi_{i,n}' \Xi_{j,n} \rVert \right] \right\} . $$
	Let $1 \leq q \leq n/2$ be an integer and let $I_\bullet = q\lfloor n/q \rfloor, \ldots, n$ when $q \lfloor n/q \rfloor < n$ and $I_\bullet = \emptyset$ otherwise. 
	Then, for all $z \geq 0$,
	\begin{equation*}
		\P\left( \left\lVert \sum_{i=1}^n \Xi_{i,n} \right\rVert \geq 6 z \right) 
		\leq 
		\frac{n}{q} \beta(q) 
		+ \P\left( \left\lVert \sum_{i \in I_\bullet} \Xi_{i,n} \right\rVert \geq z \right) 
		+ 2 (d_1 + d_2) \exp\left( \frac{-z^2/2}{n q S^2_n + q R_n z/3} \right) ,
	\end{equation*}
	where $\lVert \sum_{i \in I_\bullet} \Xi_{i,n} \rVert := 0$ whenever $I_\bullet = \emptyset$.
\end{theorem}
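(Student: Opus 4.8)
The plan is to reduce the dependent sum $\sum_{i=1}^n \Xi_{i,n}$ to a sum of \emph{independent} random matrices, to which the preceding matrix Bernstein inequality for independent matrices \citep{troppUserFriendlyTailBounds2012} applies directly. The two bridges between the dependent and independent worlds are a blocking decomposition and Berbee's coupling lemma, exactly in the spirit of \cite{chenOptimalUniformConvergence2015}. First I would partition $\{1, \ldots, n\}$ into $m = \lfloor n/q\rfloor$ consecutive blocks $B_1, \ldots, B_m$ of length $q$, together with the (possibly empty) remainder set $I_\bullet$, and define the block sums $U_k := \sum_{i \in B_k}\Xi_{i,n}$, so that $\sum_{i=1}^n \Xi_{i,n} = \sum_{k=1}^m U_k + \sum_{i \in I_\bullet}\Xi_{i,n}$. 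Because consecutive blocks are adjacent, I would further separate the block sums by parity, $\sum_{k=1}^m U_k = S_{\mathrm{odd}} + S_{\mathrm{even}}$ with $S_{\mathrm{odd}} := \sum_{k \text{ odd}} U_k$ and $S_{\mathrm{even}} := \sum_{k \text{ even}} U_k$; within each parity class the blocks are separated by a gap of length at least $q$, which is precisely what lets the $\beta$-mixing coefficient enter at lag $q$ rather than at lag $1$.

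Next I would apply Berbee's coupling lemma within each parity class. Since the odd-indexed blocks are mutually separated by gaps $\ge q$, the lemma furnishes, on a suitably enriched probability space, independent matrices $U_k^*$ (for $k$ odd) with $U_k^* \overset{d}{=} U_k$ and coupling-failure probability $\P(U_k \ne U_k^* \text{ for some odd } k) \le (\#\{k : k \text{ odd}\})\,\beta(q)$; the same construction applies to the even blocks. Summing the two parity classes, the overall coupling error is at most $m\,\beta(q) \le \tfrac{n}{q}\,\beta(q)$, which produces exactly the first term of the bound. On the complementary (good) event the coupled sums $S_{\mathrm{odd}}^* := \sum_{k \text{ odd}} U_k^*$ and $S_{\mathrm{even}}^*$ coincide with $S_{\mathrm{odd}}$ and $S_{\mathrm{even}}$.

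It then remains to control $S_{\mathrm{odd}}^*$ and $S_{\mathrm{even}}^*$ with the independent matrix Bernstein inequality. Each coupled block sum is mean zero and satisfies the deterministic bound $\lVert U_k^* \rVert = \lVert U_k \rVert \le q R_n$, since a block aggregates $q$ summands of norm at most $R_n$. For the variance proxy I would expand $\E[U_k U_k'] = \sum_{i,j \in B_k}\E[\Xi_{i,n}\Xi_{j,n}']$ and use Jensen together with the definition of $S_n^2$ to get $\lVert \E[U_k U_k'] \rVert \le q^2 S_n^2$; summing over the $\le m/2$ blocks of one parity and using $mq \le n$ gives a variance proxy of order at most $n q S_n^2$ (and symmetrically for the transposed products). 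Feeding the parameters $q R_n$ and $n q S_n^2$ into the independent matrix Bernstein inequality yields, for each parity, a tail of the form $(d_1 + d_2)\exp\!\big(-(z^2/2)/(n q S_n^2 + q R_n z/3)\big)$, and the two parities combine into the factor $2(d_1+d_2)$. Finally I would assemble the pieces by a union bound: on the event $\{\lVert \sum_{i=1}^n\Xi_{i,n} \rVert \ge 6z\}$ at least one of $\lVert S_{\mathrm{odd}} \rVert$, $\lVert S_{\mathrm{even}} \rVert$, $\lVert \sum_{i \in I_\bullet}\Xi_{i,n} \rVert$ must be large, so splitting the budget $6z$ across these three pieces (and across the coupling event) reproduces the stated inequality, with the remainder term $\P(\lVert \sum_{i\in I_\bullet}\Xi_{i,n} \rVert \ge z)$ left untouched since it cannot be coupled away.

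The main obstacle — and the only genuinely delicate step — is the coupling. One must (i) set up the parity-separated blocking so that the relevant mixing acts at lag exactly $q$ rather than at lag $1$ (adjacent consecutive blocks would force the useless coefficient $\beta(1)$), and (ii) invoke Berbee's lemma in its iterated, sequential form so that the independence of the coupled blocks and the additive control $\sum_k \P(U_k \ne U_k^*) \le \tfrac{n}{q}\beta(q)$ hold simultaneously. The remaining work — the norm and variance bookkeeping for block sums and the constant accounting that turns the three-way split into the factor $6$ — is routine given the preceding independent matrix Bernstein inequality.
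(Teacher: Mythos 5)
Your proposal is correct, but note that this paper never proves the statement at all: it is imported verbatim as Theorem 4.2 of \cite{chenOptimalUniformConvergence2015} and used as a black box, so the argument you reconstructed is Chen and Christensen's original proof rather than anything appearing in the paper. Your reconstruction is faithful to that original: consecutive length-$q$ blocks plus the remainder $I_\bullet$, parity separation so that same-parity blocks are separated by gaps of length $q$, sequential (iterated) Berbee coupling within each parity class with total miscoupling probability at most $\lfloor n/q\rfloor\,\beta(q) \le \tfrac{n}{q}\beta(q)$, Tropp's inequality applied to each coupled parity sum with per-block norm bound $qR_n$ and variance proxy at most $nqS_n^2$ (via Jensen, $\lVert \E[\Xi_{i,n}\Xi_{j,n}']\rVert \le \E[\lVert \Xi_{i,n}\Xi_{j,n}'\rVert] \le S_n^2$, and at most $m/2+1$ blocks per parity), and a union bound in which the budget $6z$ leaves harmless slack. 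The instructive comparison is with the proof the paper \emph{does} give, namely that of its Theorem \ref{theorem:phys_dep_exp_matrix_ineq}, the physical-dependence analogue: the blocking/parity skeleton there is identical to yours, but Berbee's lemma has no counterpart under physical dependence, so the coupled blocks are instead built by conditioning, $W_j^\dagger = \E\big[W_j \mid \epsilon_\ell,\ q(j-2)+1 \le \ell \le qj\big]$, which are independent within a parity class because they depend on disjoint windows of i.i.d.\ innovations; the decoupling error $\lVert \sum_j (W_j - W_j^\dagger)\rVert$ is then controlled by an $r$th-moment Markov bound in terms of $\Delta^\Xi_r(q)$ rather than by a miscoupling probability. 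That substitution is exactly why the paper's version carries the polynomial term $\tfrac{n^{r+1}}{q^{r}(d_2)^{r/2-1}z^r}\Delta^\Xi_r(q)$ where your (Berbee-based) bound has the sharper $\tfrac{n}{q}\beta(q)$: your route buys a better first term, while the conditioning route works under the weaker and more easily verifiable dependence notion that the paper actually needs for its estimation theory.
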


To fully extend Theorem \ref{theorem:cc15_exp_matrix_ineq} to physical dependence, we will proceed in steps. First, we derive a similar matrix inequality by directly assuming that random matrices $\Xi_{i,n}$ have physical dependence coefficient $\Delta^{\Xi}_r(h)$.
In the derivations we will use that
\begin{equation*}
	\frac{1}{(d_2)^{1/2 - 1/r}} \lVert A \rVert_r 
	\leq \lVert A \rVert_2
	\leq (d_1)^{1/2 - 1/r} \lVert A \rVert_r .
\end{equation*}
for $r \geq 2$.

\begin{theorem}\label{theorem:phys_dep_exp_matrix_ineq}
	Let $\{\epsilon_j\}_{j \in \Int}$ be a sequence of i.i.d. variables and let $\{\Xi_{i,n}\}_{i = 1}^n$,
	$$ \Xi_{i,n} = G^{\Xi}_n(\ldots, \epsilon_{i-1}, \epsilon_i) $$ 
	for each $i$, where $\Xi_n : \mathcal{X} \to \Real^{d_1 \times d_2}$, be a sequence of measurable $d_1 \times d_2$ matrix-valued functions.
	Assume that $\E[\Xi_{i,n}] = 0$ and $\lVert \Xi_{i,n} \rVert \leq R_n$ for each $i$ and define 
	$$ S^2_n := \max\left\{ \E\left[ \lVert \Xi_{i,n} \Xi_{j,n}' \rVert \right], \E\left[ \lVert \Xi_{i,n}' \Xi_{j,n} \rVert \right] \right\} . $$
	Additionally assume that $\lVert \Xi_{i,n} \rVert_{L^r} < \infty$ for $r > 2$ and define the matrix physical dependence measure $\Delta^{\Xi}_r(h)$ as
	\begin{equation*}
		\Delta^{\Xi}_r(h) := \max_{ 1 \leq i \leq n } \left\lVert\, \Xi_{i,n} - \Xi^{h*}_{i,n} \,\right\rVert_{L^r} ,
	\end{equation*}
	where $\Xi^{h*}_{i,n} := G^{\Xi}_n(\ldots, \epsilon^*_{i-h-1}, \epsilon^*_{i-h}, \epsilon_{i-h+1}, \ldots, \epsilon_{i-1}, \epsilon_i)$ for independent copy $\{\epsilon^*_j\}_{j \in \Int}$.
	Let $1 \leq q \leq n/2$ be an integer and let $I_\bullet = \{ q\lfloor n/q \rfloor, \ldots, n \}$ when $q \lfloor n/q \rfloor < n$ and $I_\bullet = \emptyset$ otherwise. 
	Then, for all $z \geq 0$,
	\begin{align*}
		\P\left( \left\lVert \sum_{i=1}^n \Xi_{i,n} \right\rVert \geq 6 z \right) 
		& \leq 
		\frac{n^{r+1}}{q^{r} (d_2)^{r/2 - 1} z^r} \, \Delta^\Xi_r(q) 
		+ \P\left( \left\lVert \sum_{i \in I_\bullet} \Xi_{i,n} \right\rVert \geq z \right) + \\
		&\qquad\qquad\qquad 2 (d_1 + d_2) \exp\left( \frac{-z^2/2}{n q S^2_n + q R_n z/3} \right) ,
	\end{align*}
	where $\lVert \sum_{i \in I_\bullet} \Xi_{i,n} \rVert := 0$ whenever $I_\bullet = \emptyset$.
\end{theorem}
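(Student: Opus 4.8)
The plan is to mirror the block-coupling architecture behind the $\beta$-mixing inequality of Theorem~\ref{theorem:cc15_exp_matrix_ineq}, replacing the Berbee coupling (which is intrinsic to $\beta$-mixing) by the truncation coupling that defines the physical dependence measure $\Delta^\Xi_r$. First I would partition $\{1,\dots,n\}$ into $m=\lfloor n/q\rfloor$ consecutive blocks of length $q$, leaving the terminal indices in the remainder set $I_\bullet$, and use the decomposition
\begin{equation*}
	\sum_{i=1}^{n}\Xi_{i,n}=\sum_{i=1}^{n}\Xi^{q*}_{i,n}+\sum_{i=1}^{n}\big(\Xi_{i,n}-\Xi^{q*}_{i,n}\big),
\end{equation*}
where $\Xi^{q*}_{i,n}=G^{\Xi}_n(\dots,\epsilon^{*}_{i-q-1},\epsilon^{*}_{i-q},\epsilon_{i-q+1},\dots,\epsilon_i)$ is exactly the $q$-truncated matrix from the definition of $\Delta^\Xi_r$. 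I would then split the event $\{\lVert\sum_i\Xi_{i,n}\rVert\ge 6z\}$ into a ``coupling-error'' event, an ``independent-part'' event, and the ``remainder'' event $\{\lVert\sum_{i\in I_\bullet}\Xi_{i,n}\rVert\ge z\}$; the constant $6$ arises from this three-way allocation of the threshold.

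For the independent part, the key observation is that $\Xi^{q*}_{i,n}$ is a measurable function of only $\epsilon_{i-q+1},\dots,\epsilon_i$ (the remaining arguments being fresh copies), so that any two such matrices whose time indices differ by at least $q$ depend on disjoint windows of the innovation sequence and are therefore independent. Grouping the truncated matrices into interlaced subsequences of step $q$ makes each subsequence an independent array, to which I would apply the matrix Bernstein inequality of \cite{troppUserFriendlyTailBounds2012} recalled above; the block parameter $q$ then enters the variance proxy and the radius exactly as in the denominator $nqS_n^2+qR_nz/3$, and the union over subsequences contributes the prefactor $2(d_1+d_2)$. For the coupling error I would use Markov's inequality in $L^r$: allocating the threshold across terms and invoking $\lVert\Xi_{i,n}-\Xi^{q*}_{i,n}\rVert_{L^r}\le\Delta^\Xi_r(q)$ together with the operator-norm equivalences recalled above (relating the spectral norm $\lVert\cdot\rVert$ to the $r$-operator norm $\lVert\cdot\rVert_r$) converts the spectral-norm tail into $L^r$ control, producing the dimensional factor $(d_2)^{r/2-1}$ and the scaling $n^{r+1}/q^r$; since $\Delta^\Xi_r(q)\le 1$ in the regime of interest one may replace $(\Delta^\Xi_r(q))^r$ by $\Delta^\Xi_r(q)$, yielding the stated first term. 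The remainder block requires no further work, as it is carried through verbatim as the middle term of the bound.

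The main obstacle I anticipate is twofold. First, unlike Berbee's coupling---which is \emph{exact} off an event of probability at most $\beta(q)$---the physical-dependence truncation is only an $L^r$ approximation, so the decoupling cost cannot be charged to a single probability but must be absorbed through a moment bound; arranging the interlacing so that the surviving truncated block-sums are genuinely independent while the norm bound $R_n$ and the variance proxy $S_n^2$ for the \emph{original} matrices still feed correctly into the Bernstein step is the delicate structural point. Second, the bookkeeping of the norm-equivalence constants between the spectral and $r$-operator norms, carried out consistently through Markov's and Minkowski's inequalities, is what pins down the precise exponents $(d_2)^{r/2-1}$ and $n^{r+1}/q^r$; this is routine but error-prone, and is where most of the care is needed.
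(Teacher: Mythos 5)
Your overall architecture --- block the sum, interlace to manufacture independence, apply the matrix Bernstein inequality of \cite{troppUserFriendlyTailBounds2012} to the independent part, and control the coupling error by an $L^r$/Markov argument --- is exactly the skeleton of the paper's proof, and the three-way allocation behind the constant $6$ is as you describe. The genuine gap is in your coupling device. You couple each individual matrix to its $q$-truncation $\Xi^{q*}_{i,n}$ and assert that this is a measurable function of only $\epsilon_{i-q+1},\dots,\epsilon_i$, ``the remaining arguments being fresh copies.'' It is not: $\Xi^{q*}_{i,n}$ also depends on the starred innovations $\epsilon^*_j$, $j\le i-q$, and if a single common independent copy $\{\epsilon^*_j\}_{j\in\Int}$ is used (as in the definition of $\Delta^\Xi_r$ in the statement), then truncated matrices at well-separated indices still share dependence through that starred sequence. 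Your interlaced subsequences are therefore not independent arrays, and the Bernstein step does not apply to them as written. The paper sidesteps this by coupling at the \emph{block} level with conditional expectations: it sets $W_j=\sum_{i=q(j-1)+1}^{qj}\Xi_{i,n}$ and $W_j^\dagger=\E\big[W_j\mid \epsilon_\ell,\ q(j-2)+1\le\ell\le qj\big]$, so that the even-indexed $W_j^\dagger$ (and, separately, the odd-indexed ones) are genuinely independent by construction, while $\lVert W_j-W_j^\dagger\rVert_{L^r}\le q\,\Delta^\Xi_r(q)$ follows from the physical dependence bound. Your version can be repaired by drawing a fresh starred sequence independently for each block, but that repair is precisely the ``delicate structural point'' you flag without resolving.

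The second soft spot is the coupling-error term. Bounding $\sum_{i=1}^n(\Xi_{i,n}-\Xi^{q*}_{i,n})$ in one shot by Minkowski plus Markov yields a term of order $n^r(\Delta^\Xi_r(q))^r/z^r$, and your passage from $(\Delta^\Xi_r(q))^r$ to $\Delta^\Xi_r(q)$ via ``$\Delta^\Xi_r(q)\le 1$ in the regime of interest'' is an extra assumption, not part of the statement; neither the $n^{r+1}/q^{r}$ scaling nor the linear dependence on $\Delta^\Xi_r(q)$ falls out of that computation. The paper instead takes a union bound over the $\lfloor n/q\rfloor$ blocks, charges each block the threshold $qz/n$, and applies the $r$th moment inequality block by block together with the operator-norm equivalence constant; this is where the factors $n^{r+1}/q^{r}$ and $(d_2)^{r/2-1}$ actually originate. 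As written, the shape of your first term is asserted rather than derived.
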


\begin{proof}
	To control dependence, we can adapt the interlacing block approach outlined by \cite{chenSelfnormalizedCramertypeModerate2016}. To interlace the sum, split it into
	\begin{equation*}
		\sum_{i=1}^n \Xi_{i,n} =  \sum_{j \in K_e} J_k + \sum_{j \in J_o} W_k + \sum_{i \in I_\bullet} \Xi_{i,n} ,
	\end{equation*}
	where $W_j := \sum_{i = q(j-1)+1}^{q j} \Xi_{i,n}$ for $j = 1, \ldots, \lfloor n / q \rfloor$ are the blocks, $I_\bullet := \{q \lfloor n / q \rfloor + 1, \ldots, n\}$ if $q \lfloor n / q \rfloor < n$ and $J_e$ and $J_o$ are the subsets of even and odd numbers of $\{ 1, \ldots, \lfloor n / q \rfloor \}$, respectively. For simplicity define $J = J_e \cup J_o$ as the set of block indices and let
	\begin{equation*}
		W^\dagger_j := \E\big[\, W_j \,\vert\, \epsilon_\ell, \, q(j-2)+1 \leq \ell \leq q j \,\big] .
	\end{equation*}
	%
	%% and note that $\{W^\dagger_k\}_{k \in K_e}$ are independent and also $\{W^\dagger_k\}_{k \in K_o}$ are independent.
	Note that by construction $\{W^\dagger_j\}_{j \in J_e}$ are independent and also $\{W^\dagger_j\}_{j \in J_o}$ are independent.
	Using the triangle inequality we find
	\begin{align*}
		\P\left( \left\lVert \sum_{i=1}^n \Xi_{i,n} \right\rVert \geq 6 z \right) 
		& \leq \P\left( \left\lVert \sum_{j \in J} ( W_j - W_j^\dagger ) \right\rVert + \left\lVert \sum_{j \in J} W_j^\dagger \right\rVert + \left\lVert \sum_{i \in I_\bullet} \Xi_{i,n} \right\rVert \geq 6 z \right) \\
		& \leq \P\left( \left\lVert \sum_{j \in J} ( W_j - W_j^\dagger ) \right\rVert \geq z \right) +  \P\left( \left\lVert \sum_{j \in J_e} W_j^\dagger \right\rVert \geq z \right) \\
		& \qquad + \P\left( \left\lVert \sum_{j \in J_o} W_j^\dagger \right\rVert \geq z \right) + \P\left( \left\lVert \sum_{i \in I_\bullet} \Xi_{i,n} \right\rVert \geq z \right) \\[3pt]
		& = I + II + III + IV .
	\end{align*}
	We keep term $IV$ as is. As in the proof of \cite{chenOptimalUniformConvergence2015}, terms $II$ and $III$ consist of sums of independent matrices, where each $W_j^\dagger$ satisfies $\lVert W_j^\dagger \rVert \leq q R_n$ and
	\begin{equation*}
		\max\left\{ \E\left[ \lVert W_j^\dagger \, W_j^{\dagger\prime} \rVert \right], \E\left[ \lVert W_j^{\dagger\prime} \, W_j^\dagger \rVert \right] \right\} \leq q S^2_n .
	\end{equation*}
	Then, using the exponential matrix inequality of \cite{troppUserFriendlyTailBounds2012},
	\begin{equation*}
		\P\left( \left\lVert \sum_{j \in J_e} W_k^\dagger \right\rVert \geq z \right) \leq (d_1 + d_2) \exp\left( \frac{-z^2/2}{n q S^2_n + q R_n z/3} \right) .
	\end{equation*}
	The same holds for the sum over $J_o$.
	Finally, we use the physical dependence measure $\Delta^\Xi_r$ to bound $I$. Start with the union bound to find
	\begin{align*}
		\P\left( \left\lVert \sum_{j \in J} ( W_j - W_j^\dagger ) \right\rVert \geq z \right) 
		& \leq \P\left( \sum_{j \in J} \left\lVert W_j - W_j^\dagger \right\rVert \geq z \right) \\
		& \leq \frac{n}{q} \, \P\left( \left\lVert W_j - W_j^\dagger \right\rVert \geq \frac{q}{n}\, z \right) ,
	\end{align*}
	where we have used that $\lfloor n / q \rfloor \leq n / q$.
	Since $W_j$ and $W_j^\dagger$ differ only over a $\sigma$-algebra that is $q$ steps in the past, by assumption 
	\begin{equation*}
		\left\lVert W_j - W_j^\dagger \right\rVert_{L^r} \leq q\, \Delta^\Xi_r(q) ,
	\end{equation*}
	which implies, by means of the $r$th moment inequality,
	\begin{equation*}
		\P\left( \left\lVert W_j - W_j^\dagger \right\rVert \geq \frac{q}{n}\, z \right) 
		\leq \P\left( (d_2)^{1/r-1/2} \left\lVert W_j - W_j^\dagger \right\rVert_r \geq \frac{q}{n}\, z \right) 
		\leq  \frac{n^r}{q^{r-1} (d_2)^{r/2 - 1} z^r} \, \Delta^\Xi_r(q) . 
	\end{equation*}
	where $(d_2)^{1/r-1/2}$ is the operator norm equivalence constant such that $\lVert \cdot \rVert \geq (d_2)^{1/r-1/2} \lVert \cdot \rVert_r$ \citep{fengEquivalenceConstantsCertain2003}.
	Therefore, 
	\begin{equation*}
		\P\left( \bigg\lVert \sum_{j \in J} ( W_j - W_j^\dagger ) \bigg\rVert \geq z \right)
		\leq
		\frac{n^{r+1}}{q^{r} (d_2)^{r/2 - 1} z^r} \, \Delta^\Xi_r(q) 
	\end{equation*}
	as claimed.
\end{proof}

Notice that the first term in the bound is weaker than that derived by \cite{chenOptimalUniformConvergence2015}. The $\beta$-mixing assumption and Berbee's Lemma give strong control over the probability $\P( \lVert \sum_{j \in J} ( W_j - W_j^\dagger ) \rVert \geq z )$. In contrast, assuming physical dependence means we have to explicitly handle a moment condition. One might think of sharpening Theorem \ref{theorem:phys_dep_exp_matrix_ineq} by sidestepping the $r$th moment inequality (cf. avoiding Chebyshev's inequality in concentration results), but we do not explore this approach here.

The second step is to map the physical dependence of a generic vector time series $\{X_i\}_{i \in \Int}$ to matrix functions.

\begin{proposition}\label{proposition:phys_dep_series_to_matrix}
	Let $\{X_i\}_{i \in \Int}$ where $X_i = G(\ldots, \epsilon_{i-1}, \epsilon_i) \in \mathcal{X}$ for $\{\epsilon_j\}_{j \in \Int}$ i.i.d. be a sequence with finite $r$th moment, where $r > 0$, and functional physical dependence coefficients
	\begin{equation*}
		\Delta_r(h) = \sup_i \left\lVert\, X_{i+h} - G^{(h)}(X^*_i, \epsilon_{i+1:i+h}) \,\right\rVert_{L^r}
	\end{equation*}
	for $h \geq 1$. Let $\Xi_{i,n} = \Xi_n(X_i)$ for each $i$ where $\Xi_n : \mathcal{X} \to \Real^{d_1 \times d_2}$ be a sequence of measurable $d_1 \times d_2$ matrix-valued functions such that $\Xi_n = (v_1, \ldots,  v_{d_2})$ for $v_\ell \in \Real^{d_1}$. 
	If $\lVert \Xi_{i,n} \rVert_{L^r} < \infty$ and 
	$$ C_{\Xi,\ell} := \sup_{x \in \mathcal{X}} \lVert \nabla v_\ell(x) \rVert \leq C_\Xi < \infty , $$
	then matrices $\Xi_{i,n}$ have physical dependence coefficients
	\begin{equation*}
		\Delta^{\Xi}_r(h) = \sup_{i} \left\lVert\, \Xi_{i,n} - \Xi^{h*}_{i,n} \,\right\rVert_{L^r} 
		\leq
		\sqrt{d_1} \left( \frac{d_2}{d_1} \right)^{1/r} C_\Xi \, \Delta_r(h) ,
	\end{equation*}
	where $\Xi^{h*}_{i,n} = \Xi_n(G^{(h)}(X'_i, \epsilon_{i+1:i+h}))$.
\end{proposition}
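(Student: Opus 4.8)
The plan is to reduce the matrix statement to the scalar functional dependence $\Delta_r(h)$ of $\{X_i\}$ in three moves: a columnwise Lipschitz bound on $\Xi_n$, a matrix-norm comparison that upgrades per-column Euclidean control to control of the spectral norm, and finally passage to $L^r$ so that the perturbation of the argument collapses exactly into $\Delta_r(h)$. The key structural fact exploited throughout is that $\Xi_{i,n}$ and $\Xi^{h*}_{i,n}$ are the \emph{same} measurable map $\Xi_n$ evaluated at two coupled states of $\{X_i\}$, so everything reduces to a deterministic modulus-of-continuity bound for $\Xi_n$ followed by an $L^r$ estimate on the state difference.

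First I would establish the deterministic Lipschitz bound. Writing $\Xi_n = (v_1,\ldots,v_{d_2})$ columnwise with each $v_\ell:\mathcal{X}\to\Real^{d_1}$ and applying the integral form of the mean value theorem together with the gradient bound $\sup_{x\in\mathcal{X}}\lVert\nabla v_\ell(x)\rVert \le C_\Xi$, one gets $\lVert v_\ell(x)-v_\ell(x')\rVert \le C_\Xi\lVert x-x'\rVert$ for every $\ell$ and all $x,x'\in\mathcal{X}$ (here convexity of $\mathcal{X}$, guaranteed by Assumption~\ref{assumption:compactness}, lets one integrate along the segment). Hence every column of $M := \Xi_n(x)-\Xi_n(x')$ has Euclidean norm at most $C_\Xi\lVert x-x'\rVert$.

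Second I would convert this column control into a bound on the spectral norm $\lVert M\rVert$. A crude route (spectral $\le$ Frobenius, then collecting $d_2$ columns) already yields a dimensional factor, but to land on the sharper, $r$-dependent constant I would instead route through the $r$-operator norm and invoke the operator-norm equivalence of the cited Feng theorem in the case $p=2$, $q=r$, namely $\lVert M\rVert \le (d_1)^{1/2-1/r}\lVert M\rVert_r$. Bounding $\lVert M\rVert_r$ by the $d_2$ column norms through the appropriate $\ell^r$-over-columns comparison is what is meant to generate the factor $(d_2)^{1/r}$, so that the target pointwise bound reads $\lVert M\rVert \le \sqrt{d_1}\,(d_2/d_1)^{1/r}\,C_\Xi\lVert x-x'\rVert$. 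I expect \emph{this} dimensional bookkeeping to be the main obstacle: one must arrange the column-collection step and the equivalence constants so that the exponents come out as precisely $\tfrac12-\tfrac1r$ on $d_1$ and $\tfrac1r$ on $d_2$, rather than the looser $\sqrt{d_2}$ that a bare Frobenius argument delivers, and check that the interplay between the $L^r$-exponent and the column count does not introduce extra powers of $d_2$.

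Third, with the pointwise bound in hand the conclusion is routine. Evaluating $M$ at $x=X_{i+h}$ and $x'=G^{(h)}(X'_i,\epsilon_{i+1:i+h})$ — so that $\Xi_{i,n}-\Xi^{h*}_{i,n}=\Xi_n(X_{i+h})-\Xi_n(G^{(h)}(X'_i,\epsilon_{i+1:i+h}))$ up to the index shift absorbed by stationarity — the argument difference is exactly $X_{i+h}-G^{(h)}(X'_i,\epsilon_{i+1:i+h})$. Taking $L^r$ norms of the pointwise bound and using the finite-moment hypothesis $\lVert\Xi_{i,n}\rVert_{L^r}<\infty$ to justify the manipulation gives $\lVert\Xi_{i,n}-\Xi^{h*}_{i,n}\rVert_{L^r}\le \sqrt{d_1}\,(d_2/d_1)^{1/r}\,C_\Xi\,\lVert X_{i+h}-G^{(h)}(X'_i,\epsilon_{i+1:i+h})\rVert_{L^r}$, and taking the supremum over $i$ turns the right-hand factor into $\Delta_r(h)$ by the definition of the functional physical dependence measure, yielding the claim.
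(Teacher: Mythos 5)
Your outline follows the paper's proof almost step for step (a columnwise mean-value bound, norm-equivalence bookkeeping to assemble the dimension factors, then passage to $L^r$ and a supremum over $i$), but the step you yourself flag as the main obstacle is a genuine gap, and it cannot be closed in the form you propose. Collecting the $d_2$ columns inside the $r$-operator norm necessarily produces the \emph{conjugate} exponent: by H\"older, $\lVert Mx\rVert_r \le \sum_{\ell}\lvert x_\ell\rvert\,\lVert m_\ell\rVert_r \le \lVert x\rVert_r\, d_2^{1-1/r}\max_\ell \lVert m_\ell\rVert_r$, so this route yields $d_2^{1-1/r}$, not $d_2^{1/r}$, when $r>2$. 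In fact the factor $\sqrt{d_1}(d_2/d_1)^{1/r}$ in the displayed statement is not attainable at all: take $d_1=1$ and let $M$ be the $1\times d_2$ row with all entries equal to $c$, so every column has norm $c$ while $\lVert M\rVert_r = c\,d_2^{1-1/r} > c\,d_2^{1/r}$ for $r>2$ and $d_2>1$ (and $\lVert M\rVert_2=c\sqrt{d_2}$ is larger still). No rearrangement of the operator-norm equivalence constants will recover the smaller exponent.

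What the paper's own proof actually establishes --- and what you should target --- is the bound with constant $(d_2)^{1-1/r}(d_1)^{1/2-1/r}C_\Xi$: it passes from the $r$-operator norm to the Frobenius norm via $\lVert M\rVert_r \le (d_2)^{1/2-1/r}\lVert M\rVert_F$, bounds $\lVert M\rVert_F \le \sqrt{d_2}\,\max_\ell\lVert v_\ell(x)-v_\ell(x')\rVert \le \sqrt{d_2}\,C_\Xi\lVert x-x'\rVert$ by the mean value theorem (your Step 1), and finally converts the vector $2$-norm of the argument difference to its $r$-norm at the cost of $(d_1)^{1/2-1/r}$. The constant displayed in the proposition appears to be a typo rather than the operative one: substituting $\Delta^\Xi_r(q)\le (d_2)^{1-1/r}(d_1)^{1/2-1/r}C_\Xi\,\Delta_r(q)$ into the first term of Theorem~\ref{theorem:phys_dep_exp_matrix_ineq} reproduces exactly the factor $(d_2)^{2-(r/2+1/r)}(d_1)^{1/2-1/r}$ appearing in Corollary~\ref{corollary:phys_dep_exp_matrix_ineq_from_series}, which is what the rest of the dependence argument consumes. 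So keep your first and third moves, but replace the spectral-norm detour in your second move by the $r$-operator-to-Frobenius route and accept the exponent $1-1/r$ on $d_2$; chasing the stated $(d_2)^{1/r}$ will lead nowhere.
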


\begin{proof}
	To derive the bound, we use $\Xi_n(X_i)$ and $\Xi_n(X^{h*}_i)$ in place of $\Xi_{i,n} $ and $ \Xi^{h*}_{i,n}$, respectively, where $X^{h*}_i = G^{(h)}(X^*_i, \epsilon_{i+1:i+h})$.
	First we move from studying the operator $r$-norm (recall, $r > 2$) to the Frobenius norm,
	\begin{equation*}
		\left\lVert \Xi_n(X_i) - \Xi_n(X^{h*}_i) \right\rVert_r 
		\leq (d_2)^{1/2 - 1/r} \left\lVert \Xi_n(X_i) - \Xi_n(X^{h*}_i) \right\rVert_F .
	\end{equation*}
	where as intermediate step we use the 2-norm.
	Let $\Xi_n = (v_1, \ldots,  v_{d_2})$ for $v_\ell \in \Real^{d_1}$ and $\ell \in 1, \ldots, d_2$, so that
	\begin{equation*}
		\left\lVert \Xi_n \right\rVert_F = \sqrt{ \sum_{\ell = 1}^{d_2} \lVert v_\ell \rVert^2 }
		%% \sum_{m = 1}^{d_1}   (v_{\ell m})^2 
	\end{equation*}
	where $v_\ell = (v_{\ell 1}, \ldots, v_{\ell d_1})'$. Since $v_\ell : \mathcal{X} \to \Real^{d_1}$ are vector functions, the mean value theorem gives that
	\begin{equation*}
		\left\lVert \Xi_n(X_i) - \Xi_n(X^{h*}_i) \right\rVert_F 
		\leq \sqrt{ \sum_{\ell = 1}^{d_2}  C_{\Xi,\ell}^2 \, \lVert X_i - X^{h*}_i \rVert^2 } 
		\leq \sqrt{d_2} \, C_\Xi \, \lVert X_i - X^{h*}_i \rVert .
	\end{equation*}
	Combining results and moving from the vector $r$-norm to the 2-norm yields
	\begin{equation*}
		\left\lVert \Xi_n(X_i) - \Xi_n(X^{h*}_i) \right\rVert_r 
		\leq 
		(d_2)^{1 - 1/r} (d_1)^{1/2 - 1/r} \, C_\Xi \, \lVert X_i - X^{h*}_i \rVert_r .
	\end{equation*}
	The claim involving the $L^r$ norm follows immediately.
\end{proof}

The following Corollary, which specifically handles matrix functions defined as outer products of vector functions, is immediate and covers the setups of series estimation.

\begin{corollary}\label{corollary:phys_dep_series_to_matrix_square_lowrank}
	Under the conditions of Proposition \ref{proposition:phys_dep_series_to_matrix}, if 
	\begin{equation*}
		\Xi_n(X_i) = \xi_n(X_i) \xi_n(X_i)' + Q_n
	\end{equation*}
	where $\xi_n : \mathcal{X} \to \Real^{d}$ is a vector function and $Q_n \in \Real^{d \times d}$ is nonrandom matrix, then 
	\begin{equation*}
		\Delta^{\Xi}_r(h) \leq d^{\,3/2 - 2/r} \, C_\xi \, \Delta_r(h) ,
	\end{equation*}
	where $ C_{\xi} := \sup_{x \in \mathcal{X}} \lVert \nabla \xi_n(x) \rVert < \infty$.
\end{corollary}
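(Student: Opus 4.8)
The plan is to apply Proposition~\ref{proposition:phys_dep_series_to_matrix} directly to the matrix-valued map $\Phi_n(x) := \xi_n(x)\xi_n(x)'$, after first discarding the deterministic shift $Q_n$. Because $Q_n$ is nonrandom it cancels in the coupling difference defining the dependence measure, i.e.
\[
\Xi_n(X_i) - \Xi^{h*}_{i,n} = \Phi_n(X_i) - \Phi_n(X^{h*}_i),
\]
so that $\Delta^{\Xi}_r(h)$ for $\Xi_n$ coincides exactly with the analogous quantity for $\Phi_n$. It therefore suffices to verify that $\Phi_n$ meets the hypotheses of Proposition~\ref{proposition:phys_dep_series_to_matrix} and to track how the dimensional prefactor specializes when $d_1 = d_2 = d$.

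Next I would identify the per-column gradient constant that the proposition requires. Writing $\Phi_n(x) = (v_1(x), \dots, v_d(x))$ with $\ell$th column $v_\ell(x) = \xi_{n,\ell}(x)\,\xi_n(x)$, the product rule gives $\nabla v_\ell = \xi_n\,(\nabla\xi_{n,\ell}) + \xi_{n,\ell}\,\nabla\xi_n$, so on the compact domain $\mathcal{X}$ the column constant $C_{\Xi,\ell} = \sup_x\lVert\nabla v_\ell(x)\rVert$ is controlled jointly by $C_\xi = \sup_x\lVert\nabla\xi_n(x)\rVert$ and by the uniform size of $\xi_n$ itself. Setting $C_\Xi := \max_\ell C_{\Xi,\ell}$ and substituting $d_1 = d_2 = d$ into the bound actually derived in the proof of Proposition~\ref{proposition:phys_dep_series_to_matrix} — namely the prefactor $(d_2)^{1-1/r}(d_1)^{1/2-1/r}$ — collapses it to $d^{1-1/r}\,d^{1/2-1/r} = d^{3/2-2/r}$, which produces exactly $\Delta^{\Xi}_r(h) \le d^{3/2-2/r}\,C_\xi\,\Delta_r(h)$ once $C_\Xi$ is expressed through $C_\xi$.

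The one genuinely nontrivial step is this last identification. Since the derivative of an outer product obeys a product rule, $\nabla v_\ell$ inevitably picks up the \emph{value} of $\xi_n$ and not merely its gradient, so the honest Frobenius Lipschitz constant of $\Phi_n$ is of order $\bigl(\sup_x\lVert\xi_n(x)\rVert\bigr)\,C_\xi$. The point to argue carefully is that compactness of $\mathcal{X}$ (Assumption~\ref{assumption:compactness}) together with finiteness of $C_\xi$ forces $\xi_n$ to be uniformly bounded on $\mathcal{X}$, so that this extra factor is finite and can be absorbed into the constant reported as $C_\xi$. With that absorption in place, everything else is simply the Frobenius-to-operator and vector-to-operator norm conversion already executed in Proposition~\ref{proposition:phys_dep_series_to_matrix}, and the corollary follows immediately.
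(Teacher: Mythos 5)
Your proposal is correct and follows the same route as the paper: cancel the nonrandom $Q_n$ in the coupling difference, apply Proposition~\ref{proposition:phys_dep_series_to_matrix} to the outer-product map, and note that the prefactor $(d_2)^{1-1/r}(d_1)^{1/2-1/r}$ from that proposition's proof collapses to $d^{3/2-2/r}$ when $d_1=d_2=d$. The one place you go beyond the paper --- and rightly so --- is the product-rule computation for the columns $v_\ell = \xi_{n,\ell}\,\xi_n$: the paper's two-line proof silently identifies the proposition's constant $C_\Xi=\max_\ell\sup_x\lVert\nabla v_\ell(x)\rVert$ with $C_\xi=\sup_x\lVert\nabla\xi_n(x)\rVert$, whereas the honest bound is $C_\Xi\leq 2\bigl(\sup_x\lVert\xi_n(x)\rVert\bigr)C_\xi$, exactly as you observe. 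Your fix (compactness plus finite $C_\xi$ implies $\sup_x\lVert\xi_n(x)\rVert<\infty$, then absorb it) is sound as far as proving a bound of the stated \emph{form}, but be aware of two caveats: first, the corollary is written with the explicit constant $d^{3/2-2/r}C_\xi$, so strictly speaking the absorbed factor belongs in the displayed bound; second, in the intended application $\xi_n=\widetilde{b}^K_\pi$, where $\sup_w\lVert\widetilde{b}^K_\pi(w)\rVert=\zeta_{K,n}\lambda_{K,n}\lesssim\sqrt{K}$ grows with the sieve dimension, so this is not a harmless fixed constant and should really be tracked alongside the gradient bound of Assumption~\ref{assumption:sieve_regularity}(i) when verifying the rate conditions downstream. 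That is a criticism of the corollary as stated rather than of your argument; your proof is, if anything, more careful than the paper's.
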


\begin{proof}
	Matrix $Q_n$ cancels out since it is nonrandom and appears in both $\Xi_n(X_i)$ and $\Xi_n(X^{h*}_i)$. Since $\Xi_n(X_i)$ is square, the ratio of row to column dimensions simplifies.
\end{proof}

The following Corollaries to Theorem \ref{theorem:phys_dep_exp_matrix_ineq} can now be derived in a straightforward manner.

\begin{corollary}\label{corollary:phys_dep_exp_matrix_ineq_from_series}
	Under the conditions of Theorem \ref{theorem:phys_dep_exp_matrix_ineq} and Proposition \ref{proposition:phys_dep_series_to_matrix}, for all $z \geq 0$
	\begin{align*}
		\P\left( \left\lVert \sum_{i=1}^n \Xi_{i,n} \right\rVert \geq 6 z \right) 
		& \leq 
		\frac{n^{r+1}}{q^{r} z^r} (d_2)^{2 - (r/2 + 1/r)} (d_1)^{1/2 - 1/r} C_\Xi \, \Delta_r(q)
		+ \P\left( \left\lVert \sum_{i \in I_\bullet} \Xi_{i,n} \right\rVert \geq z \right) \\
		& \qquad\qquad\qquad\qquad + 2 (d_1 + d_2) \exp\left( \frac{-z^2/2}{n q S^2_n + q R_n z/3} \right) .
	\end{align*}
	where $\Delta_r(\cdot)$ if the functional physical dependence coefficient of $X_i$.
\end{corollary}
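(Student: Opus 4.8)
The plan is to derive the Corollary as a direct composition of the two immediately preceding results: Theorem~\ref{theorem:phys_dep_exp_matrix_ineq} already delivers a tail bound for $\lVert \sum_{i=1}^n \Xi_{i,n} \rVert$ in which the only model-dependent ingredient is the matrix physical dependence coefficient $\Delta^\Xi_r(q)$, while Proposition~\ref{proposition:phys_dep_series_to_matrix} bounds $\Delta^\Xi_r$ in terms of the functional coefficient $\Delta_r$ of the underlying series $X_i$. First I would check that the hypotheses line up: under the stated conditions the matrices take the form $\Xi_{i,n} = \Xi_n(X_i)$ with $X_i = G(\ldots,\epsilon_{i-1},\epsilon_i)$ causal in the i.i.d.\ innovations, the columns $v_\ell$ of $\Xi_n$ have uniformly bounded gradients so that $C_\Xi < \infty$, and $\lVert \Xi_{i,n}\rVert_{L^r} < \infty$. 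These are precisely the requirements of both Theorem~\ref{theorem:phys_dep_exp_matrix_ineq} and Proposition~\ref{proposition:phys_dep_series_to_matrix}, so both may be invoked simultaneously.

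Next I would substitute. Inserting the bound derived in the proof of Proposition~\ref{proposition:phys_dep_series_to_matrix}, namely
\[
\Delta^\Xi_r(q) \;\leq\; (d_2)^{1-1/r}\,(d_1)^{1/2-1/r}\, C_\Xi\, \Delta_r(q),
\]
into the first summand $\tfrac{n^{r+1}}{q^{r} (d_2)^{r/2-1} z^r}\,\Delta^\Xi_r(q)$ of the inequality in Theorem~\ref{theorem:phys_dep_exp_matrix_ineq} and collecting powers of $d_2$ produces the exponent $(1-1/r)-(r/2-1) = 2 - (r/2 + 1/r)$, which is exactly the $d_2$-exponent appearing in the claimed bound; the factor $(d_1)^{1/2-1/r}$ and the constant $C_\Xi$ carry over unchanged. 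The remaining two terms of Theorem~\ref{theorem:phys_dep_exp_matrix_ineq} — the remainder block $\P(\lVert \sum_{i\in I_\bullet}\Xi_{i,n}\rVert \geq z)$ and the Bernstein exponential $2(d_1+d_2)\exp(\cdot)$ — contain no instance of $\Delta^\Xi_r$ and are transcribed verbatim, yielding the stated inequality for every $z \geq 0$.

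The one step that merits a sentence of care rather than mechanical substitution is reconciling the two definitions of $\Delta^\Xi_r$. In Theorem~\ref{theorem:phys_dep_exp_matrix_ineq} the coupled copy $\Xi^{h*}_{i,n}$ is obtained by replacing the innovations lying $q$ steps in the past with an independent copy, whereas in Proposition~\ref{proposition:phys_dep_series_to_matrix} it is obtained by perturbing the state and evolving it forward through the system iterate $G^{(h)}$. Because $X_i$ is the causal output of $G$ driven by the full innovation history, replacing the tail of the innovation sequence is equivalent, after a time shift, to perturbing the state and running the recursion forward; strict stationarity of $\{X_i\}$ then forces the two coefficients to coincide. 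Once this identification is recorded, there is no genuine obstacle left — the remainder of the argument is pure bookkeeping of the dimensional constants — and the proof is complete.
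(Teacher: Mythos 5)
Your proposal is correct and is exactly the route the paper intends: the paper gives no explicit proof, asserting the corollary follows ``in a straightforward manner'' by plugging the bound of Proposition~\ref{proposition:phys_dep_series_to_matrix} into the first term of Theorem~\ref{theorem:phys_dep_exp_matrix_ineq}, and your exponent bookkeeping $(1-1/r)-(r/2-1)=2-(r/2+1/r)$ reproduces the stated $d_2$-power. You were also right to take the displayed bound $(d_2)^{1-1/r}(d_1)^{1/2-1/r}C_\Xi\,\Delta_r$ from the \emph{proof} of the Proposition rather than its stated constant $\sqrt{d_1}\,(d_2/d_1)^{1/r}C_\Xi$ (the two disagree in the paper), since only the former yields the corollary as written; your remark reconciling the innovation-replacement and state-perturbation couplings via causality and stationarity is the one non-mechanical step and is handled correctly.
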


\begin{corollary}\label{corollary:Op_order_norm_sum_xi}
	Under the conditions of Theorem \ref{theorem:phys_dep_exp_matrix_ineq} and Proposition \ref{proposition:phys_dep_series_to_matrix}, if $q = q(n)$ is chosen such that 
	\begin{equation*}
		\frac{n^{r+1}}{q^{r}} (d_2)^{2 - (r/2 + 1/r)} (d_1)^{1/2 - 1/r} C_\Xi \, \Delta_r(q) = o(1)
	\end{equation*}
	and $R_n \sqrt{q  \log(d_1 + d_2)} = o(S_n \sqrt{n})$, then
	\begin{equation*}
		\left\lVert \sum_{i=1}^n \Xi_{i,n} \right\rVert = O_P \left( S_n \sqrt{n q \log(d_1 + d_2)} \right) .
	\end{equation*}
\end{corollary}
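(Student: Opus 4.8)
The plan is to feed the tail bound of Corollary~\ref{corollary:phys_dep_exp_matrix_ineq_from_series} with a carefully calibrated threshold and show that each of its three summands can be driven below any prescribed level. Fix $\varepsilon > 0$, let $c > 0$ be a constant to be chosen large, and set
\[
	z_n := c\, S_n \sqrt{n q \log(d_1 + d_2)} ,
\]
so that the target event $\{\lVert \sum_{i=1}^n \Xi_{i,n} \rVert \geq 6 z_n\}$ is exactly the one appearing in the conclusion (the harmless factor $6$ only rescales the final constant). I would then bound the three terms on the right-hand side of Corollary~\ref{corollary:phys_dep_exp_matrix_ineq_from_series} separately and conclude via the definition of $O_P$.

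For the physical-dependence term, the first hypothesis states precisely that $\frac{n^{r+1}}{q^{r}} (d_2)^{2 - (r/2 + 1/r)} (d_1)^{1/2 - 1/r} C_\Xi \, \Delta_r(q) = o(1)$; since $z_n$ is bounded away from zero (indeed it diverges under the maintained rates), the extra factor $z_n^{-r}$ is $O(1)$ and the whole term is $o(1)$. For the boundary block $I_\bullet$, I use $\lvert I_\bullet \rvert < q$ together with $\lVert \Xi_{i,n} \rVert \leq R_n$ to obtain the deterministic bound $\lVert \sum_{i \in I_\bullet} \Xi_{i,n} \rVert \leq q R_n$, and observe that the second hypothesis $R_n \sqrt{q \log(d_1+d_2)} = o(S_n \sqrt{n})$ forces $q R_n = o(z_n)$; hence for $n$ large the event in this term is empty and its probability vanishes identically.

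The crux is the Bernstein/exponential term. Substituting $z_n$, its exponent equals
\[
	\frac{z_n^2/2}{n q S_n^2 + q R_n z_n/3}
	= \frac{c^2 S_n^2 n q \log(d_1+d_2)/2}{n q S_n^2 + q R_n z_n/3} ,
\]
and here I would again invoke the second hypothesis to show the cross term is negligible: $q R_n z_n = c\, q R_n S_n \sqrt{n q \log(d_1+d_2)} = o(n q S_n^2)$ precisely because $R_n \sqrt{q \log(d_1+d_2)} = o(S_n \sqrt{n})$. The denominator is therefore $n q S_n^2 (1 + o(1))$, the exponent simplifies to $\tfrac{c^2}{2} \log(d_1+d_2)(1+o(1))$, and the whole term is bounded by $2 (d_1+d_2)^{\,1 - (c^2/2)(1+o(1))}$. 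Since $d_1 + d_2 \geq 2$, choosing $c$ with $c^2/2 > 1$ makes this at most $2^{\,2 - c^2/2 + o(1)}$, which can be pushed below $\varepsilon$ by enlarging $c$.

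Collecting the three bounds yields $\limsup_n \P(\lVert \sum_{i=1}^n \Xi_{i,n} \rVert \geq 6 z_n) \leq \varepsilon$ for $c$ sufficiently large, which is exactly the asserted $O_P$ statement. I expect the delicate point to be precisely this balancing step: the two rate conditions are tuned so that the variance contribution $n q S_n^2$ dominates the denominator of the exponent while the numerator $z_n^2$ still supplies enough $\log(d_1+d_2)$ to overcome the $(d_1 + d_2)$ prefactor, and checking that both happen simultaneously — rather than any single algebraic manipulation — is where the real content lies.
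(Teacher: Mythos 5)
Your overall architecture is exactly what the paper intends: its own ``proof'' is a one-line deferral to Corollary 4.2 of Chen and Christensen (2015), and you reproduce that argument faithfully --- set $z_n = c\,S_n\sqrt{nq\log(d_1+d_2)}$, feed it into the tail bound of Corollary~\ref{corollary:phys_dep_exp_matrix_ineq_from_series}, kill the boundary block deterministically via $qR_n = o(z_n)$ (which does follow from $R_n\sqrt{q\log(d_1+d_2)} = o(S_n\sqrt{n})$), and drive the Bernstein factor below any $\varepsilon$ by taking $c$ large once the cross term $qR_nz_n/3$ is shown to be dominated by $nqS_n^2$. Those two steps are correct and are precisely how the cited source proceeds.

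The gap is in your treatment of the physical-dependence term. You dispose of the factor $z_n^{-r}$ by asserting that $z_n$ is ``bounded away from zero (indeed it diverges under the maintained rates).'' Nothing in the corollary's hypotheses implies this, and in the only place the corollary is invoked --- the proof of Lemma~\ref{lemma:gmc_sample_gram_matrix_op1}, where $\Xi_{i,n} = n^{-1}\big(\widetilde{b}^K_\pi(X_i)\widetilde{b}^K_\pi(X_i)' - I_K\big)$ --- one has $z_n \asymp \zeta_{K,n}\lambda_{K,n}\sqrt{q\log K/n} = o(1)$, so $z_n^{-r} \to \infty$. The stated hypothesis $\tfrac{n^{r+1}}{q^{r}} (d_2)^{2 - (r/2 + 1/r)} (d_1)^{1/2 - 1/r} C_\Xi \, \Delta_r(q) = o(1)$ therefore does not by itself annihilate the first term of the tail bound; what is actually needed is that this quantity be $o(z_n^r)$. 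To be fair, the paper's own statement and its subsequent verification also drop the $z^{-r}$, so you have inherited a looseness of the source (whose first term, coming from Berbee's lemma, is $z$-free, unlike the moment-inequality term here); but your written justification for ignoring the factor is false as stated. The honest repair in the GMC setting is to note that $\Delta_r(q)$ decays exponentially in $q$ while $z_n^{r}$ is only polynomially small in $n$, so enlarging the constant in $q(n) \asymp \log(n)$ restores $o(z_n^r)$ without affecting the final rate --- that argument, not the divergence of $z_n$, is what must be supplied.
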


This result is almost identical to Corollary 4.2 in \cite{chenOptimalUniformConvergence2015}, with the only adaptation of using Theorem \ref{theorem:phys_dep_exp_matrix_ineq} as a starting point. 
Condition $R_n \sqrt{q  \log(d_1 + d_2)} = o(S_n \sqrt{n})$ is simple to verify in our setting by assuming, e.g., $q = o(n / \log(n))$, since $\log(d_1 + d_2) \lesssim \log(K)$ and $K = o(n)$.

\begin{proof}[Proof of Corollary~\ref{corollary:Op_order_norm_sum_xi}]
    Start by setting $z = C S_n \sqrt{n q  \log(d_1 + d_2)}$ for sufficiently large $C > 0$. 
    The first term in the bound of 
    Theorem~\ref{theorem:phys_dep_exp_matrix_ineq} is $o(1)$ by assumption. The second term is also $o(1)$, since
    \begin{align*}
        \P\left( \left\lVert \sum_{i \in I_\bullet} \Xi_{i,n} \right\rVert \geq z \right) 
        & \leq 
        \P\left( \sum_{i \in I_\bullet} \left\lVert  \Xi_{i,n} \right\rVert \geq z \right) \\
        & = 
        \P\left( \sum_{i \in q \lfloor n/q \rfloor}^n \left\lVert  \Xi_{i,n} \right\rVert \geq C S_n \sqrt{n q  \log(d_1 + d_2)} \right) \\
        & \leq 
        \P\left( R_n \sqrt{q} \geq C S_n \sqrt{n  \log(d_1 + d_2)} \right) ,
    \end{align*}
    where we have used that $\lVert \Xi_{i,n} \rVert \leq R_n$ for all $i$. As $1 \lesssim \sqrt{\log(d_1 + d_2)}$, we have
    \begin{equation*}
        R_n \sqrt{q} 
        = o ( S_n \sqrt{n} )
        = o \big( S_n \sqrt{n \log(d_1 + d_2)} \big) ,
    \end{equation*}
    meaning that $\P\left( \left\lVert \sum_{i \in I_\bullet} \Xi_{i,n} \right\rVert \geq z \right)$ is asymptotically negligible.
    The final order of probability follows directly from the last term obtained in Theorem~\ref{theorem:phys_dep_exp_matrix_ineq} under the assumption that $R_n \sqrt{q  \log(d_1 + d_2)} = o(S_n \sqrt{n})$.
\end{proof}

\paragraph*{}
Note that when $d_1 = d_2 \equiv K$, which is the case of interest in the series regression setup, the first condition in Corollary \ref{corollary:Op_order_norm_sum_xi} reduces to
\begin{equation*}
	K^{5/2 - (r/2 + 2/r)} \, C_\Xi \, \Delta_r(q) = o(1) ,
\end{equation*}
which also agrees with the rate of Corollary \ref{corollary:phys_dep_series_to_matrix_square_lowrank}.
Assumption \ref{assumption:sieve_regularity}(i) and a compact domain further allow to explicitly bound factor $C_\Xi$ by
\begin{equation*}
	C_\Xi \lesssim K^{\omega_2} ,
\end{equation*}
so that the required rate becomes
\begin{equation*}
	K^{\rho} \, \Delta_r(q) = o(1), \quad \text{where} \quad \rho := \frac{3}{2} - \frac{r}{2} + \omega_2 .
\end{equation*}

\begin{proof}[Proof of Lemma \ref{lemma:gmc_sample_gram_matrix_op1}]
	The proof follows from Corollary \ref{corollary:Op_order_norm_sum_xi} by the same steps of the proof of Lemma 2.2 in \cite{chenOptimalUniformConvergence2015}. Simply take
	\begin{equation*}
	    \Xi_{i,n} = n^{-1}\big( \widetilde{b}^K_{\pi}(X_i) \widetilde{b}^K_{\pi}(X_i)' - I_K \big)
	\end{equation*}
	and note that $R_n \leq n^{-1}(1 + \zeta_{K,n}^2 \lambda_{K,n}^2)$ and $S_n \leq n^{-2}(1 + \zeta_{K,n}^2 \lambda_{K,n}^2)$.
\end{proof}

For Lemma \ref{lemma:gmc_sample_gram_matrix_op1} to hold under GMC assumptions a valid choice for $q(n)$ is
$$ q(n) = \gamma^{-1} \log(K^\rho n^{r+1}) $$
where $\gamma$ as in Proposition \ref{prop:gmc_conditions_on_map}. This is due to
\begin{align*}
	\left( \frac{n}{q} \right)^{r+1} q K^\rho \Delta_r(q)
	& \lesssim \frac{n^{r+1}}{q^r} K^\rho \exp(- \gamma q) \\
	& \lesssim \frac{n^{r+1} K^\rho }{\log(K^\rho n^{r+1})^r} (K^\rho n^{r+1})^{-1}  \\
	& = \frac{1}{\log(K^\rho n^{r+1})^r} = o(1) .
\end{align*}
Note then that, if $\lambda_{K,n} \lesssim 1$ and $\zeta_{K,n} \lesssim \sqrt{K}$, since
\begin{equation*}
	\zeta_{K,n} \lambda_{K,n} \sqrt{\frac{q \log K}{n}}
	\lesssim
	\sqrt{ \frac{ K \log(K^\rho n^{r + 1}) \log(K) }{n} }
	\lesssim
	\sqrt{ \frac{ K \log(n^{\rho + r + 2}) \log(n) }{n} }
	\lesssim
	\sqrt{ \frac{ K \log(n)^2 }{n} } ,
\end{equation*}
to satisfy Assumption \ref{assumption:series_gram_matrix_convergence} we may assume $\sqrt{ { K \log(n)^2 }/{n} } = o(1)$ as in Remark 2.3 of \cite{chenOptimalUniformConvergence2015} for the case of exponential $\beta$-mixing regressors.

\section{Simulation Details}\label{appendix:sim_details}

\subsection{Benchmark Bivariate Design}

The first simulation setup involves a bivariate DGP where the structural shock does not directly affect other observables. This is a simple environment to check that, indeed, the two-step estimator recovers the nonlinear component of the model and impulse responses are consistently estimated, and that the MSE does not worsen excessively.

I consider three bivariate data generation processes. DGP 1 sets $X_t$ to be a fully exogenous innovation process,
\begin{equation}\label{sim_eg:DGP_1}
	\begin{split}
		X_t & = \epsilon_{1t} , \\
		Y_t & = 0.5 Y_{t-1} + 0.5 X_{t} + 0.3 X_{t-1} - 0.4 \max(0, X_t) + 0.3 \max(0, X_{t-1}) + \epsilon_{2t} .
	\end{split}
\end{equation}
DGP 2 adds an autoregressive component to $X_t$, but maintains exogeneity, 
\begin{equation}\label{sim_eg:DGP_2}
	\begin{split}
		X_t & = 0.5 X_{t-1} + \epsilon_{1t} , \\
		Y_t & = 0.5 Y_{t-1} + 0.5 X_{t} + 0.3 X_{t-1} - 0.4 \max(0, X_t) + 0.3 \max(0, X_{t-1}) + \epsilon_{2t} .
	\end{split}
\end{equation}
Finally, DGP 3 add an endogenous effect of $Y_{t-1}$ on the structural variable by setting
\begin{equation}\label{sim_eg:DGP_3}
	\begin{split}
		X_t & = 0.5 X_{t-1} + 0.2 Y_{t-1} + \epsilon_{1t} , \\
		Y_t & = 0.5 Y_{t-1} + 0.5 X_{t} + 0.3 X_{t-1} - 0.4 \max(0, X_t) + 0.3 \max(0, X_{t-1}) + \epsilon_{2t} .
	\end{split}
\end{equation}
Following Assumption~\ref{assumption:structural_model}, innovations are mutually independent. To accommodate Assumption~\ref{assumption:compactness}, both $\epsilon_{1t}$ and $\epsilon_{2t}$ are drawn from a truncated standard Gaussian distribution over $[-3, 3]$.\footnote{Let $e_{it} \sim \mathcal{N}(0,1)$ for $i = 1, 2$, then the truncated Gaussian innovations used in simulation are set to be $\epsilon_{it} = \min(\max(-3, e_{it}), 3)$. The resulting r.v.s have a non-continuous density with two mass points at -3 and 3. However, in practice, since these masses are negligible, for the moderate sample sizes used, this choice does not create issues.} 
All DGPs are centered to have a zero intercept in population.

We evaluate bias and MSE plots using $10\,000$ Monte Carlo simulation. For a chosen horizon $H$, the impact of a relaxed shock on $\epsilon_{1t}$ is evaluated on $Y_{t+h}$ for $h = 1, \ldots, H$. To compute the population IRF, we employ a direct simulation strategy that replicates the shock's propagation through the model, and we use $10^5$ replications. To evaluate the estimated IRF, the two-step procedure is implemented: a sample of length $n$ is drawn, the linear least squares and the semiparametric series estimators of the model are used to estimate the model, and the relaxed IRF is computed following Proposition \ref{prop:irf_iterate_algorithm}. For the sake of brevity, we discuss the case of $\delta = 1$, and we set the shock relaxation function to be
\begin{equation*}
	\rho(z) = \mathbb{I}\{x \leq 3\} \exp\left( 1 + \left( \left\lvert \frac{z}{3} \right\rvert^4 - 1 \right)^{-1} \right)
\end{equation*}
It can be easily checked that this choice of $\rho$ is compatible with shocks of size $0 \leq |\delta| \leq 1$. Choices of $\delta = -1$ and $\delta = \pm 0.5$ yield similar results in simulations, so we do not discuss them here.

Figure \ref{fig:mse_bias_DGP_1_2_3} contains the results for sample size $n = 240$. This choice is motivated by considering the average sample sizes found in most macroeconometric settings, which is equivalent to 20 years of monthly data or 60 years of quarterly data \citep{goncalvesImpulseResponseAnalysis2021}. The benchmark method is an OLS regression that relies on a priori knowledge of the underlying DGP specification. Given the moderate sample size, to construct the cubic spline sieve estimator of the nonlinear component of the model, we use a single knot, located at 0. The simulations in Figure \ref{fig:mse_bias_DGP_1_2_3} show that while the MSE is slightly higher for the sieve model, the bias is comparable across methods. Note that for DGP 3, due to the dependence of the structural variable on non-structural series lags, the MSE and bias increase significantly, and there is no meaningful difference in performance between the two estimation approaches.

\subsection{Structural Partial Identification Design}

To showcase the validity of the proposed sieve estimator under the type of partial structural identification discussed in the paper, we again rely on the simulation design proposed by \cite{goncalvesImpulseResponseAnalysis2021}. All specifications are block-recursive and require estimating the contemporaneous effects of a structural shock on non-structural variables, unlike in the previous section.

The form of the DGPs is
\begin{equation*}
	B_0  Z_t = B_1 Z_{t-1} + C_0 f(X_t) + C_1 f(X_{t-1}) + \epsilon_t ,
\end{equation*}
where in all variations of the model 
\begin{equation*}
	B_0 = \begin{bmatrix}
		1 & 0 & 0 \\
		-0.45 & 1 & -0.3 \\
		-0.05 & 0.1 & 1
	\end{bmatrix} ,
	\quad
	C_0 = \begin{bmatrix}
		0 \\
		-0.2 \\
		0.08
	\end{bmatrix} ,
	\quad \text{and} \quad 
	C_1 = \begin{bmatrix}
		0 \\
		-0.1 \\
		0.2
	\end{bmatrix} .
\end{equation*}
I focus on the case $f(x) = \max(0, x)$, since this type of nonlinearity is simpler to study. DGP 4 treats $X_t$ as an exogenous shock by setting
\begin{equation*}
	B_1 = \begin{bmatrix}
		0 & 0 & 0 \\
		0.15 & 0.17 & -0.18 \\
		-0.08 & 0.03 & 0.6
	\end{bmatrix} ;
\end{equation*}
DGP 5 adds serial correlation to $X_t$,
\begin{equation*}
	B_1 = \begin{bmatrix}
		-0.13 & 0 & 0 \\
		0.15 & 0.17 & -0.18 \\
		-0.08 & 0.03 & 0.6
	\end{bmatrix} ;
\end{equation*}
and DGP 6 includes dependence on $Y_{t-1}$,
\begin{equation*}
	B_1 = \begin{bmatrix}
		-0.13 & 0.05 & -0.01 \\
		0.15 & 0.17 & -0.18 \\
		-0.08 & 0.03 & 0.6
	\end{bmatrix} .
\end{equation*}
For these data-generating processes, we employ the same setup of simulations with DGPs 1-3, including the number of replications as well as the type of relaxed shock, as well as the sieve grid. Here, too, we evaluate MSE and bias of both the sieve and the correct specification OLS estimators with a sample size of $n = 240$ observations. The results in Figure \ref{fig:mse_bias_DGP_4_5_6} show again that there is little difference in terms of performance between the semiparametric sieve approach and a correctly-specified OLS regression.

\subsection{Model Misspecification}

The results from benchmark simulations support the use of the sieve IRF estimator in a sample of moderate size, since it performs comparably to a regression performed with a priori knowledge of the underlying DGP. We now show that the semiparametric approach is also robust to model misspecification compared to simpler specifications involving fixed choices for nonlinear transformations.

To this end, we modify DGP 2 to use a smooth nonlinear transformation to define the effect of structural variable $X_t$ on $Y_t$. That is, there is no compounding of linear and nonlinear effects. The autoregressive coefficient in the equation for $X_t$ is also increased to make the shock more persistent. The new data-generating process, DGP 7, is, thus, given by 
\begin{equation}
	\begin{split}
		X_t & = 0.8 X_{t-1} + \epsilon_{1t} , \\
		Y_t & = 0.5 Y_{t-1} + 0.9 \varphi(X_t) + 0.5 \varphi(X_{t-1}) + \epsilon_{2t} .
	\end{split}
\end{equation}
where $\varphi(x) := (x - 1)(0.5 + \tanh(x - 1)/2)$. %, which is plotted in Figure \ref{fig:nonlin_f}. 

%\begin{figure}
%	\vspace{-1em}
%	\centering
%	%\includegraphics[width=0.6\linewidth]{figures/diagram-20230504.pdf}
%	\includegraphics[width=0.6\linewidth]{figures/diagram__sim_nonlin_fun.pdf}
%	% \vspace{-1em}
%	\caption{Plot of nonlinear function $\varphi(x)$ used in DGP 7.}
%	\label{fig:nonlin_f}
%\end{figure}

To emphasize the difference in estimated IRFs, in this setup we focus on $\delta = \pm 2$, which requires adapting the choice of innovations and shock relaxation function. In simulations of DGP 7, $\epsilon_{1t}$ and $\epsilon_{2t}$ are both drawn from a truncated standard Gaussian distribution over $[-5, 5]$. The shock relaxation function of this setup is given by
\begin{equation*}
	\rho(z) = \mathbb{I}\{x \leq 5\} \exp\left( 1 + \left( \left\lvert \frac{z}{5} \right\rvert^{3.9} - 1 \right)^{-1} \right) .
\end{equation*}
This form of $\rho$ is adapted to choices of $\delta$ such that $0 < |\delta| \leq 2$. The sieve grid now consists of 4 equidistant knots within $(-5, 5)$. We use the same number of replications as in the previous simulations. Finally, the regression design is identical to that used for DGP 2 under correct specification.

The results obtained with sample size $n =2400$ are collected in Figure \ref{fig:mse_bias_DGP_2'}. We choose this larger sample size to clearly showcase the inconsistency of impulse responses under misspecification: as it can be observed, the simple OLS estimator involving the negative-censoring transform produces IRF estimates with consistently worse MSE and bias than those of the sieve estimator at almost all horizons. Similar results are also obtained for more moderate shocks $\delta = \pm 1$, but the differences are less pronounced. These simulations suggest that the semiparametric sieve estimator can produce substantially better IRF estimates in large samples than methods involving nonlinear transformations selected a priori.

In this setup, it is also important to highlight the fact that the poor performance of OLS IRF estimates does not come from $\varphi(x)$ being ``complex'', and, thus, hard to approximate by combinations of simple functions. In fact, if in DGP 7 function $\varphi$ is replaced by $\widetilde{\varphi}(x) := \varphi(x + 1)$, the differences between sieve and OLS impulse response estimates become minimal in simulations, with the bias of the latter decreasing by approximately an order of magnitude, see Figure \ref{fig:mse_bias_DGP_2'_alt}. This is simply due to the fact that $\widetilde{\varphi}(x)$ is well approximated by $\max(0, x)$ directly. However, one then requires either prior knowledge or sheer luck when constructing the nonlinear transforms of $X_t$ for an OLS regression. The proposed series estimator, instead, just requires an appropriate choice of sieve. Many data-driven procedures to select sieves in applications have been proposed, see for example the discussion in \cite{kangINFERENCENONPARAMETRICSERIES2021}.

\newpage 
\section{Additional Plots}\label{appendix:additional_plots}

\subsection{Primary Simulation Designs}

\begin{figure}[h!]
	\centering
	\begin{subfigure}[b]{0.9\textwidth}
		\centering
		\includegraphics[width=\textwidth]{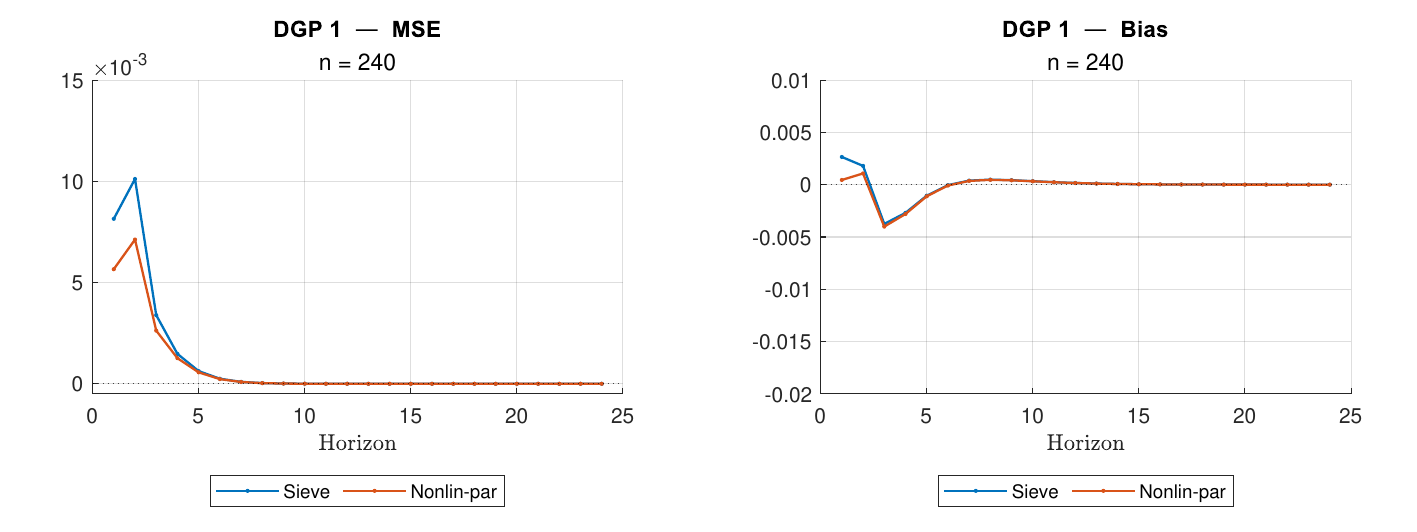}
		% \caption{}
		% \label{}
	\end{subfigure}
	\\[15pt]
	\begin{subfigure}[b]{0.9\textwidth}
		\centering
		\includegraphics[width=\textwidth]{figures/plot_mse_bias_DGP2__n=240_deg=3_B=100000_M=10000.pdf}
		% \caption{}
		% \label{}
	\end{subfigure}
	\\[15pt]
	\begin{subfigure}[b]{0.9\textwidth}
		\centering
		\includegraphics[width=\textwidth]{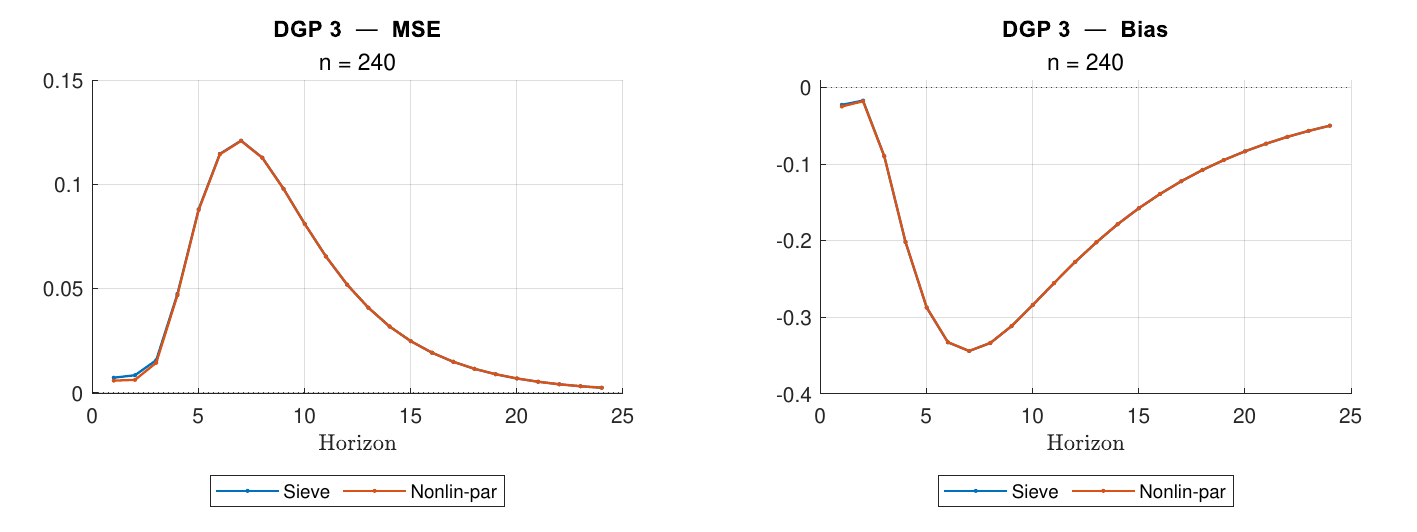}
		% \caption{}
		% \label{}
	\end{subfigure}
	\\[15pt]
	\caption{Simulation results for DGPs 1-3.}
	\label{fig:mse_bias_DGP_1_2_3}
\end{figure}

\newpage
\begin{figure}[H]
	\centering
	\begin{subfigure}[b]{\textwidth}
		\centering
		\includegraphics[width=\textwidth]{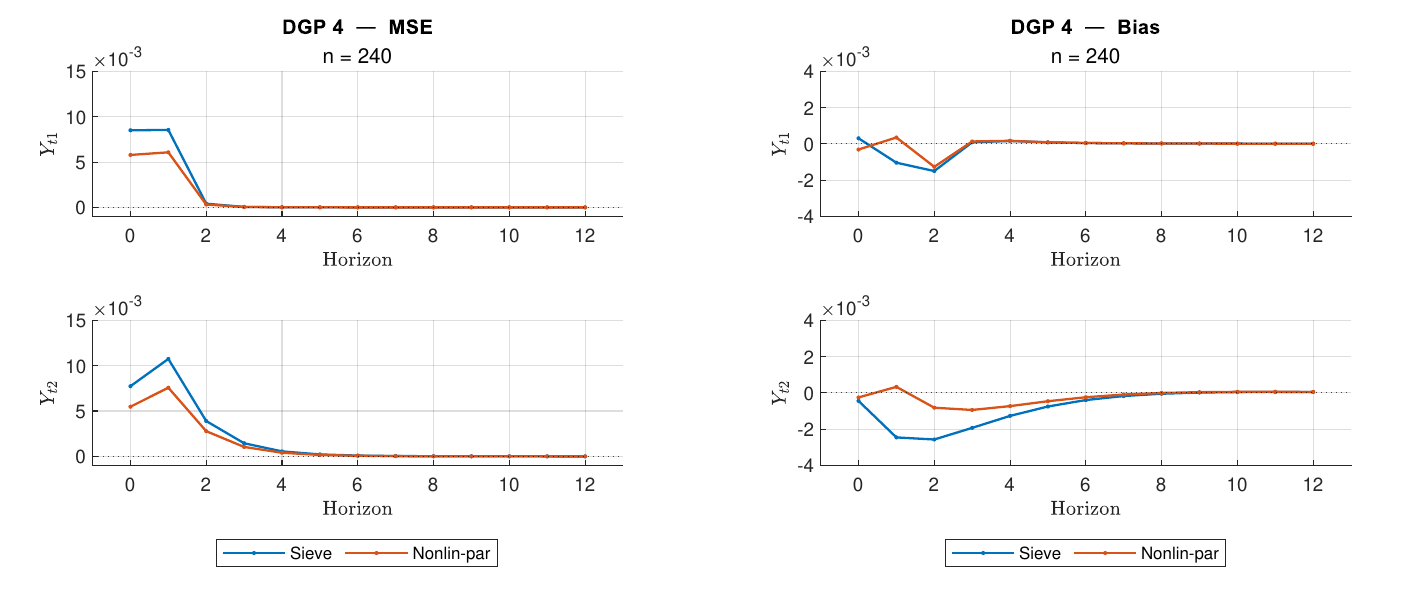}
		% \caption{}
		% \label{}
	\end{subfigure}
	\\[5pt]
	\begin{subfigure}[b]{\textwidth}
		\centering
		\includegraphics[width=\textwidth]{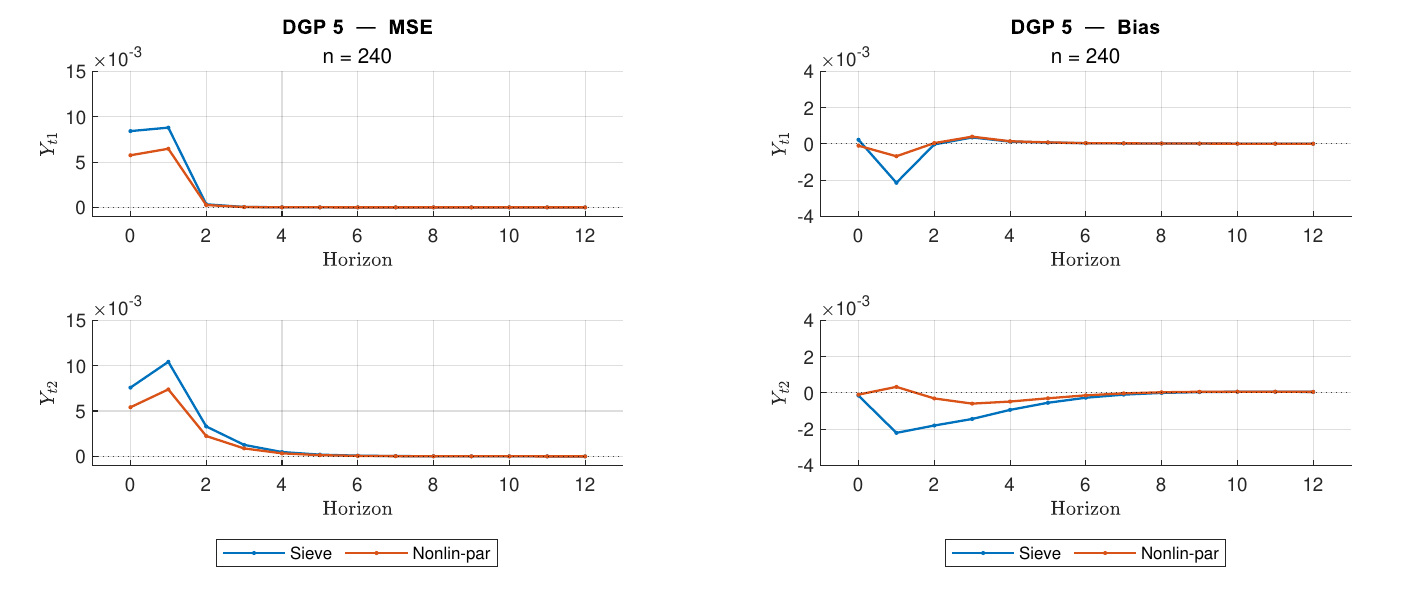}
		% \caption{}
		% \label{}
	\end{subfigure}
	\\[5pt]
	\begin{subfigure}[b]{\textwidth}
		\centering
		\includegraphics[width=\textwidth]{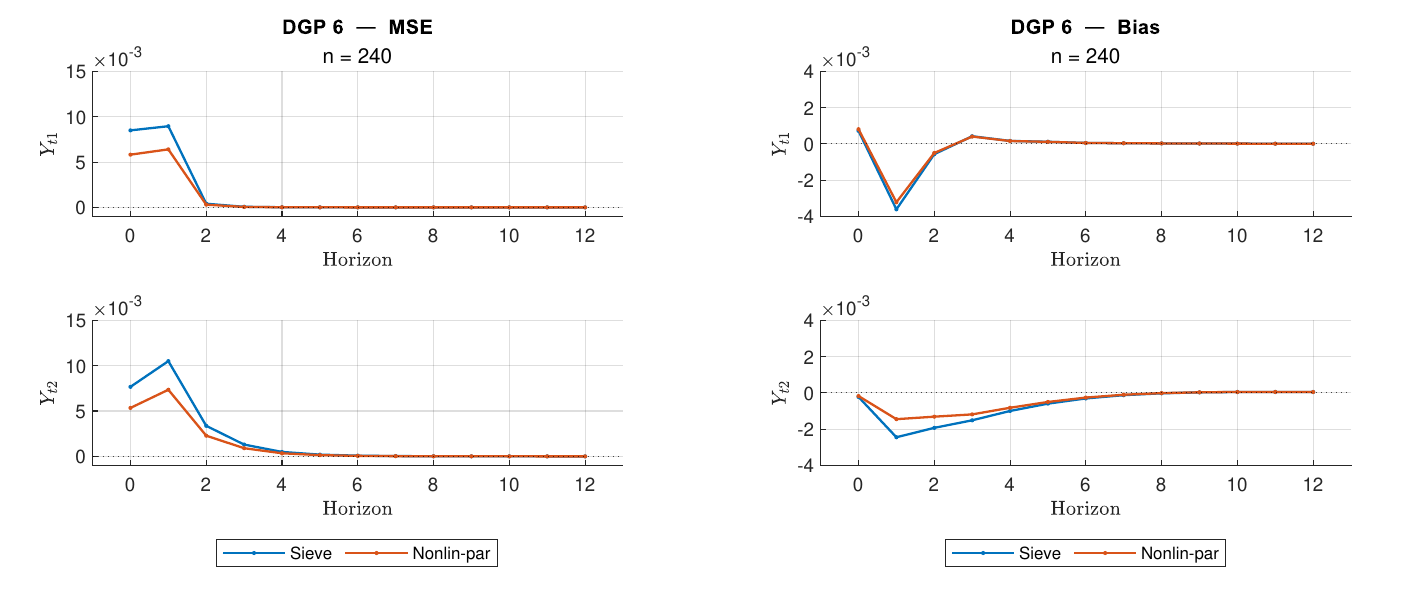}
		% \caption{}
		% \label{}
	\end{subfigure}
	\\[5pt]
	\caption{Simulation results for DGPs 4-6.}
	\label{fig:mse_bias_DGP_4_5_6}
\end{figure}

\newpage

\subsection{Simulations under Misspecification}

\begin{figure}[H]
	\centering
	\begin{subfigure}[b]{\textwidth}
		\centering
		\includegraphics[width=\textwidth]{figures/plot_mse_bias_DGP2-smooth__n=2400_deg=3_B=100000_M=10000_delta=2.pdf}
		\caption{$\delta = +2$}
		% \label{}
	\end{subfigure}
	%\\[15pt]
	\begin{subfigure}[b]{\textwidth}
		\centering
		\includegraphics[width=\textwidth]{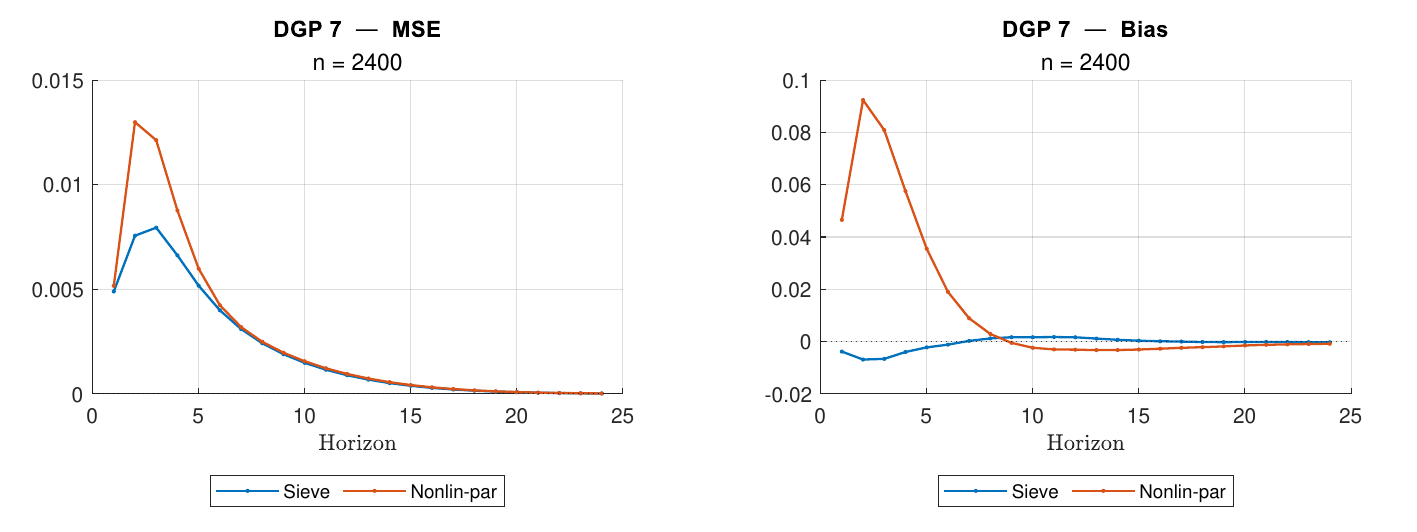}
		\caption{$\delta = -2$}
		% \label{}
	\end{subfigure}
	%\\[15pt]
	\caption{Simulations results for DGP 7.}
	\label{fig:mse_bias_DGP_2'}
\end{figure}

\newpage
\begin{figure}[H]
    \centering
    \begin{subfigure}[b]{\textwidth}
		\centering
		\includegraphics[width=\textwidth]{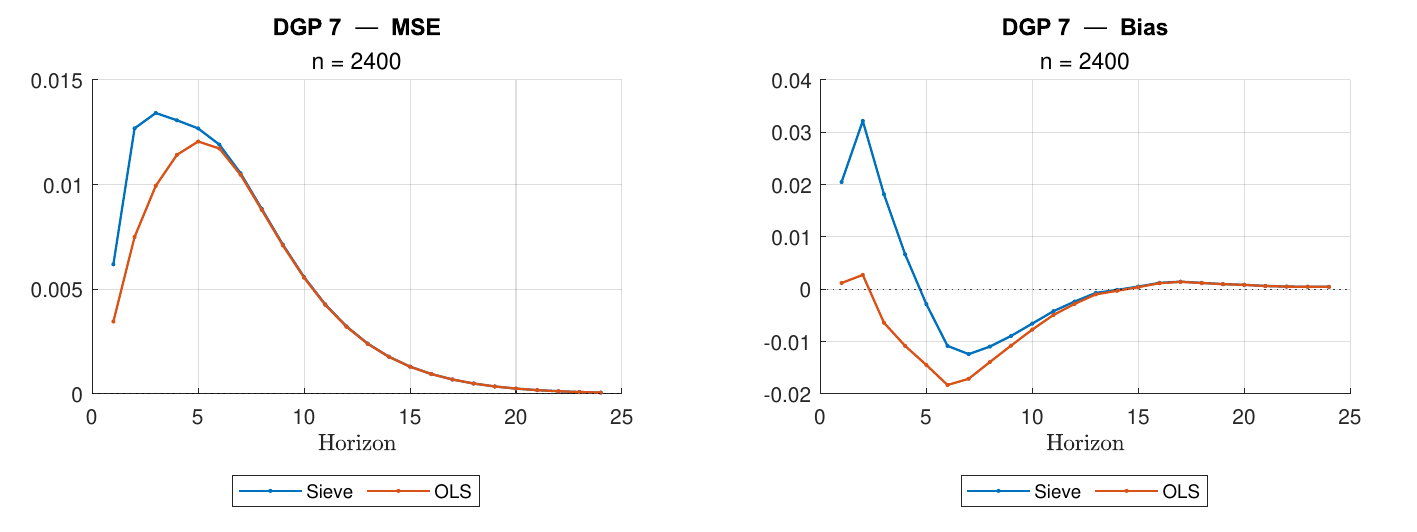}
		\caption{$\delta = +2$}
		% \label{}
	\end{subfigure}
	%\\[15pt]
    \begin{subfigure}[b]{\textwidth}
		\centering
		\includegraphics[width=\textwidth]{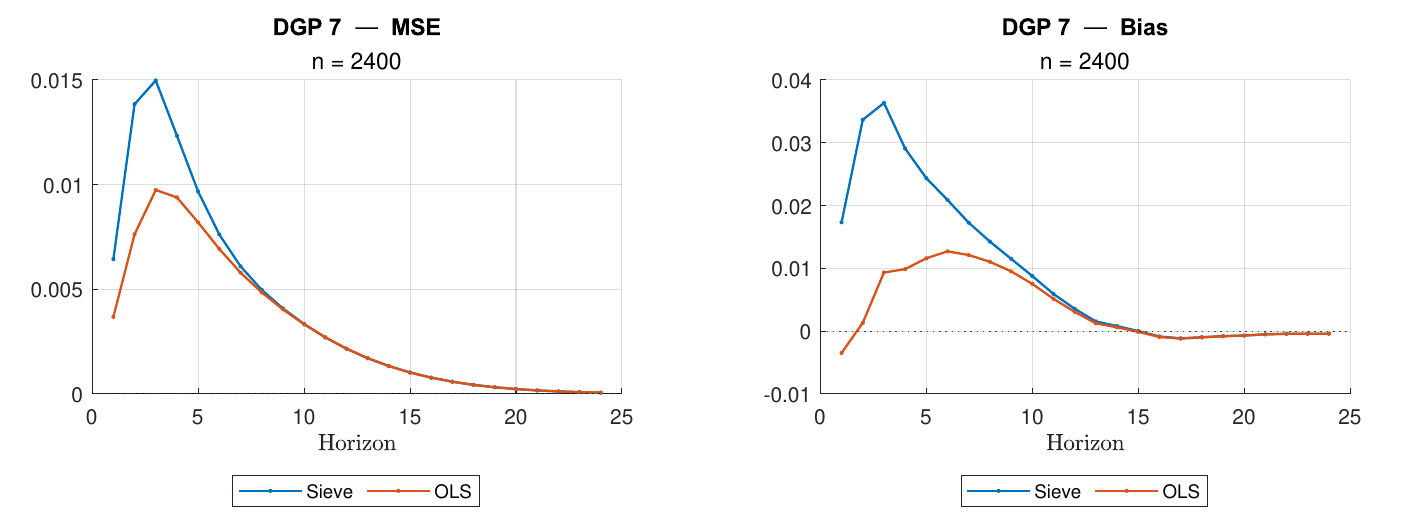}
		\caption{$\delta = -2$}
		% \label{}
	\end{subfigure}
	\\[15pt]
    \caption{Simulation results for DGP 7 when considering $\widetilde{\varphi}$ in place of $\varphi$.}
    \label{fig:mse_bias_DGP_2'_alt}
\end{figure}

\newpage

\subsection{Empirical Applications}

\begin{figure}[h!]
    \centering
    \includegraphics[width=0.7\textwidth]{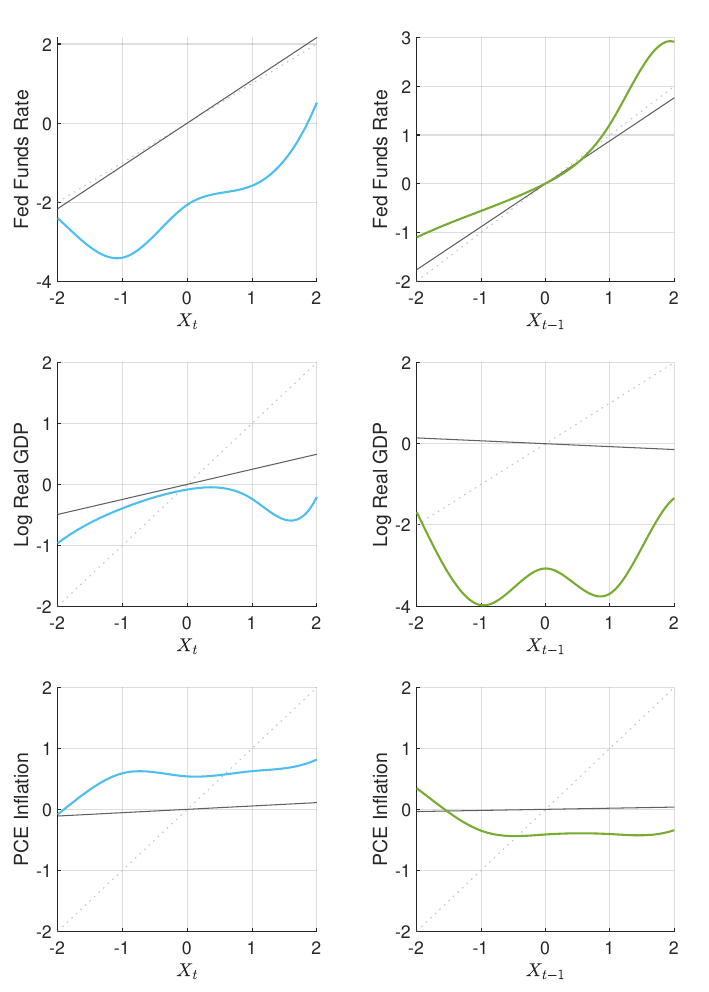}
    \caption{Estimated nonlinear regression functions for the narrative U.S. monetary policy variable. Contemporaneous (left side) and one-period lag (right side) effects are shown, linear and nonlinear functions. For comparison, linear VAR coefficients (dark gray) and the identity map (light gray, dashed) are shown as lines.}
    \label{fig:plot_app_gonc2021_regfuns}
\end{figure}

\newpage
\begin{figure}[H]
    \centering
    \begin{subfigure}[b]{0.85\textwidth}
		\centering
		\includegraphics[width=\textwidth]{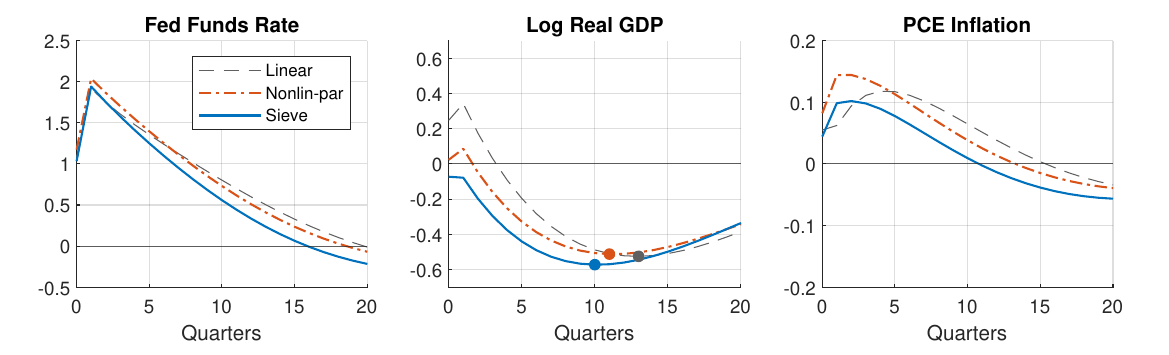}
		\caption{$\delta = +1$, knots at $\{-1, 1\}$}
		% \label{}
	\end{subfigure}
    \\
    \begin{subfigure}[b]{0.85\textwidth}
		\centering
		\includegraphics[width=\textwidth]{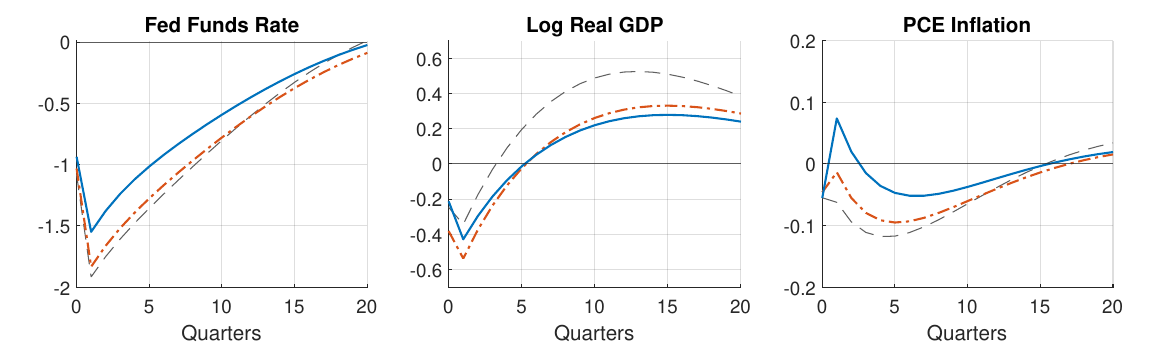}
		\caption{$\delta = -1$, knots at $\{-1, 1\}$}
		% \label{}
	\end{subfigure}
    \\[10pt]
    \begin{subfigure}[b]{0.85\textwidth}
		\centering
		\includegraphics[width=\textwidth]{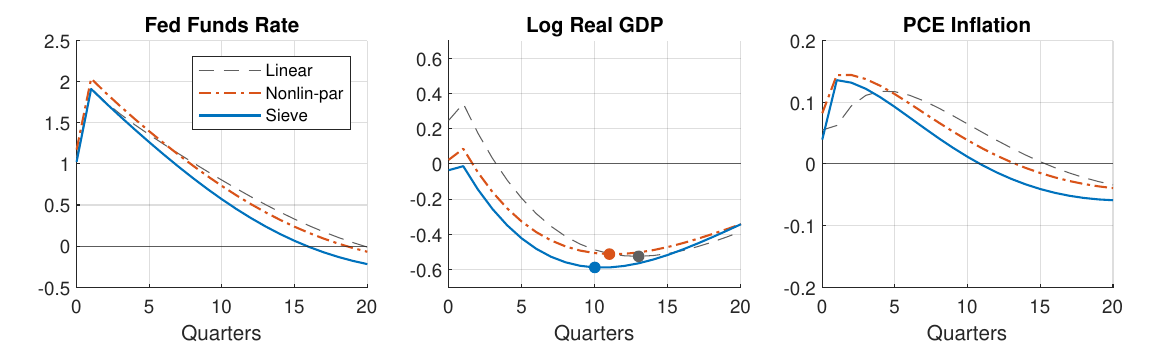}
		\caption{$\delta = +1$, knot at $\{0\}$}
		% \label{}
	\end{subfigure}
    \\
    \begin{subfigure}[b]{0.85\textwidth}
		\centering
		\includegraphics[width=\textwidth]{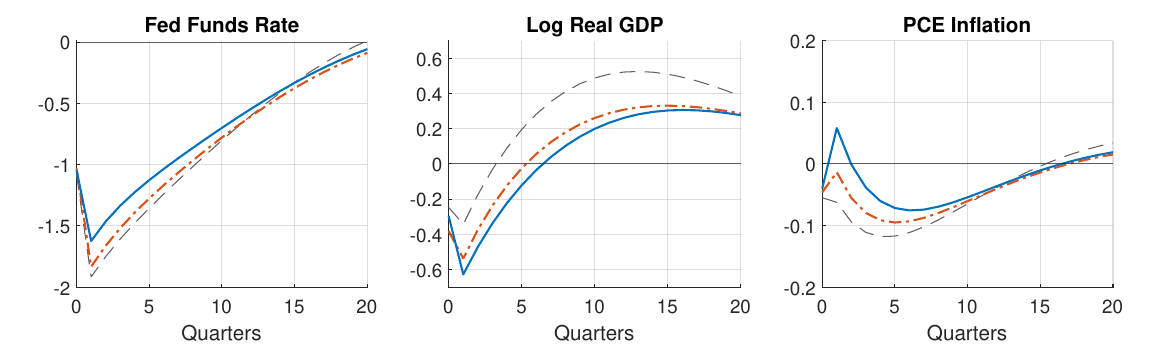}
		\caption{$\delta = -1$, knot at $\{0\}$}
		% \label{}
	\end{subfigure}
	\\
    \caption{Robustness plots for U.S. monetary policy shock when changing knots compared to those used in Figure \ref{fig:app_gonc2021_irfs}. Note that linear and parametric nonlinear responses do not change.}
    \label{fig:plot_app_gonc2021_robustness}
\end{figure}

\newpage
\begin{figure}[H]
    \centering
    \textbf{GDP}\\
    \includegraphics[width=0.85\textwidth]{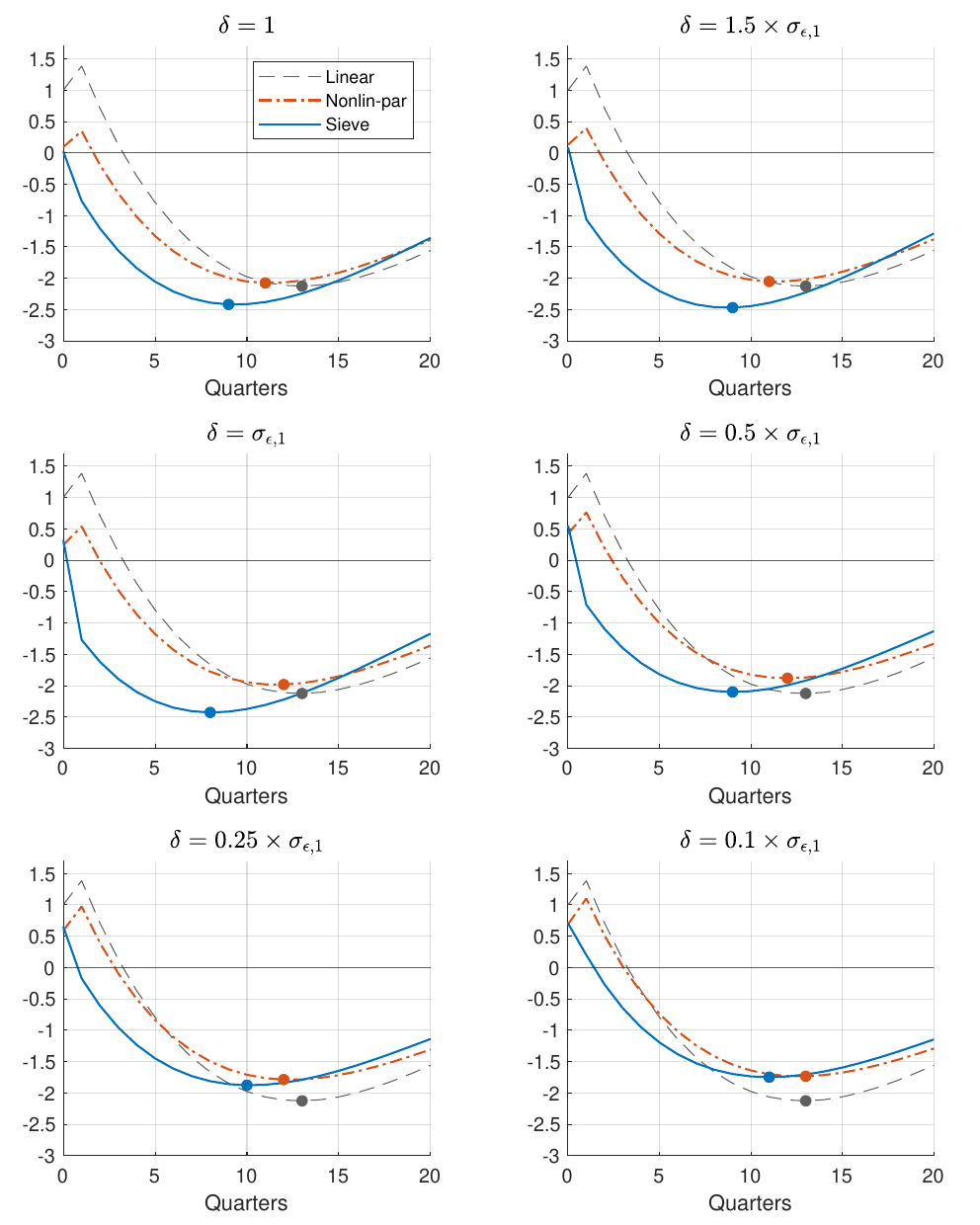} 
    \\[15pt]
    \caption{Relative changes in the GDP impulse responses function when the size of the shock is reduced from that used in Figure \ref{fig:app_gonc2021_irfs}. The standard deviation of $X_t \equiv \epsilon_{1t}$ is $\sigma_{\epsilon,1} \approx 0.5972$, therefore $\delta = 1 \approx 1.7 \times \sigma_{\epsilon,1}$. Linear IRFs are rescaled such that, for all values of $\delta$, the linear response at $h = 0$ is one in absolute value. Nonlinear IRFs are rescaled by $\delta$ times the linear response scaling factor.}
    \label{fig:plot_app_gonc2021_scale}
\end{figure}

\newpage
\begin{figure}[H]
    \centering
    \includegraphics[width=0.75\textwidth]{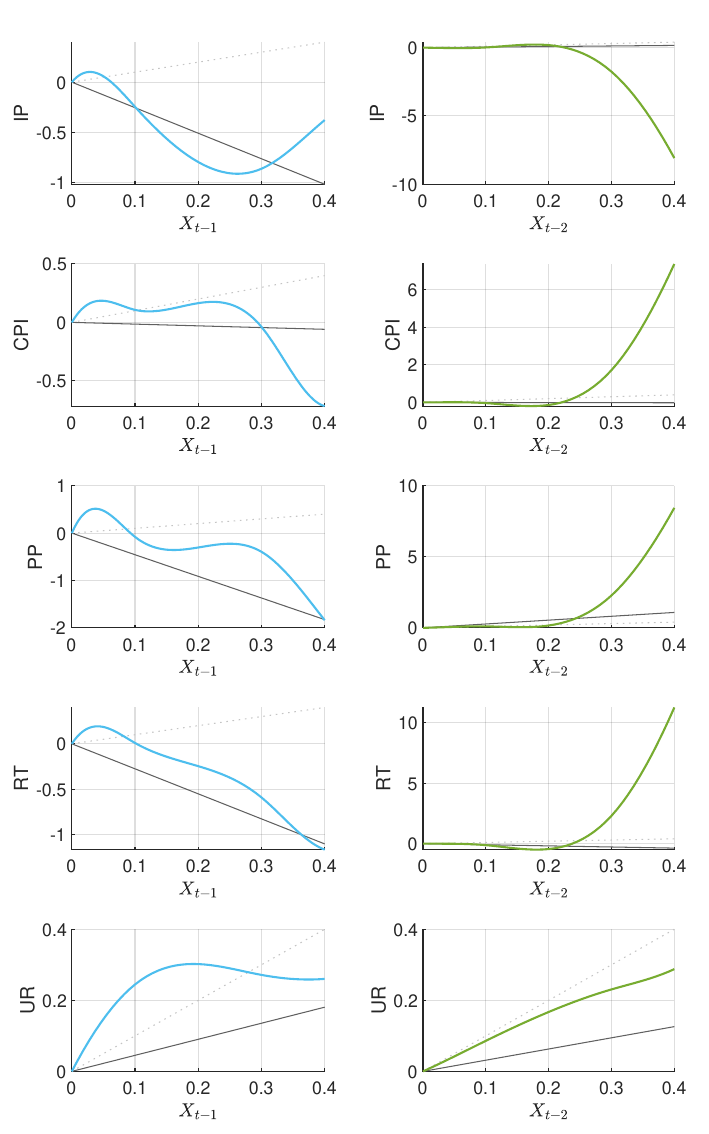}
    \caption{Estimated nonlinear regression functions for the 3M3M subjective interest rate uncertainty measure. One-period (left side) and two-period lag (right side) effects are shown, combining linear and nonlinear functions. For comparison, linear VAR coefficients (dark gray) and the identity map (light gray, dashed) are shown as lines.}
    \label{fig:plot_app_istrefi2018_regfuns}
\end{figure}

\newpage

\newpage
\begin{figure}[H]
	\centering
	\begin{subfigure}[b]{0.49\textwidth}
		\centering
		\includegraphics[width=\textwidth]{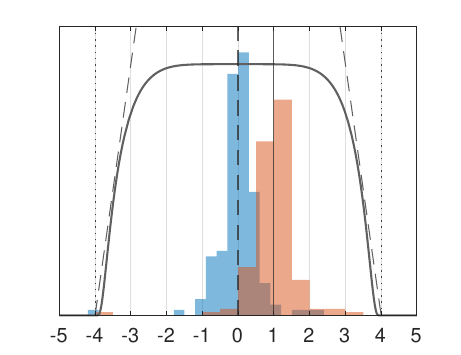}
		\caption{$\delta = +1$}
		% \label{}
	\end{subfigure}
	\begin{subfigure}[b]{0.49\textwidth}
		\centering
		\includegraphics[width=\textwidth]{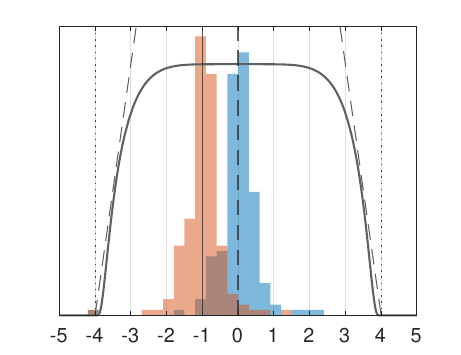}
		\caption{$\delta = -1$}
		% \label{}
	\end{subfigure}
	\\[15pt]
	\caption{Comparison of histograms and shock relaxation function for a positive (left) and negative (right) shock in monetary policy. Original (blue) versus shocked (orange) distribution of the sample realization of $\epsilon_{1t}$. The dashed vertical line is the mean of the original distribution, while the solid vertical line is the mean after the shock.}
	\label{fig:plot_app_gonc2021_meanshift}
\end{figure}
\begin{figure}[H]
    \centering
    \begin{subfigure}[b]{0.49\textwidth}
		\centering
		\includegraphics[width=\textwidth]{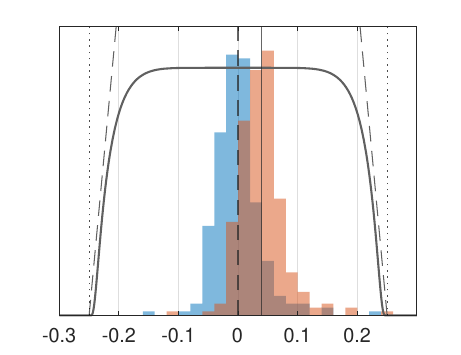}
		\caption{$\delta = \sigma_\epsilon$}
		% \label{}
	\end{subfigure}
    \begin{subfigure}[b]{0.49\textwidth}
		\centering
		\includegraphics[width=\textwidth]{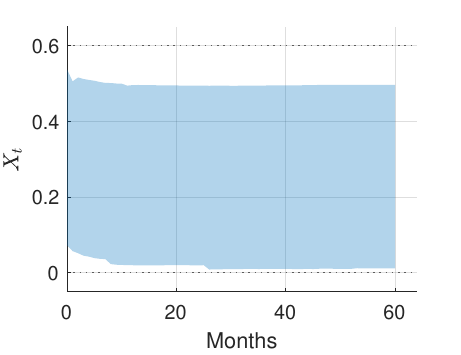}
		\caption{Envelope}
		% \label{}
	\end{subfigure}
    \\[15pt]
    \caption{Left: Histograms and shock relaxation function for a one-standard-deviation shock in interest rate uncertainty. Original (blue) versus shocked (orange) distribution of the sample realization of $\epsilon_{1t}$. The dashed vertical line is the mean of the original distribution, while the solid vertical line is the mean after the shock. Right: Envelope (min-max) of shocked paths for one-standard-deviation impulse response.}
    \label{fig:plot_app_istrefi2018_meanshift}
\end{figure}

\newpage
\begin{figure}[H]
    \centering
    \textbf{IP}\\
    \includegraphics[width=0.85\textwidth]{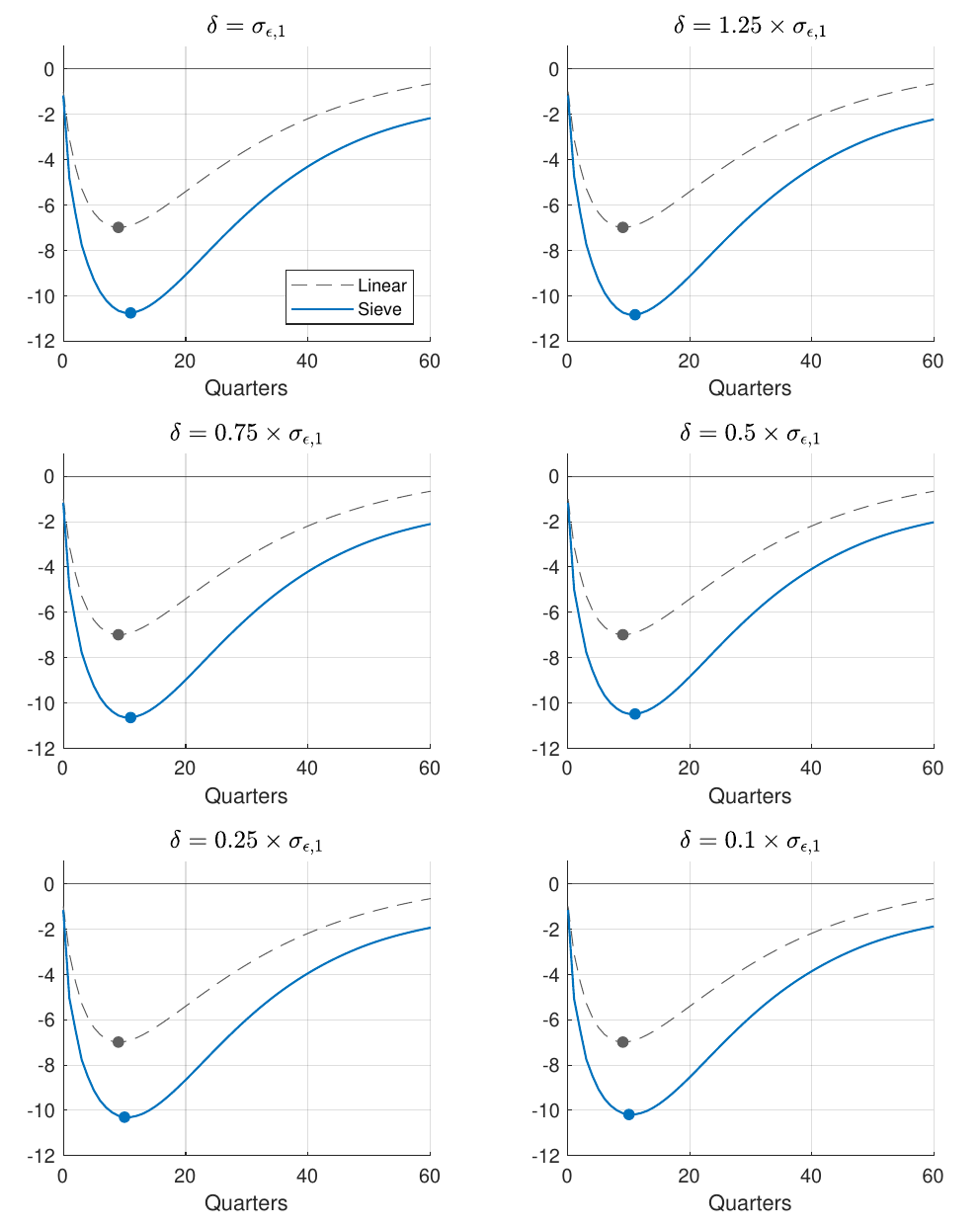} 
    \\[15pt]
    \caption{Relative changes in the industrial production impulse responses function when the size of the shock is reduced from that used in Figure \ref{fig:app_istrefi2018_irfs}. The standard deviation of $\equiv \epsilon_{1t}$ is $\sigma_{\epsilon,1} \approx 0.0389$. Linear IRFs are rescaled such that, for all values of $\delta$, the linear response at $h = 0$ is one in absolute value. Nonlinear IRFs are rescaled by $\delta$ times the linear response scaling factor.}
    \label{fig:plot_app_istrefi2018_scale_IP}
\end{figure}

\newpage
\begin{figure}[H]
    \centering
    \textbf{CPI}\\
    \includegraphics[width=0.85\textwidth]{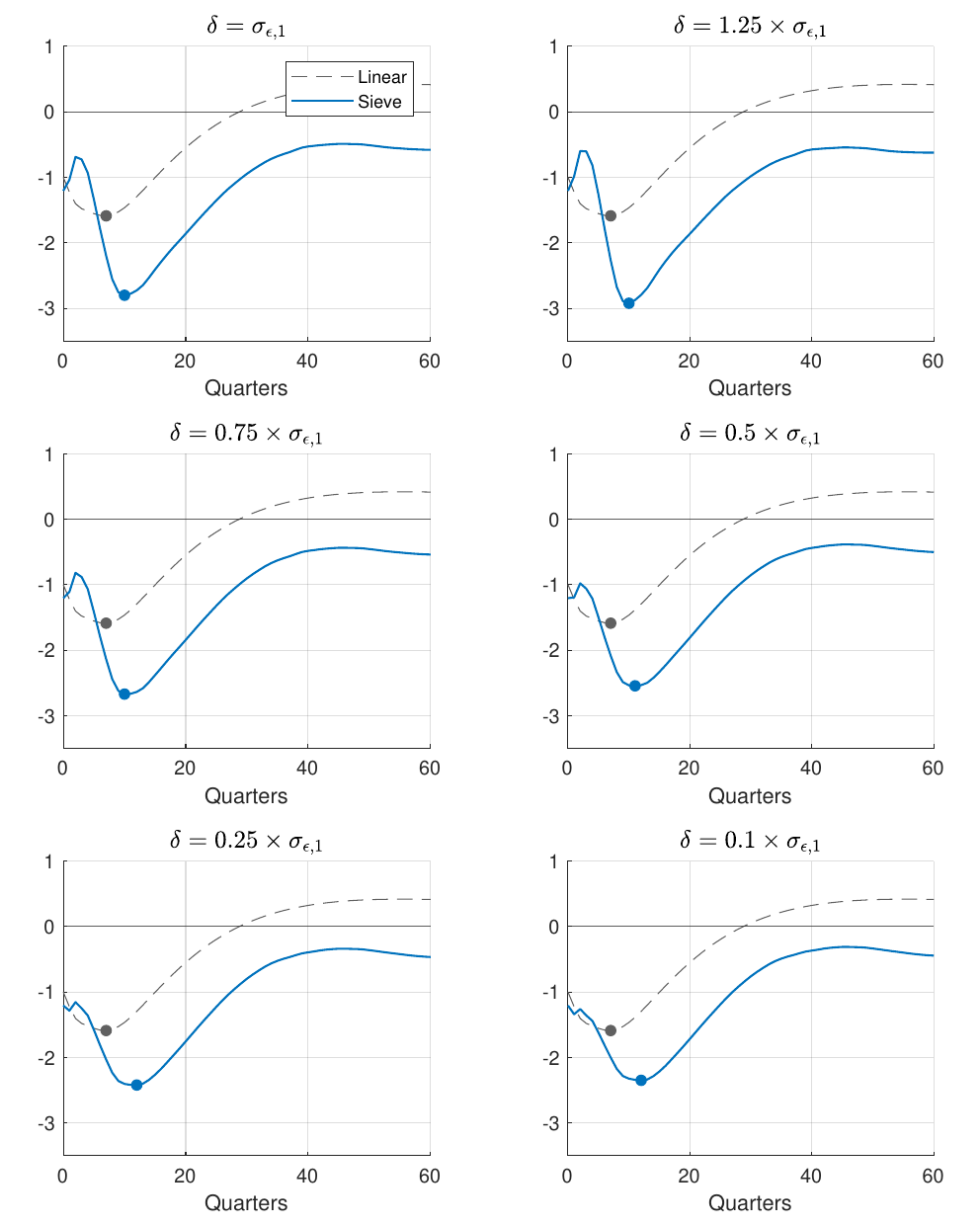} 
    \\[15pt]
    \caption{Relative changes in the CPI impulse responses function when the size of the shock is reduced from that used in Figure \ref{fig:app_istrefi2018_irfs}. The standard deviation of $\equiv \epsilon_{1t}$ is $\sigma_{\epsilon,1} \approx 0.0389$. Linear IRFs are rescaled such that, for all values of $\delta$, the linear response at $h = 0$ is one in absolute value. Nonlinear IRFs are rescaled by $\delta$ times the linear response scaling factor.}
    \label{fig:plot_app_istrefi2018_scale_CPI}
\end{figure}

\section{Robustness to Relaxation}\label{appendix:robustness}

Since shock relaxation implies that the researcher chooses an additional functional hyperparameter ($\rho$) in the construction of impulse responses, one should wonder whether there is much difference in implementing relaxation or not. In this appendix, we provide additional plots that speak to the robustness of our main simulation and application results by including semiparametric nonlinear IRFs computed using non-relaxed shocks. 

\paragraph{Simulation Checks.}
In Figures~\ref{fig:mse_bias_DGP_1_2_3__norelax} and \ref{fig:mse_bias_DGP_4_5_6__norelax} we plot parametric and semiparametric IRFs for the simulation designs discussed in both Section~\ref{section:simulations} and Appendix~\ref{appendix:sim_details}, with the important change that now the population response we target is \textit{without shock relaxation}. That is, we follow Section~\ref{section:nonlin_irfs} directly to define the target responses of interest. We compared both relaxed and non-relaxed sieve IRFs, although for the latter we cannot provide a formal theory of consistency in our framework. As we can see from both figures, relaxed sieve IRFs generally suffer from a significant increase in bias at short horizons. However, in terms of MSE, the ranking reversed: non-relaxed impulse responses have generally higher mean squared error than their relaxed counterparts, while parametric nonlinear IRFs have both the smallest MSE and bias.
These results seem to agree with the natural intuition that the semiparametric estimates are more inaccurate at the edges of the regressor's domain, and thus may induce higher variation in the obtained IRFs.

\paragraph{Empirical Applications Checks.}
By additionally including impulse responses obtained with a non-relaxed shock, we can conclude that our discussion in Section~\ref{section:applications} remains broadly valid.
With regards to our MP application, Figure~\ref{fig:app_gonc2021_irfs__norelax} and \ref{fig:plot_app_gonc2021_meanshift__norelax} show that, with a positive shock, there are negligible differences between relaxed and non-relaxed semiparametric IRFs. This is also true as we scaled down the intensity of the shock, cf. Figure~\ref{fig:plot_app_gonc2021_scale___norelax}. However, with a negative shock, we can see more noticeable disagreements. This is likely due to the negative shift of a small mass of residuals as the edge of the support, as revealed by Figure~\ref{fig:plot_app_gonc2021_meanshift__norelax}(b), having high leverage on the overall sample average used in computing IRFs.

Lastly, when checking the differences between relaxed and non-relaxed sieve IRFs in our second application, Figures~\ref{fig:plot_app_istrefi2018_meanshift__norelax} and \ref{fig:plot_app_istrefi2018_scale_IP__norelax} also indicate that the main analysis is robust to doing away with relaxation. We can see some differences in the shape of responses, mainly for CPI inflation and retail IRFs. One noticeable fact is that, when the shock size is reduced, there is practically no difference between the two types of semiparametric responses we construct.

\pagebreak
\begin{figure}[H]
	\centering
	\begin{subfigure}[b]{0.9\textwidth}
		\centering
		\includegraphics[width=\textwidth]{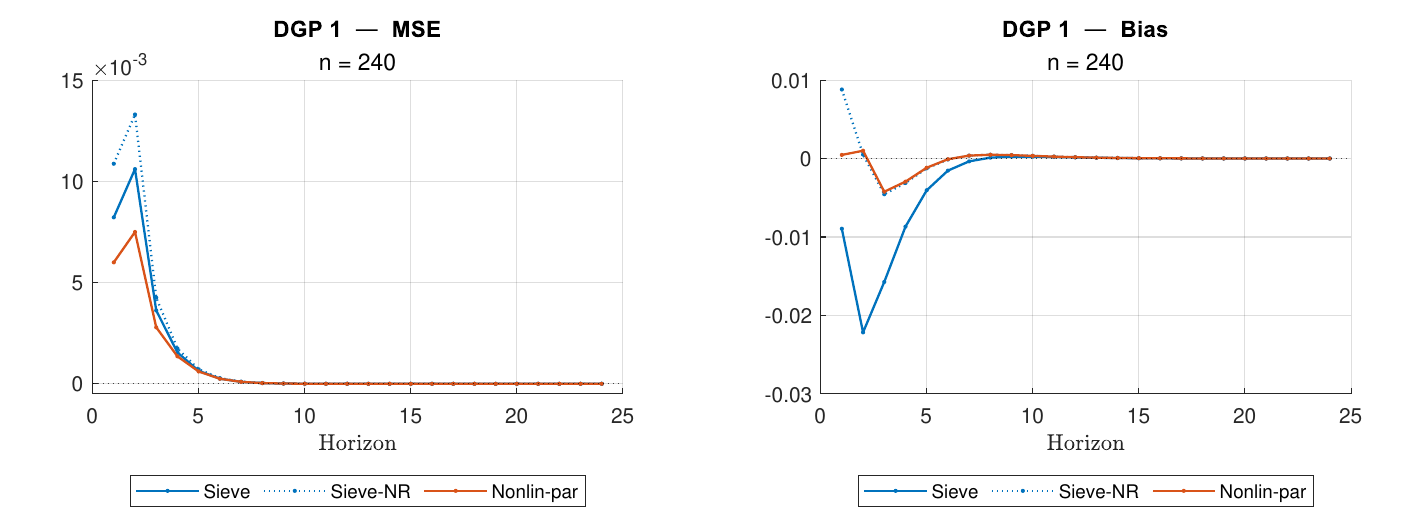}
		% \caption{}
		% \label{}
	\end{subfigure}
	\\[15pt]
	\begin{subfigure}[b]{0.9\textwidth}
		\centering
		\includegraphics[width=\textwidth]{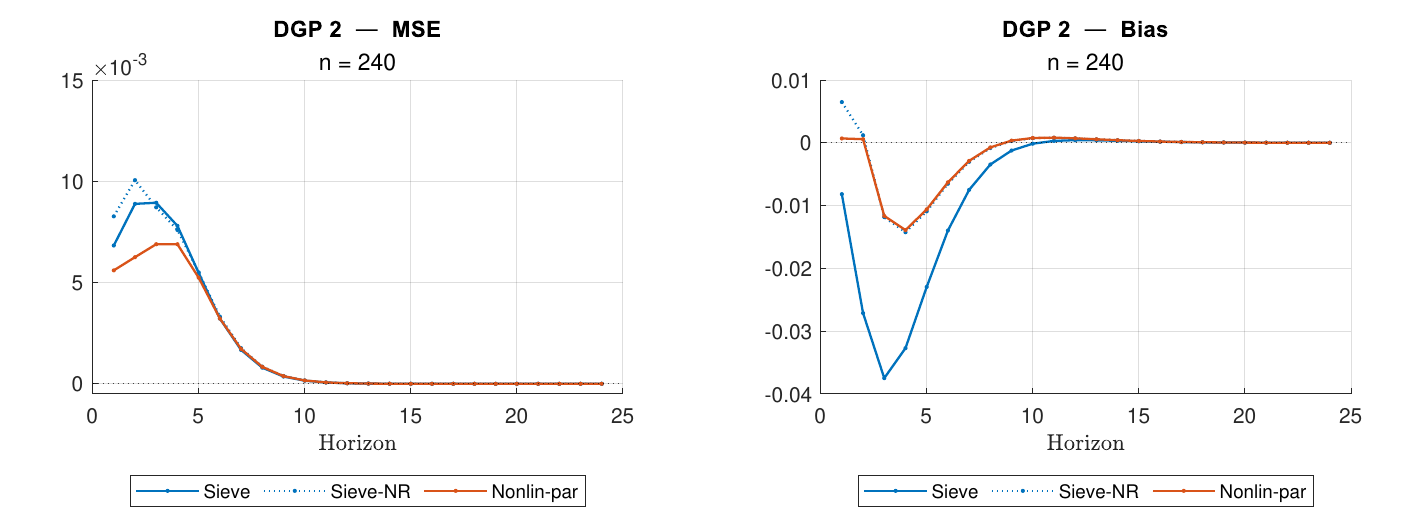}
		% \caption{}
		% \label{}
	\end{subfigure}
	\\[15pt]
	\begin{subfigure}[b]{0.9\textwidth}
		\centering
		\includegraphics[width=\textwidth]{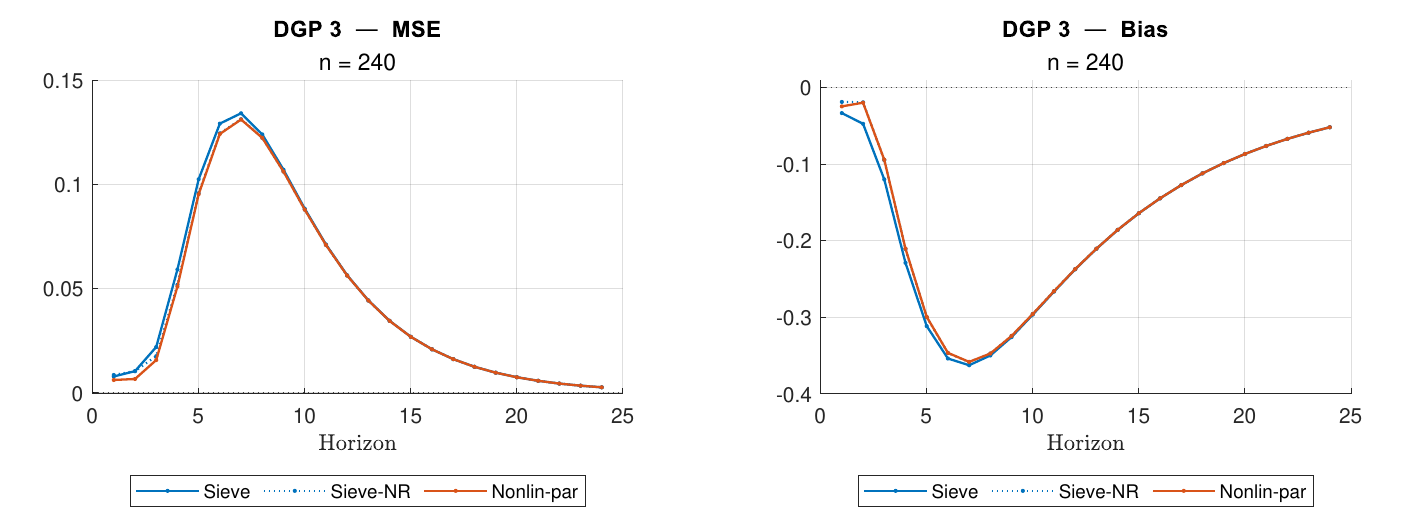}
		% \caption{}
		% \label{}
	\end{subfigure}
	\\[15pt]
	\caption{Simulations results for DGPs 1-3 targeting population IRFs \textit{without relaxation} of the shock.}
	\label{fig:mse_bias_DGP_1_2_3__norelax}
\end{figure}

\newpage
\begin{figure}[H]
	\centering
	\begin{subfigure}[b]{\textwidth}
		\centering
		\includegraphics[width=\textwidth]{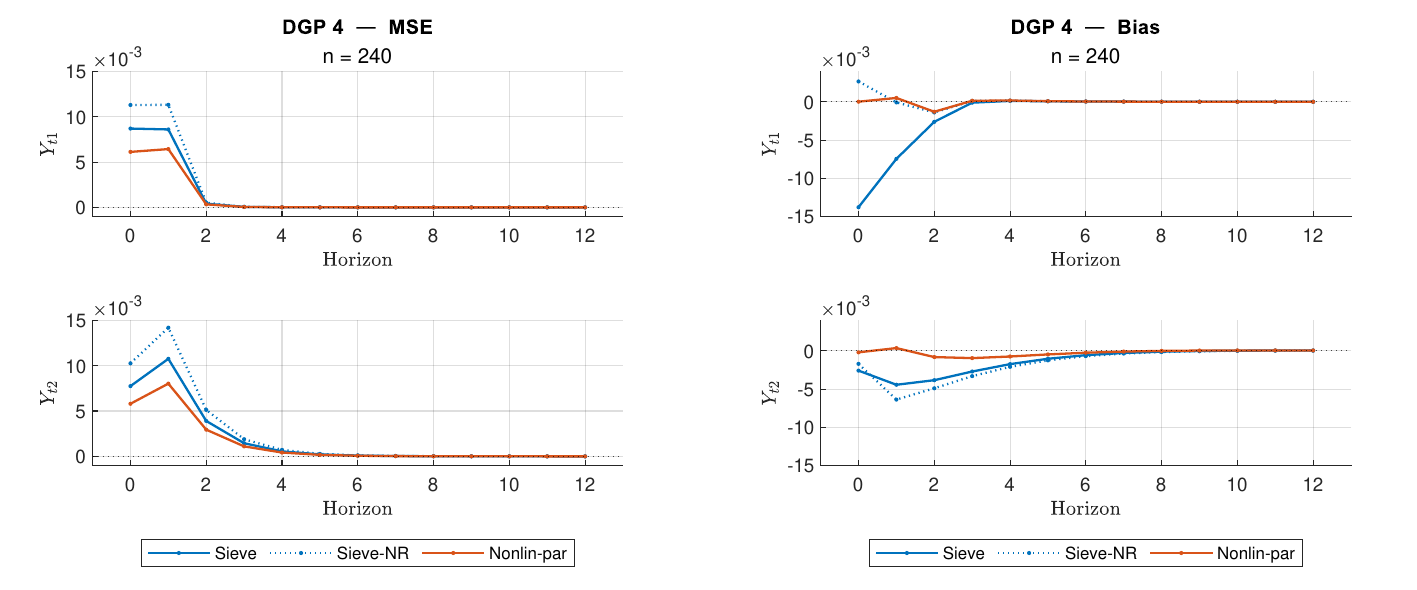}
		% \caption{}
		% \label{}
	\end{subfigure}
	\\[5pt]
	\begin{subfigure}[b]{\textwidth}
		\centering
		\includegraphics[width=\textwidth]{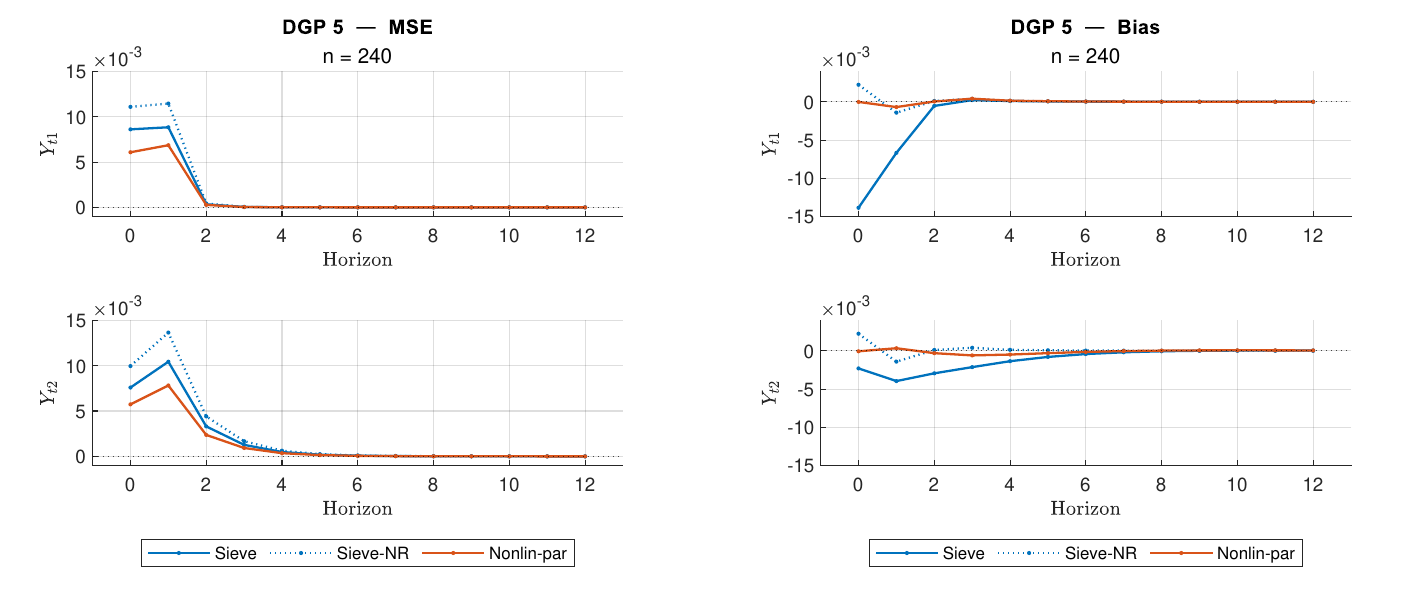}
		% \caption{}
		% \label{}
	\end{subfigure}
	\\[5pt]
	\begin{subfigure}[b]{\textwidth}
		\centering
		\includegraphics[width=\textwidth]{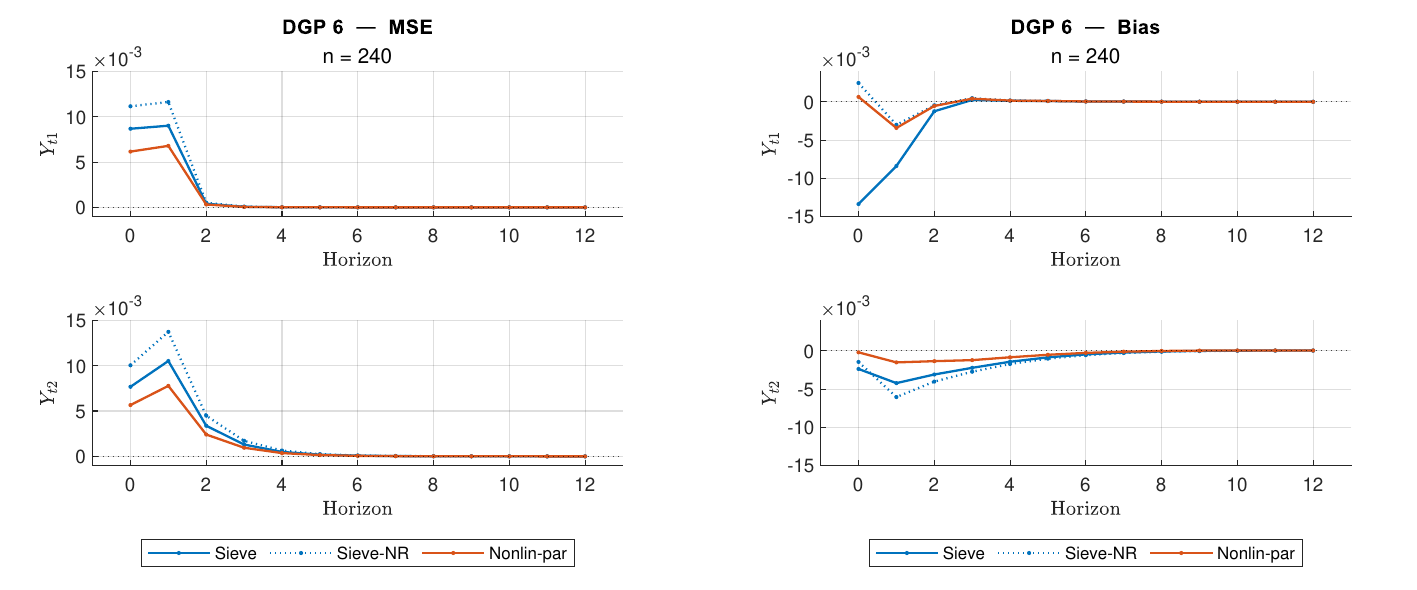}
		% \caption{}
		% \label{}
	\end{subfigure}
	\\[5pt]
	\caption{Simulations results for DGPs 4-6 targeting population IRFs \textit{without relaxation} of the shock.}
	\label{fig:mse_bias_DGP_4_5_6__norelax}
\end{figure}

\newpage
\begin{figure}[H]
    %\vspace{-1em}
    \centering
    \begin{subfigure}[b]{\textwidth}
		\centering
		\includegraphics[width=0.85\textwidth]{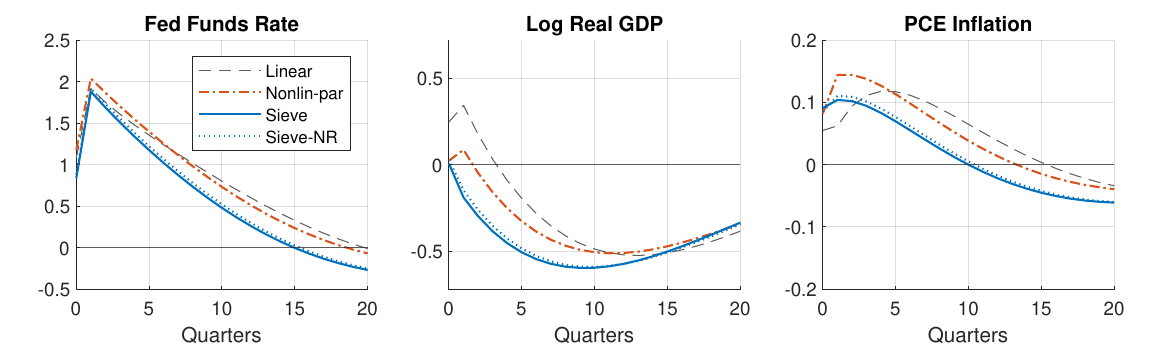}
		\caption{$\delta = +1$}
		% \label{}
	\end{subfigure}
	\\[15pt]
    \begin{subfigure}[b]{\textwidth}
		\centering
		\includegraphics[width=0.85\textwidth]{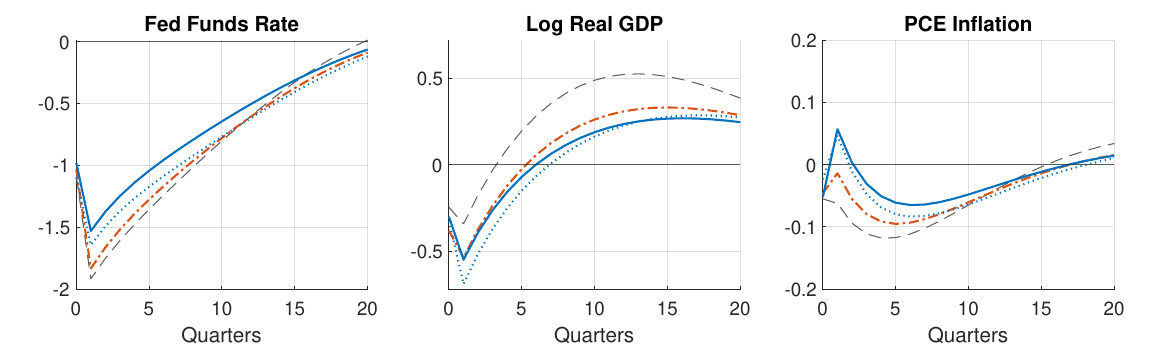}
		\caption{$\delta = -1$}
		% \label{}
	\end{subfigure}
	\\[5pt]
    % \vspace{-1em}
    \caption{Effect of an unexpected U.S. monetary policy shock on federal funds rate, GDP, and inflation. This figure is a variation of Figure~\ref{fig:app_gonc2021_irfs} with the addition of semiparametric sieve IRFs obtained \textit{without relaxation} of the shock (blue, dotted).}
    \label{fig:app_gonc2021_irfs__norelax}
\end{figure}
\begin{figure}[H]
	\centering
	\begin{subfigure}[b]{0.45\textwidth}
		\centering
		\includegraphics[width=\textwidth]{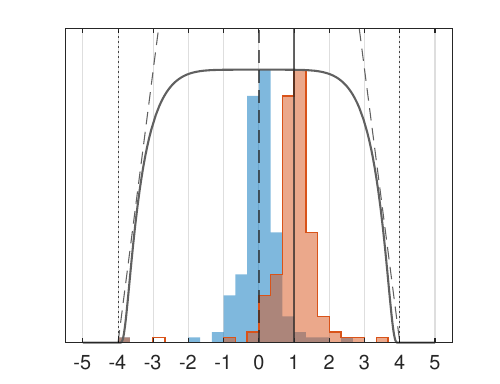}
		\caption{$\delta = +1$}
		% \label{}
	\end{subfigure}
	\begin{subfigure}[b]{0.45\textwidth}
		\centering
		\includegraphics[width=\textwidth]{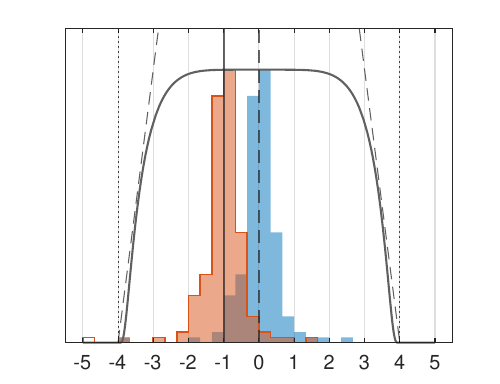}
		\caption{$\delta = -1$}
		% \label{}
	\end{subfigure}
	\\[5pt]
	\caption{Comparison of histograms and shock relaxation function for a positive (left) and negative (right) shock in monetary policy. This figure is a variation of Figure~\ref{fig:plot_app_gonc2021_meanshift} with the addition of the histogram of shocked innovations \textit{without relaxation} (red outline).}
	\label{fig:plot_app_gonc2021_meanshift__norelax}
\end{figure}

\newpage
\begin{figure}[H]
    \centering
    \textbf{GDP}\\
    \includegraphics[width=0.85\textwidth]{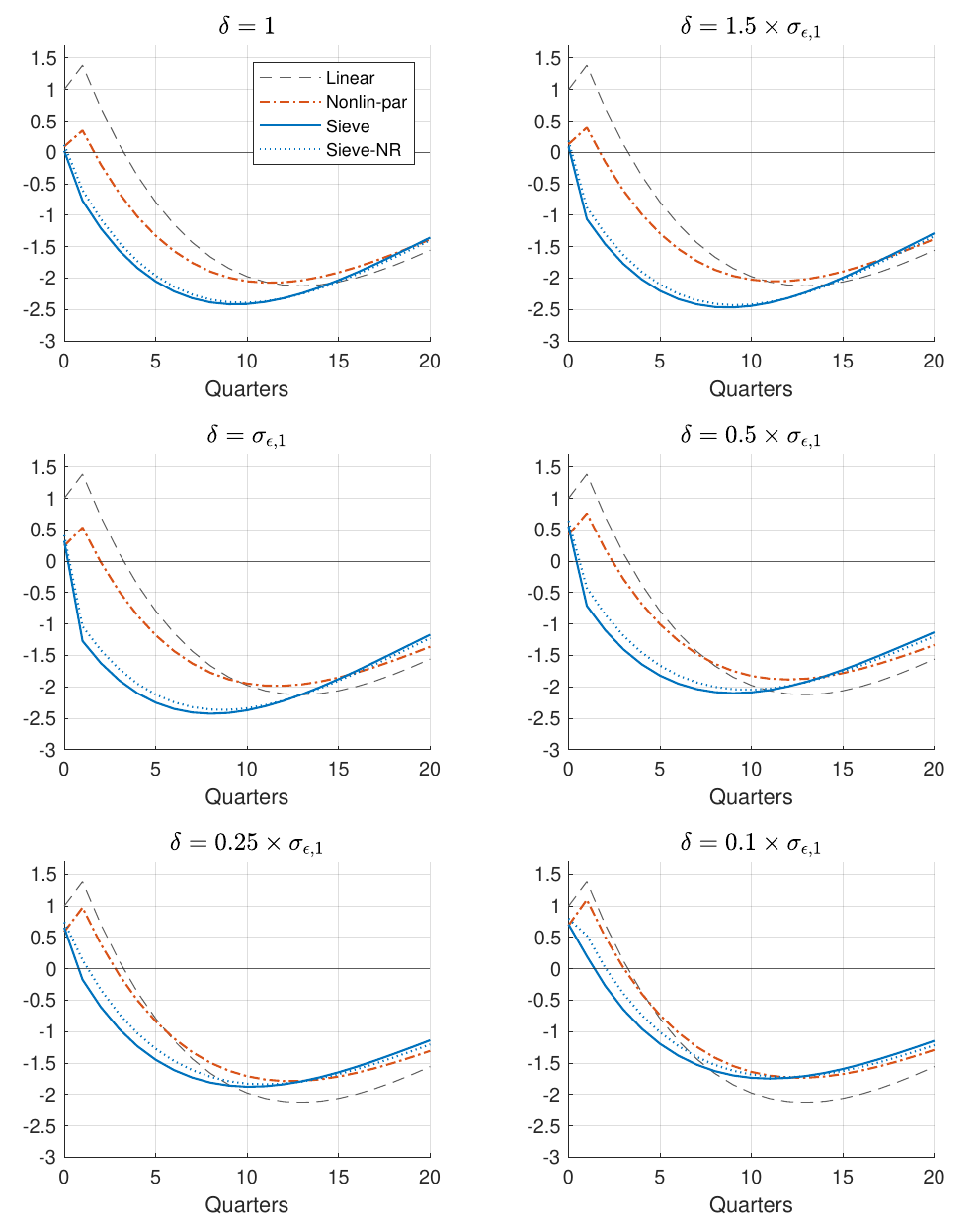} 
    \\[15pt]
    \caption{Relative changes in the GDP impulse responses function when the size of the shock is reduced from that used in Figure \ref{fig:app_gonc2021_irfs__norelax}. This figure is a variation of Figure~\ref{fig:plot_app_gonc2021_scale} with the addition of semiparametric sieve IRFs obtained \textit{without relaxation} of the shock (blue, dotted).}
    \label{fig:plot_app_gonc2021_scale___norelax}
\end{figure}

\newpage
\begin{figure}[H]
    %\vspace{-1em}
    \centering
	\includegraphics[width=\textwidth]{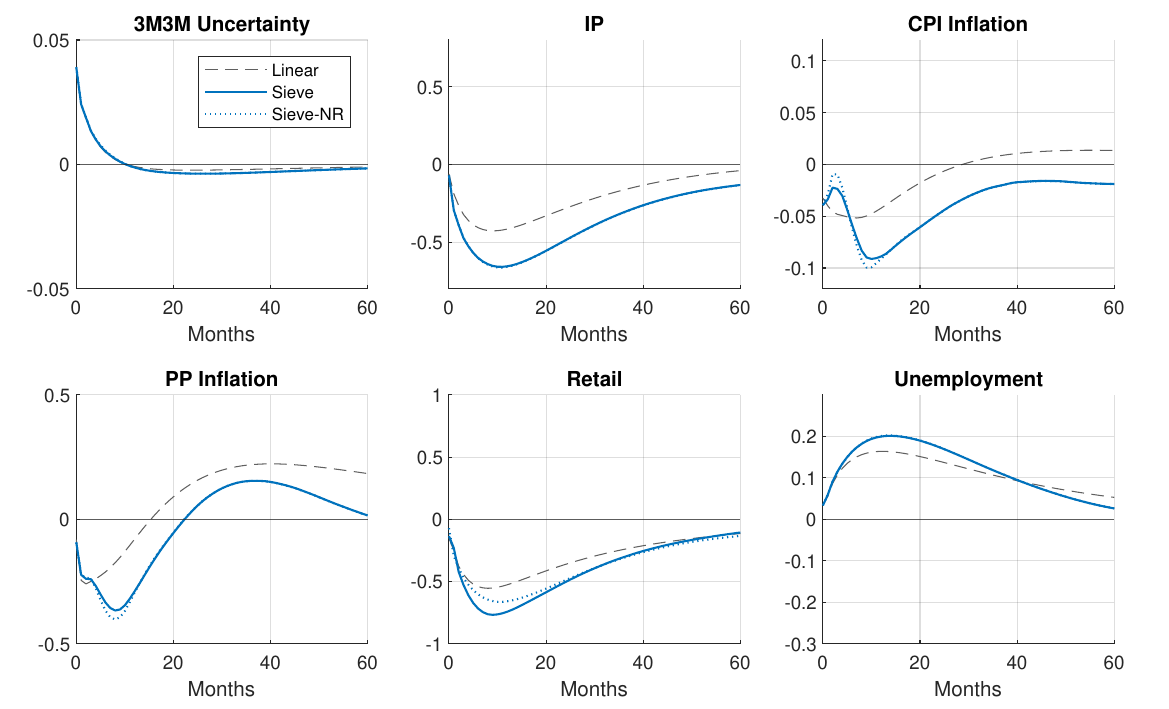}
	\\[5pt]
    % \vspace{-1em}
    \caption{Effect of an unexpected, one-standard-deviation uncertainty shock to U.S. macroeconomic variables. This figure is a variation of Figure~\ref{fig:app_istrefi2018_irfs} with the addition of semiparametric sieve IRFs obtained \textit{without relaxation} of the shock (blue, dotted).}
    \label{fig:app_istrefi2018_irfs__norelax}
\end{figure}
\begin{figure}[H]
    \centering
    \begin{subfigure}[b]{0.45\textwidth}
		\centering
		\includegraphics[width=\textwidth]{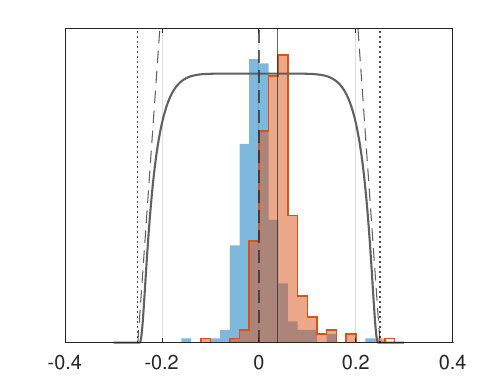}
		\caption*{$\delta = \sigma_\epsilon$}
		% \label{}
	\end{subfigure}
    \\[5pt]
    \caption{Comparison of histograms and shock relaxation function for a positive, one-standard-deviation shock in interest rate uncertainty. This figure is a variation of Figure~\ref{fig:plot_app_istrefi2018_meanshift} with the addition of the histogram of shocked innovations \textit{without relaxation} (red outline).}
    \label{fig:plot_app_istrefi2018_meanshift__norelax}
\end{figure}

\newpage
\begin{figure}[H]
    \centering
    \textbf{IP}\\
    \includegraphics[width=0.85\textwidth]{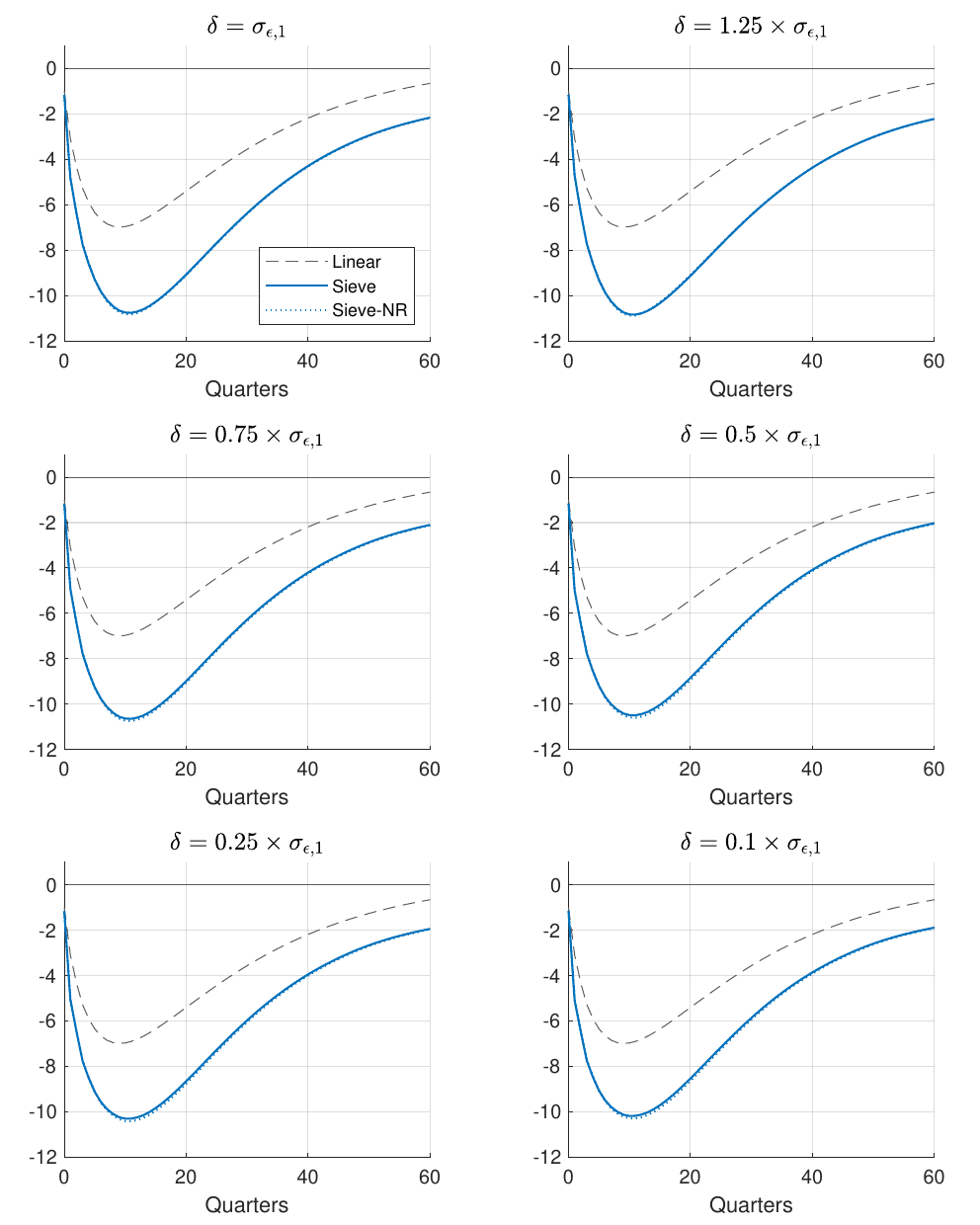} 
    \\[15pt]
    \caption{Relative changes in the industrial production impulse responses function when the size of the shock is reduced from that used in Figure \ref{fig:app_istrefi2018_irfs__norelax}. This figure is a variation of Figure~\ref{fig:plot_app_istrefi2018_scale_IP} with the addition of semiparametric sieve IRFs obtained \textit{without relaxation} of the shock (blue, dotted).}
    \label{fig:plot_app_istrefi2018_scale_IP__norelax}
\end{figure}

%% END ####################################################
%% END ####################################################
%% END ####################################################
%% END ####################################################
%% END ####################################################
\end{document}